\newcommand\smallO{
  \mathchoice
    {{\scriptstyle\mathcal{O}}}
    {{\scriptstyle\mathcal{O}}}
    {{\scriptscriptstyle\mathcal{O}}}
    {\scalebox{.6}{$\scriptscriptstyle\mathcal{O}$}}
  }
\newcommand{\gd}{{g \mkern-8mu /\ \mkern-5mu }}
\newcommand{\di}{\mbox{$d \mkern-9.2mu /$\,}}
\def\abf{{\mathbf{a}}}
\def\Kerr{{\mathrm{Kerr}}}
\def\tth{{\tilde{\theta}}}
\def\a{{\alpha}}
\def\be{{\beta}}
\def\ga{\gamma}
\def\Ga{\Gamma}
\def\de{\delta}
\def\la{\lambda}
\def\La{\Lambda}
\def\si{\sigma}
\def\Si{\Sigma}
\def\om{\omega}
\def\Om{\Omega}
\def\th{\theta}
\def\nab{\nabla}
\def\varep{\varepsilon}
\def\pr{{\partial}}
\def\les{\lesssim}
\def\rh{{\rho}}
\def\ind{{\in \mkern-16mu /\ \mkern-4mu}}
\def\XX{{\mathcal{X}}}
\def\Th{{\Theta}}
\def\gdcd{{\dot{\gd_c}}}
\providecommand{\lrpar}[1]{\left( #1\right)}
\def\Ldo{{\overset{\circ}{\Ld}}}
\def\phid{{\dot{\phi}}}
\def\AA{{\mathcal A}}
\def\MM{{\mathcal M}}
\def\II{{\mathcal I}}
\def\EE{{\mathcal E}}
\def\HH{{\mathcal H}}
\def\OO{{\mathcal O}}
\def\KK{{\mathcal K}}
\def\DD{{\mathcal D}}
\def\RR{{\mathcal R}}
\def\QQ{{\mathcal Q}}
\def\HHb{\underline{\mathcal H}}
\def\D{{\bf D}}
\def\g{{\bf g}}
\def\SSS{{\mathbb{S}}}
\def\RRR{{\mathbb R}}
\DeclareMathOperator{\Div}{\mathrm{div}}
\DeclareMathOperator*{\Curl}{\mathrm{curl}}
\def\half{\frac{1}{2}}
\newcommand{\pd}{\pd \mkern-9mu/\ \mkern-7mu}
\newcommand{\Lied}{\mathcal{L} \mkern-9mu/\ \mkern-7mu}
\newcommand{\kd}{k \mkern-7mu /\ \mkern-7mu}
\newcommand{\kdN}{\mkern-6mu k_N \mkern-9mu /\ \mkern-7mu}
\newcommand{\DDd}{\DD \mkern-10mu /\ \mkern-5mu}
\newcommand{\Du}{\underline{D}}
\newcommand{\RRRic}{\mathrm{Ric}}
\newcommand{\Divd}{\Div \mkern-17mu /\ }
\newcommand{\Divdo}{{\overset{\circ}{\Div \mkern-17mu /\ }}}
\newcommand{\Curld}{\Curl \mkern-17mu /\ }
\newcommand{\Curldo}{{\overset{\circ}{\Curl \mkern-17mu /\ }}}
\newcommand{\Nd}{\nabla \mkern-13mu /\ }
\newcommand{\Ld}{\triangle \mkern-12mu /\ }
\newcommand{\iin}{\in \mkern-16mu /\ \mkern-5mu}
\newcommand{\trchi}{{\tr \chi}}
\newcommand{\trchib}{{\tr \chib}}
\def\ni{\noindent}
\def\Lb{{\,\underline{L}}}
\def\tr{\mathrm{tr}}
\def\chih{{\widehat \chi}}
\def\chib{{\underline \chi}}
\def\chibh{{\underline{\chih}}}
\def\etab{{\underline \eta}}
\def\omb{{\underline{\om}}}
\def\aa{{\underline{\a}}}
\def\th{\theta}
\def\Rbf{{\mathbf{R}}}
\newcommand{\gac}{{\overset{\circ}{\ga}}}
\newcommand{\ab}{{\underline{\alpha}}}
\newcommand{\beb}{{\underline{\beta}}}
\newtheorem{theorem}{Theorem}[section]
\newtheorem{lemma}[theorem]{Lemma}
\newtheorem{proposition}[theorem]{Proposition}
\newtheorem{corollary}[theorem]{Corollary}
\newtheorem{definition}[theorem]{Definition}
\newtheorem{remark}[theorem]{Remark}
\numberwithin{equation}{section}
\def\@setthanks{\vspace{-\baselineskip}\def\thanks##1{\@par##1\@addpunct.}\thankses}
\begin{document}

\title[Characteristic gluing to Kerr]{Characteristic gluing to the Kerr family \\ and application to spacelike gluing}
\author[S. Aretakis, S. Czimek and I. Rodnianski]{Stefanos Aretakis $^{(1)}$, Stefan Czimek $^{(2)}$, and Igor Rodnianski $^{(3)}$} 

\thanks{\noindent$^{(1)}$ Department of Mathematics, University of Toronto, 40 St George Street, Toronto, ON, Canada, \texttt{aretakis@math.toronto.edu}. \\
$^{(2)}$ Institute for Computational and Experimental Research in Mathematics, Brown University, 121 South Main Street, Providence, RI 02903, USA,  \texttt{stefan\_czimek@brown.edu}. \\
$^{(3)}$ Department of Mathematics, Princeton University, Fine Hall, Washington Road, Princeton, NJ 08544, USA, \texttt{irod@math.princeton.edu}. }

\begin{abstract} This is the third paper in a series of papers adressing the characteristic gluing problem for the Einstein vacuum equations. We provide full details of our characteristic gluing (including the $10$ charges) of strongly asymptotically flat data to the data of a suitably chosen Kerr spacetime. The choice of the Kerr spacetime crucially relies on relating the $10$ charges to the ADM energy, linear momentum, angular momentum and the center-of-mass. As a corollary, we obtain an alternative proof of the Corvino-Schoen spacelike gluing construction for strongly asymptotically flat spacelike initial data. 
\end{abstract}

\maketitle
\setcounter{tocdepth}{3}
\tableofcontents
\section{Introduction} \label{SEC99901}

\ni The gluing problem in general relativity investigates whether it is possible to join two given vacuum spacetimes. Concretely, one can approach this problem by attempting to construct a solution to the constraint equations which agrees inside a bounded domain with specified initial data, and on the complement of a large ball with other specified initial data. The geometric obstructions to solving the gluing problem provide insights into the rigidity properties of the Einstein equations.

In \cite{ACR3} we initiated the study of the characteristic gluing problem for initial data for the Einstein vacuum equations. This problem amounts to connecting two initial data sets along a truncated null hypersurface by solving the null constraint equations. There are several reasons for considering the characteristic gluing problem: 1.) the null constraint equations are of transport character (in contrast to the previously studied gluing problem for spacelike initial data which requires to analyze the elliptic Riemannian constraint equations), 2.) the null lapse function and the conformal geometry of the characteristic hypersurface can be freely prescribed, 3.) characteristic gluing of spacetimes implies spacelike gluing of the spacetimes.
 
In \cite{ACR3,ACR1} we explicitly derived a $10$-dimensional space of gauge-invariant charges on sections of null hypersurfaces that act as obstructions to the characteristic gluing problem and we showed that, modulo this $10$-dimensional space, characteristic gluing is always possible for data sets that are close to the Minkowski data. In this paper, we prove that characteristic initial data that are close to the Minkowski data can be fully glued (including the $10$ charges) to the characteristic data of a suitable Kerr spacetime. By rescaling we show that strongly asymptotically flat data can also be characteristically glued to the data of some Kerr spacetime. As a corollary, we obtain an alternative proof of the Corvino--Schoen gluing construction (for strongly asymptotically flat spacelike initial data) that relies on solving the null constraint equations instead of the Riemannian constraint equations. Our approach crucially relies on relating the $10$ charges to the ADM energy, linear momentum, angular momentum and center-of-mass.

The introduction is structured as follows. In Section \ref{SEC999011} we discuss the gluing problem in general relativity. In Section \ref{SEC999012} we outline the main results. In Section \ref{SEC999013} we give an overview of the main ideas of the proofs. 

\subsection{The gluing problem in general relativity} \label{SEC999011} In this section we discuss the gluing problem in general relativity. We first discuss the literature for the spacelike gluing, and then we summarize our previous work on the characteristic gluing for the Einstein vacuum equations.

\subsubsection{The spacelike gluing problem} \label{SEC9990112}

\ni The majority of gluing constructions in general relativity are, up to now, concerned with the gluing of \emph{spacelike} initial data subject to the elliptic constraint equations. 

On the one hand, gluing constructions based on connected-sum gluing (see also the works \cite{SchoenYauPSCM,GromovL} on codimension-$3$ surgery for manifolds of positive scalar curvature) were studied by Chru\'sciel--Isenberg--Pollack \cite{CIP1,CIP2}, Chru\'sciel--Mazzeo \cite{ChruscielMazzeo}, Isenberg--Maxwell--Pollack \cite{IMP3}, Isenberg--Mazzeo--Pollack \cite{IMP1,IMP2}.

On the other hand, in the ground breaking work of Corvino \cite{Corvino} and Corvino--Schoen \cite{CorvinoSchoen}, the \emph{geometric under-determinedness} of the spacelike constraint equations is used to study the (codimension-$1$) gluing problem. 
In particular, they proved that it is possible to glue asymptotically flat spacelike initial data across a compact region to precisely Kerr spacelike initial data; see also Corollary \ref{THMspacelikeGLUINGtoKERRv2INTRO} below. More constructions and refinements based on this approach were established by Chru\'sciel--Delay \cite{ChruscielDelay1,ChruscielDelay}, Chru\'sciel--Pollack \cite{ChruscielPollack}, Cortier \cite{Cortier}, Hintz \cite{Hintz}. Another milestone was the result \cite{CarlottoSchoen} by Carlotto--Schoen where it is shown that spacelike initial data can be glued -- along a non-compact cone -- to Minkowski spacelike initial data.

The \emph{characteristic} gluing problem was previously studied by the first author \cite{CiteGluing, CiteElliptic} in the much simpler setting of the linear homogeneous wave equation on general (but fixed) Lorentzian manifolds. Similarly to the present paper, \cite{CiteGluing} determined that the only obstructions to solving the characteristic gluing problem are conservation laws along null hypersurfaces. In the following it was shown that these conservation laws have important applications in the study of the evolution of scalar perturbations on both sub-extremal \cite{CiteSS,CitePriceLaw, CiteKerr, Ma} and extremal \cite{CiteExtremal1,CiteExtremal2,CiteExtremal3,PRL} black hole spacetimes.

\subsubsection{The characteristic gluing problem} \label{SEC9990113} In this section we discuss the \emph{codimension-$10$ characteristic gluing} for the Einstein vacuum equations introduced in \cite{ACR3}. Before stating the main results of that paper, we introduce the following notation.

Let $(\MM_1,\g_1)$ and $(\MM_2,\g_2)$ be two vacuum spacetimes. Let $S_1$ and $S_2$ be two spacelike $2$-spheres in $\MM_1$ and $\MM_2$, respectively, and assume (without loss of generality) they are each intersection spheres of local double null coordinate systems, respectively. We define \emph{sphere data} $x_1$ on $S_1$ and $x_2$ on $S_2$ to be given by the respective restriction of the metric components, Ricci coefficients and components of the Riemann curvature tensor of the spacetimes to the respective spheres (see also Section \ref{SEC9990132}) with respect to the respective double null coordinate system.

One of the main insights of \cite{ACR3,ACR1} is the derivation of a family of charges on the sections of null hypersurfaces that act as obstructions to the characteristic gluing problem. The charges arise from conservation laws for the linearized constraint equations. They split into two classes: An infinite-dimensional space of \emph{gauge-dependent charges} and a 10-dimensional space of \emph{gauge-invariant charges}. The former charges can always be overcome by gauge perturbations. We will refer to the gauge-invariant charges as simply the charges. For further discussion, see Section \ref{SECgeominterINTROmain}. For precise definition of the charges, see Sections \ref{SEC9990132} and \ref{SECdefcharges99901}.

The main result of \cite{ACR3,ACR1} can be summarized as follows.\\

\ni \textbf{Perturbative codimension-$10$ characteristic gluing} \cite{ACR3,ACR1}. \emph{Let on two spheres $S_1$ and $S_2$ be given the sphere data $x_1$ and $x_2$, sufficiently close to the respective sphere data on the round spheres of radius $1$ and $2$ in Minkowski spacetime, respectively. Then there is a null hypersurface $\HH'_{[1,2]}$, connecting the sphere data $x_1$ on $S_1$ to a transversal perturbation $S_2'$ of the sphere $S_2$ with sphere data $x_2'$, solving the null constraint equations, and such that all derivatives tangential to $\HH'_{[1,2]}$ of the sphere data $x_1$ and $x'_2$ are -- up to a $10$-dimensional space of charges explicitly defined at $S_2'$ -- smoothly glued.}\\

\ni Sphere data determines all derivatives of the metric components up to order $2$, hence \emph{perturbative gluing is gluing at the level of $C^2$} of the metric components (up to the $10$-dimensional space of charges). 

We note that it is equivalently possible to perturb $S_2$ instead of $S_1$ in the above result -- this is in fact the version stated in \cite{ACR3,ACR1}.

\begin{figure}[H]
\begin{center}
\includegraphics[width=7.5cm]{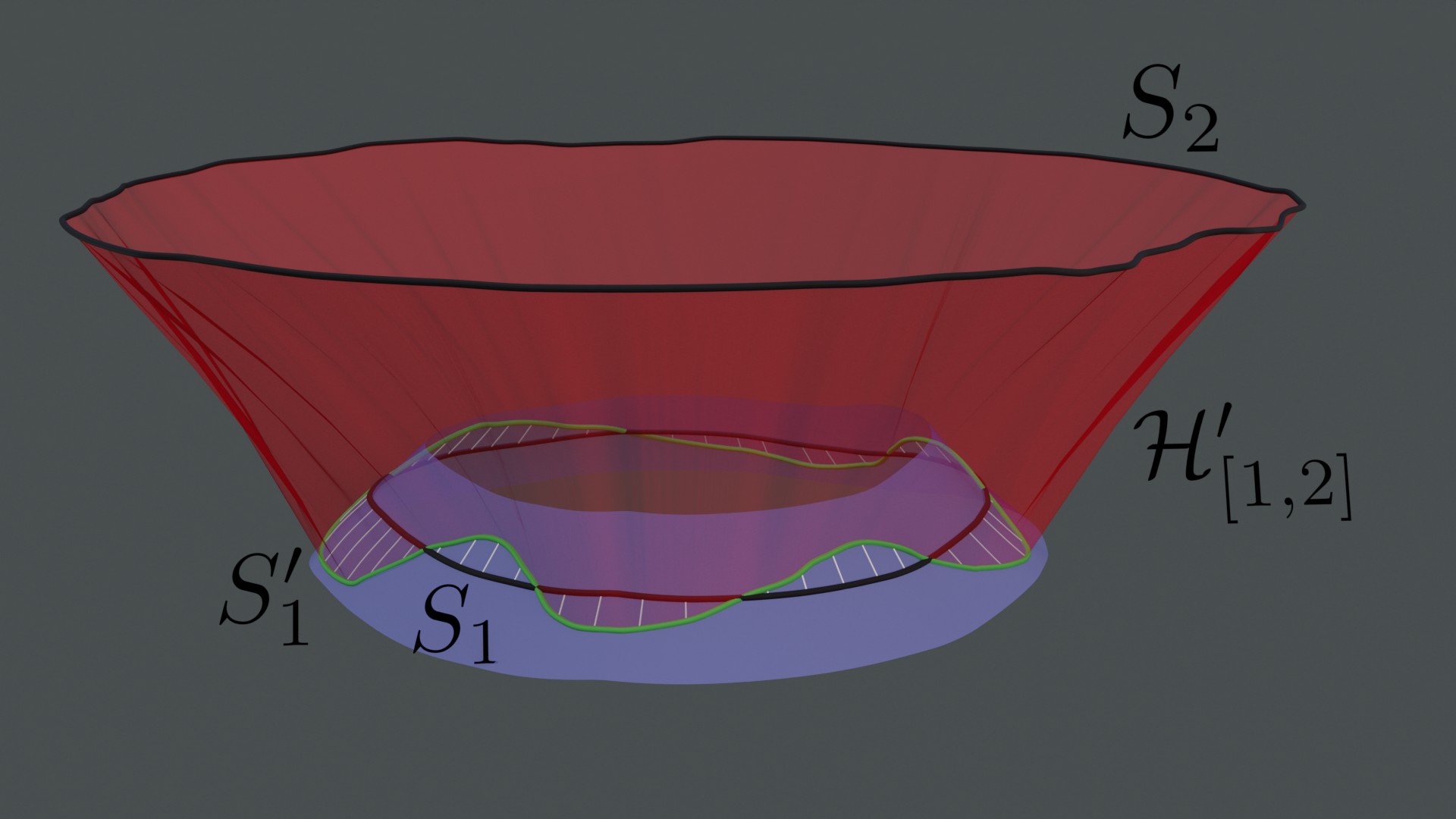} \,\,\, \includegraphics[width=7.5cm]{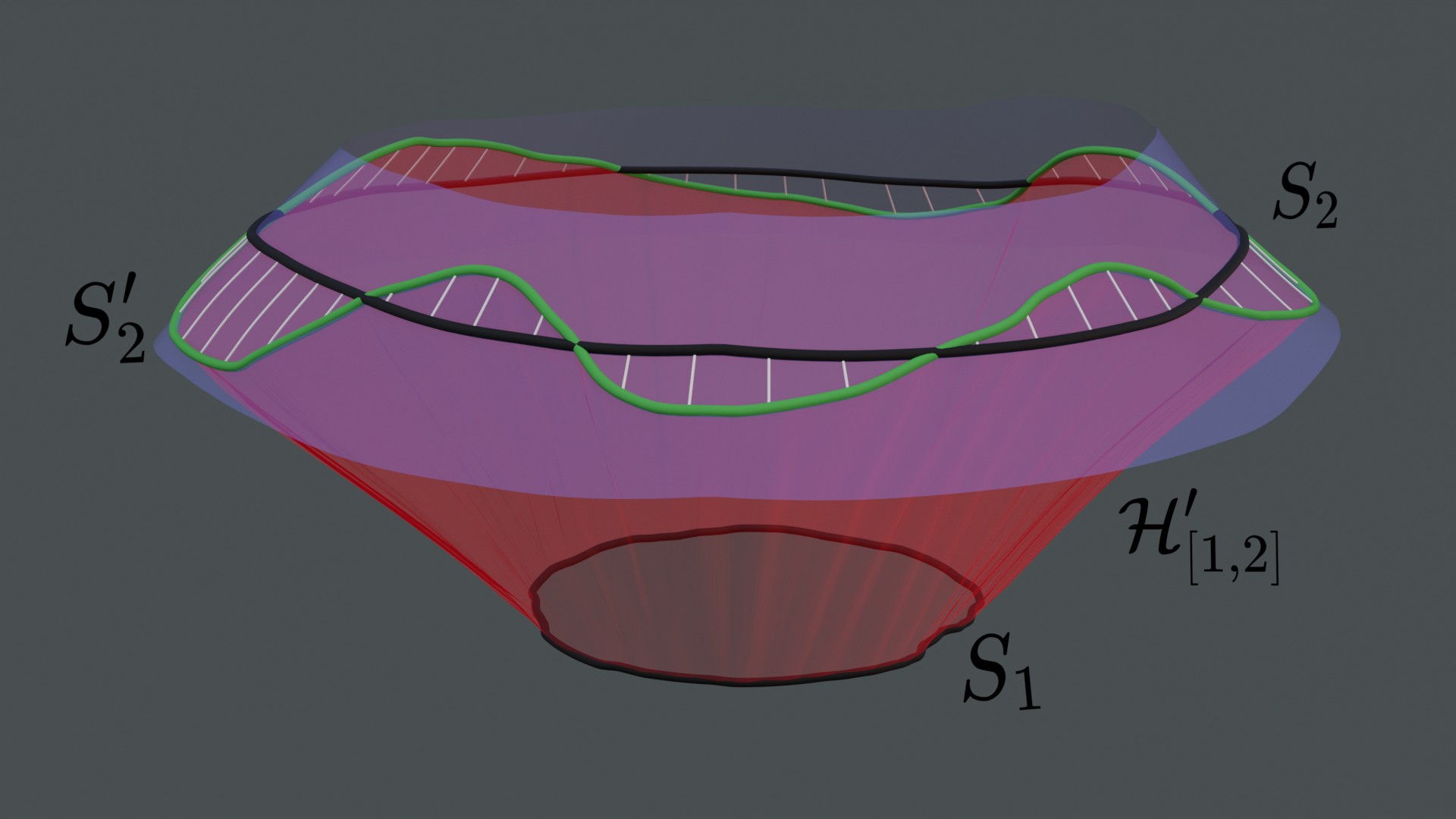} 
\vspace{0.4cm} 
\caption{The two versions of the codimension $10$ characteristic gluing along one null hypersurface.}\label{FIG5}
\end{center}
\end{figure}

\ni In \cite{ACR3,ACR1} we also consider characteristic gluing along two null hypersurfaces bifurcating from an auxiliary sphere, and prove the following result. \\

\ni \textbf{Bifurcate codimension-$10$ characteristic gluing} \cite{ACR3,ACR1}. \emph{Consider two spheres $S_1$ and $S_2$ equipped with sphere data $x_1$ and $x_2$ as well as prescribed higher-order derivatives in all directions, respectively. If this higher-order data on $S_1$ and $S_2$ is sufficiently close to the respective higher-order data on the round spheres of radius $1$ and $2$ in Minkowski spacetime, then it is possible to characteristically glue -- up to a $10$-dimensional space of charges -- the higher-order data of $S_1$ and $S_2$  along two null hypersurfaces bifurcating from an auxiliary sphere $S_{\mathrm{aux}}$.}\\ 

\ni We note that in the above result the spheres $S_1$ and $S_2$ are not perturbed. Moreover, \emph{bifurcate gluing is higher-regularity gluing}, that is, we can glue any order of derivatives of the metric components (up to the $10$-dimensional space of charges).

\begin{figure}[H]
\begin{center}
\includegraphics[width=9.5cm]{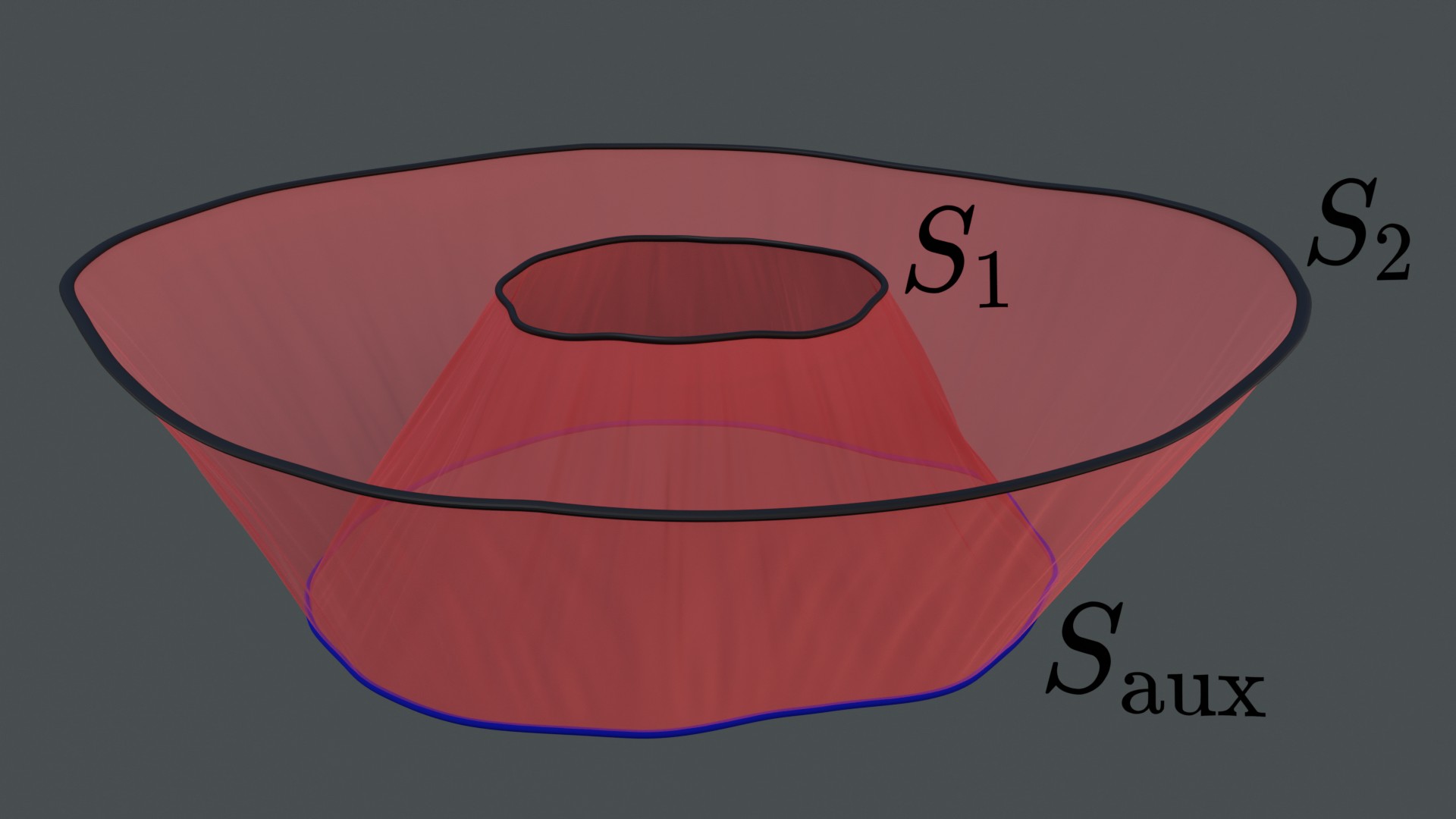} 
\vspace{0.4cm} 
\caption{Higher-order codimension-$10$ characteristic gluing along two null hypersurfaces bifurcating from an auxiliary sphere.}\label{FIG6}
\end{center}
\end{figure}

\ni We finally note that in the previous work \cite{CiteGluing} by the first author, it was shown that the only obstructions to first-order gluing of solutions to the \emph{wave equation} along null hypersurfaces are conserved charges defined on the sections of the hypersurface. It was further shown that the existence of these charges is related to the kernel of a geometric elliptic operator defined on the hypersurface \cite{CiteElliptic}. These conserved charges have subsequently been shown to have applications in the study of the asymptotics for the wave equation on black hole backgrounds, both in the sub-extremal \cite{CitePriceLaw, Ma, CiteKerr} and in the extremal case \cite{CiteExtremal1,Easymptotics,PRL,Khanna}.

\subsection{Main results on the characteristic gluing to the Kerr family} \label{SEC999012}
In this section we outline the main results of this paper on the characteristic gluing to Kerr.
\subsubsection{Geometric interpretation of charges} \label{SECgeominterINTROmain}

As we have already discussed, the characteristic gluing of \cite{ACR3,ACR1} holds up to a $10$-dimensional space of charges. These charges are calculated as integrals over spacelike $2$-spheres and are denoted by the real number $\mathbf{E}$ and the $3$-dimensional vectors $\mathbf{P}, \mathbf{L}$ and $\mathbf{G}$. At the linear level, the charges $\mathbf{E}$ and $\mathbf{P}$ are proportional to the modes $l=0$ and $l=1$ of $\rh+ r \Divd \beta$, while $\mathbf{L}$ and $\mathbf{G}$ are proportional to the magnetic and electric parts of the mode $l=1$ of $\beta$, see Section \ref{SECdefcharges99901} for precise definitions.

\begin{theorem} \label{THMconstruction111999999111} Given a strongly asymptotically flat family of sphere data on spheres $S_R$, as defined in Section \ref{SECdefinitionAF}, the charges $(\mathbf{E}_R,\mathbf{P}_R,\mathbf{L}_R, \mathbf{G}_R)$ have a limit $(\mathbf{E}_\infty,\mathbf{P}_\infty,\mathbf{L}_\infty, \mathbf{G}_\infty)$, called the asymptotic charges. In case the spheres $S_R$ lie in a strongly asymptotically flat spacelike hypersurface, the asymptotic charges are related to the ADM asymptotic invariants of the spacelike hypersurface by
\begin{align*} 
\begin{aligned} 
(\mathbf{E}_\infty,\mathbf{P}_\infty,\mathbf{L}_\infty, \mathbf{G}_\infty)= (\mathbf{E}_{\mathrm{ADM}},\mathbf{P}_{\mathrm{ADM}},\mathbf{L}_{\mathrm{ADM}}, \mathbf{C}_{\mathrm{ADM}}),
\end{aligned} 
\end{align*}
where $\mathbf{E}_{\mathrm{ADM}}$ denotes the {energy} (often called {mass}), $\mathbf{P}_{\mathrm{ADM}}$ the {linear momentum}, $\mathbf{L}_{\mathrm{ADM}}$ the {angular momentum}, and $\mathbf{C}_{\mathrm{ADM}}$ the {center-of-mass}.
\end{theorem}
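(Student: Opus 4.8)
The plan is to separate the two assertions. For the \emph{existence of the limit}, I would work directly from the explicit integral formulas defining the charges $(\mathbf E_R,\mathbf P_R,\mathbf L_R,\mathbf G_R)$ on $S_R$ in terms of the sphere data $x_R$ (recalled in Sections \ref{SEC9990132} and \ref{SECdefcharges99901}): to leading order each charge is a fixed $l=0$ or $l=1$ spherical-harmonic projection of an $R$-weighted combination of the curvature components $\rho$, $\sigma$, $\beta$ (together with $\trchi$, $\trchib$, $\chih$, $\zeta$), plus corrections quadratic in the deviation of $x_R$ from the reference sphere data of radius $R$. The definition of a strongly asymptotically flat family in Section \ref{SECdefinitionAF} is designed precisely to give decay rates for this deviation and its tangential derivatives that control these expressions. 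I would therefore (i) rescale the sphere data to unit size, (ii) split each charge functional into its linearization plus a remainder, (iii) bound the remainder by the asymptotic-flatness norms so that it is $o(1)$ as $R\to\infty$, and (iv) show the linear part converges by tracking the $R$-weights against the decay rates and checking that the subleading terms do not contribute. The one delicate point is that for $\mathbf G_R$ (the center-of-mass charge) the relevant $l=1$ mode sits at the borderline of integrability, so its convergence uses the full strength of strong asymptotic flatness — in particular the parity/``Regge--Teitelboim''-type structure it encodes — mirroring the well-known subtlety in the very definition of the ADM center-of-mass.

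For the \emph{identification with the ADM invariants}, the key is a geometric dictionary between the null data on $S_R$ and the spacelike data $(\bar g,k)$ on the hypersurface $\Sigma\supset S_R$. Writing $T$ for the future unit normal of $\Sigma$ and $N$ for the outward unit normal of $S_R$ inside $\Sigma$, I would choose the double null frame at $S_R$ adapted to $\Sigma$, with $L=T+N$ and $\Lb=T-N$ up to the null lapse (whose choice I fix canonically and then show is immaterial in the limit), and express each curvature component entering the charges through the Gauss, Codazzi, and — using the vacuum equations — Ricci equations for the double embedding $S_R\hookrightarrow\Sigma\hookrightarrow\MM$, i.e.\ in terms of the intrinsic curvature of $S_R$, the second fundamental form of $S_R$ in $\Sigma$, the tensor $k$, and their tangential derivatives. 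Concretely: the Gauss equation of $\Sigma$ in $\MM$ feeds the ``Schwarzschildean'' $\partial\bar g$-terms into $\rho$, producing the ADM energy aspect; the Codazzi equation feeds $\Div(k-(\tr k)\bar g)$ into $\Divd\beta$; and the combination $\rho+r\,\Divd\beta$ of the charge definition is exactly the one in which the $k$-contributions reorganize into the ADM energy ($l=0$ mode) and the ADM linear-momentum ($l=1$ mode) flux integrals. Likewise, the magnetic $l=1$ part of $\beta$ reduces, via Codazzi, to the flux of $k-(\tr k)\bar g$ against the Euclidean rotation fields, i.e.\ the ADM angular momentum, while the electric $l=1$ part of $\beta$ — after invoking the vacuum Einstein equations to trade its ``electric Weyl'' piece for the Ricci curvature of $\Sigma$ — reduces to the ADM center-of-mass flux of $\partial\bar g$.

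In carrying this out I would first reduce to the linearized problem: both sides are, to leading order in $R^{-1}$, linear in the deviation of the data from Minkowski (resp.\ Euclidean), and the nonlinear remainders are $o(1)$ by the first part, so it suffices to match the linearized charge functionals with the linearized ADM flux integrals. This turns the identification into a finite computation on $S_R$ in linearized gravity — feed the decay of $h=\bar g-\delta$ and of $k$ into the linearized Gauss--Codazzi--Ricci formulas, project onto $l=0,1$ spherical harmonics, integrate by parts on $S_R$, and discard total derivatives and terms of subcritical weight — after which the surviving boundary terms are, by inspection, the ADM definitions of $\mathbf E_{\mathrm{ADM}},\mathbf P_{\mathrm{ADM}},\mathbf L_{\mathrm{ADM}},\mathbf C_{\mathrm{ADM}}$. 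I expect the main obstacle to be the center-of-mass identity $\mathbf G_\infty=\mathbf C_{\mathrm{ADM}}$: the electric $l=1$ part of $\beta$ translates into a combination of $\partial\bar g$- and $\bar\nabla k$-terms that matches the ADM boost integral only after several integrations by parts against linear functions on $S_R$, and making this rigorous requires careful tracking of the borderline-decaying terms together with the extra asymptotic symmetry built into ``strongly asymptotically flat'' — the same structural feature needed for the convergence of $\mathbf G_R$ above. A secondary technical point is verifying that neither side depends on the auxiliary choices (the null lapse and the particular null hypersurfaces through each $S_R$, and the asymptotic coordinates on $\Sigma$); for this I would invoke the gauge-invariance of the charges established in \cite{ACR3,ACR1} together with the standard coordinate-independence of the ADM invariants for strongly asymptotically flat data.
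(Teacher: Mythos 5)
Your framework for both halves is the same as the paper's: for the limit, rescale to unit size and use the scaling of the charges (Lemma \ref{LEMrescaleLOCALCHARGES}) together with the decay built into Definition \ref{DEFadmissibleSequences}; for the identification, take the adapted frame $\widehat L=T+N$, $\widehat\Lb=T-N$, translate $\chi,\chib,\zeta,\alpha,\ab$ into $\Th$, $k$, $E$, $H$ via Gauss--Codazzi and the electric--magnetic decomposition, and match $l=0,1$ modes; and you correctly note that $\rho+r\,\Divd\beta$ is the combination that produces energy and linear momentum. This coincides with Lemma \ref{LEMBoundednessEinftyPinfty1}, Theorem \ref{PROPconstructionStatement} and Sections \ref{SECcomparisonENERGY}--\ref{SECcomparisonCENTER}.

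There is, however, a real gap in your plan for $\mathbf G_\infty=\mathbf C_{\mathrm{ADM}}$. You assert that the electric $l=1$ part of $\beta$ ``reduces to the ADM center-of-mass flux'' after integrations by parts on $S_R$. That is not what happens: the Gauss equation leaves a nonzero $l=1$ mode of $\rho$ that cannot be absorbed, and the paper's actual identity (equation \eqref{EQintro9993999}, proved in Section \ref{SECcomparisonCENTER}) is
\begin{align*}
\mathbf G(S_R)=\mathbf C_{\mathrm{ADM}}^{\mathrm{loc}}(S_R)-R\cdot\mathbf P_{\mathrm{ADM}}^{\mathrm{loc}}(S_R)+\smallO(1).
\end{align*}
The extra $R\cdot\mathbf P_{\mathrm{ADM}}^{\mathrm{loc}}$ term is the whole game: the naive flux estimate from the decay $k=\smallO(r^{-5/2})$ only gives $\mathbf P_{\mathrm{ADM}}^{\mathrm{loc}}(S_R)=\smallO(R^{-1/2})$, so the product $R\cdot\mathbf P_{\mathrm{ADM}}^{\mathrm{loc}}$ is not even bounded, and your argument for $\mathbf G_R$ would not converge. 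What rescues it is the improved rate $\mathbf P_{\mathrm{ADM}}^{\mathrm{loc}}(S_R)=\OO(R^{-3/2})$ of Lemma \ref{LEMdecayPADMlocal}, obtained not by estimating the surface integral directly but by converting it via Stokes' theorem into a bulk integral over $\RRR^3\setminus B_R$ in which the momentum constraint $\Div\,k=d\,\tr k$ kills the linear term and only the quadratic remainder $k\cdot\pr g=\OO(r^{-9/2})$ survives. Your ``careful tracking of borderline-decaying terms together with the extra asymptotic symmetry'' gestures at a difficulty but does not name this mechanism, which is the crux. A secondary remark: the paper works with the Ricci-tensor form of $\mathbf E_{\mathrm{ADM}}$ and $\mathbf C_{\mathrm{ADM}}$ (Theorem \ref{THMaltDEFinvariants}, Definition \ref{DEFlocalADMcharges}), not the coordinate $\pr g$-form; this is what makes the divergence theorem and the twice-contracted Bianchi identity applicable, and your linearized matching against the raw ADM coordinate integrals would be substantially more cumbersome.
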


\ni Our definitions of the charges are, to leading order, consistent with previous definitions in general relativity of mass, linear and angular momentum in terms of integrals over spheres; see, for example, \cite{Komar,LandauLifshitz,Wald}. 

\subsubsection{Perturbative characteristic gluing to Kerr} \label{SEC9990121}

The following is a first version of our main result for characteristic gluing to Kerr along one null hypersurface, see Theorem \ref{THMMAINPRECISE} for a precise version.

\begin{theorem}
\label{THMfirstINTROmainresult991}
Consider a strongly asymptotically flat family of sphere data $x_R$ on spheres $S_R$. For $R\geq1$ sufficiently large, there exist 1.) a perturbation $S^{\mathrm{pert}}_R$ of $S_R$ along the ingoing null hypersurface $\HHb_{R\cdot [-\de,\de]}$, 2.) a sphere $S_{2R}^{\Kerr}$ in some Kerr spacetime, and 3.) a null hypersurface $\HH_{[R,2R]}$, solving the constraint equations, and connecting $S^{\mathrm{pert}}_R$ and $S_{2R}^{\Kerr}$ and their respective sphere data.
\end{theorem}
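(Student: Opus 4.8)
The plan is to combine three ingredients: the scaling symmetry of the Einstein vacuum equations, the perturbative codimension-$10$ characteristic gluing of \cite{ACR3,ACR1} quoted above, and a finite-dimensional matching of the $10$ charges against a suitable $10$-parameter family of Kerr spacetimes.

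First I would rescale. By strong asymptotic flatness, the sphere data $x_R$ rescaled by $R^{-1}$ converges as $R\to\infty$ to the sphere data of the round sphere of radius $1$ in Minkowski, a sphere at radius $2R$ rescales to radius $2$, and a Kerr spacetime with parameters $(M,\abf)$ rescales to one with parameters $(M/R,\abf/R)$, which is close to Minkowski for $R$ large. Thus after rescaling we are in the perturbative regime of the codimension-$10$ gluing result, applied with $S_1 = S_R$ and $S_2$ a sphere of radius $\approx 2$ lying in a Kerr spacetime still to be chosen. The ingoing perturbation $S_R^{\mathrm{pert}}\subset\HHb_{R\cdot[-\de,\de]}$ provides the additional normalization freedom needed to bring $x_R$ into the gauge in which the gluing construction and the subsequent charge-matching are carried out; by Theorem \ref{THMconstruction111999999111} the ($10$-dimensional, gauge-invariant) charges of $x_R$ are bounded uniformly in $R$, so after rescaling they are small and lie in the range of the Kerr-charge map constructed next.

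Next I would build the target family. Using the $4$ Kerr parameters $(M,\abf)$ together with the action of spatial translations and Lorentz boosts, one obtains a $10$-parameter family of vacuum Kerr sphere data on the radius-$\approx 2$ sphere, parametrized near the Schwarzschild/Minkowski values by a scalar $m$ and three $3$-vectors $\mathbf{p},\mathbf{l},\mathbf{g}$. The key computation is to evaluate the $10$ charges $(\mathbf{E},\mathbf{P},\mathbf{L},\mathbf{G})$ on this family. Using the explicit description of the charges in terms of the $\ell=0,1$ modes of $\rh + r\Divd\beta$ and the $\ell=1$ mode of $\beta$ (cf. Section \ref{SECdefcharges99901}), together with the consistency of these definitions with the classical notions of mass, momentum and center-of-mass noted after Theorem \ref{THMconstruction111999999111}, one shows that the map $(m,\mathbf{p},\mathbf{l},\mathbf{g})\mapsto(\mathbf{E},\mathbf{P},\mathbf{L},\mathbf{G})$ is, to leading order, a linear isomorphism, hence a local diffeomorphism with quantitatively controlled inverse.

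Finally I would close the argument by an implicit function theorem. Running the codimension-$10$ gluing with target $x_2$ equal to each member of the Kerr family yields a null hypersurface $\HH_{[R,2R]}$ from $x_R^{\mathrm{pert}}$ to a transversal perturbation $S_{2R}'$ of the chosen Kerr sphere, matching all tangential data except the $10$ residual charges at $S_{2R}'$. Writing $\Phi$ for the map from the Kerr parameters to these residual charges, the conservation laws underlying the charges give the schematic identity $\Phi = \big(\text{charges of }x_R^{\mathrm{pert}}\big) - \big(\text{charges of the Kerr data}\big) + (\text{error})$, where the error is $O(R^{-1})$-small by the rescaling. Combined with the invertibility from the previous step, this lets us solve $\Phi = 0$ for the Kerr parameters by a contraction mapping, producing a Kerr spacetime and a sphere $S_{2R}^{\Kerr} = S_{2R}'$ in it for which all $10$ charges are glued; undoing the rescaling gives the theorem. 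I expect the last step to be the main obstacle: one must track the precise $R$-dependence of the charges — in particular of the center-of-mass $\mathbf{G}$, whose comparison with translated Kerr data is the most delicate — and control the nonlinear error terms in the charge-propagation identity uniformly over the $10$-parameter Kerr family, so as to close the fixed-point argument.
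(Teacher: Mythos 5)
Your outline correctly identifies the three main ingredients of the paper's proof --- rescaling to the perturbative regime, the codimension-$10$ characteristic gluing used as a black box, and a finite-dimensional matching of the $10$ charges against a $10$-parameter Kerr family --- but the closing argument you propose (implicit function theorem / contraction mapping) differs from the paper's (topological degree, Lemma \ref{DegreeLemma}), and the distinction matters for precisely the reason you flag as ``the main obstacle'' without resolving it.

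The key difficulty is that the map from a Kerr asymptotic-invariants vector $\la$ to the charges of the Kerr sphere $S_{2R}^{\KK(\la)}$ is not, to leading order, the identity: the paper proves (Proposition~\ref{CORfullchargeexpressionSRkerrreference}) that
\begin{align*}
\mathbf{G}^m\big(x_{-R,2R}^{\KK(\la)}\big) = \mathbf{C}(\la)^{i_m}-3R\cdot\mathbf{P}(\la)^{i_m}+\OO(R^{-1/4}),
\end{align*}
so the leading-order linear map has a $\mathbf{P}\to\mathbf{G}$ block of size $R$. It is invertible, but the inverse is not uniformly bounded in $R$: matching a finite $\mathbf{G}_\infty$ forces $\mathbf{P}^\Kerr=\OO(R^{-1/2})$ and $\mathbf{C}^\Kerr=\OO(R^{1/2})$, so the correct parameter domain is the $R$-dependent anisotropic ellipsoid $\EE_R(\mathbf{E}_\infty)$ of \eqref{EQDEFeeReinfty8889}, whose $\mathbf{C}$-axis grows like $R^{1/2}$ and whose $\mathbf{E},\mathbf{P}$-axes shrink like $R^{-1/2}$. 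Your claim of a ``local diffeomorphism with quantitatively controlled inverse'' must be read against these $R$-dependent weights; a fixed-point argument in a fixed (or uniformly bounded) neighbourhood of the Schwarzschild values cannot work.

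A second, related gap is that the paper's charge estimates supply only uniform \emph{bounds} on the error terms (and only $\smallO(1)$ in the $\mathbf{L}$- and $\mathbf{G}$-slots, see \eqref{EQdefinitionHOMOTOPY}), not Lipschitz estimates in $\la$. The degree argument only needs $0\notin f_R(\pr\EE_R(\mathbf{E}_\infty),t)$, which follows from such bounds, whereas your contraction needs a small Lipschitz constant for the residual-charge map with respect to $\la$ uniformly over $\EE_R(\mathbf{E}_\infty)$ --- an additional quantitative input the paper does not prove and your sketch does not supply. So as written the fixed-point step does not close. Finally, a smaller point: the transversal perturbation $S_R^{\mathrm{pert}}$ is intrinsic to the statement of the codimension-$10$ gluing theorem (it is how that theorem absorbs the infinite-dimensional family of gauge-dependent charges), not an independent gauge normalization you get to apply beforehand; presenting it as the latter conflates two distinct mechanisms.
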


\begin{figure}[H]
\begin{center}
\includegraphics[width=9.5cm]{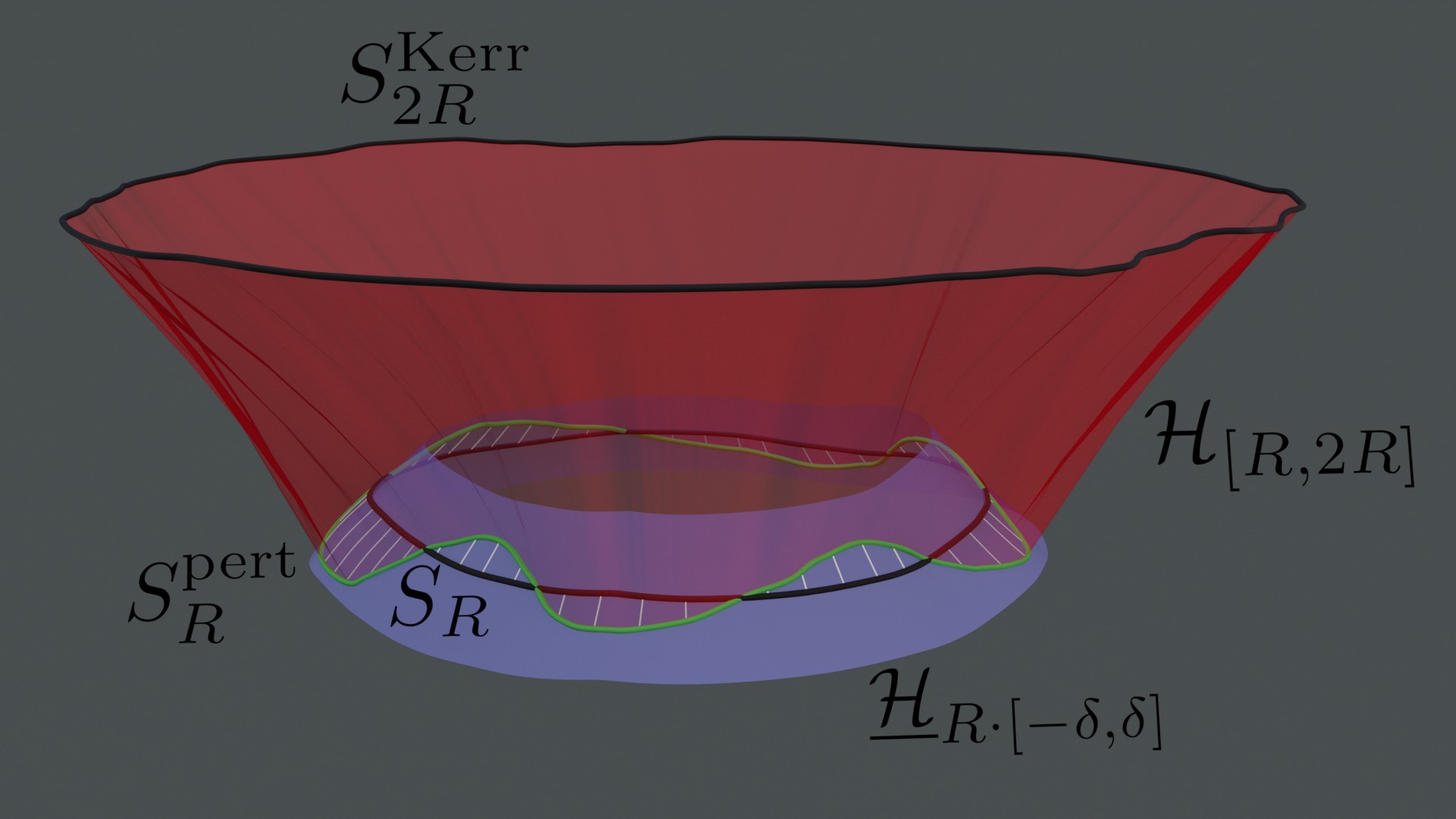} 
\vspace{0.4cm} 
\caption{Perturbative characteristic gluing to the Kerr family.}\label{THEO1}
\end{center}
\end{figure}

\ni The above perturbative characteristic gluing to Kerr is $C^2$-gluing for the metric components. In Theorem \ref{THMfirstINTROmainresult991} we glue to a reference sphere in Kerr. We could alternatively glue to a perturbation of the reference sphere in Kerr to avoid perturbing $S_R$ to $S_R^{\mathrm{pert}}$. In Theorem \ref{THMfirstINTROmainresult991} it is not necessary to have a family of sphere data data. Indeed, one can replace this family with one fixed sphere datum with sufficiently strong bounds.  

\subsubsection{Bifurcate characteristic gluing to Kerr} \label{SEC9990122}

We can also characteristically glue higher-order derivatives in all directions (without perturbing any of the spheres) to Kerr by applying the bifurcate characteristic gluing of \cite{ACR3,ACR1}. This yields higher-regularity gluing of metric components. We refer to Theorem \ref{THMdoubleCHARgluingTOkerr} for a precise version of the following.

\begin{theorem} 
\label{THMcharGluingTWOfirstintroversion199901} 
On spheres $S_R$ let $x_R$ be a strongly asymptotically flat family of sphere data together with prescribed higher-order derivatives. For $R\geq1$ sufficiently large, we can characteristically glue, to higher-order, along two null hypersurfaces bifurcating from an auxiliary sphere, the sphere $S_R$ to a sphere $S_{2R}^{\Kerr}$ in some Kerr spacetime.
\end{theorem}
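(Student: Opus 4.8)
The plan is to bootstrap Theorem~\ref{THMcharGluingTWOfirstintroversion199901} from the perturbative gluing of Theorem~\ref{THMfirstINTROmainresult991} by replacing the single null hypersurface with the bifurcate configuration of \cite{ACR3,ACR1}, so that the unwanted transversal perturbation $S_R \rightsquigarrow S_R^{\mathrm{pert}}$ is absorbed into the extra gauge freedom supplied by the auxiliary sphere $S_{\mathrm{aux}}$. First I would set up the geometry: given the strongly asymptotically flat family $x_R$ with its prescribed higher-order derivatives, I pick $R$ large and select, as in the proof of Theorem~\ref{THMfirstINTROmainresult991}, a Kerr spacetime whose mass, angular momentum, linear momentum and center-of-mass match the asymptotic charges $(\mathbf{E}_\infty,\mathbf{P}_\infty,\mathbf{L}_\infty,\mathbf{G}_\infty)$ of the family via Theorem~\ref{THMconstruction111999999111}; one then has a reference sphere $S_{2R}^{\Kerr}$ whose charges agree with those of $S_R$ up to the small error controlled by the asymptotic-flatness decay. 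The auxiliary sphere $S_{\mathrm{aux}}$ is placed (say at radius comparable to $R$) between $S_R$ and $S_{2R}^{\Kerr}$, and the two null hypersurfaces $\HH_{[R,\mathrm{aux}]}$ and $\HH_{[\mathrm{aux},2R]}$ are the legs of the bifurcate cone emanating from it.

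Next I would invoke the bifurcate codimension-$10$ characteristic gluing statement of \cite{ACR3,ACR1}: it glues, to arbitrary order of tangential and transversal derivatives and \emph{without perturbing either endpoint sphere}, the higher-order data on $S_R$ to the higher-order data on $S_{2R}^{\Kerr}$, up to the same $10$-dimensional space of charges $(\mathbf{E},\mathbf{P},\mathbf{L},\mathbf{G})$ evaluated at $S_{2R}^{\Kerr}$. The point is that the bifurcate construction has enough free data (the sphere data and the lapse/conformal freedom distributed over the two legs, plus the choice of $S_{\mathrm{aux}}$) to realize any prescribed jet on both ends. The remaining step is to kill the $10$ charges, and here the argument is the same in spirit as in the one-hypersurface case: the charge discrepancy between $x_R$ and the Kerr data is, by Theorem~\ref{THMconstruction111999999111} together with the explicit linear-level identification of $(\mathbf{E},\mathbf{P},\mathbf{L},\mathbf{G})$ with the $l=0,1$ modes of $\rho + r\,\Divd\beta$ and of $\beta$, a small quantity (vanishing in the limit $R\to\infty$ after the natural rescaling), and the $10$ Kerr parameters were chosen precisely to match the asymptotic charges. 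Since the charges depend continuously on both the gluing data and the Kerr parameters, a fixed-point / continuity argument in the $10$-dimensional parameter space closes the gap: one adjusts the Kerr parameters within an $O(\text{error})$ ball so that the charges at $S_{2R}^{\Kerr}$ coincide exactly with those propagated from $S_R$, and then the bifurcate gluing with matched charges produces genuine (not merely codimension-$10$) higher-order gluing.

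The main obstacle I expect is the compatibility of the higher-order transversal derivatives with the Kerr reference data. In Theorem~\ref{THMfirstINTROmainresult991} only $C^2$ of the metric (i.e.\ sphere data) is glued along a single cone, and the transversal perturbation $S_R^{\mathrm{pert}}$ is exactly the device that buys the last bit of freedom; once one demands higher-order gluing in \emph{all} directions without perturbing the spheres, one must verify that the bifurcate construction of \cite{ACR3,ACR1} still has enough degrees of freedom after the $10$ charges have been fixed, and that the Kerr data on $S_{2R}^{\Kerr}$ — whose higher jets are rigidly determined by the $10$ Kerr parameters — is consistent with the prescribed higher-order data on $S_R$ modulo the gauge perturbations allowed along the two legs. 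Concretely one has to check that the infinite-dimensional gauge-dependent charges on both legs can be arranged to absorb the full higher-order jet mismatch, and that the $10$-dimensional charge-matching (the genuinely rigid part) is the \emph{only} obstruction; this is where the estimates on the linearized constraint operator and its kernel from \cite{ACR3,ACR1}, combined with the asymptotic expansion of the Kerr sphere data in powers of $1/R$, do the real work. Once these are in place, the conclusion follows by taking $R$ sufficiently large so that all smallness hypotheses of the bifurcate gluing theorem and of the charge fixed-point argument are met.
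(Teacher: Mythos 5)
Your overall strategy coincides with the paper's: use the bifurcate codimension-$10$ gluing of \cite{ACR3,ACR1} (as a black box, which already returns higher-order gluing without perturbing the endpoint spheres, up to the $10$ charges), then choose the Kerr parameters by a degree-theoretic argument in $\RRR^{10}$ so that the $10$ residual charges vanish. The paper makes this precise via Remark~\ref{REMtransversalGluing}: the degree argument in the proof of Theorem~\ref{THMMAINPRECISE} uses only the charge estimate \eqref{EQestimJJJ1}, and the bifurcate gluing furnishes the analogous estimate, so the same argument closes Theorem~\ref{THMdoubleCHARgluingTOkerr} (= the precise version of the statement). Your worry about the Kerr higher-order jets is handled by the explicit construction of Kerr reference higher-order sphere data (Remark~\ref{REMARKConstructionHigherOrder} and Definition~\ref{DEFKerrreferenceSPHEREDATA}), not by extra freedom in the gauge charges.

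There is, however, a genuine gap in your charge-matching step. You assert that the charge discrepancy between $x_R$ and a fixed reference Kerr (chosen to match $(\mathbf{E}_\infty, \mathbf{P}_\infty, \mathbf{L}_\infty, \mathbf{G}_\infty)$) is small and that the final parameters lie in an ``$O(\text{error})$ ball'' around it. This is false: for any Kerr with $\mathbf{P}_{\mathrm{ADM}}^{\Kerr}\neq 0$, the charge $\mathbf{G}$ of its radius-$\sim R$ sphere contains the contribution $-2R\cdot \mathbf{P}_{\mathrm{ADM}}^{\mathrm{loc}}(S_{2R}^{\KK(\la)})$ (see \eqref{EQblowupG}), so $\mathbf{G}^{\Kerr}(S_{2R}^{\KK(\la)})\to\infty$ as $R\to\infty$, and a small neighborhood of a fixed Kerr cannot sweep out the $\mathbf{G}$-values needed for the degree argument. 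The paper's resolution is to let the Kerr parameters range over an $R$-dependent ellipsoid $\EE_R(\mathbf{E}_\infty)$ (see \eqref{EQdefeereinftyset999002}), within which $\vert \mathbf{P}(\la)\vert=\smallO(R^{-1/2})$ but $\vert \mathbf{C}(\la)\vert$ is allowed to grow like $O(R^{1/2})$, precisely so that the cancellation $\mathbf{C}(\la) - 3R\cdot\mathbf{P}(\la)$ stays bounded. The homotopy and boundary estimate \eqref{EQminimum444} are then anisotropic (different $R$-weights in each charge direction), and the uniform-in-$\la$ estimates for the Kerr sphere data (Proposition~\ref{PROPspheredataESTIM91}) are needed precisely to control this growing parameter set. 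Without this, the topological-degree (or fixed-point) step does not close; it is the central difficulty of the paper, not a routine continuity argument.
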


\begin{figure}[H]
\begin{center}
\includegraphics[width=9.5cm]{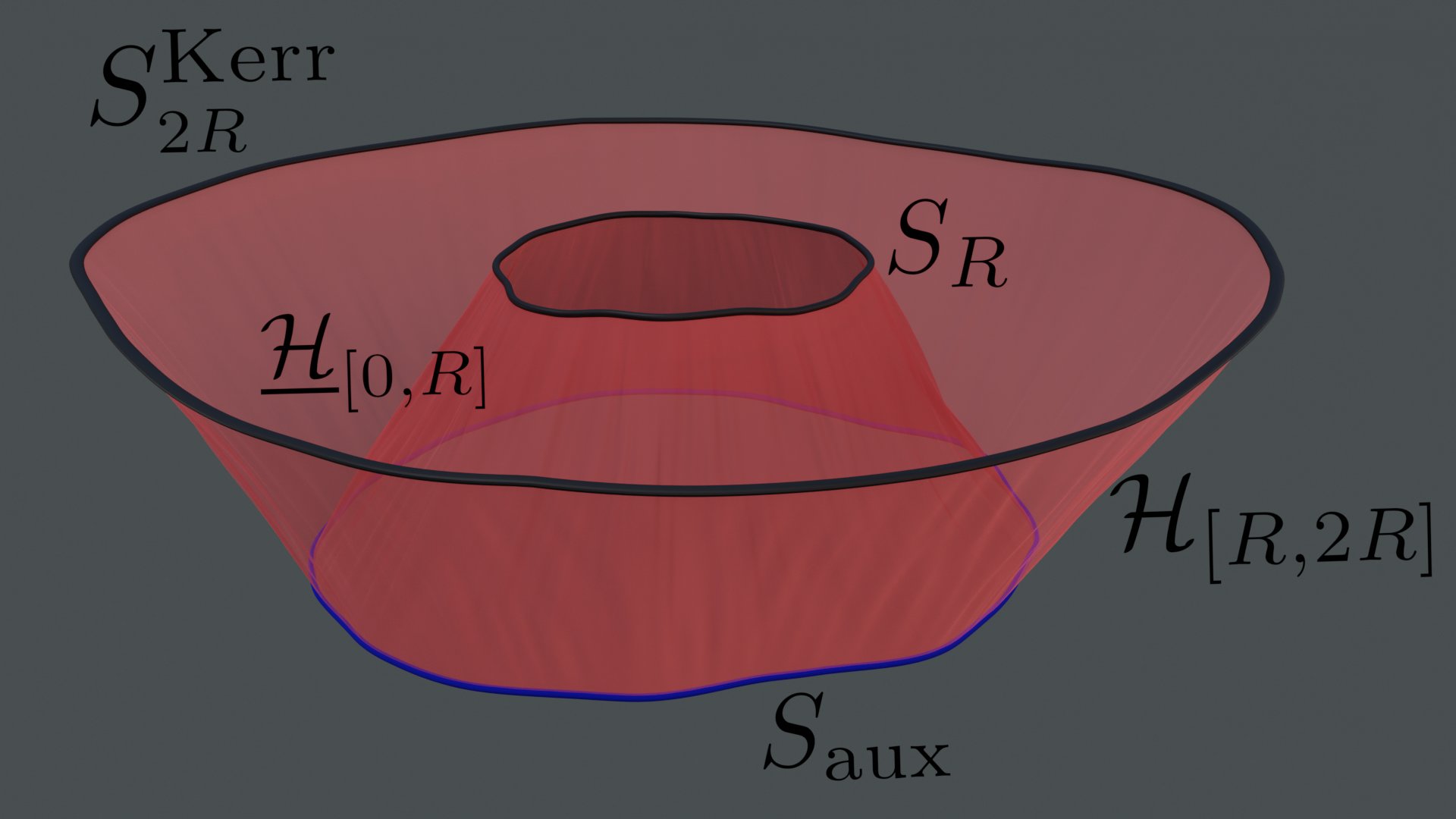} 
\vspace{0.4cm} 
\caption{Bifurcate characteristic gluing to the Kerr family.}\label{THEO2}
\end{center}
\end{figure}

\ni In Theorem \ref{THMcharGluingTWOfirstintroversion199901} it is not necessary to have a strongly asymptotic family of sphere data. Indeed, one can replace this family with one fixed sphere data with sufficiently strong bounds.
\subsubsection{Spacelike gluing to Kerr} \label{SEC9990123} As corollary of Theorem \ref{THMcharGluingTWOfirstintroversion199901} we can deduce spacelike gluing to Kerr for strongly asymptotically flat spacelike initial data, see Corollary \ref{THMspacelikeGLUINGtoKERRv2} for a precise version.

\begin{corollary}
\label{THMspacelikeGLUINGtoKERRv2INTRO} Let $(\Si,g,k)$ be smooth strongly asymptotically flat spacelike initial data with asymptotic invariants
\begin{align*} 
\begin{aligned} 
(\mathbf{E}_{\mathrm{ADM}}, \mathbf{P}_{\mathrm{ADM}}, \mathbf{L}_{\mathrm{ADM}}, \mathbf{C}_{\mathrm{ADM}})
\end{aligned} 
\end{align*}
such that $(\mathbf{E}_{\mathrm{ADM}} )^2 > \vert \mathbf{P}_{\mathrm{ADM}} \vert^2$. Then, sufficiently far out, $(g,k)$ can be smoothly glued across a compact region to spacelike initial data for some Kerr spacetime.
\end{corollary}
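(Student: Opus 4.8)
The plan is to derive Corollary~\ref{THMspacelikeGLUINGtoKERRv2INTRO} from the bifurcate characteristic gluing to Kerr (Theorem~\ref{THMcharGluingTWOfirstintroversion199901}) together with the identification of the asymptotic charges with the ADM invariants (Theorem~\ref{THMconstruction111999999111}), following the general principle ``characteristic gluing implies spacelike gluing'' already advertised in the introduction. First I would fix a large radius $R$ and consider the strongly asymptotically flat spacelike initial data $(\Si,g,k)$ restricted to the exterior region $\{|x|\geq R\}$. Emanating from a sphere $S_R\subset\Si$ one evolves (a piece of) the maximal Cauchy development and extracts two null hypersurfaces: the outgoing cone $\HH_{[R,2R]}$ and the ingoing cone, together with the induced sphere data $x_R$ and all its higher-order tangential and transversal derivatives along both cones. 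Strong asymptotic flatness of $(g,k)$ guarantees that this higher-order characteristic data on $S_R$ is, after rescaling by $R$, as close as we like (for $R$ large) to the corresponding higher-order data of the round sphere of radius $1$ in Minkowski, so Theorem~\ref{THMcharGluingTWOfirstintroversion199901} applies and produces a bifurcate characteristic gluing of $S_R$ to a sphere $S_{2R}^{\Kerr}$ in some Kerr spacetime $(\MM_{\Kerr},\g_{\Kerr})$, solving the null constraint equations along the two bifurcating cones, to higher order in all directions.

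The second step is to upgrade this higher-order characteristic gluing to an honest spacelike gluing. Having matched all derivatives of the metric components across the bifurcate null hypersurfaces, one constructs a smooth Lorentzian metric on a spacetime neighborhood which agrees with $(\Si,g,k)$-development on one side and with the Kerr metric on the other, and which solves the Einstein vacuum equations. Because the gluing is to infinite order (higher-regularity gluing), the glued metric is smooth across the characteristic hypersurfaces, and by a domain-of-dependence/characteristic-Cauchy-stability argument (the null constraint equations being of transport type, so that the glued data propagate) one obtains a genuine vacuum spacetime interpolating between the two. Intersecting this spacetime with a suitable smooth spacelike hypersurface $\widetilde\Si$ that coincides with $\Si$ for $|x|\leq R$ and lies inside the Kerr region for $|x|$ large yields spacelike initial data $(\widetilde\Si,\widetilde g,\widetilde k)$ that equals $(g,k)$ on a compact region and equals Kerr spacelike data outside a larger compact region, giving the stated conclusion.

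The role of the hypothesis $(\mathbf{E}_{\mathrm{ADM}})^2>|\mathbf{P}_{\mathrm{ADM}}|^2$ is to make the choice of the target Kerr spacetime legitimate. By Theorem~\ref{THMconstruction111999999111}, the asymptotic charges of the family $S_R$ equal $(\mathbf{E}_{\mathrm{ADM}},\mathbf{P}_{\mathrm{ADM}},\mathbf{L}_{\mathrm{ADM}},\mathbf{C}_{\mathrm{ADM}})$; the characteristic gluing to Kerr selects the Kerr parameters $(m,\abf)$ and a Lorentz frame so that the $10$ charges $(\mathbf{E},\mathbf{P},\mathbf{L},\mathbf{G})$ of the Kerr reference sphere match these values. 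A boosted Kerr spacetime with mass parameter $m$ and boost velocity $v$ has energy $\mathbf{E}=m/\sqrt{1-|v|^2}$ and momentum $\mathbf{P}=mv/\sqrt{1-|v|^2}$, so $\mathbf{E}^2-|\mathbf{P}|^2=m^2>0$; thus the inequality $\mathbf{E}_{\mathrm{ADM}}^2>|\mathbf{P}_{\mathrm{ADM}}|^2$ is exactly the condition that there exists a (boosted) Kerr spacetime realizing the prescribed energy-momentum with a positive mass parameter, and similarly $\mathbf{L}_{\mathrm{ADM}}$ and $\mathbf{C}_{\mathrm{ADM}}$ are matched by choosing the Kerr angular momentum and recentering. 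One then invokes the precise version Theorem~\ref{THMdoubleCHARgluingTOkerr} to conclude.

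The main obstacle I expect is the passage from the characteristic picture back to a spacelike slice while retaining smoothness and the vacuum property: one must ensure that the bifurcate characteristic data, once glued to infinite order, actually arise as the restriction of a smooth vacuum metric defined on a full neighborhood straddling both cones, and that a spacelike hypersurface can be threaded through the construction so that the induced $(\widetilde g,\widetilde k)$ genuinely transitions from $(g,k)$ to Kerr across a compact set. This requires a careful local existence/regularity analysis for the characteristic initial value problem with the glued data (controlling the dependence on the large parameter $R$ and verifying that no new charge obstructions are introduced at the bifurcation sphere), together with the verification that strong asymptotic flatness provides decay strong enough for all the smallness requirements of Theorem~\ref{THMcharGluingTWOfirstintroversion199901} to hold uniformly as $R\to\infty$; the remaining steps are comparatively routine patching and stability arguments.
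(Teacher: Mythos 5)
Your proposal takes essentially the same route as the paper: construct higher-order sphere data from the spacelike data, invoke the bifurcate characteristic gluing to Kerr (with the Kerr target chosen via the charge/ADM correspondence of Theorem~\ref{THMconstruction111999999111}, the hypothesis $\mathbf{E}_{\mathrm{ADM}}^2>|\mathbf{P}_{\mathrm{ADM}}|^2$ ensuring the Kerr asymptotic invariants map is defined), then recover a vacuum spacetime by characteristic local existence and thread a spacelike slice through it. One small geometric slip worth noting: in the paper's bifurcate construction the two null cones do not emanate from $S_R$ itself but from a separate auxiliary bifurcation sphere $S_{-R,R}$ reached from $S_{0,R}$ by going backwards along an ingoing cone, so that $S_R$ and the Kerr sphere $S_{-R,2R}^{\Kerr}$ sit at the far ends of the two bifurcating cones rather than $S_R$ being the apex -- this does not affect the logic of your argument but is the picture that makes the threading of the interpolating spacelike hypersurface work cleanly.
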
 

\begin{figure}[H]
\begin{center}
\includegraphics[width=9.5cm]{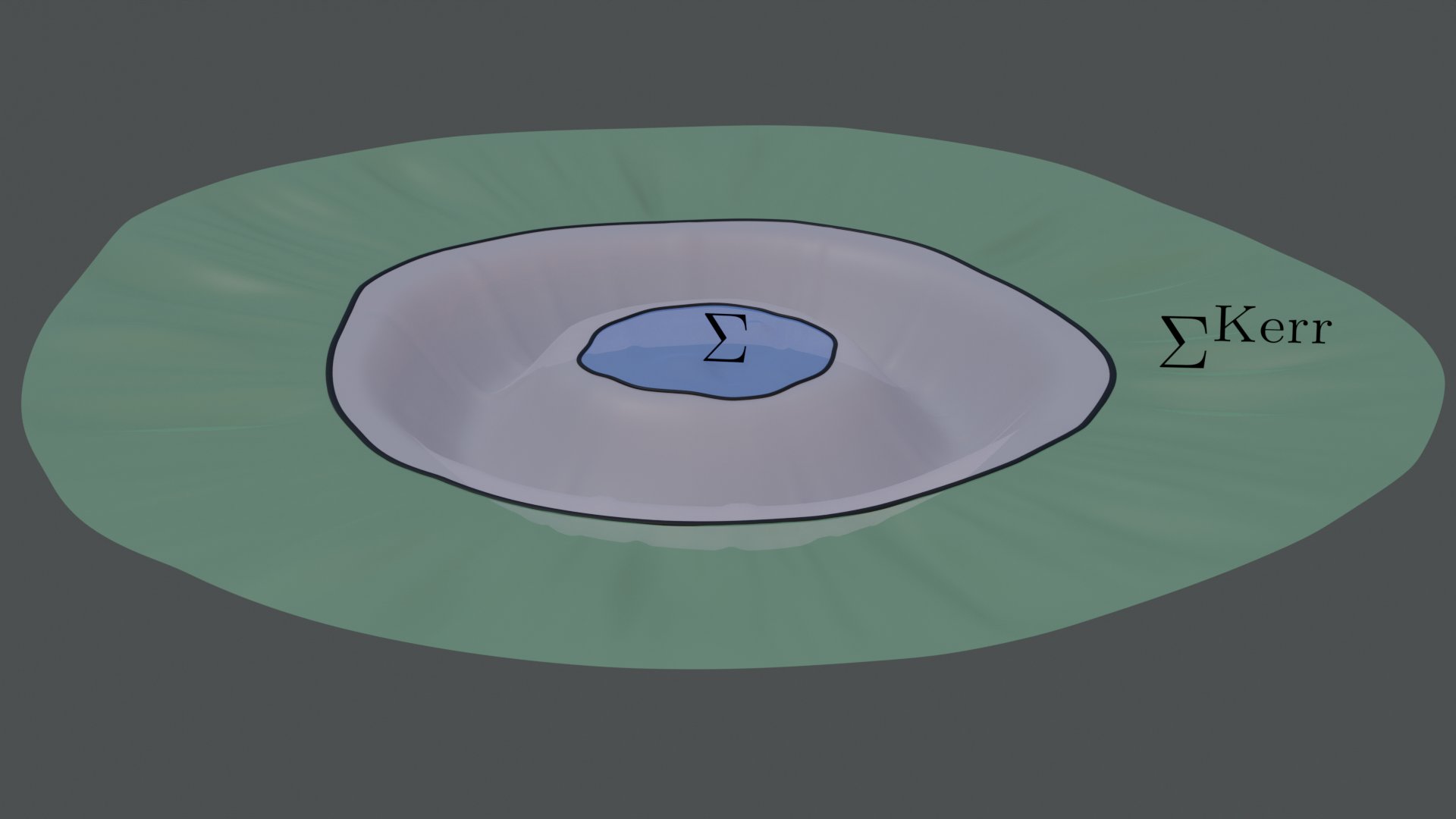} 
\vspace{0.4cm} 
\caption{Smooth spacelike gluing to Kerr.}\label{FIGspacelikegl1}
\end{center}
\end{figure}

\ni \emph{Remarks on Corollary \ref{THMspacelikeGLUINGtoKERRv2INTRO}.}
\begin{enumerate}
\item The assumption of strong asymptotic flatness of the spacelike initial data corresponds to working in the center-of-mass frame of the isolated gravitational system, see, for example, \cite{ChrKl93}.
\item The proof of Corollary \ref{THMspacelikeGLUINGtoKERRv2INTRO} is by combining the bifurcate gluing to Kerr with local existence \cite{LukChar,LukRod1} for the characteristic initial value problem.
\end{enumerate}

\begin{figure}[H]
\begin{center}
\includegraphics[width=9.5cm]{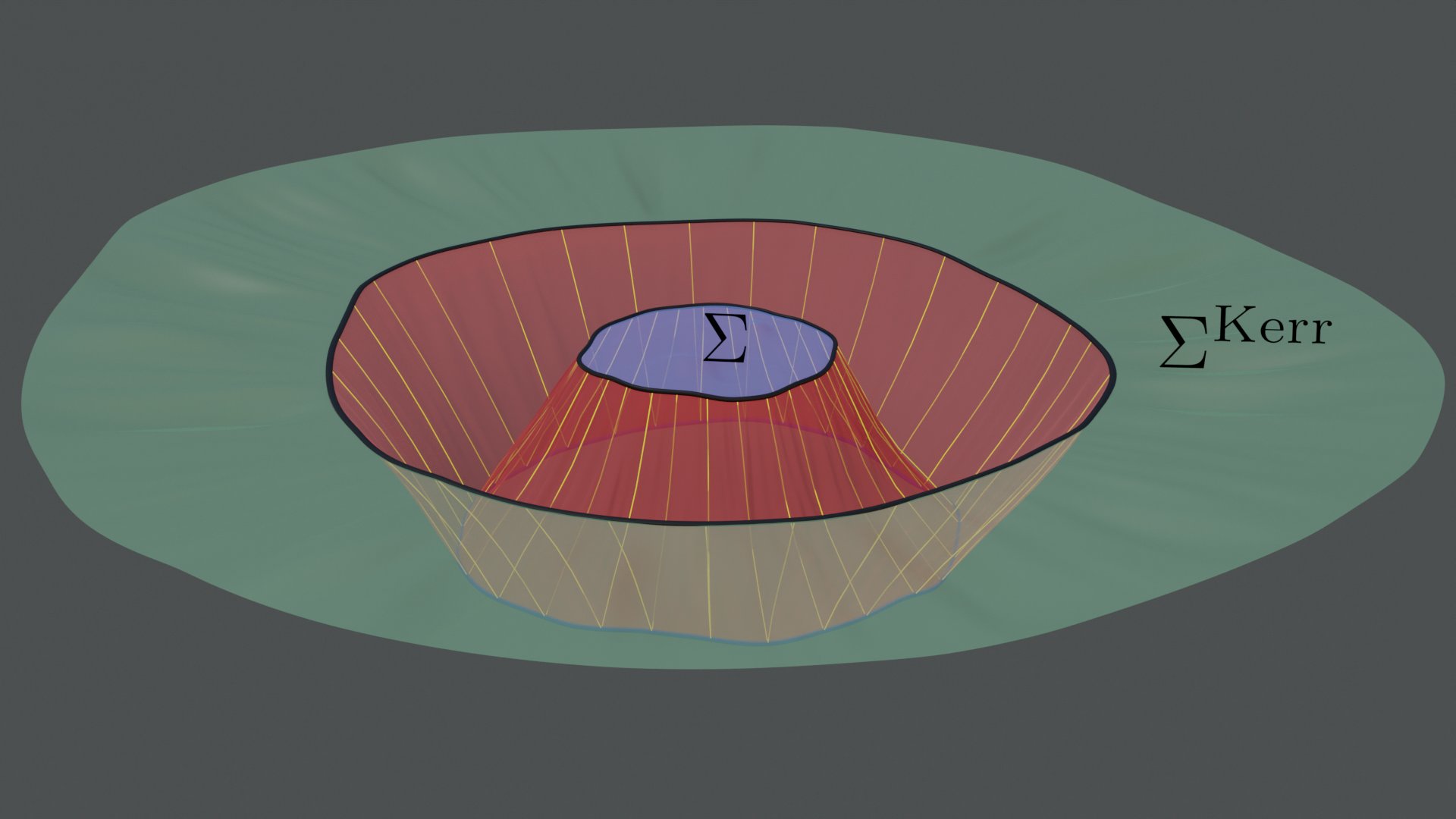} 
\vspace{0.4cm} 
\caption{Application of bifurcate characteristic gluing in the proof of spacelike gluing.}\label{FIGspacelikegl2}
\end{center}
\end{figure}

\subsection{Overview of the main ideas} \label{SEC999013}

\subsubsection{Main steps}\label{SEC9990131} In this section we will outline the main steps of the proofs of Theorems \ref{THMfirstINTROmainresult991} and \ref{THMcharGluingTWOfirstintroversion199901}. We refer the reader to Figure \ref{FIGstrategy} below for an illustration of the relevant spheres and charges.

\begin{figure}[H]
\begin{center}
\includegraphics[width=12cm]{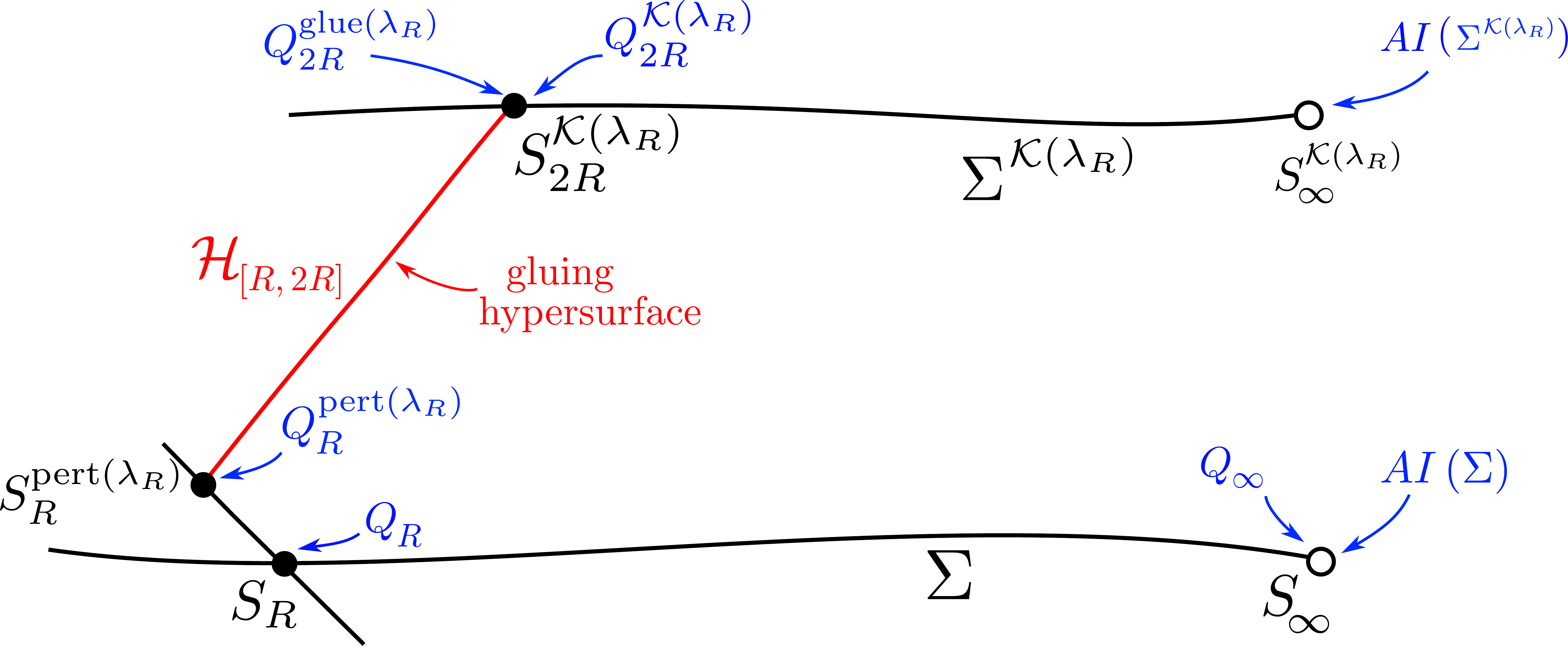} 
\vspace{0.4cm} 
\caption{The relevant spheres and charges for the characteristic gluing to the Kerr family.}\label{FIGstrategy}
\end{center}
\end{figure}

\begin{enumerate}
\item We setup the problem by distinguishing two cases:
\begin{enumerate}
\item We are given a strongly asympotically flat spacelike initial data set $(\Si,g,k)$ foliated by $2$-spheres $S_R$.
\item We are given a strongly asymptotically flat family of spheres $S_R$ with sphere data $(x_R)$ which are not lying in strongly asymptotically flat spacelike initial data.
\end{enumerate}

\item We apply the perturbative codimension-$10$ characteristic gluing of \cite{ACR3,ACR1} to glue a perturbation $S_R^{\mathrm{pert}(\la_R)}$ of $S_R$ to a sphere $S_{2R}^{\KK(\la_R)}$ in some Kerr spacetime $\KK(\la_R)$ along a null hypersurface $\HH_{[R,2R]}$. Here the $\la_R$ is a parametrization of the Kerr spacetimes through asymptotic invariants, see Step (4) below. The gluing holds up to the $10$-dimensional space of charges. We denote the associated charges on $S_R$ and $S_R^{\mathrm{pert}(\la_R)}$ by $Q_R$ and $Q_R^{\mathrm{pert}(\la_R)}$, respectively. Moreover, we denote by $Q_{2R}^{\KK(\la_R)}$ the charges on $S_{2R}^{\KK(\la_R)}$ calculated from Kerr, and by $Q_{2R}^{\mathrm{glue}(\la_R)}$ the charges on the same sphere calculated from the gluing solution on $\HH_{[R,2R]}$. We consider the charge difference
\begin{align} 
\begin{aligned} 
(\Delta Q)(\la_R):= Q_{2R}^{\KK(\la_R)} - Q_{2R}^{\mathrm{glue}(\la_R)}.
\end{aligned} \label{EQdefDIFFchargegs}
\end{align}
Our goal is to determine a Kerr parameter $\la_R$ for which $(\Delta Q)(\la_R)=0$.

\item We derive asymptotic expansions for the charges $Q_R$ for large $R\geq1$ (and denote the limits by $Q_\infty$). In case (a) of a spacelike hypersurface $\Si$, we show that these are related to the asymptotic invariants $\mathrm{AI}(\Si)$ of $(\Si,g,k)$, thus yielding a geometric interpretation of the charges. See Section \ref{SEC9990133}.

\item We make use of the $10$-dimensional parametrization $\la \in \RRR^{10}$ of Kerr spacelike initial data $(\Si^{\KK(\la)},g^{\KK(\la)},k^{\KK(\la)})$ by its asymptotic invariants $\mathrm{AI}(\Si^{\KK(\la)})$ developed in Chru\'sciel--Delay \cite{ChruscielDelay}, and construct spheres $S_{2R}^{\KK(\la)}$ with sphere data $x_{2R}^{\KK(\la)}$ lying in these spacelike initial data. In case (a) we consider the parameter $\la$ of Kerr spacelike initial data to lie in a ball $B$ of finite radius in $\RRR^{10}$. In case (b) we consider a larger set of parameters $\la=\la_R$, namely, an ellipsoid $\EE_R$ with semi-major axis proportional to $R^{1/2}$. See Sections \ref{SEC9990134} and \ref{SEC9990135}.

\item We derive a homotopy between the charge difference $(\Delta Q)(\la_R)$ and an appropriate difference between the asymptotic charges $Q_\infty$ and the asymptotic invariants $\mathrm{AI}(\Si^{\KK(\la_R)})$ of the Kerr initial data $\Si^{\KK(\la_R)}$. The latter difference is shown to always admit a root $\la'_R$. We prove uniform estimates for the homotopy that allow us to conclude by a topological degree argument that the charge difference  $(\Delta Q)(\la_R)$ also admits a root. See Section \ref{SEC9990135}.

\end{enumerate}

\subsubsection{Sphere data and charges}\label{SEC9990132} 

For a given sphere $S$, the sphere data $x$ on $S$ is given by the following geometric components on $S$,
\begin{align*} 
\begin{aligned} 
x = (\Om,\gd, \trchi, \chih, \trchib, \chibh, \eta, \om, D\om, \omb, \Du\omb, \a, \be, \rh, \si, \beb, \ab).
\end{aligned} 
\end{align*}
The above components are expressed in a null frame in the context of a double null coordinate system. For the precise definitions we refer to Section \ref{SECsetupdoublenull}.  The data $x$ determines all derivatives of the spacetime time metric up to order $2$ (see also Section \ref{SECspheredataDEF2333}). Furthermore, given sphere data on $S$ we introduce the associated charges 
\begin{align*} 
\begin{aligned} 
Q= \lrpar{\mathbf{E},\mathbf{P},\mathbf{L},\mathbf{G}} \text{ on } S
\end{aligned} 
\end{align*}
to be the integrals over $S$ as defined by \eqref{EQdefcharges999919999}.

We define \emph{strongly asymptotically flat sphere data} on a family of spheres $S_R$ in accordance with the decay towards \emph{spacelike} infinity in the works \cite{ChrKl93} and \cite{KlNic}. Indeed, we show by explicit construction that each strongly asymptotically flat spacelike initial data admits strongly asymptotically flat sphere data on families of spheres (but in general we do not assume that sphere data stems from spacelike initial data). Our construction is such that the special case of Schwarzschild spacelike data (expressed in isotropic coordinates so that strong asymptotic flatness holds) leads precisely to the family of Schwarzschild reference sphere data with respect to Eddington--Finkelstein double null coordinates. To achieve this for general spacelike initial data, we rescale the coordinate sphere $S_r$ to $S_1$, apply the coordinate change from isotropic coordinates to Schwarzschild coordinates, make appropriate gauge choices for $\widehat{L}$, $\widehat{\Lb}$ and $\Om$, and use the definitions of Ricci coefficients and null curvature components. We show that the rescaled quantities are well-defined and derive estimates and then we rescale back up. See Sections \ref{SECspacelikeRescalingtoSmalldata} and \ref{SECproofExistenceConstruction}.

\subsubsection{Relation of charges to asymptotic invariants}\label{SEC9990133}

For strongly asymptotically flat families of sphere data $x_R$ on spheres $S_R$, the following asymptotic expansions hold for large $R\geq1$,
\begin{align*} 
\begin{aligned} 
\mathbf{E}\lrpar{ S_{R}} =& \mathbf{E}_\infty + \smallO(R^{-1/2}), & \mathbf{P}\lrpar{ S_{R}} =&  \smallO(R^{-1/2}), \\ 
\mathbf{L}\lrpar{ S_{R}} =& \mathbf{L}_\infty + \smallO(1), & \mathbf{G}\lrpar{ S_{R}} =& \mathbf{G}_\infty + \smallO(1).
\end{aligned} 
\end{align*}
where
$\lrpar{\mathbf{E}_\infty, \mathbf{P}_\infty=0, \mathbf{L}_\infty, \mathbf{G}_\infty}$ are defined as the limits of $(\mathbf{E},\mathbf{P},\mathbf{L},\mathbf{G})$ on $S_R$ as $R\to \infty$. These asymptotic charges can also be defined for more general families of sphere data (not necessarily strongly asymptotically flat), in which case we expect $\mathbf{P}_\infty \neq 0$ and $\mathbf{G}_\infty=+\infty$. On the other hand, if the family of sphere data lies in strongly asymptotically flat spacelike initial data, then we have the following stronger decay rates for $\mathbf{E}$ and $\mathbf{P}$,
\begin{align} 
\begin{aligned} 
 \mathbf{E}(S_R) = \mathbf{E}_{\infty} + \OO(R^{-1}), \,\, \mathbf{P}(S_R) = \OO(R^{-3/2}).
\end{aligned} \label{EQdecaystrongINTRO9993}
\end{align}

\ni The charges $(\mathbf{E},\mathbf{P},\mathbf{L},\mathbf{G})$ have a well-defined connection to the ADM asymptotic invariants. Indeed, for spheres $S_R$ in asymptotically flat spacelike initial data with well-defined ADM asymptotic invariants, we can relate the charges $(\mathbf{E},\mathbf{P},\mathbf{L},\mathbf{G})$ to the local integrals $\mathbf{E}_{\mathrm{ADM}}^{\mathrm{loc}}$ of the ADM \emph{energy}, $\mathbf{P}_{\mathrm{ADM}}^{\mathrm{loc}}$ of the ADM \emph{linear momentum}, $\mathbf{L}_{\mathrm{ADM}}^{\mathrm{loc}}$ of the ADM \emph{angular momentum}, and $\mathbf{C}_{\mathrm{ADM}}^{\mathrm{loc}}$ of the ADM \emph{center-of-mass} as follows,
\begin{align} 
\begin{aligned} 
\mathbf{E}(S_R) =& \mathbf{E}_{\mathrm{ADM}}^{\mathrm{loc}}(S_R)+\smallO(1), & \mathbf{P}(S_R) =& \mathbf{P}_{\mathrm{ADM}}^{\mathrm{loc}}(S_R) +\smallO(1), \\
\mathbf{L}(S_R) =& \mathbf{L}_{\mathrm{ADM}}^{\mathrm{loc}}(S_R) +\smallO(1), & \mathbf{G}(S_R) =& \mathbf{C}_{\mathrm{ADM}}^{\mathrm{loc}}(S_R) - R \cdot \mathbf{P}_{\mathrm{ADM}}^{\mathrm{loc}}(S_R) +\smallO(1).
\end{aligned} \label{EQintro9993999}
\end{align}
Hence in that case,
\begin{align*} 
\begin{aligned} 
\mathbf{E}_\infty = \mathbf{E}_{\mathrm{ADM}}, \,\, \mathbf{P}_\infty = \mathbf{P}_{\mathrm{ADM}}, \,\, \mathbf{L}_\infty = \mathbf{L}_{\mathrm{ADM}}.
\end{aligned} 
\end{align*}

\ni For families of sphere data in an asymptotically flat spacelike hypersurface with non-vanishing total ADM linear momentum $\mathbf{P}_{\mathrm{ADM}}\neq 0$ (such hypersurfaces are not strongly asymptotically flat), \eqref{EQintro9993999} shows that $\mathbf{P}_\infty \neq0$, and subsequently, $\vert \mathbf{G}_\infty \vert=+\infty$. Importantly, the Kerr spacelike initial data satisfies in general $\mathbf{P}_{\mathrm{ADM}}\neq 0$ which has significant repercussions for our analysis of the gluing problem to the Kerr family. In particular, it forces us to consider and prove delicate estimates for spacelike initial data with very large center-of-mass $\mathbf{C}_{\mathrm{ADM}}$, see also the discussion in Sections \ref{SEC9990134} and \ref{SEC9990135} below.

On the other hand, for families of sphere data in strongly asymptotically flat spacelike hypersurfaces (in which case $\mathbf{P}_{\mathrm{ADM}}=0$ and $\mathbf{P}_{\mathrm{ADM}}^{\mathrm{loc}}(S_R)=\OO(R^{-3/2})$), we have by \eqref{EQintro9993999} that $\mathbf{G}_\infty= \mathbf{C}_{\mathrm{ADM}}$ is well-defined.

\subsubsection{Parametrization of the Kerr reference family}\label{SEC9990134}
We define Kerr reference spacelike initial data by using the Kerr parameter map constructed in Chrusciel--Delay \cite{ChruscielDelay}. This map takes as input the Kerr parameters of mass $M$ and (renormalized) angular momentum $a$, as well as a rotation matrix $\Lambda'$, a Lorentz boost matrix $\Lambda$, and a translation vector $\mathbf{a}$, and yields as output spacelike initial data for Kerr. The map is constructed by taking a constant time slice in Boyer-Lindquist coordinates, and rotating, translating, and Lorentz boosting it following the given input.

We define the so-called Kerr asymptotic invariants map $\mathcal{K}$, which takes as input a set of asymptotic invariants $\lambda=(\mathbf{E}, \mathbf{P}, \mathbf{L},\mathbf{C})\in\RRR \times \RRR^3 \times \RRR^3 \times \RRR^3$ and yields as output Kerr spacelike initial data $\mathcal{K}(\lambda)=(M,a,\La, \La',\mathbf{a})$ that realizes these asymptotic invariants. In Chrusciel--Delay \cite{ChruscielDelay} it is shown that this map is well defined for $\mathbf{E}^2 > \vert \mathbf{P} \vert^2$.

We derive asymptotic convergence estimates for the local ADM integrals of spheres $S_R$ in Kerr spacelike initial data to the asymptotic ADM invariants. This is accomplished by rescaling close to the Schwarzschild data with small mass and by explicitly estimating the Kerr parameters $\KK(\la)=(M,a,\La,\La',\mathbf{a})$  through the asymptotic invariants (these estimates rely on the full class of symmetries of Minkowski spacetime). Then the estimates for the Kerr parameters yield precise control of the spacelike data which can then be used to estimate the convergence rates of the local ADM integrals. Importantly, the derived convergence rates are \emph{uniform} over a large set of parameters $\la$, see Section \ref{SEC9990135} below.

Moreover, we define sphere data $x_R^{\Kerr}$ on spheres $S_R$ in the Kerr spacelike initial data and derive asymptotic expansions for the charges $(\mathbf{E},\mathbf{P},\mathbf{L},\mathbf{G})$ by using the geometric interpretation \eqref{EQintro9993999} and the above convergence estimates for the local ADM integrals.

\subsubsection{Choice of Kerr to glue to}\label{SEC9990135} 

The goal is to prove that for sufficiently large $R\geq1$ we can characteristically glue to a Kerr sphere $S_{2R}^{\KK(\la)}$. Ideally, one would like to consider a fixed set of Kerr parameters $\la$ such that the spheres $S^{\KK(\la)}_{2R}$ in the corresponding spacelike initial data have asymptotic charges $(\mathbf{E}_\infty^\Kerr, \mathbf{P}_\infty^\Kerr, \mathbf{L}_\infty^\Kerr, \mathbf{G}_\infty^\Kerr)$ close to the given $(\mathbf{E}_\infty, \mathbf{P}_\infty, \mathbf{L}_\infty, \mathbf{G}_\infty)$, and subsequently argue that there exists a $\la$ in that set which solves the gluing problem.

However, for each fixed $\la$ with $\mathbf{P}^\Kerr_{\mathrm{ADM}}\neq 0$ we have by \eqref{EQintro9993999} that as $R\to \infty$,
\begin{align} 
\begin{aligned} 
\mathbf{G}^{\Kerr}(S^{\KK(\la)}_{2R}) =& \underbrace{\mathbf{C}_{\mathrm{ADM}}^{\mathrm{loc}}(S^{\KK(\la)}_{2R})}_{\to \mathbf{C}^\Kerr_{\mathrm{ADM}}} - (2R) \cdot \underbrace{\mathbf{P}_{\mathrm{ADM}}^{\mathrm{loc}}(S^{\KK(\la)}_{2R})}_{\to \mathbf{P}_{\mathrm{ADM}}^\Kerr} +\smallO(1) \to \infty,
\end{aligned} \label{EQblowupG}
\end{align}
which in particular shows that $\mathbf{G}_\infty^\Kerr$ is far away from matching the finite $\mathbf{G}_\infty$. Thus we have to change our approach and consider an \emph{$R$-dependent} set of $\la$ which accommodates bounded $\mathbf{G}^{\Kerr}(S^{\KK(\la)}_{2R})$ by allowing for growing center-of-mass $\mathrm{C}^\Kerr_{\mathrm{ADM}} = \OO(R^{1/2})$ and small linear momentum $\mathrm{P}_{\mathrm{ADM}}^\Kerr = \OO(R^{-1/2})$ to cancel to top order. Namely, we consider the ellipsoid $\EE_R(\mathbf{E}_\infty)$ defined by
\begin{align*} 
\begin{aligned} 
\lrpar{R^{1/2} \vert \mathbf{E}(\la)- \mathbf{E}_\infty \vert}^2 + \lrpar{R^{1/2} \vert \mathbf{P}(\la) \vert}^2+\lrpar{R^{-1/4} \vert \mathbf{L}(\la) \vert}^2+\lrpar{R^{-1/2} \vert \mathbf{C}(\la) \vert}^2 \leq \lrpar{\mathbf{E}_\infty}^2.
\end{aligned} 
\end{align*}

\ni In the simpler case where the spheres $S_R$ lie in strongly asymptotically flat spacelike hypersurfaces, we have the stronger decay $\mathbf{P}(S_R)=\OO(R^{-3/2})$ which implies in the matching process that $\mathbf{G}^{\Kerr}(S^{\KK(\la)}_{2R})$ remains finite as $R\to \infty$ (unlike in \eqref{EQblowupG}).

To determine the Kerr parameter $\la_R$ which makes the charge difference $(\Delta Q)(\la_R)=0$ (see \eqref{EQdefDIFFchargegs} for definition), we use the asymptotic expansions of $\QQ_R$ and $\QQ_{2R}^{\KK(\la_R)}$ (discussed in Sections \ref{SEC9990133} and \ref{SEC9990134} above) to construct a homotopy from 
$F_1(\la_R) := (\Delta Q)(\la_R)$ to the mapping $F_0(\la_R)$ defined by
\begin{align*} 
\begin{aligned} 
(\mathbf{E}_\infty,\mathbf{P}_\infty, \mathbf{L}_\infty, \mathbf{C}_\infty) - \lrpar{\mathbf{E}_{\mathrm{ADM}}^{\Kerr},\mathbf{P}_{\mathrm{ADM}}^{\Kerr},\mathbf{L}_{\mathrm{ADM}}^{\Kerr},\mathbf{C}_{\mathrm{ADM}}^{\Kerr}-2R \cdot \mathbf{P}_{\mathrm{ADM}}^{\Kerr}}.
\end{aligned} 
\end{align*}
The mapping $F_0(\la_R)$ has a unique zero in the interior of $\EE_R(\mathbf{E}_\infty)$ for large $R\geq1$. Moreover, as indicated in the previous section, the asymptotic expansions for $\QQ_{2R}^{\KK(\la_R)}$ hold \emph{uniformly} for large $R\geq1$ and $\la \in \EE_R(\mathbf{E}_\infty)$, so that in particular we have uniform estimates for the constructed homotopy. Therefore we conclude by a topological degree argument that the charge difference $(\Delta Q)(\la_R)$ must have a zero.

\subsection{Overview of the paper} The paper is structured as follows.
\begin{itemize}
\item In Section \ref{sec:GeometryOfNullHypersurfaces} we introduce the notation and state the definitions and preliminaries of this paper.
\item In Section \ref{SECpreciseSTATEMENTMAIN111} we precisely state the main results of this paper.
\item In Section \ref{SECproofMainTheorem1} we prove the main theorem of this paper, Theorem \ref{THMMAINPRECISE}.
\item In Section \ref{SECspacelikecorollary} we prove Corollary \ref{THMspacelikeGLUINGtoKERRv2}, the gluing of spacelike initial data to Kerr.
\item In Section \ref{SECCSgluingbasics} we introduce spacelike initial data for the Einstein vacuum equations, and asymptotic invariants.
\item In Section \ref{SECstatementConstruction} we prove construct strongly asymptotically flat families of sphere data from strongly asymptotically flat spacelike initial data. 
\item In Section \ref{SECappKerrFamilyDetails} we recall the construction of Kerr reference spacelike initial data, define Kerr reference sphere data, and prove estimates.
\item In Appendix \ref{SECcompletenessKERR} we prove that the Kerr reference spheres admit future-complete outgoing null congruences.
\item In Appendix \ref{SECellEstimatesSpheres} we recapitulate the Fourier theory for Hodge systems on Riemannian $2$-spheres, and prove geometric identities and Fourier decomposition estimates.

\end{itemize}

\subsection{Acknowledgements} S.A. acknowledges support through the NSERC grant 502581 and the Ontario Early Researcher Award. S.C. acknowledges support through the NSF grant DMS-1439786 of the Institute for Computational and Experimental Research in Mathematics (ICERM). I.R. acknowledges support through NSF grants DMS-2005464, DMS-1709270 and a Simons Investigator Award.

\section{Notation, definitions and preliminaries} \label{sec:GeometryOfNullHypersurfaces}

\ni In this section we introduce the notation, definitions and preliminaries of this paper. We start with the following general notation.
\begin{itemize}
\item For two real numbers $A$ and $B$, the inequality $A \les B$ means that there is a universal constant $C>0$ such that $A \leq C \, B$. 
\item Greek indices range over $\a=0,1,2,3$, lowercase Latin indices over $a=1,2,3$ and uppercase Latin indices over $A=1,2$.
\item For a real number $r>0$ and a point $x$ in a metric space $X$, denote by $B(x,r)$ the open ball in $X$ of radius $r$ centered at $x$.
\item For real numbers $\varep>0$ and $\a \geq 0$, let $\OO(\varep^\a)$ and $\smallO(\varep^\a)$ respectively denote terms such that
\begin{align*} 
\begin{aligned} 
\lim\limits_{\varep\to0} \frac{\OO(\varep^\a)}{\varep^\a} <\infty, \,\,\,\, \lim\limits_{\varep\to0} \frac{\smallO(\varep^\a)}{\varep^\a} =0.
\end{aligned} 
\end{align*}
\item For given Cartesian coordinates $(x^1,x^2,x^3)$, we define the associated spherical coordinates $(r,\th^1,\th^2)$ by 
\begin{align} 
\begin{aligned} 
x^1=r \sin\th^1\cos\th^2, \,\, x^2= r\sin\th^1\sin\th^2, \,\, x^3 = r\cos\th^1.
\end{aligned} \label{EQdefinSPHERICALAFcoord}
\end{align}
\end{itemize}
\subsection{Double null coordinates} \label{SECsetupdoublenull} In this section we summarize the standard setup of double null coordinates, Ricci coefficients and null curvature components. We refer to Section 2.1 in \cite{ACR1} for full details.

Let $(\MM,\g)$ be a vacuum spacetime, and denote by $\D$ its covariant derivative and by $\Rbf$ its Riemann curvature tensor. Let $u$ and $v$ be two local optical functions on $\MM$, and denote
\begin{align*} 
\begin{aligned} 
\HH_{u_0} = \{u=u_0\}, \,\, \HHb_{v_0} = \{v=v_0\}, \,\, S_{u_0,v_0} = \HH_{u_0} \cap \HHb_{v_0},
\end{aligned} 
\end{align*}
where we assume the optical functions $u$ and $v$ are such that the $S_{u,v}$ are spacelike $2$-spheres. Let $\gd$ denote the induced metric on $S_{u,v}$, and $\Nd$ the induced covariant derivative.  Let $r(u,v)$ denote the \emph{area radius} of $(S_{u,v},\gd)$, defined by
\begin{align*} 
\begin{aligned} 
\mathrm{area}_\gd \lrpar{S_{u,v}} = 4\pi r^2.
\end{aligned}
\end{align*}
Define the \emph{geodesic null vectorfields} $L'$ and $\Lb'$, the \emph{null lapse} $\Om$, and the \emph{normalized null vectorfields} $\widehat{L}$ and $\widehat{\Lb}$ by
\begin{align} 
\begin{aligned} 
L' := -2 \D u, \,\, \Lb' := -2 \D v, \,\, \Om^{-2} := -\half \g\lrpar{L',\Lb'}, \,\, \widehat{L} := \Om L', \,\,\widehat{\Lb} := \Om \Lb'.
\end{aligned} \label{EQdefNullPairsINM}
\end{align}
Let $(\th^1,\th^2)$ be local coordinates on $S_{u_0,v_0}$, for some given real numbers $v_0>u_0$. We extend $(\th^1,\th^2)$ to $\MM$ by first transporting them along the null generators $L'$ of $\HH_{u_0}$ and then onto $\MM$ along the generators $\Lb'$ of the null hypersurfaces $\HHb_v$. The coordinates $(u,v,\th^1,\th^2)$ are called \emph{double null coordinates}. In particular, it holds in double null coordinates that
\begin{align} 
\begin{aligned} 
L= \pr_v + b, \,\, \Lb = \pr_u.
\end{aligned} \label{EQvectorexpr2488}
\end{align}
and the spacetime metric $\mathbf{g}$ can be written as
\begin{align} 
\begin{aligned} 
\g = -4 \Om^2 du dv + \gd_{AB} \lrpar{d\th^A + b^A dv}\lrpar{d\th^B + b^B dv},
\end{aligned} \label{EQdoublenullform1}
\end{align}
where the \emph{shift vector} $b=b^A \pr_A$ is an $S_{u,v}$-tangential vectorfield, satisfying
\begin{align*} 
\begin{aligned} 
b=0 \text{ on } \HH_{u_0}.
\end{aligned} 
\end{align*}
Through the coordinates $(\th^1,\th^2)$, we define on each $S_{u,v}$ the unit round metric 
\begin{align} 
\begin{aligned} 
\gac:= \lrpar{d\th^1}^2 + \sin^2 \th^1 \lrpar{d\th^2}^2.
\end{aligned} \label{EQdefRoundUnitMetric}
\end{align}
We also define vector spherical harmonics with respect to $\gac$ on $S_{u,v}$, see Appendix \ref{SECellEstimatesSpheres} for definitions. Furthermore, we decompose the induced metric $\gd$ into
\begin{align} 
\begin{aligned} 
\gd = \phi^2 \gd_c \text{ where } \phi^2 := \frac{\sqrt{\gd}}{\sqrt{\gac}}, \,\, \gd_c := \phi^{-2} \gd,
\end{aligned} \label{EQdefdecomposition1}
\end{align}
where $\sqrt{\gd}$ and $\sqrt{\gac}$ denote the volume forms of $\gd$ and $\gac$ with respect to $(\th^1,\th^2)$, respectively. 

We define the \emph{Ricci coefficients} as follows. For $S_{u,v}$-tangent vectorfields $X$ and $Y$, let
\begin{align} \begin{aligned}
\chi(X,Y) :=& \g(\D_X \widehat{L},Y), & \chib(X,Y) :=& \g(\D_X \widehat{\Lb},Y), \\
\zeta(X) :=& \half \g(\D_X \widehat{L}, \widehat{\Lb}), & \underline{\zeta}(X) :=& \half \g(\D_X \widehat{\Lb}, \widehat{L}), \\
\eta :=& \zeta + \di \log \Om , & \etab :=& -\zeta + \di \log \Om, \\
\om :=& D \log \Om, & \omb :=& \Du \log \Om,
\end{aligned}\label{DEFricciCoefficients} \end{align}
where $\di$ is the extrinsic derivative of $S_{u,v}$. We remark that 
\begin{align} 
\begin{aligned} 
\zeta = - \underline{\zeta}, \,\, \etab=-\eta+2\di\log\Om.
\end{aligned} \label{EQriccirelationetabeta} 
\end{align}
We define the \emph{null curvature components} as follows. For $S_{u,v}$-tangent vectorfields $X$ and $Y$, let
\begin{align} \begin{aligned}
\alpha(X,Y) :=& \Rbf(X,\widehat{L}, Y, \widehat{L}), & \beta(X) :=& \half \Rbf(X, \widehat{L},\widehat{\Lb},\widehat{L}), \\
\rh :=& \frac{1}{4} \Rbf(\widehat{\Lb}, \widehat{L}, \widehat{\Lb}, \widehat{L}), & \sigma \iin(X,Y) :=& \half \Rbf(X,Y,\widehat{\Lb}, \widehat{L}), \\
\beb(X) :=& \half \Rbf(X, \widehat{\Lb},\widehat{\Lb},\widehat{L}), & \ab(X,Y) :=& \Rbf(X,\widehat{\Lb}, Y, \widehat{\Lb}).
\end{aligned}\label{EQnullcurvatureCOMPDEF} \end{align}

\subsection{Null structure equations} \label{SECnullstructureequations} The geometric setting and the Einstein equations imply relations between the metric components, Ricci coefficients and null curvature components, the so-called \emph{null structure equations}. Before stating them, we introduce the following notation from Chapter 1 of \cite{ChrFormationBlackHoles}.
\begin{itemize}
\item For a $S_{u,v}$-tangent tensor $W$ on $\MM$, let
\begin{align*}
DW:= \Lied_L W, \,\, \underline{D}W:= \Lied_\Lb W,
\end{align*}
where $\Lied$ denotes the projection of the Lie derivative on $\MM$ onto the tangent space of $S_{u,v}$.
\item For two $S_{u,v}$-tangential $1$-forms $X$ and $Y$, let
\begin{align} \begin{aligned}
(X,Y):=& \gd(X,Y), & ({}^\ast X)_A :=& \in_{AB}X^B, \\
(X \widehat{\otimes} Y)_{AB} :=& X_A Y_B + X_B Y_A - (X \cdot Y)\gd_{AB}, & \Divd X :=& \Nd^A X_A, \\
 (\Nd \widehat{\otimes} Y)_{AB} :=& \Nd_A Y_B + \Nd_B Y_A - (\Divd Y)\gd_{AB}, & \Curld X :=& \in^{AB}\Nd_A X_B,
\end{aligned} \label{EQdefNotationNullStructure7778} \end{align}
where $\in$ denotes the area $2$-form of $S_{u,v}$.
\item For two symmetric $S_{u,v}$-tangential $2$-tensors $V$ and $W$, let
\begin{align*}
\tr V := \gd^{AB} V_{AB}, \,\, \widehat{V} := V - \half \tr V \gd, \,\, V \wedge W := \ind^{AB} V_{AC}W^C_{\,\,\,B}.
\end{align*}
\item For a symmetric $S_{u,v}$-tangential $2$-tensor $V$ and a $1$-form $X$, let
\begin{align*} 
\begin{aligned} 
(V \cdot X)_A := V_{AB}X^B.
\end{aligned} 
\end{align*}
\item For a symmetric $S_{u,v}$-tangential $2$-tensor $V$, let
\begin{align*} 
\begin{aligned} 
\Divd V_A := \Nd^B V_{BA}.
\end{aligned} 
\end{align*}
\item For a symmetric $S_{u,v}$-tangential tensor $W$, let $\widehat{D}W$ denote the tracefree part of $DW$ with respect to $\gd$, and $\widehat{\Du}W$ the tracefree part of $\Du W$ with respect to $\gd$.
\end{itemize}

\ni Using the above notation, we can state the \emph{null structure equations}. We have the first variation equations,
\begin{align} \begin{aligned}
D \gd =& 2 \Om \chi, & \Du \gd =& 2 \Om \chib,
\end{aligned} \label{EQfirstvariation1}\end{align}
which imply in particular that
\begin{align} 
\begin{aligned} 
D \phi = \frac{\Om \tr \chi \phi}{2},
\end{aligned} \label{EQusefulDphiRELATION}
\end{align}
the Raychauduri equations, 
\begin{align} \begin{aligned}
D \trchi + \frac{\Om}{2} (\trchi)^2 - \om \trchi =& - \Om \vert \chih \vert^2_{\gd},& \Du \trchib + \frac{\Om}{2} (\trchib)^2 - \omb \trchib =& - \Om \vert \chibh \vert^2_{\gd},
\end{aligned}\label{EQRaychauduri1}\end{align}
and further
\begin{align} \begin{aligned}
D\chih =& \Om \vert \chih \vert^2 \gd + \om \chih - \Om \a, &\Du \chibh =& \Om \vert \chibh \vert^2 \gd + \omb \chibh - \Om \ab,\\
D \eta =& \Om (\chi \cdot \etab - \beta), & \Du \etab =& \Om (\chib \cdot \eta + \beb), \\
D \omb =& \Om^2(2 (\eta, \etab) - \vert \eta \vert^2 -\rh), & \Du \om=& \Om^2(2 (\eta,\etab) - \vert \etab \vert^2 - \rh), \\
\Curld \eta=& - \half \chih \wedge \chibh - \si, & \Curld \etab =& - \Curld \eta = - \Curld \zeta, \\
D(\etab) =& - \Om (\chi \cdot \etab -\be) + 2 \di \om, & \Du(\eta) =& - \Om (\chib \cdot \eta + \beb) + 2 \di \omb.
\end{aligned} \label{EQtransportEQLnullstructurenonlinear}\end{align}
Moreover, we have the Gauss equation, 
\begin{align} \label{EQGaussEquation}
K + \frac{1}{4} \tr \chi \tr \chib - \half (\chih,\chibh) = - \rh,
\end{align}
where $K$ denotes the Gauss curvature of $S_{u,v}$, the Gauss-Codazzi equations
\begin{align} \begin{aligned}
\Divd \chih -\half \di \tr \chi + \chih \cdot \zeta - \half \trchi \zeta =& - \beta,\\
\Divd \chibh - \half \di \trchib -\chibh \cdot \zeta +\half \trchib \zeta =& \beb,
\end{aligned}\label{EQgausscodazzinonlinear1}\end{align}
and
\begin{align*}\begin{aligned}
D (\Om \trchib) =& 2 \Om^2 \Divd \etab + 2 \Om^2 \vert \etab \vert^2 - \Om^2 (\chih, \chibh) - \half \Om^2 \trchi \trchib + 2 \Om^2 \rh,\\ 
D(\Om \chibh) =& \Om^2\lrpar{(\chih, \chibh) \gd + \half \trchi \chibh + \Nd \widehat{\otimes}\etab + \etab \widehat{\otimes}\etab - \half \trchib \chih},
\end{aligned} 
\end{align*}
and
\begin{align*} \begin{aligned}
\Du (\Om \trchi) =& 2 \Om^2 \Divd \eta + 2 \Om^2 \vert \eta \vert^2 - \Om^2 (\chih, \chibh) - \half \Om^2 \trchi \trchib + 2 \Om^2 \rh, \\
\Du(\Om \chih) =& \Om^2 \lrpar{(\chih, \chibh)\gd + \half \trchib \chih + \Nd \widehat{\otimes} \eta + \eta \widehat{\otimes} \eta - \half \trchi \chibh}.
\end{aligned} 
\end{align*} 

\ni By Proposition 1.2 in \cite{ChrFormationBlackHoles}, the following \emph{null Bianchi equations} hold.
\begin{align*} \begin{aligned}
\widehat{D} \aa - \half \Om \tr \chi \aa + 2 \om \aa + \Om \left( \Nd \widehat{\otimes} \beb + (4 \etab - \zeta) \widehat{\otimes} \beb + 3 \chibh \rh -3 {}^\ast \chibh \si \right) =&0, \\
D \be + \frac{3}{2} \Om \tr \chi \be - \Om \chih \cdot \be - \om \be - \Om \left( \Divd \a + ( \etab + 2 \zeta) \cdot \a \right) =& 0, \\
D \beb + \half \Om \tr \chi \beb - \Om \chih \cdot \beb + \om \beb + \Om \left( \di \rh- {}^\ast \di \si + 3 \etab \rh - 3 {}^\ast \etab \si - 2 \chibh \cdot \be \right) =&0, \\
D \rh + \frac{3}{2} \Om \tr \chi \rh - \Om \left( \Divd \be + (2 \etab + \zeta, \be) - \half (\chibh, \a) \right)=&0, \\
D \si + \frac{3}{2} \Om \tr \chi \si + \Om \left( \Curld \be + (2 \etab+\zeta, {}^\ast \be)- \half \chibh  \wedge \a \right)=&0,
\end{aligned}
\end{align*}
and
\begin{align*} 
\begin{aligned} 
\widehat{\Du} \a - \half \Om \tr \chib \a + 2 \omb \a + \Om \left( -\Nd \widehat{\otimes} \be - (4 \eta + \zeta) \widehat{\otimes} \be + 3 \chih \rh + 3 {}^\ast \chih \si \right) =&0, \\
\Du \beb + \frac{3}{2} \Om \tr \chib \beb - \Om \chibh \cdot \beb - \omb \beb + \Om \left( \Divd \aa + (\eta-2 \zeta) \cdot \aa \right)=&0, \\
\Du \be + \half \Om \tr \chib \be - \Om \chibh \cdot \be + \omb \be - \Om \left(\di \rh + {}^\ast \di \si +  3 \eta \rh +3 {}^\ast \eta \si + 2 \chibh \cdot \beb \right)=&0, \\
\Du \rh + \frac{3}{2} \Om \tr \chib \rh+ \Om \left( \Divd \beb + (2 \eta- \zeta, \beb)+ \half (\chih,\aa) \right) =&0, \\
\Du \si + \frac{3}{2} \Om \tr \chib \si + \Om \left( \Curld \beb + (2 \eta-\zeta, {}^\ast \beb) + \half \chih \wedge \aa \right)=&0.
\end{aligned}
\end{align*}

\subsection{Minkowski and Schwarzschild spacetimes} \label{SECMinkowskiSSspacetimes1} 
In this section we discuss the null geometry of Minkowski and the Schwarzschild spacetimes.\\

\ni \textbf{Minkowski spacetime.} The trivial solution to the Einstein equations is Minkowski spacetime $(\RRR^4, \mathbf{m})$ where $\mathbf{m}=\mathrm{diag}(-1,1,1,1)$. Defining standard spherical coordinates on $\RRR^3$ by \eqref{EQdefinSPHERICALAFcoord}, the reference double null coordinates on Minkowski are given by
\begin{align} 
\begin{aligned} 
(u,v,\th^1,\th^2) = \lrpar{\half \lrpar{t-r},\half \lrpar{t+r}, \th^1, \th^2},
\end{aligned} \label{EQdoubleNullMinkowski}
\end{align}
with respect to which
\begin{align*} 
\begin{aligned} 
\mathbf{m} = -4 du dv + (v-u)^2 \gac_{AB} d\th^A d\th^B,
\end{aligned} 
\end{align*}
where $\gac$ is defined in \eqref{EQdefRoundUnitMetric}. We note that the area radius of the sphere $S_{u,v}$ is given by $$r=v-u.$$

\ni Explicitly, with respect to the coordinates \eqref{EQdoubleNullMinkowski}, the Minkowski metric components, Ricci coefficients and null curvature components on $S_{u,v}$ are given by (with $r=v-u$)
\begin{align} 
\begin{aligned} 
\Om =& 1, & \gd=& r^2\gac, & & && &&&&\\
\trchi=&\frac{2}{r}, & \trchib=&-\frac{2}{r}, & \chih=&0, &\chibh=&0, &&&&\\
\eta=&0, & \etab=& 0, & \zeta=&0, &\underline{\zeta}=&0, &&&&\\
\om=& 0, & D\om=& 0, & \omb=& 0, & \Du\omb=& 0, &&&&\\
\a=&0, &\be=&0, &\beb=&0, & \ab=&0, & \rh=&0, & \si=&0. 
\end{aligned} \label{EQexplicitMinkowskiReferenceALL8999}
\end{align}

\ni \textbf{The family of Schwarzschild spacetimes.} For real numbers $M \geq 0$, let
\begin{align} 
\begin{aligned} 
\g^M = -\lrpar{1-\frac{2M}{r}} dt^2 + \lrpar{1-\frac{2M}{r}}^{-1} dr^2 + r^2 \lrpar{d\th^2 + \sin^2\th d\phi^2}.
\end{aligned} \label{EQdefSSmetric}
\end{align}
For $M=0$, the metric \eqref{EQdefSSmetric} is Minkowski, while for $M>0$ it yields a black hole solution with event horizon at $\{r=2M\}$. The so-called exterior region $\{r>2M\}$ can be covered by Eddington-Finkelstein double null coordinates $(u,v,\th^1,\th^2)$ with respect to which
\begin{align*} 
\begin{aligned} 
\g^M = -4 \Om_M^2 du dv + r_M(u,v)^2\gac_{CD} d\th^C d\th^D,
\end{aligned} 
\end{align*}
where 
\begin{align} 
\begin{aligned} 
\Om_M := \sqrt{1-\frac{2M}{r}},
\end{aligned} \label{EQdefOMEGAMSSdef8999}
\end{align}
and the area radius $r_M(u,v)$ is implicitly defined by (see for example (98) in \cite{DHR})
\begin{align} 
\begin{aligned} 
\frac{v-u}{2M} = \frac{r_M(u,v)}{2M} + \log \lrpar{\frac{r_M(u,v)}{2M}-1}.
\end{aligned} \label{EQdefRbyUV}
\end{align}

\ni Using \eqref{EQdefOMEGAMSSdef8999}, and that by \eqref{EQdefNullPairsINM} and \eqref{EQvectorexpr2488},
\begin{align*} 
\begin{aligned} 
\widehat{L}= \Om^{-1} \pr_v, \,\, \widehat{\Lb}= \Om^{-1} \pr_u,
\end{aligned} 
\end{align*}
together with the standard identities
\begin{align*} 
\begin{aligned} 
\pr_v \Om_M = \frac{\Om_M M}{r_M^2}, \,\, \pr_u \Om_M = -\frac{\Om_M M}{r_M^2}, \,\, \pr_v r_M = \Om_M^2, \,\, \pr_u r_M =-\Om_M^2,
\end{aligned} 
\end{align*}
it follows that in Eddington-Finkelstein coordinates, for real numbers $v>u$ such that $r_M(u,v)>2M$, the Schwarzschild metric components, Ricci coefficients and null curvature components on $S_{u,v}$ are given by
\begin{align} 
\begin{aligned} 
\Om_M =& \sqrt{1-\frac{2M}{r_M}}, & \gd=& r_M^2\gac, & & && \\
\trchi=&\frac{2\Om_M}{r_M}, & \trchib=&-\frac{2\Om_M}{r_M}, & \chih=&0, &\chibh=&0, \\
\eta=&0, & \etab=& 0, & \zeta=&0, &\underline{\zeta}=&0, \\
\om=& \frac{M}{r_M^2}, & D\om=& -\frac{2M}{r_M^3}\Om_M^2, & \omb=& -\frac{M}{r_M^2}, & \Du\omb=& -\frac{2M}{r_M^3}\Om_M^2, \\
\a=&0, &\be=&0, &\beb=&0, & \ab=&0, \\
\rh=& -\frac{2M}{r_M^3}, & \si=&0. && &&
\end{aligned} \label{EQspheredataSSM111222}
\end{align}

\subsection{Sphere data and null data} \label{SECspheredataDEF2333} In this section we define the notions of \emph{sphere data} and \emph{null data}.

\begin{definition}[Sphere data] \label{DEFspheredata2} \label{DEFFirstOrderDATA} For real numbers $v>u$, let $S_{u,v}$ be a $2$-sphere equipped with a round metric $\gac$ as in \eqref{EQdefRoundUnitMetric}. Sphere data $x_{u,v}$ on $S_{u,v}$ is given by the following tuple of $S_{u,v}$-tangential tensors,
\begin{align*} 
\begin{aligned} 
x = (\Om,\gd, \Om\trchi, \chih, \Om\trchib, \chibh, \eta, \om, D\om, \omb, \Du\omb, \a, \ab),
\end{aligned} 
\end{align*}
where
\begin{itemize}
\item $\Om>0$ is a positive scalar function and $\gd$ is a Riemannian metric, 
\item $\Om\trchi, \Om\trchib,\om, D\Om, \omb, \Du\omb, \rh$ and $\si$ are scalar functions,
\item $\eta, \be$ and $\beb$ are vectorfields,
\item $\chih$, $\chibh$, $\a$ and $\ab$ are $\gd$-tracefree symmetric $2$-tensors.
\end{itemize}
\end{definition}

\ni \emph{Remarks on Definition \ref{DEFspheredata2}.}
\begin{enumerate}

\item Sphere data is gauge-dependent, see \cite{ACR3,ACR1}.

\item The null structure equations and null Bianchi equations of Section \ref{SECnullstructureequations} determine from sphere data the Ricci coefficients and null curvature components
\begin{align*} 
\begin{aligned} 
(\etab, \zeta, \underline{\zeta}), \,\, (\be, \rh, \si, \beb),
\end{aligned} 
\end{align*}
as well as the derivatives
\begin{align*} 
\begin{aligned} 
&\lrpar{D\eta, D\etab, D\zeta, D\chi, D\chib, D\omb}, &
&\lrpar{ \Du \eta, \Du\etab, \Du\zeta, \Du\chi, \Du\chib, \Du\om }, \\
&\lrpar{D\beta, D\rh, D\si, D\beb, D\ab}, &
&\lrpar{\Du\beb, \Du\si,\Du\rh,\Du\be,\Du\a}.
\end{aligned} 
\end{align*}

\item In the following we denote by $(\be, \rh, \si, \beb)(x_{u,v})$ the null curvature components calculated from $x_{u,v}$ by the null structure equations  \eqref{EQtransportEQLnullstructurenonlinear}, \eqref{EQGaussEquation} and \eqref{EQgausscodazzinonlinear1}, and interpret them as part of sphere data.
\end{enumerate}

\ni \textbf{Notation.} We denote the \emph{Minkowski reference sphere data} on $S_{u,v}$ coming from \eqref{EQexplicitMinkowskiReferenceALL8999} by $\mathfrak{m}_{u,v}$, and for real numbers $M\geq0$, we denote the \emph{Schwarzschild reference sphere data} on $S_{u,v}$ coming from \eqref{EQspheredataSSM111222} by $\mathfrak{m}^M_{u,v}$. \\

\ni Along a null hypersurface, we consider the following \emph{null data}.
\begin{definition}[Ingoing and outgoing null data] \label{DEFnulldata111}\label{DEFnulldataHH} For three real numbers $u_0<v_1<v_2$, \emph{outgoing null data $x_{u_0,[v_1,v_2]}$ on $\HH_{u_0,[v_1,v_2]}$} is given by a family of sphere data
\begin{align*} 
\begin{aligned} 
\lrpar{x_{u_0,v}}_{v_1 \leq v \leq v_2} \text{ on } \HH_{u_0,[v_1,v_2]} = \bigcup\limits_{v_1\leq v \leq v_2} S_{u_0,v}.
\end{aligned} 
\end{align*}
Similarly, for three real numbers $u_1<u_2<v_0$, \emph{ingoing null data $x_{[u_1,u_2],v_0}$ on $\HHb_{[u_1,u_2],v_0}$} is given by a family of sphere data
\begin{align*} 
\begin{aligned} 
\lrpar{x_{u,v_0}}_{u_1 \leq u \leq u_2} \text{ on } \HHb_{[u_1,u_2],v_0} = \bigcup\limits_{u_1\leq u \leq u_2} S_{u,v_0}.
\end{aligned} 
\end{align*}

\end{definition}

\ni In addition to the above sphere data $x_{u,v}$ on spheres $S_{u,v}$, we also consider for integers $m\geq1$ the \emph{higher-order sphere data} on $S_{u,v}$
\begin{align} 
\begin{aligned} 
\lrpar{ x_{u,v}, \DD^{L,m}_{u,v}, \DD^{\Lb,m}_{u,v} },
\end{aligned} \label{EQdefHIGHERORDERspheredata1999091}
\end{align}
where $x_{u,v}$ denotes sphere data and $\DD^{L,m}$ and $\DD^{\Lb,m}$ are tuples of $L$- and $\Lb$-derivatives of sphere data up to order $m$; we refer to Section 2.10 in \cite{ACR1} for definitions and discussion. We denote the Schwarzschild reference higher-order sphere data of order $m$ by 
\begin{align*} 
\begin{aligned} 
\lrpar{\mathfrak{m}^{M}_{u,v}, \DD^{L,m,M}_{u,v},\DD^{\Lb,m,M}_{u,v}}.
\end{aligned} 
\end{align*}
Importantly, the gluing of higher-order sphere data implies the higher regularity (in all directions) of the constructed gluing solution. Similarly, we consider higher-order outgoing and ingoing null data on $\HH_{u_0,[v_1,v_2]}$ and $\HHb_{[u_1,u_2],v_0}$, respectively. We remark that higher derivatives are subject to the \emph{higher-order null constraint equations}, see Section 2.10 in \cite{ACR1} for details.

\subsection{Definition of charges $(\mathbf{E},\mathbf{P},\mathbf{L},\mathbf{G})$} \label{SECdefcharges99901}

\ni The following charges play an essential role for the characteristic gluing problem. In \cite{ACR3,ACR1} they are identified as geometric obstacles to characteristic gluing.

\begin{definition}[Charges] \label{DEFlocalCharges} For sphere data $x_{u,v}$ and $m=-1,0,1$ define the charges
\begin{align} 
\begin{aligned} 
\mathbf{E} :=& -\frac{1}{8\pi} \sqrt{4\pi} \lrpar{r^3\lrpar{\rho + r \Divd {\be}}}^{(0)}, \\
\mathbf{P}^m :=& -\frac{1}{8\pi} \sqrt{\frac{4\pi}{3}} \lrpar{r^3 \lrpar{\rho + r \Divd {\be}}}^{(1m)},\\
\mathbf{L}^m :=& \frac{1}{16\pi} \sqrt{\frac{8\pi}{3}} \lrpar{r^3 \lrpar{ \di \trchi + \trchi (\eta-\di\log\Om) }}_H^{(1m)},\\
\mathbf{G}^m :=&  \frac{1}{16\pi}\sqrt{\frac{8\pi}{3}} \lrpar{ r^3 \lrpar{ \di \trchi + \trchi (\eta-\di\log\Om) }}^{(1m)}_E,
\end{aligned} \label{EQdefcharges999919999}
\end{align}
where $r$ denotes the area radius calculated from $x_{u,v}$, and the spherical harmonics projections are defined with respect to the unit round metric $\gac$ on $S_{u,v}$, see Appendix \ref{SECellEstimatesSpheres}.
\end{definition}

\ni \emph{Remarks on Definition \ref{DEFlocalCharges}.}
\begin{enumerate}

\item By the Gauss-Codazzi equation \eqref{EQgausscodazzinonlinear1} it holds that
\begin{align} 
\begin{aligned} 
\mathbf{L}^m =& \frac{1}{8\pi} \sqrt{\frac{8\pi}{3}} \lrpar{r^3 \lrpar{ \be + \Divd \chih + \chih \cdot \lrpar{\eta-\di\log\Om} }}_H^{(1m)}, \\
 \mathbf{G}^m =&  \frac{1}{8\pi} \sqrt{\frac{8\pi}{3}} \lrpar{r^3 \lrpar{ \be + \Divd \chih + \chih \cdot \lrpar{\eta-\di\log\Om} }}^{(1m)}_E,
\end{aligned} \label{EQalternativeLGdef}
\end{align}
where $\be$ is defined from sphere data by the Gauss-Codazzi equations. The expressions \eqref{EQalternativeLGdef} are used in Section \ref{SECstatementConstruction}.
\item The numerical factors in the definitions of the charges are determined by comparison to the ADM asymptotic invariants, see Sections \ref{SECcomparisonENERGY}, \ref{SECcomparisonLINEAR}, \ref{SECcomparisonANGULAR} and \ref{SECcomparisonCENTER}.
\item By explicit calculation, for real numbers $M\geq0$, and $v>u$,
\begin{align} 
\begin{aligned} 
\lrpar{\mathbf{E},\mathbf{P},\mathbf{L},\mathbf{G}}(\mathfrak{m}^M_{u,v}) = (M,0,0,0).
\end{aligned} \label{EQexplicitCalcuChargesSSM8889}
\end{align}
\end{enumerate}

\subsection{Norms on spheres and null hypersurfaces} \label{SECtensorSpaces} In this section we define the norms used in this paper. They are analogous to the norms in \cite{ACR1}, but with the difference that they contain weights in $v-u$ for better scaling properties (see Lemma \ref{LEMscalingofNorms}, and also \cite{ACR3}).

\begin{definition}[Norms on $2$-spheres] \label{DEFnormsonSR} Let $v>u$ be two real numbers and let $S_{u,v}$ be a $2$-sphere equipped with a round metric $\gac$ as in \eqref{EQdefRoundUnitMetric}. For integers $m\geq0$ and $S_{u,v}$-tangent $k$-tensors $T$, define
\begin{align*} 
\begin{aligned} 
\Vert T \Vert^2_{H^m(S_{u,v})} :=   \sum\limits_{0\leq i \leq m} (v-u)^{2(i+k-1)} \left\Vert \Nd^i T \right\Vert^2_{L^2(S_{u,v})},
\end{aligned} 
\end{align*}
where the covariant derivative $\Nd$ and the volume element of the $L^2$-norm are with respect to the round metric $\ga=(v-u)^2\gac$ on $S_{u,v}$. Moreover, let 
\begin{align*} 
\begin{aligned} 
H^m(S_{u,v}) := \{ T: \Vert T \Vert_{H^m(S_{u,v})} < \infty \}.
\end{aligned} 
\end{align*}
\end{definition}

\begin{definition}[Norms on null hypersurfaces] \label{DEFnullHHspaces} \label{DEFnullHHbspaces}
We have the following.
\begin{itemize} 
\item For real numbers $u_0<v_1<v_2$, let $T$ be an $S_{u_0,v}$-tangential tensor on $\HH_{u_0,[v_1,v_2]}$. For integers $m\geq0$ and $l\geq0$, define
\begin{align*} 
\begin{aligned} 
\Vert T \Vert^2_{H^m_l\lrpar{\HH_{u_0,[v_1,v_2]}}} :=  \int\limits_{v_1}^{v_2} \,\, \sum\limits_{0\leq i\leq l} (v-u_0)^{2i-1} \left\Vert D^i T \right\Vert^2_{H^m(S_{u_0,v})} dv,
\end{aligned} 
\end{align*}
where the Lie derivative $D$ is with respect to the reference Minkowski metric on $\HH_{u_0,[v_1,v_2]}$. Let further
\begin{align*} 
\begin{aligned} 
H^m_l\lrpar{\HH_{u_0,[v_1,v_2]}} := \{ T: \Vert T \Vert_{H^m_l\lrpar{\HH_{u_0,[v_1,v_2]}}} < \infty\}.
\end{aligned} 
\end{align*}

\item For real numbers $u_1<u_2<v_0$, let $T$ be an $S_{u,v_0}$-tangential tensor on $\HHb_{[u_1,u_2],v_0}$. For integers $m\geq0$ and $l\geq0$, define
\begin{align*} 
\begin{aligned} 
\Vert T \Vert^2_{H^m_l\lrpar{\HHb_{ [u_1,u_2],v_0}}} :=  \int\limits_{u_1}^{u_2} \,\, \sum\limits_{0\leq i\leq l} (v_0-u)^{2i-1} \left\Vert \Du^i T \right\Vert^2_{H^m(S_{u,v_0})} du,
\end{aligned} 
\end{align*}
where the Lie derivative $\Du$ is with respect to the reference Minkowski metric on $\HHb_{ [u_1,u_2],v_0}$. Let further
\begin{align*} 
\begin{aligned} 
H^m_l\lrpar{\HHb_{ [u_1,u_2],v_0}} := \{ T: \Vert T \Vert_{H^m_l\lrpar{\HHb_{ [u_1,u_2],v_0}}} < \infty\}. 
\end{aligned} 
\end{align*}

\end{itemize}
\end{definition}

\ni In the following we define the norms of sphere data and null data using the above norms on spheres and null hypersurfaces. Their definition includes weights in $v-u$ to make them invariant under the scaling introduced in Section \ref{SECdefinitionScaling}, see Lemma \ref{LEMspheredataInvarianceSCale}.
\begin{definition}[Norm for sphere data] \label{DEFNORMspheredata2} \label{DEFnormFirstOrderDATA} Let $x_{u,v}$ be sphere data on the sphere $S_{u,v}$. The norm of $x_{u,v}$ is defined by
\begin{align*} 
\begin{aligned} 
\Vert x_{u,v}\Vert_{\XX(S_{u,v})} :=& \Vert \Om \Vert_{{H}^{6}(S_{u,v})}+(v-u)^{-2} \Vert \gd \Vert_{{H}^{6}(S_{u,v})} + \Vert \eta \Vert_{H^{5}(S_{u,v})}\\
&+ (v-u)\Vert \trchi \Vert_{H^{6}(S_{u,v})} + (v-u)^{-1}\Vert \chih \Vert_{H^{6}(S_{u,v})} \\
&+ (v-u)\Vert \trchib \Vert_{H^{4}(S_{u,v})} +(v-u)^{-1} \Vert \chibh \Vert_{H^{4}(S_{u,v})} \\
&+ (v-u)\Vert \om \Vert_{H^{6}(S_{u,v})}+(v-u)^2\Vert D\om \Vert_{H^{6}(S_{u,v})}\\
&+(v-u)\Vert \omb \Vert_{H^{4}(S_{u,v})} +(v-u)^2 \Vert \Du\omb \Vert_{H^{2}(S_{u,v})} \\
&+ \Vert \a \Vert_{H^{6}(S_{u,v})}+ (v-u) \Vert \be \Vert_{H^{5}(S_{u,v})}+(v-u)^2\Vert \rh \Vert_{H^{4}(S_{u,v})} \\
&+(v-u)^2 \Vert \si \Vert_{H^{4}(S_{u,v})} + (v-u) \Vert \beb \Vert_{H^{3}(S_{u,v})}+\Vert \ab \Vert_{H^{2}(S_{u,v})},
\end{aligned} 
\end{align*}
where the norms are with respect to $(v-u)^2 \gac$ on $S_{u,v}$, see Definition \ref{DEFnormsonSR}. Moreover, let
\begin{align*} 
\begin{aligned} 
\XX(S_{u,v}) := \{ x_{u,v} : \Vert x_{u,v} \Vert_{\XX(S_{u,v})} < \infty\}.
\end{aligned}
\end{align*}
\end{definition}
\ni \emph{Remark on Definition \ref{DEFNORMspheredata2}.}
\begin{itemize}
\item Definition \ref{DEFNORMspheredata2} reflects the regularity hierarchy of the null structure equations in the $L$-direction.

\item For sphere data $x_{u,v} \in \XX(S_{u,v})$, the charges $\mathbf{E}, \mathbf{P}, \mathbf{L}$ and $\mathbf{G}$ introduced in Definition \ref{DEFlocalCharges} are well-defined.
\end{itemize}

\begin{definition}[Norm for null data] \label{DEFnulldataNORM111}\label{DEFnormHH} \label{DEFspacetimeNORMdata} \label{DEFspacetimeNORM} 
Let $R\geq1$ be a real number. We have the following.
\begin{itemize}
\item Let $u_0<v_1<v_2$ be three real numbers. Let $x_R := x_{R\cdot u_0,R \cdot [ v_1, v_2]}$ be null data on $\HH_R:= \HH_{R\cdot u_0,R \cdot [ v_1, v_2]}$. The norm of $x_R$ on $\HH_R$ is defined by 
\begin{align*} 
\begin{aligned}
\Vert x_R \Vert_{\XX(\HH_R)} :=& \Vert \Om \Vert_{H^6_3(\HH_R)} +\Vert \gd \Vert_{H^6_3(\HH_R)}+ \Vert \eta \Vert_{H^5_2(\HH_R)}\\
&+ R\Vert \Om\trchi \Vert_{H^6_3(\HH_R)}+ R^{-1}\Vert \chih \Vert_{H^6_2(\HH_R)} + R \Vert \Om\trchib \Vert_{H^4_2(\HH_R)}+ R^{-1} \Vert \chibh \Vert_{H^4_3(\HH_R)}\\
&+R\Vert \om \Vert_{H^6_2(\HH_R)} +R^2 \Vert D\om \Vert_{H^6_1(\HH_R)}+R \Vert \omb \Vert_{H^4_3(\HH_R)}+R^2 \Vert \Du\omb \Vert_{H^2_3(\HH_R)}\\
&+  \Vert \a \Vert_{H^{6}_1(\HH_R)} + R \Vert \be \Vert_{H^{5}_2(\HH_R)}+R^2\Vert \rh \Vert_{H^{4}_2(\HH_R)} \\
&+R^2 \Vert \si \Vert_{H^{4}_2(\HH_R)} + R \Vert \beb \Vert_{H^{3}_2(\HH_R)}+\Vert \ab \Vert_{H^{2}_3(\HH_R)}, 
\end{aligned} 
\end{align*}
where the norms over $\HH_{-R,[R,2R]}$ are defined in Definition \ref{DEFnullHHspaces}. Moreover, let
\begin{align*} 
\begin{aligned} 
\XX(\HH_{R}) := \{ x_R : \Vert x_R \Vert_{\XX(\HH_{R})} < \infty \}.
\end{aligned} 
\end{align*}
\item Let $u_1<u_2<v_0$ be three real numbers. Let $x_R := {x}_{R\cdot[u_1,u_2],R\cdot v_0}$ be null data on $\HHb_R:= \HHb_{R\cdot[u_1,u_2],R\cdot v_0}$. The norm of ${x}_R$ on $\HHb_R$ is defined by 
\begin{align*} 
\begin{aligned}
\Vert {x}_R \Vert_{\XX(\HHb_R)} :=& \Vert \Om \Vert_{H^6_3(\HHb_R)} +\Vert \gd \Vert_{H^6_3(\HHb_R)}+ \Vert \eta \Vert_{H^5_2(\HHb_R)}\\
&+ R\Vert \Om\trchib \Vert_{H^6_3(\HHb_R)}+ R^{-1}\Vert \chibh \Vert_{H^6_2(\HHb_R)} + R \Vert \Om\trchi \Vert_{H^4_2(\HHb_R)}+ R^{-1} \Vert \chih \Vert_{H^4_3(\HHb_R)}\\
&+R\Vert \omb \Vert_{H^6_2(\HHb_R)} +R^2 \Vert \Du\omb \Vert_{H^6_1(\HHb_R)}+R \Vert \om \Vert_{H^4_3(\HHb_R)}+R^2 \Vert D\om \Vert_{H^2_3(\HHb_R)}\\
&+  \Vert \ab \Vert_{H^{6}_1(\HHb_R)} + R \Vert \beb \Vert_{H^{5}_2(\HH_R)}+R^2\Vert \si \Vert_{H^{4}_2(\HH_R)} \\
&+R^2 \Vert \rh \Vert_{H^{4}_2(\HH_R)} + R \Vert \be \Vert_{H^{3}_2(\HH_R)}+\Vert \a \Vert_{H^{2}_3(\HHb_R)}, 
\end{aligned} 
\end{align*}
where the norms over $\HHb_R$ are defined in Definition \ref{DEFnullHHspaces}. Moreover, let
\begin{align*} 
\begin{aligned} 
\XX(\HHb_R) := \{ {x}_R : \Vert {x}_R \Vert_{\XX(\HH_R)} < \infty \}.
\end{aligned} 
\end{align*}
\end{itemize}
\end{definition}

\ni In addition to the above norm $\XX(\HHb)$ for ingoing null data, we define the following higher regularity norm $\XX^+(\HHb)$. This norm is necessary for the characteristic gluing of \cite{ACR3,ACR1}.
\begin{definition}[Norm for higher-regularity ingoing null data] \label{DEFspacetimeHIGHERNORMdata} \label{DEFspacetimeHIGHERNORM} 
Let $u_1<u_2<v_0$ be three real numbers. Let ${x}_R:=x_{R\cdot[u_1,u_2],R\cdot v_0}$ be null data on $\HHb_R:= \HHb_{R\cdot[u_1,u_2],R\cdot v_0}$. The norm of ${x}$ is defined by 
\begin{align*} 
\begin{aligned}
\Vert {x}_R \Vert_{\XX^+(\HHb_R)} :=& \Vert \Om \Vert_{H^{12}_9(\HHb_R)} +\Vert \gd \Vert_{H^{12}_9(\HHb_R)}+ \Vert \eta \Vert_{H^{11}_8(\HHb_R)}\\
&+ R\Vert \Om\trchib \Vert_{H^{12}_9(\HHb_R)}+ R^{-1}\Vert \chibh \Vert_{H^{12}_8(\HHb_R)} + R \Vert \Om\trchi \Vert_{H^{10}_8(\HHb_R)}+ R^{-1} \Vert \chih \Vert_{H^{10}_9(\HHb_R)}\\
&+R\Vert \omb \Vert_{H^{12}_8(\HHb_R)} +R^2 \Vert \Du\omb \Vert_{H^{12}_7(\HHb_R)}+R \Vert \om \Vert_{H^{10}_9(\HHb_R)}+R^2 \Vert D\om \Vert_{H^8_9(\HHb_R)}\\
&+  \Vert \ab \Vert_{H^{12}_7(\HHb_R)} + R \Vert \beb \Vert_{H^{11}_8(\HH_R)}+R^2\Vert \si \Vert_{H^{10}_8(\HH_R)} \\
&+R^2 \Vert \rh \Vert_{H^{10}_8(\HH_R)} + R \Vert \be \Vert_{H^{9}_8(\HH_R)}+\Vert \a \Vert_{H^{8}_9(\HHb_R)}, 
\end{aligned} 
\end{align*}
where the norms over $\HHb_R$ are defined in Definition \ref{DEFnullHHspaces}. Moreover, let
\begin{align*} 
\begin{aligned} 
\XX^+(\HHb_R) := \{ {x} : \Vert {x} \Vert_{\XX^+(\HHb_R)} < \infty \}.
\end{aligned} 
\end{align*}
\end{definition}

\ni \textbf{Notation.} We do not explicitly denote the evaluation of Minkowski and Schwarzschild reference data in sphere data and null data. That is, given sphere data $x_{u,v}$ on $S_{u,v}$ and real numbers $M\geq0$, we write $\Vert x_{u,v}- \mathfrak{m}^M \Vert_{\XX(S_{u,v})}$ to denote $\Vert x_{u,v}- \mathfrak{m}^M_{u,v} \Vert_{\XX(S_{u,v})}$. Similarly for outgoing null data $x_{u_0,[v_1,v_2]}$ on $\HH_{u_0,[v_1,v_2]}$ and ingoing null data $x_{[u_0,u_1],v_0}$ on $\HHb_{[u_0,u_1],v_0}$.

\subsection{Asymptotically flat families of sphere data and ingoing null data} \label{SECdefinitionAF} \label{SECdefinitionCHARGES} In this section, we introduce \emph{asymptotically flat families of sphere data} and \emph{ingoing null data}, and introduce the \emph{asymptotic charges}. 

\begin{definition}[Strongly asymptotically flat sphere data] \label{DEFadmissibleSequences} Let $v>u$ be two fixed real numbers, and let
\begin{align*} 
\begin{aligned} 
(x_{R\cdot u,R \cdot v})_{R\geq1}
\end{aligned} 
\end{align*}
be a family of sphere data. We say that $(x_{R\cdot u,R \cdot v})$ is a \emph{strongly asymptotically flat family of sphere data} if there is a real number $M\geq 0$ such that
\begin{align} 
\begin{aligned} 
\Vert x_{R\cdot u,R \cdot v} -\mathfrak{m}^M \Vert_{\XX(S_{R\cdot u,R \cdot v})} =& \smallO(R^{-3/2}), \\ 
\Vert \beta^{[1]}(x_{R\cdot u,R \cdot v})\Vert_{L^2(S_{R\cdot u,R \cdot v})} =& \OO\lrpar{R^{-3}}.
\end{aligned} \label{EQdefstronglydecayingfam123}
\end{align}
\end{definition}

\ni \emph{Remarks on Definition \ref{DEFadmissibleSequences}.}
\begin{enumerate}

\item In this paper we work with strongly asymptotically families of sphere data $(x_{0,R})$ (that is, $u=0$ and $v=1$), in Theorem \ref{THMdoubleCHARgluingTOkerr} and the proof of Corollary \ref{THMspacelikeGLUINGtoKERRv2}, and $(x_{-R,R})$ (that is, $u=-1$ and $v=1$) in Definition \ref{DEFadmissibleEXTSequences} below.

\item By Definition \ref{DEFnormFirstOrderDATA}, the decay \eqref{EQdefstronglydecayingfam123} implies in particular 
\begin{align*} 
\begin{aligned} 
R\Vert \a \Vert_{L^2(S_{R\cdot u,R \cdot v})} + R\Vert \ab \Vert_{L^2(S_{R\cdot u,R \cdot v})}+R\Vert \be \Vert_{L^2(S_{R\cdot u,R \cdot v})}  =& \smallO\lrpar{R^{-3/2}},\\
R^{3/2} \Vert \be^{[1]} \Vert_{L^2(S_{R\cdot u,R \cdot v})} =& \OO(R^{-3/2}).
\end{aligned} 
\end{align*}

\item \textbf{The decay rates above are in agreement with a sequence of spheres going to \emph{spacelike} infinity in a strongly asymptotically flat spacetime}; see also Theorem \ref{PROPconstructionStatement} and \cite{ChrKl93}.

\end{enumerate}

\ni We define the following asymptotic charges.
\begin{definition}[Asymptotic charges] \label{DEFasymptoticCharges} Let $(x_{R\cdot u,R \cdot v})$ be a strongly asymptotically flat family of sphere data. Let
\begin{align*} 
\begin{aligned} 
\mathbf{E}_\infty :=& \lim\limits_{R \to \infty}\mathbf{E}(x_{R\cdot u,R \cdot v}), & \mathbf{P}_\infty :=& \lim\limits_{R \to \infty}\mathbf{P}(x_{R\cdot u,R \cdot v}), \\
\mathbf{L}_\infty :=& \lim\limits_{R \to \infty}\mathbf{L}(x_{R\cdot u,R \cdot v}), & \mathbf{G}_\infty :=& \lim\limits_{R \to \infty}\mathbf{G}(x_{R\cdot u,R \cdot v}),
\end{aligned} 
\end{align*}
where the charges $(\mathbf{E},\mathbf{P},\mathbf{L},\mathbf{G})$ are defined in Definition \ref{DEFlocalCharges}. 
\end{definition}

\ni The following basic properties of the asymptotic charges are proved in Section \ref{SECdefinitionScaling}.
\begin{lemma}[Properties of asymptotic charges] \label{LEMBoundednessEinftyPinfty1} Let $(x_{R\cdot u,R \cdot v})$ be a strongly asymptotically flat family of sphere data. Then its asymptotic charges are well-defined, 
\begin{align*} 
\begin{aligned} 
\vert \mathbf{E}_\infty \vert + \vert \mathbf{P}_\infty \vert+ \vert \mathbf{L}_\infty \vert+ \vert \mathbf{G}_\infty \vert < \infty,
\end{aligned} 
\end{align*}
and it holds that $\mathbf{E}_\infty = M$ and $\mathbf{P}_\infty=0$, where $M$ is the real number appearing in \eqref{EQdefstronglydecayingfam123}, and
\begin{align*} 
\begin{aligned} 
\mathbf{E}\lrpar{ x_{R\cdot u,R \cdot v}} =& \mathbf{E}_\infty + \smallO(R^{-1/2}), & \mathbf{P}\lrpar{ x_{R\cdot u,R \cdot v}} =&  \smallO(R^{-1/2}), \\ 
\mathbf{L}\lrpar{ x_{R\cdot u,R \cdot v}} =& \mathbf{L}_\infty + \smallO(1), & \mathbf{G}\lrpar{ x_{R\cdot u,R \cdot v}} =& \mathbf{G}_\infty + \smallO(1).
\end{aligned} 
\end{align*}

\end{lemma}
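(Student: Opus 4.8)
The strategy is to extract the leading-order behavior of each charge in the expansion around Schwarzschild reference data $\mathfrak{m}^M$ and to track the error contributions using the scaling-invariant norm $\XX(S_{R\cdot u, R\cdot v})$ together with the decay assumption \eqref{EQdefstronglydecayingfam123}. First I would record the key algebraic fact \eqref{EQexplicitCalcuChargesSSM8889}: the charges evaluated on Schwarzschild reference sphere data give $(\mathbf{E},\mathbf{P},\mathbf{L},\mathbf{G})(\mathfrak{m}^M_{u,v}) = (M,0,0,0)$, independently of $u,v$. Writing $x_{R\cdot u, R\cdot v} = \mathfrak{m}^M + \delta_R$ with $\Vert \delta_R \Vert_{\XX(S_{R\cdot u, R\cdot v})} = \smallO(R^{-3/2})$, the plan is to expand each charge functional in \eqref{EQdefcharges999919999} around $\mathfrak{m}^M$ and isolate the linear term, since the charges depend on sphere data in a way that is affine-plus-quadratic-and-higher through the curvature components determined by the (nonlinear) null structure equations \eqref{EQtransportEQLnullstructurenonlinear}, \eqref{EQGaussEquation}, \eqref{EQgausscodazzinonlinear1}.

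Second, I would carry out the estimate charge-by-charge, carefully accounting for the $r$-weights. For $\mathbf{E}$ and $\mathbf{P}$ the relevant quantity is $r^3(\rho + r\Divd\beta)$ projected onto modes $l=0$ and $l=1$; here $r \sim R(v-u)$, so the $r^3$ and $r^4$ weights must be balanced against the decay of $\rho$ and $\beta$ coming from the norm bound. Since $\Vert \rho \Vert_{H^4}$ carries a weight $(v-u)^2$ and $\Vert \beta \Vert_{H^3}$ a weight $(v-u)$ in the $\XX$-norm, and the $L^2$-norm on $S_{R\cdot u,R\cdot v}$ itself scales like $R$, one finds that $r^3\rho$ and $r^4\Divd\beta$ are bounded uniformly, and their deviations from the Schwarzschild value are $\smallO(R^{\cdot})$. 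I would show $\mathbf{E}$ converges because $r^3\rho^{(0)} \to -2M$ at the Schwarzschild level with $\smallO(1)$ corrections that actually improve to $\smallO(R^{-1/2})$ once one uses that the $\be$-contribution to $\mathbf{E}$ is higher mode and the second (sharper) decay hypothesis $\Vert \beta^{[1]} \Vert_{L^2} = \OO(R^{-3})$ in \eqref{EQdefstronglydecayingfam123} controls the $l\le 1$ modes of $r^4\Divd\beta$; this is exactly what forces $\mathbf{P}_\infty = 0$ and gives the rate $\smallO(R^{-1/2})$ for both $\mathbf{E}-\mathbf{E}_\infty$ and $\mathbf{P}$. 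For $\mathbf{L}$ and $\mathbf{G}$ I would use the alternative representation \eqref{EQalternativeLGdef}, $r^3(\be + \Divd\chih + \chih\cdot(\eta - \di\log\Om))^{(1m)}$, and note that each term carries an $r$-weight balancing the corresponding norm weight so that the quantity is $\OO(1)$; since there is no built-in Schwarzschild contribution at mode $l=1$ (Schwarzschild gives $0$), the limit $\mathbf{L}_\infty, \mathbf{G}_\infty$ exists as a finite number with an $\smallO(1)$ approach rate — this weaker rate is all that the hypothesis \eqref{EQdefstronglydecayingfam123} yields for $\mathbf{L}$, $\mathbf{G}$.

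Third, to make "the limit exists" rigorous rather than just "is bounded along the sequence," I would establish a Cauchy-type estimate: for $R' > R$, the difference $\mathbf{E}(x_{R'u,R'v}) - \mathbf{E}(x_{Ru,Rv})$ is controlled by the decay modulus $\sup_{\rho \ge R}\Vert \delta_\rho\Vert_{\XX}$, which tends to $0$; combined with the quantitative rate this gives both existence and the stated $\smallO$-expansions. Finally, I would confirm $\mathbf{E}_\infty = M$ by matching the leading $\rho$-contribution $r^3\rho^{(0)} \to -2M$ against the normalization constant $-\tfrac{1}{8\pi}\sqrt{4\pi}$ in \eqref{EQdefcharges999919999}, which is precisely how the numerical factors were fixed (as noted in the remarks after Definition \ref{DEFlocalCharges}).

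\textbf{Main obstacle.} The delicate point is not the Schwarzschild leading order but the bookkeeping of weights: one must verify that \emph{every} nonlinear (quadratic and higher) term produced by solving the null structure equations for $(\be,\rho,\si,\beb)$ from sphere data, once multiplied by the appropriate power of $r \sim R$, is genuinely $\smallO(1)$ (and in fact $\smallO(R^{-1/2})$ for the $\mathbf{E},\mathbf{P}$ modes), using only the scaling-invariant bound $\smallO(R^{-3/2})$ on $\delta_R$ and the sharper $\OO(R^{-3})$ bound on $\beta^{[1]}$. The interplay between the $(v-u)$-weights in Definition \ref{DEFNORMspheredata2}, the $R$-dependence of the $L^2(S_{R\cdot u,R\cdot v})$ volume, and the explicit $r^3$, $r^4$ prefactors in the charge definitions is where all the work lies; the scaling lemma (Lemma \ref{LEMspheredataInvarianceSCale}) is the tool that organizes this, reducing everything to a fixed computation on $S_{u,v}$ with a quantitative smallness parameter.
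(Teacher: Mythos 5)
Your overall strategy matches the paper's: rescale to the fixed sphere $S_{u,v}$ via Lemma \ref{LEMspheredataInvarianceSCale} and the charge-scaling Lemma \ref{LEMrescaleLOCALCHARGES}, subtract off the Schwarzschild values \eqref{EQexplicitCalcuChargesSSM8889}, and estimate the remainder from the decay hypotheses of Definition \ref{DEFadmissibleSequences}.

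There is, however, a genuine gap: you have the two hypotheses in \eqref{EQdefstronglydecayingfam123} doing each other's jobs. You claim the $\beta^{[1]}$-bound is what "forces $\mathbf{P}_\infty = 0$ and gives the rate $\smallO(R^{-1/2})$ for both $\mathbf{E}-\mathbf{E}_\infty$ and $\mathbf{P}$," while for $\mathbf{L},\mathbf{G}$ the $r$-weights simply "balance" the $\XX$-norm weights to give $\OO(1)$. This is backwards. After rescaling (Lemma \ref{LEMrescaleLOCALCHARGES}: $\mathbf{E},\mathbf{P}$ carry an overall factor $R$, $\mathbf{L},\mathbf{G}$ a factor $R^2$), the $\XX$-norm hypothesis $\Vert{}^{(R)}x - \mathfrak{m}^{M/R}\Vert_{\XX(S_{u,v})} = \smallO(R^{-3/2})$ by itself already gives $\mathbf{E}-M$ and $\mathbf{P}$ of size $R\cdot\smallO(R^{-3/2}) = \smallO(R^{-1/2})$; the dominant contribution is the $\rho$-deviation from Schwarzschild (and $(\Divdo\beta)^{(0)}\equiv 0$ by integration by parts), so the $\beta^{[1]}$-bound is not needed for $\mathbf{E}$ or $\mathbf{P}$ at all. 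For $\mathbf{L},\mathbf{G}$, by contrast, the $\XX$-norm bound alone only yields $R^2\cdot\smallO(R^{-3/2}) = \smallO(R^{1/2})$, which diverges. It is the second hypothesis, $\Vert\beta^{[1]}(x_{R\cdot u,R\cdot v})\Vert_{L^2(S_{R\cdot u,R\cdot v})} = \OO(R^{-3})$ (equivalently $\Vert\beta^{[1]}({}^{(R)}x_{u,v})\Vert_{L^2(S_{u,v})} = \OO(R^{-2})$ after rescaling, as in Lemma \ref{LEMrescaletoSMALLdata}), that controls the leading term $(r^3\beta)^{(1m)}_{H,E}$ in \eqref{EQalternativeLGdef} and yields $\mathbf{L},\mathbf{G} = R^2\cdot\OO(R^{-2}) = \OO(1)$, with the quadratic corrections being $R^2\cdot\smallO(R^{-3})$. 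Without that hypothesis your argument does not establish boundedness of $\mathbf{L},\mathbf{G}$, so the roles of the two hypotheses must be reversed. Incidentally, your Cauchy remark is fair but the hard case is again $\mathbf{L},\mathbf{G}$, where the hypotheses only give $\OO(1)$ rather than a decaying modulus; the paper's written proof also stops at $\OO(1)$ there.
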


\ni The above notion of asymptotic flatness is generalized to ingoing null data as follows.
\begin{definition}[Strongly asymptotically flat ingoing null data] \label{DEFadmissibleEXTSequences} Let $\de>0$ be a real number. Let
$$({x}_{-R+R\cdot [-\de,\de],R})_{R\geq1}$$ 
be a family of ingoing null data. We say that $({x}_{-R+R\cdot [-\de,\de],R})$ is \emph{strongly asymptotically flat} if there is a real number $M>0$ such that, as $R\to \infty$,
\begin{align} 
\begin{aligned} 
\Vert {x}_{-R+R\cdot [-\de,\de],R} - {\mathfrak{m}}^{M} \Vert_{\XX^+\lrpar{\HHb_{-R+R\cdot [-\de,\de],R}}} =& \smallO\lrpar{R^{-3/2}}, \\
\Vert \beta^{[1]}(x_{-R,R})\Vert_{L^2(S_{-R,R})} =& \OO\lrpar{R^{-3}},
\end{aligned} \label{EQdefstronglydecayingfam123EXT}
\end{align}
where the sphere data $x_{-R,R} := x \vert_{S_{-R,R}}$. Define the asymptotic charges $(\mathbf{E}_\infty,\mathbf{P}_\infty,\mathbf{L}_\infty, \mathbf{G}_\infty)$ of the family of ingoing null data $({x}_{-R+R\cdot [-\de,\de],R})$ by applying Definition \ref{DEFasymptoticCharges} to the family of sphere data $(x_{-R,R})$.
\end{definition}

\ni \emph{Remarks on Definition \ref{DEFadmissibleEXTSequences}.}
\begin{enumerate}

\item In this paper, strongly asymptotically flat families of ingoing null data $({x}_{-R+R\cdot [-\de,\de],R})$ are used in Theorem \ref{THMMAINPRECISE} and Theorem \ref{PROPconstructionStatement}.

\item For strongly asymptotically flat families of ingoing null data $({x}_{-R+R\cdot [-\de,\de],R})$, the sphere data $$(x_{-R,R}) := (x \vert_{S_{-R,R}})$$ forms a strongly asymptotically flat family of sphere data. 

\end{enumerate}

\subsection{Scaling of Einstein equations} \label{SECdefinitionScaling} In this section we introduce the scaling used in this paper and subsequently discuss how geometric quantities change under scaling. Consider local double null coordinates $(u,v,\th^1,\th^2)$ in a spacetime $(\MM,\g)$,
\begin{align*} 
\begin{aligned} 
\g = -4 \Om^2 du dv + \gd_{AB} \lrpar{d\th^A + b^A dv}\lrpar{d\th^B + b^B dv}.
\end{aligned} 
\end{align*}
The scaling is defined in two steps.
\begin{enumerate}
\item For a real number $R\geq1$, consider the local coordinates $(\tilde{u},\tilde{v},\tth^1,\tth^2)$ defined by
\begin{align*} 
\begin{aligned} 
(R\cdot \tilde{u},R \cdot \tilde{v},\tth^1,\tth^2)=  (u, v, \th^1,\th^2).
\end{aligned} 
\end{align*}
Clearly it holds that
\begin{align*} 
\begin{aligned} 
du = R d\tilde{u}, \,\, dv= Rd\tilde{v}, \,\, d\th^1=d\tth^1, \,\, d\th^2=d\tth^2,
\end{aligned} 
\end{align*}
and thus
\begin{align*} 
\begin{aligned} 
\g =& -4 R^2\cdot \Om^2 d\tilde{u} d\tilde{v} + \gd_{AB} \lrpar{d\tth^A +R\cdot b^A d\tilde{v}}\lrpar{d\tth^B + R\cdot b^B d\tilde{v}} \\
=& R^{2} \lrpar{ -4 \cdot \Om^2 d\tilde{u} d\tilde{v} + R^{-2} \gd_{AB} \lrpar{d\tth^A +R\cdot b^A d\tilde{v}}\lrpar{d\tth^B + R\cdot b^B d\tilde{v}}}.
\end{aligned} 
\end{align*}

\item It is well-known that given a Lorentzian metric $\g$ is a solution to the Einstein equations, the conformal metric ${}^{(R)}\g:= R^{-2} \g$ is also a solution.

\end{enumerate}
\ni Expressing ${}^{(R)}\g$ in coordinates $(\tilde{u},\tilde{v},\tth^1,\tth^2)$, we get the spacetime metric
\begin{align*} 
\begin{aligned} 
{}^{(R)}\g = -4 {}^{(R)}{\Om}^2 d\tilde{u}d\tilde{v} + {}^{(R)}{\gd}_{AB}  \lrpar{d\tth^A + {}^{(R)}{b}^A d\tilde{v}}\lrpar{d\tth^B + {}^{(R)}{b}^B d\tilde{v}}
\end{aligned} 
\end{align*}
with 
\begin{align*} 
\begin{aligned} 
{}^{(R)}{\Om}(\tilde{u},\tilde{v}):=\Om(R\tilde{u}, R\tilde{v}), \,\, {}^{(R)}{\gd}(\tilde{u},\tilde{v}) := R^{-2} \gd(R\tilde{u}, R\tilde{v}), \,\, {}^{(R)}{b}(\tilde{u},\tilde{v}) := R \,b(R\tilde{u}, R\tilde{v}),
\end{aligned} 
\end{align*}

\ni \textbf{Notation.} Denote in the following the scaling
\begin{align*} 
\begin{aligned} 
\Psi_R(\tilde{u},\tilde{v},\tth^1,\tth^2) :=(R\cdot \tilde{u},R \cdot \tilde{v},\tth^1,\tth^2).
\end{aligned} 
\end{align*}

\ni The following lemma shows how the Ricci coefficients and null curvature components change under scaling; the proof is by explicit computation.

\begin{lemma}[Scaling of Ricci coefficients and null curvature components]  \label{DEFscalingNULLSTRUC} Under the above scaling, the Ricci coefficients and null curvature components transform as follows,
\begin{align*} \begin{aligned}
{}^{(R)}\chi_{AB} =& R^{-1} \lrpar{\chi_{AB}\circ \Psi_R}, & {}^{(R)}\zeta_A =&\zeta_A\circ \Psi_R,&{}^{(R)}\eta_A =& \eta_A\circ \Psi_R,&{}^{(R)}\om =& R \lrpar{\om\circ \Psi_R},\\
{}^{(R)}\chib_{AB} =& R^{-1}\lrpar{ \chib_{AB}\circ \Psi_R}, & {}^{(R)}\underline{\zeta}_A =&  \underline{\zeta}_A\circ \Psi_R, & {}^{(R)}\etab_A =& \etab_A\circ \Psi_R, & {}^{(R)}\omb =& R \lrpar{\omb\circ \Psi_R}, 
\end{aligned} \end{align*}
and
\begin{align*} 
\begin{aligned} 
{}^{(R)}(D\om) = R^2\lrpar{(D\om) \circ \Psi_R}, \,\, {}^{(R)}\Du\omb = R^2 \lrpar{(\Du\omb)\circ \Psi_R},
\end{aligned} 
\end{align*}
and
\begin{align*} \begin{aligned}
{}^{(R)}\alpha_{AB} =& \a_{AB}\circ \Psi_R, & {}^{(R)}\beta_A =& R \lrpar{\beta_{A}\circ \Psi_R}, & {}^{(R)}\rh =& R^2 \lrpar{\rh\circ \Psi_R}, \\
 {}^{(R)}\sigma =& R^2 \lrpar{\sigma\circ \Psi_R}, & {}^{(R)}\beb_A =& R \lrpar{\beb_A\circ \Psi_R}, & {}^{(R)}\ab_{AB} =& \ab_{AB}\circ \Psi_R.
\end{aligned}\end{align*}
Moreover, 
\begin{align*} \begin{aligned}
\tr_{{}^{(R)}\gd} {}^{(R)}\chi =& R \lrpar{\trchi \circ \Psi_R}, & {}^{(R)}\chih_{AB}=& R^{-1} \lrpar{ \chih_{AB}\circ \Psi_R},\\
\tr_{{}^{(R)}\gd} {}^{(R)}\chib =& R \lrpar{ \trchib\circ \Psi_R}, & {}^{(R)}\chibh_{AB} =& R^{-1} \lrpar{\chibh_{AB}\circ \Psi_R},
\end{aligned}\end{align*}
where the tracefree parts of ${}^{(R)}\chi, {}^{(R)}\chib$ and $\chi, \chib$ are calculated with respect to ${}^{(R)}\gd$ and $\gd$, respectively. Furthermore, the area radius $r$ scales as
\begin{align*} 
\begin{aligned} 
{}^{(R)}r = R^{-1} \lrpar{ r \circ \Psi_R}.
\end{aligned} 
\end{align*}
\end{lemma}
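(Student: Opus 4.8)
The plan is to factor the scaling into two operations and then simply track powers of $R$. The first operation is the coordinate relabeling $\Psi_R$; since all of $\Om$, $\gd$, the null frame, the Ricci coefficients and the null curvature components are defined naturally from $\g$ and the optical functions, this step contributes nothing but precomposition with $\Psi_R$. The second operation is the passage ${}^{(R)}\g=R^{-2}\g$ to the conformal metric, carried out together with the forced replacement of the optical functions $u,v$ by $R^{-1}u, R^{-1}v$ (these are the optical functions read off in the coordinates $(\tilde u,\tilde v,\tth^1,\tth^2)$). Because the conformal factor $R^{-2}$ is \emph{constant}, the Levi-Civita connection $\D$ is unchanged and the $(1,3)$ Riemann tensor is unchanged, so the $(0,4)$ tensor $\Rbf$ scales by $R^{-2}$. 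I would begin by recording the scaling of the null frame: from $L'=-2\D(R^{-1}u)$ and $({}^{(R)}\g)^{-1}=R^2\g^{-1}$ one gets ${}^{(R)}L'=R\,(L'\circ\Psi_R)$ and ${}^{(R)}\Lb'=R\,(\Lb'\circ\Psi_R)$, and then $({}^{(R)}\Om)^{-2}=-\half\,{}^{(R)}\g({}^{(R)}L',{}^{(R)}\Lb')=-\half R^{-2}R^2(\g(L',\Lb')\circ\Psi_R)$ shows that the two rescalings cancel, so ${}^{(R)}\Om=\Om\circ\Psi_R$; hence ${}^{(R)}\widehat{L}=R\,(\widehat{L}\circ\Psi_R)$, ${}^{(R)}\widehat{\Lb}=R\,(\widehat{\Lb}\circ\Psi_R)$, and, using $L=\Om^2L'$, $\Lb=\Om^2\Lb'$, also ${}^{(R)}L=R\,(L\circ\Psi_R)$ and ${}^{(R)}\Lb=R\,(\Lb\circ\Psi_R)$.

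The bulk of the argument is then substitution into the defining formulas \eqref{DEFricciCoefficients} and \eqref{EQnullcurvatureCOMPDEF} and counting. Every Ricci coefficient is a contraction of $\g$ (factor $R^{-2}$) with $\D$ (unchanged) applied to $\widehat{L}$ or $\widehat{\Lb}$ (factor $R$ each), evaluated on $S_{u,v}$-tangent vectors (unaffected, their coordinate components merely precomposing with $\Psi_R$) or on the remaining frame vectors: thus ${}^{(R)}\chi_{AB}=R^{-1}(\chi_{AB}\circ\Psi_R)$, and likewise for $\chib$, while ${}^{(R)}\zeta_A=\zeta_A\circ\Psi_R$, and $\underline\zeta$, $\eta=\zeta+\di\log\Om$, $\etab=-\zeta+\di\log\Om$ are unchanged since $\Om$ is. For $\om=\Lied_L\log\Om$ and $\omb=\Lied_{\Lb}\log\Om$ the single power of $R$ in $L,\Lb$ gives ${}^{(R)}\om=R\,(\om\circ\Psi_R)$ and ${}^{(R)}\omb=R\,(\omb\circ\Psi_R)$, and one more power for $D\om=\Lied_L\om$ and $\Du\omb$. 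For the curvature components one counts the frame slots occupied by $\widehat{L},\widehat{\Lb}$ beyond the two $S_{u,v}$-tangent slots of $\a,\ab$: $\a$ and $\ab$ scale by $R^{-2}\cdot R^2=1$, $\be$ and $\beb$ by $R^{-2}\cdot R^3=R$, and $\rh$ by $R^{-2}\cdot R^4=R^2$.

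The only point that needs real care — the rest being bookkeeping — is the family of quantities whose definitions involve the induced metric $\gd$ of $S_{u,v}$ or its area form directly: the traces $\trchi,\trchib$, the tracefree parts $\chih,\chibh$, the area radius $r$, and $\si$, which is extracted from $\Rbf(\cdot,\cdot,\widehat{\Lb},\widehat{L})$ through the area two-form rather than by a pure frame contraction. Here one uses ${}^{(R)}\gd=R^{-2}(\gd\circ\Psi_R)$, hence $({}^{(R)}\gd)^{-1}=R^2(\gd^{-1}\circ\Psi_R)$, and that the area form scales by $R^{-2}$ with its inverse by $R^2$; this produces $\tr_{{}^{(R)}\gd}{}^{(R)}\chi=R^2\cdot R^{-1}(\trchi\circ\Psi_R)=R\,(\trchi\circ\Psi_R)$, correspondingly ${}^{(R)}\chih_{AB}=R^{-1}(\chih_{AB}\circ\Psi_R)$ (and similarly $\trchib,\chibh$), the area $4\pi r^2$ scaling by $R^{-2}$ so that ${}^{(R)}r=R^{-1}(r\circ\Psi_R)$, and ${}^{(R)}\si=R^2\cdot R^{-2}\cdot R^2(\si\circ\Psi_R)=R^2(\si\circ\Psi_R)$. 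I do not anticipate any genuine obstacle: the proof is a direct if slightly lengthy verification, and the only trap is to remember that the optical-function rescaling and the conformal rescaling act simultaneously on $\Om$ (where they cancel) and on $\si$, $r$, $\trchi$ (where they do not). As a cross-check, applying these rules to the Schwarzschild reference data \eqref{EQspheredataSSM111222} turns $\mathfrak{m}^M$ into $\mathfrak{m}^{M/R}$, consistent with the fact that $R^{-2}\g^M$ is the Schwarzschild metric of mass $M/R$, which pins down every exponent unambiguously.
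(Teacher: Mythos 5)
Your proposal is correct and is essentially the paper's (unstated) argument, which is dismissed with "the proof is by explicit computation." The organization into pull-back by $\Psi_R$ (which is geometrically natural, so contributes only $\circ\Psi_R$) followed by the constant conformal rescaling of the metric together with the accompanying rescaling of the optical functions — exploiting that a constant conformal factor leaves the Levi-Civita connection and the $(1,3)$ Riemann tensor unchanged, so the $(0,4)$ curvature picks up exactly $R^{-2}$ — is a clean way to do the bookkeeping, and your power counting for each quantity (including the trap points $\Om$, $\si$, $r$, $\trchi$ where the metric and optical scalings interact) is correct; the Schwarzschild cross-check against \eqref{EQdefSCALINGSSprop8999} confirms the exponents.
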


\ni \textbf{Notation.} For real numbers $R\geq1$ and sphere data $x_{u,v}$ on $S_{u,v}$,  denote the rescaled sphere data following Lemma \ref{DEFscalingNULLSTRUC} by
${}^{(R)}x_{R^-1 u, R^{-1}v}$. \\

\ni \emph{Remarks on Lemma \ref{DEFscalingNULLSTRUC}.}
\begin{enumerate}
\item By the invariance of the Einstein equations under the above scaling, it follows that the null structure equations and the null Bianchi equations of Section \ref{SECnullstructureequations} are scale-invariant under the scaling of Lemma \ref{DEFscalingNULLSTRUC}.

\item Importantly, we have the following \emph{scale-invariance of Schwarzschild}. Using the explicit formulas for $\mathfrak{m}^M$ in \eqref{EQspheredataSSM111222}, it is straight-forward to show that for real numbers $M\geq0$, $R\geq1$ and $v>u$, 
\begin{align} 
\begin{aligned} 
\lrpar{{}^{(R)} \mathfrak{m}^M}_{R^{-1}u,R^{-1} v} =  \mathfrak{m}^{M/R}_{u,v}.
\end{aligned} \label{EQdefSCALINGSSprop8999}
\end{align}
\noindent Also the Kerr reference sphere data constructed in Section \ref{SECappKerrFamilyDetails} is scale-invariant, see Remark \ref{RemarkScalingKerrData}.
\end{enumerate}

\ni In the following we analyse the scaling of the norms of tensors, sphere data and null data and the charges.
\begin{lemma}[Scaling of tensor norms] \label{LEMscalingofNorms} Let $p\in \RRR$ and $R\geq1$ be two real numbers, and let $m\geq0$ and $l\geq0$ be two integers. Let $F$ be a tensor on $S_{u,v}$ for real numbers $v>u$, and define the tensor ${}^{(R)}F$ by
$${}^{(R)}F := R^p \cdot \lrpar{ F \circ \Psi_R}.$$
Then it holds for integers $i, l\geq0$ that
\begin{align*} 
\begin{aligned} 
\Vert {}^{(R)}F \Vert_{{H}^i(S_{u,v})} =& R^p \cdot \Vert F \Vert_{{H}^i(S_{R\cdot u, R\cdot v})}, \\
\Vert {}^{(R)}F \Vert_{H^i_l\lrpar{\HH_{u_0,[v_1,v_2]}}} =& R^p\cdot  \Vert F \Vert_{H^i_l\lrpar{\HH_{r\cdot u_0,[R\cdot v_1,R \cdot v_2]}}}, \\
\Vert {}^{(R)}F \Vert_{H^i_l\lrpar{\HHb_{[u_1,u_2],v_0}}} =& R^p\cdot  \Vert F \Vert_{H^i_l\lrpar{\HHb_{R\cdot [u_1,u_2],R\cdot v_0}}}.
\end{aligned} 
\end{align*}
\end{lemma}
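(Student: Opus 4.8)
The plan is to reduce everything to the base-level scaling identities for coordinates, volume forms, and covariant derivatives, and then apply them term-by-term in the definitions of the three norms appearing in Definitions~\ref{DEFnormsonSR}, \ref{DEFnullHHspaces} and \ref{DEFnullHHbspaces}. First I would record the elementary facts underlying the transformation $\Psi_R$: since $(R\tilde u,R\tilde v,\tth^1,\tth^2)=(u,v,\th^1,\th^2)$, the round reference metric on the sphere obeys $(v-u)^2\gac = R^2 (\tilde v-\tilde u)^2\gac$, so the round metric $\ga$ used to build the norms scales by $R^2$ and hence distances scale by $R$, volume forms (on the $2$-sphere) scale by $R^2$, and the Levi-Civita connection $\Nd$ of $\ga$ is unchanged (connections are scale-invariant), while raising an index with $\ga^{-1}$ contributes a factor $R^{-2}$.

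Next I would treat the sphere norm. For a $k$-tensor $F$ on $S_{u,v}$, the quantity $\Nd^i F$ is a $(k+i)$-tensor; computing $\Vert \Nd^i F\Vert_{L^2(S_{u,v})}^2$ with respect to $\ga=(v-u)^2\gac$ means contracting $i+k$ downstairs indices against $\ga^{-1}$ (producing $R^{-2(i+k)}$ when passing from the sphere $S_{R\cdot u,R\cdot v}$ with metric $R^2\gac$-scaled data back down) and integrating against a volume form that carries an $R^2$. Pulling back via $\Psi_R$ and using ${}^{(R)}F=R^p(F\circ\Psi_R)$ turns $\Vert\Nd^iF\Vert^2_{L^2(S_{R\cdot u,R\cdot v})}$ into $R^{2p}\,R^{-2(i+k)}\,R^2$ times $\Vert\Nd^iF\Vert^2$ evaluated correctly; the explicit weight $(v-u)^{2(i+k-1)}$ in Definition~\ref{DEFnormsonSR} scales by $R^{2(i+k-1)}$, so the three powers $R^{-2(i+k)}$, $R^{2}$ (volume) and $R^{2(i+k-1)}$ (weight) combine to $R^0$, leaving exactly $\Vert {}^{(R)}F\Vert^2_{H^i(S_{u,v})}=R^{2p}\Vert F\Vert^2_{H^i(S_{R\cdot u,R\cdot v})}$. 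The point of the $(v-u)$-weights in the definition is precisely to make this cancellation happen, so I would emphasize that they are what removes all tensor-type-dependent powers.

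For the hypersurface norms I would then observe that $\Vert\cdot\Vert_{H^m_l(\HH)}$ and $\Vert\cdot\Vert_{H^m_l(\HHb)}$ are built by integrating the sphere norms of $D^iT$ (resp. $\Du^iT$) against $dv$ (resp. $du$) with a weight $(v-u_0)^{2i-1}$ (resp. $(v_0-u)^{2i-1}$). The measures $dv$ and $du$ each scale by $R$ under $\Psi_R$, the weights $(v-u_0)^{2i-1}$ scale by $R^{2i-1}$, and the Lie derivatives $D^i=\Lied_L^i$, $\Du^i=\Lied_{\Lb}^i$ with respect to the reference Minkowski vectorfields $L=\pr_v$, $\Lb=\pr_u$ scale by $R^{-i}$ each (since $L,\Lb$ have dimension of inverse length and ${}^{(R)}L = R^{-1}(L\circ\Psi_R)$), so $\|D^iT\|^2$ picks up an extra $R^{-2i}$. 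Applying the already-established sphere-norm identity to $D^iT$ (which is itself an $S_{u,v}$-tangential tensor of the same type as $T$, so the $H^m(S)$-identity applies with the same $p$) and collecting the powers $R$ (measure), $R^{2i-1}$ (weight), $R^{-2i}$ (from the $i$ Lie derivatives) gives $R^{2i-1-2i+1}=R^0$, again leaving the clean $R^p$ relative factor. The only genuinely non-routine point — and the step I would be most careful about — is verifying that $\Lied_L$ and $\Lied_{\Lb}$ with respect to the fixed reference Minkowski metric really transform by a single power $R^{-1}$ under $\Psi_R$; this is consistent with ${}^{(R)}\om = R(\om\circ\Psi_R)$, ${}^{(R)}(D\om)=R^2((D\om)\circ\Psi_R)$ in Lemma~\ref{DEFscalingNULLSTRUC} (two Lie derivatives of a scalar of weight $R^0$ giving $R^2$), so I would check it by the explicit coordinate formula $\pr_v = R^{-1}\pr_{\tilde v}$ and induct on $i$, after which all three displayed identities follow by assembling the above bookkeeping.
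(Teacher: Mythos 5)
Your proposal is correct and follows essentially the same route as the paper's (extremely terse) proof: the paper simply asserts the key $L^2$-scaling identity for $\Nd^iD^j$ applied to a $k$-tensor and leaves the assembly to the reader, whereas you supply the derivation by direct power counting, showing that the $(v-u)$-weights, inverse-metric contractions, volume element, angular covariant derivatives, Lie derivatives, and the $dv$- or $du$-measure all contribute $R$-powers that cancel to $R^0$, leaving the clean $R^p$. You correctly flag the only genuinely delicate point — the factor of $R$ picked up per application of $\Lied_L$ (equivalently, the fact that each $L$- or $\Lb$-Lie derivative raises the scaling weight by one, consistent with $\om$ and $D\om$ in Lemma \ref{DEFscalingNULLSTRUC}) — and your final arithmetic matches; the informal justification via "${}^{(R)}L = R^{-1}(L\circ\Psi_R)$" is written with an opposite sign convention to the one I'd use (the natural small-hypersurface $L$ is $\pr_{\tilde v}=R\,\pr_v$, not $R^{-1}\pr_{\tilde v}$), but since you track the remaining factors in the consistently opposite direction, the net cancellation $R\cdot R^{2i-1}\cdot R^{-2i}=R^0$ is correct and the proof goes through.
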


\begin{proof}[Proof of Lemma \ref{LEMscalingofNorms}] The proof is based on Definitions \ref{DEFnormsonSR}, \ref{DEFnullHHspaces} and \ref{DEFnullHHbspaces}, and the fact that for each $k$-tensor $F$ and integers $i\geq0$ and $j\geq0$,
\begin{align*} 
\begin{aligned} 
\Vert \Nd^i D^j {}^{(R)} F \Vert_{L^2(S_{R^{-1}u,R^{-1}v})} =& R^p R^{i+j+k-1} \Vert \Nd^i D^j F \Vert_{L^2(S_{u,v})}, \\
\Vert \Nd^i \Du^j {}^{(R)}F \Vert_{L^2(S_{R^{-1}u,R^{-1}v})} =& R^p R^{i+j+k-1} \Vert \Nd^i \Du^j F \Vert_{L^2(S_{u,v})}.
\end{aligned} 
\end{align*}
This finishes the proof of Lemma \ref{LEMscalingofNorms}.
\end{proof}

\ni The following lemma is a direct consequence of Definitions \ref{DEFnormFirstOrderDATA}, \ref{DEFnormHH}, \ref{DEFspacetimeNORM} and \ref{DEFscalingNULLSTRUC}, and Lemma \ref{LEMscalingofNorms}; its proof is omitted.
\begin{lemma}[Scale-invariance of data norms] \label{LEMspheredataInvarianceSCale} \label{LEMNORMFULLscalingR} Let $R\geq1$ be a real number. Then it holds that for sphere data $x_{u,v}$ on $S_{u,v}$,
\begin{align*} 
\begin{aligned} 
\Vert {}^{(R)}x_{R^{-1}u,R^{-1}v} \Vert_{\XX(S_{R^{-1}u,R^{-1}v})} = \Vert x_{u,v} \Vert_{\XX(S_{u,v})}, 
\end{aligned} 
\end{align*}
for outgoing null data $x_{u_0,[v_1,v_2]}$ on $\HH_{u_0,[v_1,v_2]}$, 
\begin{align*} 
\begin{aligned} 
\Vert {}^{(R)}x_{R^{-1} u_0, R^{-1} [v_1,v_2]} \Vert_{\XX(\HH_{R^{-1} u_0, R^{-1} [v_1,v_2]})}= \Vert x_{u_0,[v_1,v_2]} \Vert_{\XX(\HH_{u_0,[v_1,v_2]})},
\end{aligned} 
\end{align*}
and for ingoing null data $x_{[u_1,u_2],v_0}$ on $\HHb_{[u_1,u_2],v_0}$,
\begin{align*} 
\begin{aligned} 
\Vert {}^{(R)}x_{R^{-1} [u_1,u_2],R^{-1} v_0} \Vert_{\XX(\HHb_{R^{-1} [u_1,u_2],R^{-1} v_0})}=& \Vert x_{[u_1,u_2],v_0} \Vert_{\XX(\HHb_{[u_1,u_2],v_0})}, \\
\Vert {}^{(R)}x_{R^{-1} [u_1,u_2],R^{-1} v_0} \Vert_{\XX^+(\HHb_{R^{-1} [u_1,u_2],R^{-1} v_0})}=& \Vert x_{[u_1,u_2],v_0} \Vert_{\XX^+(\HHb_{[u_1,u_2],v_0})},
\end{aligned} 
\end{align*}
where the norms $\XX(S_{u,v}), \XX(\HH_{u_0,[v_1,v_2]}), \XX(\HHb_{[u_1,u_2],v_0})$ and $\XX^+(\HHb_{[u_1,u_2],v_0})$ are defined in Definitions \ref{DEFNORMspheredata2}, \ref{DEFspacetimeNORM} and \ref{DEFspacetimeHIGHERNORMdata}, respectively.
\end{lemma}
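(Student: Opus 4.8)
The plan is to obtain the four identities by a mechanical, summand-by-summand comparison, invoking only the two scaling statements already established: Lemma~\ref{DEFscalingNULLSTRUC}, which records how each geometric component transforms under $\Psi_R$, and Lemma~\ref{LEMscalingofNorms}, which records how the Sobolev norms $H^m(S)$, $H^m_l(\HH)$ and $H^m_l(\HHb)$ transform. Fix $R\geq 1$. Each of the norms $\XX(S_{u,v})$, $\XX(\HH_R)$, $\XX(\HHb_R)$ and $\XX^+(\HHb_R)$ of Definitions~\ref{DEFNORMspheredata2}, \ref{DEFnormHH}, \ref{DEFspacetimeNORM} and~\ref{DEFspacetimeHIGHERNORM} is a finite sum, one summand for each component $F$ of the data, of the shape $\lambda^{w_F}\,\Vert F\Vert$, where $w_F\in\{-2,-1,0,1,2\}$, $\Vert F\Vert$ is the relevant Sobolev norm ($H^{m_F}(S)$ for sphere data, $H^{m_F}_{l_F}(\HH)$ or $H^{m_F}_{l_F}(\HHb)$ for null data), and the weight base $\lambda$ is $v-u$ for sphere data and $R$ (the scale of the hypersurface) for null data. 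It therefore suffices to match the equality one summand at a time.

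First I would read off from Lemma~\ref{DEFscalingNULLSTRUC} — together with ${}^{(R)}\Om=\Om\circ\Psi_R$ and ${}^{(R)}\gd=R^{-2}(\gd\circ\Psi_R)$ from Section~\ref{SECdefinitionScaling} — the scaling exponent $p_F$ of each component, i.e.\ the integer with ${}^{(R)}F=R^{p_F}(F\circ\Psi_R)$; for instance $p_{\Om}=0$, $p_{\gd}=-2$, $p_{\Om\trchi}=p_{\Om\trchib}=1$, $p_{\chih}=p_{\chibh}=-1$, $p_{\eta}=0$, $p_{\om}=p_{\omb}=1$, $p_{D\om}=p_{\Du\omb}=2$, $p_{\a}=p_{\ab}=0$, $p_{\be}=p_{\beb}=1$, $p_{\rh}=p_{\si}=2$ (the values for $\be,\rh,\si,\beb$ being consistent with the fact that these are determined from sphere data through the null structure and Bianchi equations, which are themselves $\Psi_R$-invariant; cf.\ Remark~(1) after Lemma~\ref{DEFscalingNULLSTRUC}). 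The decisive observation — and the reason the weights in Definitions~\ref{DEFNORMspheredata2}, \ref{DEFnormHH}, \ref{DEFspacetimeNORM} and~\ref{DEFspacetimeHIGHERNORM} are chosen as they are — is that $w_F=p_F$ for every component $F$.

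Next I would combine the two scalings. By Lemma~\ref{LEMscalingofNorms} applied with exponent $p_F$, the norm of ${}^{(R)}F$ over the rescaled sphere $S_{R^{-1}u,R^{-1}v}$ (resp.\ over the rescaled null hypersurface) equals $R^{p_F}$ times the norm of $F$ over $S_{u,v}$ (resp.\ over the original hypersurface). On the other hand, rescaling $S_{u,v}$ to $S_{R^{-1}u,R^{-1}v}$ replaces the weight $(v-u)^{w_F}$ by $\bigl(R^{-1}(v-u)\bigr)^{w_F}=R^{-w_F}(v-u)^{w_F}$, and likewise the $R$-power weight attached to $F$ in the hypersurface norm acquires a factor $R^{-w_F}$ when the hypersurface is rescaled by $\Psi_R$. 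Multiplying the two effects, each summand of the rescaled-data norm equals $R^{p_F-w_F}=R^{0}$ times the corresponding summand of the original-data norm; since $w_F=p_F$, summing over $F$ gives the identities for $\XX(S)$, $\XX(\HH)$ and $\XX(\HHb)$. The $\XX^+(\HHb)$ identity is obtained by the same argument, the higher Sobolev orders $(m,l)$ being inert in the power count.

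The main point to be careful about — and really the only work — is the bookkeeping behind ``$w_F=p_F$ for all $F$'', i.e.\ the (routine) term-by-term check that the exponent of the weight attached to $F$ in each data norm equals the exponent with which $F$ scales in Lemma~\ref{DEFscalingNULLSTRUC}. There is no analytic input, and the lemma is, as stated, a direct consequence of Lemmas~\ref{DEFscalingNULLSTRUC} and~\ref{LEMscalingofNorms}.
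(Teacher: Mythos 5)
Your proof is correct and is precisely the argument the paper has in mind: the lemma is stated as a direct consequence of Lemma~\ref{DEFscalingNULLSTRUC}, Lemma~\ref{LEMscalingofNorms} and the definitions of the data norms, and your term-by-term verification that the weight exponent $w_F$ in each norm matches the scaling exponent $p_F$ of the corresponding component is exactly the omitted bookkeeping.
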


\ni The charges of Definition \ref{DEFlocalCharges} have the following scaling behaviour.
\begin{lemma}[Scaling of charges] \label{LEMrescaleLOCALCHARGES} Let $x_{u,v}$ be sphere data and let $R\geq1$ be a real number. Let ${}^{(R)}x_{R^{-1}u,R^{-1}v}$ denote the rescaling of $x_{u,v}$ according to Definition \ref{DEFscalingNULLSTRUC}. Then it holds that
\begin{align} 
\begin{aligned} 
\mathbf{E}\lrpar{{}^{(R)}x_{R^{-1}u,R^{-1}v}} =& R^{-1} \cdot \mathbf{E}(x_{u,v}), & \mathbf{P}\lrpar{{}^{(R)}x_{R^{-1}u,R^{-1}v}} =& R^{-1}\cdot \mathbf{P}(x_{u,v}), \\
\mathbf{L}\lrpar{{}^{(R)}x_{R^{-1}u,R^{-1}v}} =& R^{-2} \cdot \mathbf{L}(x_{u,v}), &  \mathbf{G}\lrpar{{}^{(R)}x_{R^{-1}u,R^{-1}v}} =& R^{-2} \cdot \mathbf{G}(x_{u,v}).
\end{aligned} \label{EQscalingofCHARGES}
\end{align} 
\end{lemma}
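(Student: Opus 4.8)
The plan is to substitute the explicit formulas of Definition \ref{DEFlocalCharges} for $\mathbf{E},\mathbf{P},\mathbf{L},\mathbf{G}$ into the left-hand side of \eqref{EQscalingofCHARGES} and to track the $R$-power of each building block under the rescaling $\Psi_R$, using the transformation rules of Lemma \ref{DEFscalingNULLSTRUC} together with two elementary observations: (i) $\Psi_R$ fixes the angular coordinates $(\th^1,\th^2)$ and hence the background round metric $\gac$ of \eqref{EQdefRoundUnitMetric}; (ii) the Levi--Civita connection $\Nd$ of $\gd$ is unchanged under multiplication of $\gd$ by the constant $R^{-2}$.

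First I would record, from Lemma \ref{DEFscalingNULLSTRUC}, that $r$ scales by $R^{-1}$, $\rh$ by $R^{2}$, $\tr_\gd\chi$ by $R$, that $\be_A$, $\eta_A$, $\Om$ scale by $R$, $1$, $1$ respectively, and that $\gd_{AB}$ scales by $R^{-2}$, hence $\gd^{AB}$ by $R^{2}$. Because $\Nd$ and the angular coordinates are unchanged, the gradient $\di$ of a scalar, and more generally $\Nd$ acting on a tensor with all indices down, scales trivially (up to precomposition with $\Psi_R$); combining with the $R^{2}$ from the raised indices of $\gd$ gives that $\Divd\be$ scales by $R^{3}$. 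Consequently $r\,\Divd\be$ and $\rh$ both scale by $R^{2}$, so $r^{3}(\rh + r\,\Divd\be)$ scales by $R^{-1}$; and $\di\trchi + \trchi(\eta-\di\log\Om)$ scales by $R$, so $r^{3}(\di\trchi + \trchi(\eta-\di\log\Om))$ scales by $R^{-2}$.

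It then remains to pass these scalings through the spherical-harmonic projections $(\cdot)^{(0)},(\cdot)^{(1m)}$ and the electric/magnetic decomposition $(\cdot)_E,(\cdot)_H$ of a one-form. By observation (i) these operations are defined on $S_{R^{-1}u,R^{-1}v}$ by exactly the same recipe (with respect to the same $\gac$ and the same coordinates, see Appendix \ref{SECellEstimatesSpheres}) as on $S_{u,v}$, hence they are literally invariant under precomposition with $\Psi_R$: $(f\circ\Psi_R)^{(\ell m)}=f^{(\ell m)}$, and likewise for the $E$- and $H$-parts. Feeding the two scaling statements above into Definition \ref{DEFlocalCharges} — whose numerical prefactors are untouched — one reads off $\mathbf{E},\mathbf{P}\mapsto R^{-1}(\cdot)$ and $\mathbf{L},\mathbf{G}\mapsto R^{-2}(\cdot)$, i.e. \eqref{EQscalingofCHARGES}. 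As a cross-check for $\mathbf{L},\mathbf{G}$ one may instead use the equivalent expression \eqref{EQalternativeLGdef}: each of $\be$, $\Divd\chih$, $\chih\cdot(\eta-\di\log\Om)$ scales by $R$ (using $\chih_{AB}\mapsto R^{-1}\chih_{AB}$ and $\gd^{AB}\mapsto R^{2}\gd^{AB}$, up to $\Psi_R$), consistently with the Gauss--Codazzi relation \eqref{EQgausscodazzinonlinear1} used to derive \eqref{EQalternativeLGdef}.

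The one step requiring care — the main, if modest, obstacle — is the bookkeeping of $R$-powers through the divergences $\Divd\be$ and $\Divd\chih$ (and the accompanying factors of $r$): the divergence is not invariant under the constant conformal rescaling $\gd\mapsto R^{-2}\gd$, the extra $R^{2}$ arising precisely from the raised index of $\gd$. Everything else is a direct substitution into Definition \ref{DEFlocalCharges} and Lemma \ref{DEFscalingNULLSTRUC}.
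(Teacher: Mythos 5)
Your proof is correct and takes essentially the same route as the paper: direct substitution of Definition~\ref{DEFlocalCharges} combined with the scaling rules of Lemma~\ref{DEFscalingNULLSTRUC}, with the key observation that $\Divd$ acquires a factor $R^{2}$ from the rescaled inverse metric while $\gac$, the angular coordinates, and hence the spherical-harmonic projections are fixed by $\Psi_R$. You spell out the bookkeeping (and the $\mathbf{P}$, $\mathbf{G}$ cases) more explicitly than the paper, which only writes out $\mathbf{E}$ and $\mathbf{L}$ and omits the rest as analogous; the cross-check via \eqref{EQalternativeLGdef} is a harmless extra.
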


\begin{proof}[Proof of Lemma \ref{LEMrescaleLOCALCHARGES}] By Definition \ref{DEFlocalCharges} and Lemma \ref{DEFscalingNULLSTRUC} we have on the one hand,
\begin{align*} 
\begin{aligned} 
\mathbf{E}({}^{(R)}x_{R^{-1}u,R^{-1}v}) :=& -\frac{1}{8\pi} \sqrt{4\pi}\lrpar{ r^3 \lrpar{\rh+r \Divd \be}}^{(0)}({}^{(R)}x_{R^{-1}u,R^{-1}v}) \\
=& -\frac{1}{8\pi} \sqrt{4\pi} \lrpar{\frac{r^3}{R^3} \lrpar{R^2 \rh+ R r \Divd (R \be)}}^{(0)}(x_{u,v}) \\
=& R^{-1} \lrpar{-\frac{1}{8\pi} \sqrt{4\pi}\lrpar{r^3\lrpar{\rh+ r \Divd \be}}^{(0)}(x_{u,v})} \\
=& R^{-1} \cdot \mathbf{E}(x_{u,v}).
\end{aligned} 
\end{align*}
and on the other hand, for $m=-1,0,1$,
\begin{align*} 
\begin{aligned} 
\mathbf{L}^m({}^{(R)}x_{R^{-1}u,R^{-1}v}) :=& \frac{1}{16\pi} \sqrt{\frac{4\pi}{3}} \lrpar{r^3\lrpar{\di\trchi +\trchi \lrpar{\eta - \di \log\Om}}}^{(1m)}_H({}^{(R)}x_{R^{-1}u,R^{-1}v}) \\
=&\frac{1}{16\pi} \sqrt{\frac{4\pi}{3}} \lrpar{\frac{r^3}{R^3} \lrpar{R\di \trchi + R\trchi \lrpar{\eta - \di \log\Om}}}^{(1m)}_H(x_{u,v}) \\
=& R^{-2} \lrpar{\frac{1}{16\pi} \sqrt{\frac{4\pi}{3}}\lrpar{r^3 \lrpar{\di \trchi + \trchi \lrpar{\eta - \di \log\Om}}}^{(1m)}_H(x_{u,v})}\\
=& R^{-2} \cdot \mathbf{L}^m(x_{u,v}).
\end{aligned} 
\end{align*}

\ni The proof of \eqref{EQscalingofCHARGES} for $\mathbf{P}$ and $\mathbf{G}$ is analogous and omitted. This finishes the proof of Lemma \ref{LEMrescaleLOCALCHARGES}. \end{proof}

\ni In the following we apply the scaling of the Einstein equations to prove Lemma \ref{LEMBoundednessEinftyPinfty1}.
\begin{proof}[Proof of Lemma \ref{LEMBoundednessEinftyPinfty1}] 

\ni First consider $\mathbf{E}$. By Definitions \ref{DEFlocalCharges} and \ref{DEFadmissibleSequences}, \eqref{EQexplicitCalcuChargesSSM8889} and Lemma \ref{LEMrescaleLOCALCHARGES}, for $R\geq1$ sufficiently large,
\begin{align*} 
\begin{aligned} 
\mathbf{E}\lrpar{x_{R\cdot u, R\cdot v}}-M =& R \cdot \lrpar{\mathbf{E}\lrpar{{}^{(R)} x_{u,v}}-M/R} \\
=& - \frac{\sqrt{4\pi}}{8\pi}  R \cdot \lrpar{r^3 \lrpar{\rho + \frac{2M/R}{r^3} + r \Divd {\be}}}^{(0)}\lrpar{{}^{(R)} x_{u,v}} \\
=& -\frac{\sqrt{4\pi}}{8\pi}  R \cdot \lrpar{ r^3\lrpar{\rh+\frac{2M/R}{r_{M/R}(u,v)^3} +\frac{2M/R}{r^3}- \frac{2M/R}{r_{M/R}(u,v)^3} }}^{(0)}\lrpar{{}^{(R)} x_{u,v}}\\ 
&-\frac{\sqrt{4\pi}}{8\pi}  R \cdot \lrpar{ r^4\lrpar{\Divd-\frac{1}{r^2}\Divdo} \be}^{(0)}\lrpar{{}^{(R)} x_{u,v}} \\
=& R \cdot \smallO(R^{-3/2}) + R\cdot \smallO(R^{-3/2}) \cdot \smallO(R^{-3/2})\\
=&\smallO(R^{-1/2}).
\end{aligned} 
\end{align*}

\ni Second, consider $\mathbf{L}$. By Lemma \ref{LEMrescaleLOCALCHARGES}, Definitions \ref{DEFlocalCharges},  \ref{DEFadmissibleSequences} and \ref{DEFasymptoticCharges} and \eqref{EQalternativeLGdef}, for $m=-1,0,1$,
\begin{align*} 
\begin{aligned} 
\mathbf{L}^m\lrpar{x_{R\cdot u,R\cdot v}} =& R^2 \mathbf{L}^m\lrpar{{}^{(R)} x_{u,v}} \\
=&\frac{1}{8\pi}\sqrt{\frac{8\pi}{3}} R^2\cdot \lrpar{r^3 \lrpar{\beta+\Divd \chih+\chih \cdot (\eta-\di\log\Om)}}_H^{(1m)}\lrpar{{}^{(R)} x_{u,v}} \\
=& \frac{1}{8\pi}\sqrt{\frac{8\pi}{3}} R^2 \cdot \lrpar{\lrpar{r^3 \beta}_H^{(1m)}\lrpar{{}^{(R)} x_{-1,1}} + \smallO(R^{-3/2})\smallO(R^{-3/2})} \\
=& R^2 \cdot \lrpar{\OO(R^{-2}) + \smallO(R^{-3})}\\
=& \OO(1),
\end{aligned} 
\end{align*}
where we used Lemma \ref{LEMnonlinearFourier} and that $\beta_H^{(1m)}\lrpar{{}^{(R)} x_{u,v}}$ is bounded by
\begin{align*} 
\begin{aligned} 
\left\vert \beta_H^{(1m)}\right\vert = \left\vert \, \int\limits_{S_{u,v}} \gac\lrpar{\beta^{[1]},H^{(1m)}} d\mu_\gac \right\vert \les \Vert \beta^{[1]} \Vert_{L^2(S_{u,v})} = \OO(R^{-2}).
\end{aligned} 
\end{align*}

\ni The proofs for $\mathbf{P}$ and $\mathbf{G}$ are similar. This finishes the proof of Lemma \ref{LEMBoundednessEinftyPinfty1}. \end{proof}

\ni From Lemmas \ref{LEMBoundednessEinftyPinfty1}, \ref{LEMspheredataInvarianceSCale} and \ref{LEMrescaleLOCALCHARGES}, \eqref{EQdefSCALINGSSprop8999} and Definitions \ref{DEFlocalCharges}, \ref{DEFadmissibleSequences} and \ref{DEFasymptoticCharges} we directly get the following lemma.

\begin{lemma}[Rescaling strongly asymptotically flat families] \label{LEMrescaletoSMALLdata} For two fixed real numbers $v>u$, let $(x_{R\cdot u,R\cdot v})$ be strongly asymptotically flat family of sphere data with asymptotic charge $\mathbf{E}_\infty$ as defined in Definition \ref{DEFasymptoticCharges}. Then it holds that
\begin{align*} 
\begin{aligned} 
\Vert {}^{(R)}x_{u,v} -\mathfrak{m}^{\mathbf{E}_\infty/R}_{u,v} \Vert_{\mathcal{X}(S_{u,v})}=\smallO\lrpar{ R^{-3/2}}, \,\, \Vert \be^{[1]}({}^{(R)}x_{u,v}) \Vert_{L^2(S_{u,v})} = \OO\lrpar{R^{-2}}.
\end{aligned} 
\end{align*}
Moreover, for a fixed real number $\de>0$, let $(x_{-R+R\cdot [-\de,\de], R})$ be a strongly asymptotically flat family of ingoing null data with asymptotic charge $\mathbf{E}_\infty$ as defined in Definition \ref{DEFadmissibleEXTSequences}. Then it holds that
\begin{align*} 
\begin{aligned} 
\Vert {}^{(R)}{x}_{-1+[-\de,\de],1} -{\mathfrak{m}}_{-1+[-\de,\de],1}^{\mathbf{E}_\infty/R} \Vert_{\XX^+\lrpar{\HHb_{-1+[-\de,\de],1}}} 
=&\smallO\lrpar{ R^{-3/2}},\\
\Vert \be^{[1]}({}^{(R)}x \vert_{-1,1}) \Vert_{L^2(S_{-1,1})} =& \OO\lrpar{R^{-2}}.
\end{aligned} 
\end{align*}
\end{lemma}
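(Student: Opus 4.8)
The plan is to deduce all four estimates directly from the scale-invariance of the data norms and of the tensor norms, combined with the scale-invariance of Schwarzschild reference data \eqref{EQdefSCALINGSSprop8999} and the identification $\mathbf{E}_\infty = M$ furnished by Lemma \ref{LEMBoundednessEinftyPinfty1}; no new estimates are needed, and the lemma is pure bookkeeping. The key observation is that the rescaling operation ${}^{(R)}(\cdot)$ of Lemma \ref{DEFscalingNULLSTRUC} acts on each component of sphere data (and of ingoing null data) by multiplication with a fixed power of $R$ followed by composition with the diffeomorphism $\Psi_R$; in particular it is linear and commutes with the subtraction of two data sets living on the same sphere, resp.\ on the same null hypersurface. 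Since by Lemma \ref{LEMBoundednessEinftyPinfty1} — and, in the ingoing case, using also that by the remarks following Definition \ref{DEFadmissibleEXTSequences} the restriction $(x_{-R,R})$ is itself a strongly asymptotically flat family of sphere data — the asymptotic charge $\mathbf{E}_\infty$ equals the real number $M\geq 0$ from \eqref{EQdefstronglydecayingfam123}, resp.\ from \eqref{EQdefstronglydecayingfam123EXT}, I would combine this with \eqref{EQdefSCALINGSSprop8999}, which identifies the rescaling of $\mathfrak{m}^{\mathbf{E}_\infty}$ on $S_{R\cdot u,R\cdot v}$ with $\mathfrak{m}^{\mathbf{E}_\infty/R}$ on $S_{u,v}$, to obtain the identity
\begin{align*}
{}^{(R)}x_{u,v} - \mathfrak{m}^{\mathbf{E}_\infty/R}_{u,v} = {}^{(R)}\lrpar{x_{R\cdot u,R\cdot v} - \mathfrak{m}^{\mathbf{E}_\infty}_{R\cdot u,R\cdot v}},
\end{align*}
and the analogous identity for the ingoing null data on $\HHb_{-R+R\cdot[-\de,\de],R}$.

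Next I would apply the scale-invariance of the data norms, Lemma \ref{LEMspheredataInvarianceSCale}. Since that lemma is proved by applying the componentwise norm-scaling of Lemma \ref{LEMscalingofNorms} to each entry of the data, the same argument applies verbatim to the differences above, giving
\begin{align*}
\Vert {}^{(R)}x_{u,v} - \mathfrak{m}^{\mathbf{E}_\infty/R}_{u,v} \Vert_{\mathcal{X}(S_{u,v})} = \Vert x_{R\cdot u,R\cdot v} - \mathfrak{m}^{\mathbf{E}_\infty}_{R\cdot u,R\cdot v} \Vert_{\mathcal{X}(S_{R\cdot u,R\cdot v})} = \smallO\lrpar{R^{-3/2}}
\end{align*}
by Definition \ref{DEFadmissibleSequences}, and likewise, using $\XX^+$-scale-invariance on ingoing null data,
\begin{align*}
\Vert {}^{(R)}x_{-1+[-\de,\de],1} - \mathfrak{m}^{\mathbf{E}_\infty/R}_{-1+[-\de,\de],1} \Vert_{\XX^+\lrpar{\HHb_{-1+[-\de,\de],1}}} = \smallO\lrpar{R^{-3/2}}
\end{align*}
by Definition \ref{DEFadmissibleEXTSequences}.

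For the two $\be^{[1]}$ bounds I would argue as follows. By the scale-invariance of the null structure and Bianchi equations (the remarks following Lemma \ref{DEFscalingNULLSTRUC}) the curvature component $\be$ computed from ${}^{(R)}x_{u,v}$ equals $R\,(\be(x_{R\cdot u,R\cdot v})\circ\Psi_R)$; and since the round unit metric $\gac$ and the coordinates $(\th^1,\th^2)$ are unchanged by $\Psi_R$, the $[1]$-projection commutes with $\Psi_R$, so $\be^{[1]}({}^{(R)}x_{u,v}) = R\,(\be^{[1]}(x_{R\cdot u,R\cdot v})\circ\Psi_R)$. Applying Lemma \ref{LEMscalingofNorms} with scaling power $p=1$ to this $1$-tensor then gives $\Vert \be^{[1]}({}^{(R)}x_{u,v}) \Vert_{L^2(S_{u,v})} = R\,\Vert \be^{[1]}(x_{R\cdot u,R\cdot v}) \Vert_{L^2(S_{R\cdot u,R\cdot v})} = R\cdot\OO(R^{-3}) = \OO(R^{-2})$ by the second line of \eqref{EQdefstronglydecayingfam123}; for the ingoing family the same computation on $S_{-1,1}$, now using the second line of \eqref{EQdefstronglydecayingfam123EXT}, yields $\Vert \be^{[1]}({}^{(R)}x|_{-1,1}) \Vert_{L^2(S_{-1,1})} = \OO(R^{-2})$.

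I do not expect a genuine obstacle: this is a bookkeeping lemma and everything reduces to the cited scaling statements (Lemmas \ref{LEMBoundednessEinftyPinfty1}, \ref{LEMspheredataInvarianceSCale}, \ref{LEMscalingofNorms}, \ref{LEMrescaleLOCALCHARGES} and identity \eqref{EQdefSCALINGSSprop8999}). The only points that deserve a moment's care are that $\mathbf{E}_\infty$ is literally the Schwarzschild mass parameter $M$ — so that subtracting $\mathfrak{m}^{\mathbf{E}_\infty/R}$ rather than some other reference datum is the correct normalization — and that the rescaling map is linear and commutes both with the subtraction of data and with the $[1]$-projection.
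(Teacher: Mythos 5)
Your proof is correct and follows the same route the paper indicates: the lemma is stated in the paper as a direct consequence of Lemmas \ref{LEMBoundednessEinftyPinfty1}, \ref{LEMspheredataInvarianceSCale}, \ref{LEMrescaleLOCALCHARGES}, \eqref{EQdefSCALINGSSprop8999} and the definitions, which is exactly the bookkeeping you carry out. The only small point worth flagging is that you correctly note the $\be^{[1]}$ estimate scales with a \emph{different} power than the norm estimate ($\OO(R^{-2})$ rather than $\smallO(R^{-3/2})$), because $\be$ carries scaling weight $p=1$ while the $L^2$ norm of a $1$-tensor on $S_{u,v}$ carries weight $R^{i+j+k-1}=R^{0}$ relative to $S_{Ru,Rv}$; you handle this precisely as needed.
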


\section{Statement of main results} \label{SECpreciseSTATEMENTMAIN111} 

\ni The following is the precise version of the main theorem of this paper.
\begin{theorem}[Perturbative characteristic gluing to Kerr, version 2] \label{THMMAINPRECISE} Let $\de>0$ be a real number, and let $(\tilde{{x}}_{-R+R\cdot[-\de,\de],R})$ along $\tilde{\HHb}_{-R+R\cdot[-\de,\de],R}$ be a strongly asymptotically flat family of ingoing null data with asymptotic charges
\begin{align*} 
\begin{aligned} 
(\mathbf{E}_\infty, \mathbf{P}_\infty=0, \mathbf{L}_\infty, \mathbf{G}_\infty) \in I(0)\times \RRR^3 \times \RRR^3.
\end{aligned} 
\end{align*}
For sufficiently large $R\geq1$, there exist
\begin{itemize}
\item sphere data $x'_{-R,R}$ on a perturbation $S'_{-R,R}$ of the sphere $\tilde{S}_{-R,R}$ along $\tilde{\HHb}_{-R+R\cdot[-\de,\de],R}$, 
\item outgoing null data ${x}$ on a null hypersurface ${\HH}_{-R,[R,2R]}$ solving the null structure equations,
\item sphere data $x_{-R,2R}^{\Kerr}$ on a spacelike $2$-sphere $S^{\Kerr}_{-R,2R}$ in a Kerr spacetime,
\end{itemize}
such that we have full matching of sphere data on $S_{-R,R} \subset {\HH}_{-R,[R,2R]}$ and on $S_{-R,2R} \subset {\HH}_{-R,[R,2R]}$,
\begin{align} 
\begin{aligned} 
x_{-R,R} = {x}'_{-R,R}, \,\, {x}_{-R,2R}= x_{-R,2R}^{\Kerr},
\end{aligned} \label{EQpreciseMainTheoremfullgluing}
\end{align}
and the following estimates hold,
\begin{align} 
\begin{aligned} 
\Vert x-\mathfrak{m}^{\mathbf{E}_\infty} \Vert_{\mathcal{X}({\HH}_{-R,[R,2R]})} + \Vert x'_{-R,R} - \mathfrak{m}^{\mathbf{E}_\infty} \Vert_{\XX(S_{-R,R})}  =& \smallO \lrpar{R^{-3/2}}.
\end{aligned}\label{EQmainTheoremPrecisePreciseBounds}
\end{align}
Moreover, the sphere $S^{\Kerr}_{-R,2R}$ in Kerr lies in a spacelike hypersurface $\Si^{\Kerr}$ whose asymptotic invariants are bounded by
\begin{align} 
\begin{aligned} 
\mathbf{E}_{\mathrm{ADM}} =&\mathbf{E}_\infty+ \smallO(R^{-1/2}), & \mathbf{P}_{\mathrm{ADM}} =& \smallO(R^{-1/2}), \\
\mathbf{L}_{\mathrm{ADM}} =& \mathbf{L}_\infty + \smallO(1), & \mathbf{C}_{\mathrm{ADM}} =& \mathbf{G}_\infty +3R\cdot \mathbf{P}_{\mathrm{ADM}}+ \smallO(1),
\end{aligned} \label{EQcoeffEstimLAMBDA}
\end{align}
and $S^{\Kerr}_{-R,2R}$ admits a future-complete outgoing null congruence and past-complete ingoing null congruence. Moreover, if the strongly asymptotically flat family of ingoing null data $(\tilde{{x}}_{-R+R\cdot[-\de,\de],R})$ satisfies the stronger decay rates
\begin{align} 
\begin{aligned} 
\mathbf{E}(\tilde{x}_{-R,R}) = \mathbf{E}_\infty+ \OO(R^{-1}), \,\, \mathbf{P}(\tilde{x}_{-R,R}) = \OO(R^{-3/2}),
\end{aligned} \label{EQstrongDecayPconditionmainthmprecise999094}
\end{align}
then the asymptotic invariants of the spacelike hypersurface $\Si^{\Kerr}$ are bounded by
\begin{align} 
\begin{aligned} 
\mathbf{E}_{\mathrm{ADM}} =&\mathbf{E}_\infty+ \OO(R^{-1}), & \mathbf{P}_{\mathrm{ADM}} =& \OO(R^{-3/2}), \\
\mathbf{L}_{\mathrm{ADM}} =& \mathbf{L}_\infty + \smallO(1), & \mathbf{C}_{\mathrm{ADM}} =& \mathbf{G}_\infty + \smallO(1),
\end{aligned} \label{EQcoeffEstimLAMBDAstronger}
\end{align}
so in particular, $\mathbf{C}_{\mathrm{ADM}}$ is not growing in $R$.
\end{theorem}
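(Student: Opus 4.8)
The plan is to realize the main theorem as an application of the perturbative codimension-$10$ characteristic gluing of \cite{ACR3,ACR1} combined with a topological degree argument to fix the remaining $10$ charges by a suitable choice of Kerr parameter $\la_R$. First I would rescale: by Lemma~\ref{LEMrescaletoSMALLdata}, the given strongly asymptotically flat family $(\tilde{x}_{-R+R\cdot[-\de,\de],R})$ rescales (via $\Psi_R$) to ingoing null data ${}^{(R)}\tilde{x}_{-1+[-\de,\de],1}$ that is $\smallO(R^{-3/2})$-close to the Schwarzschild reference data $\mathfrak{m}^{\mathbf{E}_\infty/R}$, hence in particular arbitrarily close to Minkowski $\mathfrak{m}_{-1+[-\de,\de],1}$ for large $R$. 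This puts us in the perturbative regime where the codimension-$10$ gluing applies: we glue a transversal perturbation $S'_{-1,1}$ of $\tilde S_{-1,1}$ along the ingoing null hypersurface to a Kerr reference sphere $S^{\KK(\la)}_{-1,2}$ for a parameter $\la$ ranging over (the rescaling of) the ellipsoid $\EE_R(\mathbf{E}_\infty)$ described in Section~\ref{SEC9990135}, obtaining outgoing null data on $\HH_{-1,[1,2]}$ solving the null structure equations, with all tangential derivatives glued modulo the $10$ charges. Rescaling back up by $\Psi_R^{-1}$ and using Lemma~\ref{LEMspheredataInvarianceSCale} and Lemma~\ref{LEMrescaleLOCALCHARGES} produces the objects $x'_{-R,R}$, ${x}$ on $\HH_{-R,[R,2R]}$, and $x^{\Kerr}_{-R,2R}$, with the estimate \eqref{EQmainTheoremPrecisePreciseBounds} following from scale-invariance of the norms together with the quantitative bounds in the perturbative gluing theorem and the triangle inequality $\Vert \mathfrak{m}^{\mathbf{E}_\infty/R} - \mathfrak{m} \Vert \lesssim R^{-1}$.

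Next I would close the $10$ charges. Define the charge difference $(\Delta Q)(\la)$ as in \eqref{EQdefDIFFchargegs}, viewed as a map on the ellipsoid $\EE_R(\mathbf{E}_\infty)$. Using the asymptotic expansions for the charges of strongly asymptotically flat sphere data (Lemma~\ref{LEMBoundednessEinftyPinfty1} and the refinements \eqref{EQdecaystrongINTRO9993} under the stronger decay \eqref{EQstrongDecayPconditionmainthmprecise999094}) together with the uniform-in-$\la$ convergence estimates for the local ADM integrals of the Kerr reference spheres (Section~\ref{SEC9990134}) and the geometric interpretation \eqref{EQintro9993999}, I would construct a homotopy $F_t$ between $F_1 = (\Delta Q)$ and the affine map
\[
F_0(\la) = (\mathbf{E}_\infty,\mathbf{P}_\infty,\mathbf{L}_\infty,\mathbf{G}_\infty) - \lrpar{\mathbf{E}^{\Kerr}_{\mathrm{ADM}}(\la),\mathbf{P}^{\Kerr}_{\mathrm{ADM}}(\la),\mathbf{L}^{\Kerr}_{\mathrm{ADM}}(\la),\mathbf{C}^{\Kerr}_{\mathrm{ADM}}(\la)-2R\cdot\mathbf{P}^{\Kerr}_{\mathrm{ADM}}(\la)}.
\]
Since the Kerr asymptotic-invariant map $\mathcal{K}$ of Chru\'sciel--Delay \cite{ChruscielDelay} is a diffeomorphism onto its image on the region $\mathbf{E}^2 > \vert\mathbf{P}\vert^2$, the map $F_0$ has a unique zero $\la'_R$, and a direct computation places it in the interior of $\EE_R(\mathbf{E}_\infty)$ for large $R$ with nonzero local degree. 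The homotopy invariance of the Brouwer degree then yields a zero $\la_R$ of $(\Delta Q)$, provided $F_t$ is nonvanishing on $\pr\EE_R(\mathbf{E}_\infty)$ uniformly in $R$; this is where the main work lies. Feeding $\la_R$ back into the gluing gives full matching \eqref{EQpreciseMainTheoremfullgluing}.

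Finally, the estimates \eqref{EQcoeffEstimLAMBDA} on the asymptotic invariants of $\Si^{\Kerr}=\Si^{\KK(\la_R)}$ follow by reading off $\la_R$ from the zero of $(\Delta Q)$: it lies within $\smallO(\cdot)$ of $\la'_R$, whose components are exactly the ADM invariants appearing in \eqref{EQcoeffEstimLAMBDA} by construction of $F_0$, and the anisotropic semi-axes of $\EE_R(\mathbf{E}_\infty)$ ($R^{-1/2}$ for $\mathbf{E}-\mathbf{E}_\infty$ and $\mathbf{P}$, $R^{1/4}$ for $\mathbf{L}$, $R^{1/2}$ for $\mathbf{C}$) dictate the stated error rates; the role of the translation/boost degrees of freedom of Minkowski symmetry is precisely to produce $\mathbf{C}_{\mathrm{ADM}} = \mathbf{G}_\infty + 3R\cdot\mathbf{P}_{\mathrm{ADM}} + \smallO(1)$. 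Under the stronger hypothesis \eqref{EQstrongDecayPconditionmainthmprecise999094}, one works instead with a fixed (bounded) ball of parameters as in case (a) of Section~\ref{SEC9990135}, so that $\mathbf{P}_{\mathrm{ADM}} = \OO(R^{-3/2})$ forces the $R$-linear term in $\mathbf{C}_{\mathrm{ADM}}$ to be $\smallO(1)$, giving \eqref{EQcoeffEstimLAMBDAstronger}. Completeness of the outgoing null congruence of $S^{\Kerr}_{-R,2R}$ is Appendix~\ref{SECcompletenessKERR}, and past-completeness of the ingoing congruence follows by the analogous argument (or by time reflection in Boyer--Lindquist coordinates). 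The hard part is the uniform lower bound on $\vert F_t(\la)\vert$ along $\pr\EE_R$: it requires that the error terms in both the charge expansions of the glued solution \emph{and} the Kerr local ADM integrals be $\smallO(\cdot)$ \emph{uniformly} over the $R$-growing parameter set, which is why the delicate uniform convergence estimates of Section~\ref{SEC9990134} are needed rather than just pointwise-in-$\la$ limits.
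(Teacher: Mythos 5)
Your proposal follows essentially the same strategy as the paper's proof in Section~\ref{SECproofMainTheorem1}: rescale to small ingoing null data near $\mathfrak{m}^{\mathbf{E}_\infty/R}$, apply the codimension-$10$ perturbative gluing with $\varep=\OO(R^{-3/2})$ to a Kerr reference sphere with parameter $\la\in\EE_R(\mathbf{E}_\infty)$, close the remaining $10$ charges by a Brouwer degree homotopy on the $R$-dependent ellipsoid, and rescale back; your identification of the uniform-in-$\la$ convergence estimates over the $R$-growing parameter set as the technical crux also matches the paper. Two details deserve correction. First, the coefficient of $\mathbf{P}_{\mathrm{ADM}}$ in your model map $F_0$ should be $3R$, not $2R$: the relevant quantity is the area radius of $S_{-R,2R}$, which is $v-u=3R$, and this is the constant appearing in \eqref{EQpreciseDEFhomotopyvalues8889}, \eqref{EQcoeffEstimLAMBDA}, and Proposition~\ref{CORfullchargeexpressionSRkerrreference} (the introduction's shorthand $S^{\KK(\la)}_{2R}$ obscures this by labelling the sphere with the $v$-coordinate rather than its radius). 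Second, to obtain the improved bounds \eqref{EQcoeffEstimLAMBDAstronger} under the stronger decay \eqref{EQstrongDecayPconditionmainthmprecise999094}, the paper does not switch to a fixed bounded ball of parameters; it keeps the same ellipsoid $\EE_R(\mathbf{E}_\infty)$ and simply notes that the sharper decay of $\mathbf{E}(\tilde x_{-R,R})$, $\mathbf{P}(\tilde x_{-R,R})$ propagates to sharper error terms in the same error map $f_R$, so the same zero $\la'$ automatically satisfies the refined estimates. Your variant of restricting to a bounded ball (following the case (a)/(b) discussion in the introduction) would require running a separate degree argument on a different domain and is not what the proof does, though it is a plausible alternative. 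Finally, your appeal to the Chru\'sciel--Delay diffeomorphism to justify that $F_0$ has a unique zero is an unnecessary detour: in the paper's parametrization, $\la$ \emph{is} the asymptotic-invariants vector, so $f_R(\cdot,0)$ is an explicit affine bijection vanishing at $\la_0=(\mathbf{E}_\infty,0,\mathbf{L}_\infty,\mathbf{G}_\infty)$ by inspection.
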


\ni \emph{Remarks on Theorem \ref{THMMAINPRECISE}.}
\begin{enumerate}

\item The key ingredients of the proof are the perturbative characteristic gluing of \cite{ACR3,ACR1} (used as black box), the geometric interpretation of the asymptotic charges $(\mathbf{E}_\infty,\mathbf{P}_\infty,\mathbf{L}_\infty,\mathbf{G}_\infty)$ in terms of the ADM asymptotic invariants \emph{energy}, \emph{linear momentum}, \emph{angular momentum}, and \emph{center-of-mass} in Section \ref{SECstatementConstruction}, and the $10$-dimensional parametrisation of Kerr reference sphere data through asymptotic invariants in Section \ref{SECappKerrFamilyDetails}.

\item The additional convergence condition \eqref{EQstrongDecayPconditionmainthmprecise999094} is satisfied by sphere data constructed from strongly asymptotically flat spacelike initial data, see Section \ref{SECstatementConstruction}.

\item The smallness on the right-hand side of \eqref{EQmainTheoremPrecisePreciseBounds} is consistent with our definition of strong asymptotic flatness, see Definition \ref{DEFadmissibleEXTSequences}.

\item Theorem \ref{THMMAINPRECISE} is at the level of $C^2$-gluing for the metric components. It can be generalized to include higher-order derivatives \emph{tangential to the gluing hypersurface} $\HH_{-R,[R,2R]}$; see Theorem 3.2 in \cite{ACR1} for the corresponding setup. For the gluing of higher-order derivatives in \emph{all directions}, we refer to Theorem \ref{THMdoubleCHARgluingTOkerr} below. 

\item More precisely, in Theorem \ref{THMMAINPRECISE} we glue to a \emph{Kerr reference sphere} $S_{-R,2R}^{\KK(\la)}$ which we explicitly construct in Section \ref{SECappKerrFamilyDetails} for some \emph{asymptotic invariants vector} $\la \in I(0) \times \RRR^3 \times \RRR^3$.

\item In Theorem \ref{THMMAINPRECISE} it is not necessary to have a family of ingoing null data data. Indeed, one can replace this family with one fixed ingoing null datum with sufficiently strong bounds.

\end{enumerate}

\ni The argument for the matching to Kerr in Theorem \ref{THMMAINPRECISE} applies similarly to the \emph{bifurcate} characteristic gluing of \cite{ACR3,ACR1}, see Remark \ref{REMtransversalGluing}. The corresponding theorem is the following.

\begin{theorem}[Bifurcate characteristic gluing to Kerr] \label{THMdoubleCHARgluingTOkerr} Let $m\geq1$ be an integer. Let  
\begin{align*} 
\begin{aligned} 
\lrpar{x_{0,R}, \DD^{L,m}_{0,R}, \DD^{\Lb,m}_{0,R} },
\end{aligned} 
\end{align*}
be a strongly asymptotically flat family of smooth higher-order sphere data with asymptotic charges
\begin{align*} 
\begin{aligned} 
(\mathbf{E}_\infty, \mathbf{P}_\infty=0, \mathbf{L}_\infty, \mathbf{G}_\infty) \in I(0)\times \RRR^3 \times \RRR^3.
\end{aligned} 
\end{align*}
For sufficiently large $R\geq1$, there exist
\begin{itemize}
\item smooth higher-order ingoing null data $(x,{\DD}^{L,m}, {\DD}^{\Lb,m})$ on $\HHb_{[-R,0],R}$ and outgoing higher-order null data $(x,{\DD}^{L,m}, {\DD}^{\Lb,m})$ on $\HH_{-R,[R,2R]}$ solving the higher-order null structure equations and matching to order $m$ on $S_{-R,R}$,
\item smooth higher-order sphere data 
$$\lrpar{x^{\Kerr}_{-R,2R},{\DD}^{L,m, \Kerr}_{-R,2R}, {\DD}^{\Lb,m, \Kerr}_{-R,2R}}$$ 
on a smooth spacelike $2$-sphere $S^{\Kerr}_{-R,2R}$ in a Kerr spacetime,
\end{itemize}
such that 
\begin{align*} 
\begin{aligned} 
\lrpar{x,{\DD}^{L,m}, {\DD}^{\Lb,m}} \vert_{{S}_{0,R}} =& \lrpar{x_{0,R}, \DD^{L,m}_{0,R}, \DD^{\Lb,m}_{0,R} },\\
\lrpar{x,{\DD}^{L,m}, {\DD}^{\Lb,m}} \big\vert_{S_{-R,2R}} =& \lrpar{x^{\Kerr}_{-R,2R},{\DD}^{L,m, \Kerr}_{-R,2R}, {\DD}^{\Lb,m, \Kerr}_{-R,2R}}.
\end{aligned}
\end{align*}
The sphere $S^{\Kerr}_{-R,2R}$ in Kerr lies in a spacelike hypersurface with asymptotic invariants
\begin{align*} 
\begin{aligned} 
\mathbf{E}_{\mathrm{ADM}} =&\mathbf{E}_\infty+ \smallO(R^{-1/2}), & \mathbf{P}_{\mathrm{ADM}} =& \smallO(R^{-1/2}), \\
\mathbf{L}_{\mathrm{ADM}} =& \mathbf{L}_\infty + \smallO(1), & \mathbf{C}_{\mathrm{ADM}} =& \mathbf{G}_\infty +3R\cdot \mathbf{P}(\la)+ \smallO(1).
\end{aligned}
\end{align*}
amd admits a future-complete outgoing null congruence and past-complete ingoing null congruence. Moreover, if the strongly asymptotically flat family of sphere data $(x_{0,R})$ satisfies the stronger decay condition
\begin{align*} 
\begin{aligned} 
\mathbf{E}(x_{0,R}) = \mathbf{E}_\infty+ \OO(R^{-1}), \,\, \mathbf{P}(x_{0,R}) = \OO(R^{-3/2}),
\end{aligned}
\end{align*}
then the asymptotic invariants of the spacelike hypersurface $\Si^{\Kerr}$ are bounded by
\begin{align*} 
\begin{aligned} 
\mathbf{E}_{\mathrm{ADM}} =&\mathbf{E}_\infty+ \OO(R^{-1}), & \mathbf{P}_{\mathrm{ADM}} =& \OO(R^{-3/2}), \\
\mathbf{L}_{\mathrm{ADM}} =& \mathbf{L}_\infty + \smallO(1), & \mathbf{C}_{\mathrm{ADM}} =& \mathbf{G}_\infty + \smallO(1).
\end{aligned}
\end{align*}

\end{theorem}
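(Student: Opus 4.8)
The plan is to run the same scheme as in the proof of Theorem~\ref{THMMAINPRECISE}, with the perturbative gluing of \cite{ACR3,ACR1} replaced by its bifurcate analogue — the charge-matching argument transfers essentially verbatim, see Remark~\ref{REMtransversalGluing} — and with every object promoted to its higher-order counterpart. First I would rescale: by the higher-order analogue of Lemma~\ref{LEMrescaletoSMALLdata}, the rescaled higher-order sphere data ${}^{(R)}\lrpar{x_{0,1},\DD^{L,m}_{0,1},\DD^{\Lb,m}_{0,1}}$ is $\smallO(R^{-3/2})$-close, in the higher-order norm relevant for the bifurcate gluing, to the Schwarzschild reference higher-order sphere data $\lrpar{\mathfrak{m}^{\mathbf{E}_\infty/R},\DD^{L,m,\mathbf{E}_\infty/R},\DD^{\Lb,m,\mathbf{E}_\infty/R}}$, which itself is $\OO(R^{-1})$-close to the Minkowski reference data; hence for $R$ large the smallness hypotheses of the bifurcate codimension-$10$ gluing of \cite{ACR3,ACR1} are met.

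For each Kerr asymptotic-invariants vector $\la$ in an $R$-dependent set — the ellipsoid $\EE_R(\mathbf{E}_\infty)$ of Section~\ref{SEC9990135} in general, or a fixed ball in the stronger-decay case — I would use the construction of Section~\ref{SECappKerrFamilyDetails} to produce Kerr reference higher-order sphere data on $S^{\KK(\la)}_{-R,2R}$, rescale it to $S^{\KK(\la)}_{-1,2}$ using scale-invariance of the Kerr reference data (Remark~\ref{RemarkScalingKerrData}), and apply the bifurcate gluing along the two null hypersurfaces bifurcating from the auxiliary sphere $S_{-1,1}$. This glues all derivatives up to order $m$ in both directions (no sphere being perturbed), modulo the $10$-dimensional space of charges, which obstruct the matching at $S_{-1,2}$ and are recorded by the charge difference $(\Delta Q)(\la)$ of \eqref{EQdefDIFFchargegs} between the charges computed from Kerr and from the gluing solution on that sphere.

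To close the gluing I would solve $(\Delta Q)(\la)=0$. Using the asymptotic expansions for the charges of the given family (Section~\ref{SEC9990133}) and of the Kerr reference spheres (Section~\ref{SECappKerrFamilyDetails}), construct a homotopy from $F_1(\la):=(\Delta Q)(\la)$ to
\[
F_0(\la):=(\mathbf{E}_\infty,\mathbf{P}_\infty,\mathbf{L}_\infty,\mathbf{G}_\infty)-\lrpar{\mathbf{E}_{\mathrm{ADM}}^{\Kerr},\mathbf{P}_{\mathrm{ADM}}^{\Kerr},\mathbf{L}_{\mathrm{ADM}}^{\Kerr},\mathbf{C}_{\mathrm{ADM}}^{\Kerr}-2R\cdot\mathbf{P}_{\mathrm{ADM}}^{\Kerr}}.
\]
By the Chru\'sciel--Delay parametrization \cite{ChruscielDelay}, $F_0$ has a unique, non-degenerate zero in the interior of $\EE_R(\mathbf{E}_\infty)$ for $R$ large. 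Provided the homotopy does not vanish on $\partial\EE_R(\mathbf{E}_\infty)$, invariance of the Brouwer degree yields a zero $\la_R$ of $F_1$, which closes the gluing. Undoing the rescaling then gives the higher-order ingoing data on $\HHb_{[-R,0],R}$, outgoing data on $\HH_{-R,[R,2R]}$, and the Kerr sphere $S^{\Kerr}_{-R,2R}$ at scale $R$; the stated ADM-invariant bounds follow from the location of $\la_R$ inside $\EE_R(\mathbf{E}_\infty)$ (and from the stronger charge expansions under the stronger-decay hypothesis), and future-completeness of the outgoing congruence — and, by time reflection, past-completeness of the ingoing congruence — of $S^{\Kerr}_{-R,2R}$ is Appendix~\ref{SECcompletenessKERR}. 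The higher regularity in all directions of the glued solution is automatic, since the glued object is higher-order sphere data solving the higher-order null constraint equations.

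The hard part will be the uniform control of the homotopy on $\partial\EE_R(\mathbf{E}_\infty)$. Because Kerr spacelike data generically has $\mathbf{P}_{\mathrm{ADM}}^{\Kerr}\neq0$, the charge $\mathbf{G}^{\Kerr}(S^{\KK(\la)}_{2R})$ carries the term $-2R\cdot\mathbf{P}_{\mathrm{ADM}}^{\Kerr}$, which grows linearly in $R$; one is thus forced to work over a family of Kerr parameters with $\vert\mathbf{C}_{\mathrm{ADM}}^{\Kerr}\vert=\OO(R^{1/2})$ and $\vert\mathbf{P}_{\mathrm{ADM}}^{\Kerr}\vert=\OO(R^{-1/2})$ and to exploit a top-order cancellation in $\mathbf{G}$. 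Establishing that the convergence of the local ADM integrals over the spheres $S^{\KK(\la)}_{2R}$ to their asymptotic values, and the resulting charge expansions, hold \emph{uniformly} in $\la\in\EE_R(\mathbf{E}_\infty)$ as $R\to\infty$ is the technical core: it requires precise estimates on the Kerr parameters $\KK(\la)=(M,a,\La,\La',\mathbf{a})$ in terms of the prescribed asymptotic invariants — using the full Poincar\'e symmetry of Minkowski — and then propagating these bounds to the Kerr sphere data, its higher-order derivatives, and its charges.
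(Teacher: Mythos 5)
Your proposal follows the paper's intended route: the paper in fact gives no separate proof of this theorem, but instead relies on Remark~\ref{REMtransversalGluing}, which observes that since the bifurcate gluing of \cite{ACR3,ACR1} provides charge estimates analogous to \eqref{EQchargeEstimate999901345}, the entire degree-theoretic matching argument of Section~\ref{SECproofMainTheorem1} transfers verbatim. You correctly identify all the ingredients: rescaling, applying bifurcate gluing as a black box, the charge difference, the homotopy to the asymptotic-invariants comparison map, the degree argument on $\EE_R(\mathbf{E}_\infty)$, and the Appendix~\ref{SECcompletenessKERR} argument for null completeness.

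One small numerical slip worth flagging: in your display for $F_0(\la)$ you write $\mathbf{C}_{\mathrm{ADM}}^{\Kerr}-2R\cdot\mathbf{P}_{\mathrm{ADM}}^{\Kerr}$, copying the simplified form in Section~\ref{SEC9990135} of the introduction where the target sphere is labelled $S_{2R}$. In the actual construction the Kerr sphere is $S_{-R,2R}^{\KK(\la)}$, whose area radius is $r_{\mathbf{E}_\infty}(-R,2R)\approx 3R$ (see \eqref{EQdefRbyUV}), and Proposition~\ref{CORfullchargeexpressionSRkerrreference} accordingly gives $\mathbf{G}^{m}\lrpar{x_{-R,2R}^{\KK(\la)}}=\mathbf{C}(\la)^{i_m}-3R\cdot\mathbf{P}(\la)^{i_m}+\OO(R^{-1/4})$. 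This is why the homotopy in the paper's proof, \eqref{EQdefinitionHOMOTOPY}, and the theorem's stated asymptotic-invariant bound both carry $3R\cdot\mathbf{P}(\la)$, not $2R$. The coefficient does not affect the degree argument (the $\mathbf{P}$-term still cancels to top order on $\EE_R(\mathbf{E}_\infty)$), but it does appear in the final bound on $\mathbf{C}_{\mathrm{ADM}}$, so you should track it. A second minor point: the paper's proof keeps $\la\in\EE_R(\mathbf{E}_\infty)$ throughout even under the stronger decay hypothesis and only afterward deduces the improved location of $\la'$, whereas you suggest a fixed ball; either works, but if you switch to a fixed ball you should verify that the boundary estimate \eqref{EQminimum444} still goes through on that smaller set.
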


\ni \emph{Remarks on Theorem \ref{THMdoubleCHARgluingTOkerr}.}
\begin{enumerate}
\item The strong asymptotic flatness of the family $x_{0,R}$ is consistent with decay towards spacelike infinity. In particular, the spheres $S_{0,R}$ should be interpreted as spheres on a spacelike hypersurface with radius of size $R$.
\item Theorem \ref{THMdoubleCHARgluingTOkerr} is at the level of $C^{m+2}$-gluing for the metric components; see Section 2.10 in \cite{ACR1} for the precise definition of higher-order sphere data.
\end{enumerate}

\ni As corollary of Theorem \ref{THMdoubleCHARgluingTOkerr}, we give in Section \ref{SECspacelikecorollary} an alternative proof of the Corvino--Schoen gluing to Kerr for strongly asymptotically flat spacelike initial data. We refer to Section \ref{SECCSgluingbasics} below for the definition of spacelike initial data, strong asymptotic flatness and asymptotic invariants $\mathbf{E}_{\mathrm{ADM}}$, $\mathbf{P}_{\mathrm{ADM}}$, $\mathbf{L}_{\mathrm{ADM}}$ and $\mathbf{C}_{\mathrm{ADM}}$. 

\begin{corollary}[Spacelike gluing to Kerr, version 2] \label{THMspacelikeGLUINGtoKERRv2} Consider strongly asymptotically flat spacelike initial data $(\Si,g,k)$ with asymptotic invariants
\begin{align*} 
\begin{aligned} 
\Big(\mathbf{E}_{\mathrm{ADM}}, \mathbf{P}_{\mathrm{ADM}}=0, \mathbf{L}_{\mathrm{ADM}}, \mathbf{C}_{\mathrm{ADM}}\Big) \in I(0) \times \RRR^3 \times \RRR^3.
\end{aligned} 
\end{align*}
For real numbers $R\geq1$ sufficiently large, there exists a Kerr spacetime $(\MM^{\Kerr},\g^{\Kerr})$ and a spacelike hypersurface $\Si^{\mathrm{Kerr}}$ with asymptotic invariants
\begin{align*} 
\begin{aligned} 
\lrpar{\mathbf{E}_{\mathrm{ADM}}^{\Kerr}, \mathbf{P}_{\mathrm{ADM}}^{\Kerr}, \mathbf{L}_{\mathrm{ADM}}^{\Kerr}, \mathbf{C}_{\mathrm{ADM}}^{\Kerr}},
\end{aligned}
\end{align*}
such that the spacelike initial data $(g,k)$ of $\Si$ can be glued across a spacelike annulus $A_{[R,3R]}$ to the induced spacelike initial data $(g^{\Kerr},k^{\Kerr})$ of $\Si^{\mathrm{Kerr}}$. The Kerr asymptotic invariants can bounded by
\begin{align*} 
\begin{aligned} 
\mathbf{E}_{\mathrm{ADM}}^{\Kerr}=& \mathbf{E}_{\mathrm{ADM}}+\OO(R^{-1}), & \mathbf{P}_{\mathrm{ADM}}^{\Kerr}=&  \OO(R^{-3/2}), \\
\mathbf{L}_{\mathrm{ADM}}^{\Kerr} =& \mathbf{L}_{\mathrm{ADM}}+\smallO(1), & \mathbf{C}_{\mathrm{ADM}}^{\Kerr}=& \mathbf{C}_{\mathrm{ADM}}+\smallO(1).
\end{aligned} 
\end{align*}

\end{corollary}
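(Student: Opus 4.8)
The strategy is to reduce the spacelike gluing to the bifurcate characteristic gluing to Kerr of Theorem \ref{THMdoubleCHARgluingTOkerr}, and then to recover a spacelike slice by solving the characteristic initial value problem in the glued region.

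\emph{Reduction to characteristic data.} Starting from the strongly asymptotically flat spacelike initial data $(\Si,g,k)$, I would first apply the construction of Section \ref{SECstatementConstruction} (Theorem \ref{PROPconstructionStatement}), in its higher-order version, to produce on a family of coordinate spheres $S_{0,R}=S_R\subset\Si$ of area radius $\sim R$ a strongly asymptotically flat family of smooth higher-order sphere data $(x_{0,R},\DD^{L,m}_{0,R},\DD^{\Lb,m}_{0,R})$, for every $m\geq1$. By Theorem \ref{THMconstruction111999999111} the associated asymptotic charges coincide with the ADM invariants of $\Si$, i.e. $\mathbf{E}_\infty=\mathbf{E}_{\mathrm{ADM}}$, $\mathbf{P}_\infty=\mathbf{P}_{\mathrm{ADM}}=0$, $\mathbf{L}_\infty=\mathbf{L}_{\mathrm{ADM}}$, $\mathbf{G}_\infty=\mathbf{C}_{\mathrm{ADM}}$; and since $\Si$ is strongly asymptotically flat, the stronger decay $\mathbf{E}(x_{0,R})=\mathbf{E}_\infty+\OO(R^{-1})$, $\mathbf{P}(x_{0,R})=\OO(R^{-3/2})$ holds, cf. \eqref{EQdecaystrongINTRO9993}. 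The hypothesis $\mathbf{E}_{\mathrm{ADM}}\in I(0)$ forces $\mathbf{E}_{\mathrm{ADM}}\neq0$, hence (as $\mathbf{P}_{\mathrm{ADM}}=0$) the condition $\mathbf{E}_{\mathrm{ADM}}^2>|\mathbf{P}_{\mathrm{ADM}}|^2$ under which the Kerr asymptotic-invariants map of Chru\'sciel--Delay \cite{ChruscielDelay} is defined. For $R$ large I would then invoke Theorem \ref{THMdoubleCHARgluingTOkerr} (for $m$ large, and, for the smooth gluing asserted, at all orders $m$ simultaneously): this produces smooth higher-order ingoing null data on $\HHb_{[-R,0],R}$ and smooth outgoing higher-order null data on $\HH_{-R,[R,2R]}$, bifurcating from an auxiliary sphere $S_{-R,R}$, solving the (higher-order) null structure equations, matching on $S_{-R,R}$, restricting on $S_{0,R}$ exactly to the data built above, and restricting on $S_{-R,2R}$ to Kerr reference higher-order sphere data on a sphere $S^{\Kerr}_{-R,2R}$ lying in a Kerr hypersurface $\Si^{\Kerr}$. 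Since $\mathbf{P}(x_{0,R})=\OO(R^{-3/2})$, we land in the stronger-decay conclusion of that theorem, so $\Si^{\Kerr}$ has $\mathbf{E}^{\Kerr}_{\mathrm{ADM}}=\mathbf{E}_{\mathrm{ADM}}+\OO(R^{-1})$, $\mathbf{P}^{\Kerr}_{\mathrm{ADM}}=\OO(R^{-3/2})$, $\mathbf{L}^{\Kerr}_{\mathrm{ADM}}=\mathbf{L}_{\mathrm{ADM}}+\smallO(1)$, $\mathbf{C}^{\Kerr}_{\mathrm{ADM}}=\mathbf{C}_{\mathrm{ADM}}+\smallO(1)$ --- in particular with no $R$-growth of the center of mass.

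\emph{Characteristic IVP and choice of slice.} The glued bifurcate null data on $\HHb_{[-R,0],R}\cup\HH_{-R,[R,2R]}$ is, by the estimates accompanying Theorem \ref{THMdoubleCHARgluingTOkerr}, a controlled perturbation of the Schwarzschild reference data on the corresponding bifurcate cones; rescaling to unit size (Lemma \ref{LEMspheredataInvarianceSCale}), where the data becomes a small perturbation of the Minkowski reference data on unit bifurcate cones, and appealing to the local existence theory for the characteristic initial value problem of Luk \cite{LukChar} and Luk--Rodnianski \cite{LukRod1}, one obtains (after rescaling back) a vacuum spacetime $\MM_{\mathrm{glue}}$ that develops the bifurcate configuration throughout the image in reference coordinates of the rectangle $\{-R\le u\le0,\ R\le v\le2R\}$; on the Kerr side, the completeness statement of Appendix \ref{SECcompletenessKERR} ensures the outgoing and ingoing congruences from $S^{\Kerr}_{-R,2R}$ extend as far as needed. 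The anti-diagonal of this rectangle --- in reference coordinates the slice $\{u+v=R\}$, i.e. constant reference time $t=R$ --- connects the sphere $S_{0,R}$ (at $u=0$, area radius $R$) to the sphere $S^{\Kerr}_{-R,2R}$ (at $u=-R$, area radius $3R$), lies in the development, and is spacelike; call the corresponding annular region $A_{[R,3R]}$. Because the higher-order sphere data of all orders at $S_{0,R}$ agrees with the metric jet of the maximal globally hyperbolic development $(\MM(\Si),\g)$ of $(\Si,g,k)$ along an ingoing null hypersurface through $S_R$, and at $S^{\Kerr}_{-R,2R}$ with that of Kerr, the three vacuum spacetimes $\MM(\Si)$ (over a neighbourhood of the ball $B_R\subset\Si$), $\MM_{\mathrm{glue}}$, and the Kerr spacetime (over a neighbourhood of the exterior of $S^{\Kerr}_{-R,2R}$ in $\Si^{\Kerr}$) can be patched smoothly into a single vacuum spacetime $\MM$ along the null hypersurfaces $\HHb_{[-R,0],R}$ and $\HH_{-R,[R,2R]}$. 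Inside $\MM$ I would then choose a smooth spacelike hypersurface $\widetilde\Si$ coinciding with $\Si$ over $B_R$, with $\Si^{\Kerr}$ over the exterior of $S^{\Kerr}_{-R,2R}$, and equal to $A_{[R,3R]}$ in between; its induced data $(\widetilde g,\widetilde k)$ realize the asserted spacelike gluing, with the Kerr invariants bounded as above.

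\emph{Main obstacle.} The quantitative inputs --- that the characteristic development is large enough for the full spacelike annulus to fit inside $\MM_{\mathrm{glue}}$, and that $\Si$ and $\Si^{\Kerr}$ can be arranged to coincide with the reference slice $\{t=R\}$ near $S_R$ and near $S^{\Kerr}_{-R,2R}$ respectively --- are handled by the smallness of the data together with the scaling of Section \ref{SECdefinitionScaling} and the completeness results of Appendix \ref{SECcompletenessKERR}, and are essentially routine given those tools. The step I expect to require the most care is the \emph{smooth} patching of the three spacetime pieces: one must show that gluing \emph{all} orders $m$ of higher-order sphere data forces the glued characteristic data to have the same infinite-order jet at $S_{0,R}$ (respectively $S^{\Kerr}_{-R,2R}$) as the development of $(\Si,g,k)$ (respectively Kerr), so that the vacuum metrics agree to infinite order across the null seams and can be joined in $C^\infty$ rather than merely in $C^{m+2}$. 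This amounts to tracking precisely which components of the metric Taylor expansion --- both tangential and transversal to the gluing null hypersurfaces --- are determined by higher-order sphere data, and checking their compatibility at the corner spheres via the higher-order null structure and Bianchi equations; the remaining ingredient, selecting a Cauchy surface with prescribed behaviour in a globally hyperbolic region, is soft.
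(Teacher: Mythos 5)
Your proposal follows essentially the same route as the paper's proof: construct higher-order sphere data on $S_{0,R}\subset\Si$ via Theorem \ref{PROPconstructionStatement}, feed it into the bifurcate gluing to Kerr (Theorem \ref{THMdoubleCHARgluingTOkerr}) using the stronger-decay branch afforded by strong asymptotic flatness, then solve the characteristic initial value problem and pick a spacelike slice interpolating between $\Si$, the glued region, and $\Si^{\Kerr}$. This is precisely the three-step structure of Section \ref{SECconclusionSpacelikeSequence}, and your accounting of the asymptotic invariants of $\Si^{\Kerr}$ matches the paper's.

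Two small points of contrast. First, the paper carries out all the analysis in the \emph{rescaled} picture (near Schwarzschild of small mass $\mathbf{E}_{\mathrm{ADM}}/R$) and only scales back by $R^{-1}$ at the very end; your description passes back and forth between scaled and unscaled coordinates somewhat informally, but the content is the same. Second, the paper is more explicit about how the corner regions of the development are filled in: after applying local existence for the spacelike Cauchy problem on $\Si$ and for the bifurcate characteristic data, it runs four additional characteristic Cauchy problems on the resulting boundary pairs (e.g. $\partial^+\DD(\Si)$ with $\partial^+\MM'$, $\HH$ with $\partial^-\MM^{\Kerr}$) to produce a single spacetime $(\MM'',\g'')$ containing a spacelike slab wide enough to house the interpolating slice; your "patch smoothly along the null seams" compresses this. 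Finally, regarding your flagged concern about obtaining genuine $C^\infty$ matching: the paper sidesteps this by working throughout with a \emph{fixed} order $m\geq1$ of higher-order sphere data (see the notational remark at the start of Section \ref{SECconclusionSpacelikeSequence}), so the argument as written produces $C^{m+2}$ gluing for any prescribed $m$; it does not carry out the "all orders simultaneously" bookkeeping you anticipate would be needed for a single $C^\infty$ construction. Your instinct that this is the point requiring the most care if one insists on literal smooth patching is sound, but it is not an ingredient of the paper's proof as given.
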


\ni \emph{Remarks on Corollary \ref{THMspacelikeGLUINGtoKERRv2}.}
\begin{enumerate}
\item More precisely, in Corollary \ref{THMspacelikeGLUINGtoKERRv2} we glue to a Kerr reference spacelike initial data $(g^{\KK(\la)}, k^{\KK(\la)})$ for some \emph{asymptotic invariants vector} $\la \in I(0) \times \RRR^3 \times \RRR^3$, see Section \ref{SECappKerrFamilyDetails}.
\end{enumerate}

\section{Proof of perturbative characteristic gluing to Kerr} \label{SECproofMainTheorem1} \ni In this section we prove Theorem \ref{THMMAINPRECISE}. Let $\de>0$ be a real number and let $(\tilde{x}_{-R+R\cdot[-\de,\de],R})$ be a strongly asymptotically flat family of ingoing null data with asymptotic charges 
\begin{align*} 
\begin{aligned} 
(\mathbf{E}_\infty, \mathbf{P}_\infty=0, \mathbf{L}_\infty, \mathbf{G}_\infty) 
\end{aligned} 
\end{align*}
with $\mathbf{E}_\infty>0$. We proceed as follows.
\begin{enumerate} 

\item In Section \ref{SECsmalldataRescaling} we rescale the strongly asymptotically flat ingoing null data $$(\tilde{x}_{-R+R\cdot[-\de,\de],R})$$ to  ingoing null data 
\begin{align} 
\begin{aligned} 
({}^{(R)}\tilde{x}_{-1+[-\de,\de],1}).
\end{aligned} \label{EQextendedspheredataProofMain777}
\end{align}
\noindent For $R\geq1$ sufficiently large, the rescaled ingoing null data \eqref{EQextendedspheredataProofMain777} is close to Schwarzschild of mass $\mathbf{E}_\infty/R$, see Section \ref{SECsmalldataRescaling} below.

\item In Section \ref{SECApplicationCHARGLU333} we apply the perturbative characteristic gluing of \cite{ACR3,ACR1} to glue -- up to the $10$-dimensional space of charges $(\mathbf{E}, \mathbf{P}, \mathbf{L}, \mathbf{G})$ -- from the rescaled ingoing null data \eqref{EQextendedspheredataProofMain777} to sphere data corresponding to a sphere in a Kerr spacetime to be determined.

\item In Section \ref{SECconclusionofMainTheorem} we use a classical topological degree argument to prove that there exists a sphere in a Kerr spacetime such that, following Step (2) above, also the charges $(\mathbf{E}, \mathbf{P}, \mathbf{L}, \mathbf{G})$ are glued. 

\item In Section \ref{SECconclusionofMainTheoremREAL} we conclude the proof of Theorem \ref{THMMAINPRECISE} by writing out the explicit estimates and scaling the gluing construction from ${\HH}_{-1,[1,2]}$ to ${\HH}_{-R,[R,2R]}$.

\item The proof that the sphere $S_{-R,2R}^{\Kerr}$ in Kerr admits a future-complete outgoing null congruence and a past-complete ingoing null congruence is postponed to Appendix \ref{SECcompletenessKERR}.

\end{enumerate}

\subsection{Rescaling to small sphere data} \label{SECsmalldataRescaling} 
\ni  Using the scaling of the Einstein equations, see Definition \ref{DEFscalingNULLSTRUC}, for $R\geq1$ large we rescale $(\tilde{x}_{-R+R\cdot[-\de,\de],R})$ to ingoing null data 
 $$({}^{(R)}\tilde{x}_{-1+[-\de,\de],1}).$$ 
 By Lemma \ref{LEMrescaletoSMALLdata} and \eqref{EQdefSCALINGSSprop8999}, it holds that
\begin{align} 
\begin{aligned} 
\Vert {}^{(R)}\tilde{x}_{-1+[-\de,\de],1} -{\mathfrak{m}}^{\mathbf{E}_\infty/R} \Vert_{\XX^+\lrpar{\HHb_{-1+[-\de,\de],1}}} =&\smallO(R^{-3/2}).
\end{aligned} \label{EQfullenergyEstimates32777789}
\end{align}

\subsection{Application of perturbative characteristic gluing of \cite{ACR3,ACR1}} 
\label{SECApplicationCHARGLU333}

\ni In this section we apply the perturbative characteristic gluing of \cite{ACR3,ACR1} to glue from the rescaled ingoing null data $({}^{(R)}\tilde{x}_{-1+[-\de,\de],1})$ to sphere data of a Kerr spacetime.

For the construction and analysis of the Kerr sphere data we refer to Section \ref{SECappKerrFamilyDetails}. The take-away is the following. We parametrize Kerr sphere data by an \emph{asymptotic invariants vector} 
$$\la=(\mathbf{E}(\la),\mathbf{P}(\la),\mathbf{L}(\la),\mathbf{C}(\la)) \in I(0) \times \RRR^3 \times \RRR^3,$$ 
where $I(0):=\{(\mathbf{E},\mathbf{P})\in \RRR \times \RRR^3: \mathbf{E} > \vert\mathbf{P}\vert\}$ and denote the corresponding Kerr sphere data by $x_{-R,2R}^{\KK(\la)}$ (for real numbers $R\geq1$ large), see Section \ref{SECKerrSphereDataDefinition}. The sphere data $x_{-R,2R}^{\KK(\la)}$ corresponds to a sphere $S_{-R,2R}$ lying in a Kerr spacelike initial data set with ADM asymptotic invariants $\la$ (see also \eqref{EQasympDECAY55540002} below). We denote by ${}^{(R)} x_{-1,2}^{\KK(\la)}$ the rescaled Kerr sphere data.

In particular, in the following we consider asymptotic invariants vector 
\begin{align} 
\begin{aligned} 
\la \in \EE_R(\mathbf{E}_\infty) \subset I(0) \times \RRR^3 \times \RRR^3,
\end{aligned} \label{EQla7707}
\end{align}
where for real numbers $R\geq1$ large and $\mathbf{E}_\infty>0$, the ellipsoid $\EE_R(\mathbf{E}_\infty)$ is defined by (see also \eqref{EQdefeereinftyset999002})
\begin{align} 
\begin{aligned} 
&\lrpar{R^{1/2} \vert \mathbf{E}(\la)-\mathbf{E}_\infty \vert}^2 + \lrpar{R^{1/2}\vert \mathbf{P}(\la) \vert}^2 \\
&+ \lrpar{R^{-1/4} \vert \mathbf{L}(\la)\vert}^2 + \lrpar{R^{-1/2}\vert \mathbf{C}(\la)\vert}^2  \leq \lrpar{\mathbf{E}_\infty}^2.
\end{aligned} \label{EQDEFeeReinfty8889}
\end{align}
Importantly, in Proposition \ref{PROPspheredataESTIM91} it is shown that for $R\geq1$ sufficiently large and for asymptotic invariants vectors $\la \in\EE_R(\mathbf{E}_\infty)$, the following estimate holds (see also Lemma \ref{LEMspheredataInvarianceSCale})
\begin{align} 
\begin{aligned} 
\Vert {}^{(R)} x_{-1,2}^{\KK(\la)} - \mathfrak{m}^{\mathbf{E}_\infty/R} \Vert_{\XX(S_{-1,2})} \les&
R^{-1} \cdot \vert \mathbf{E}(\la)-\mathbf{E}_\infty \vert + R^{-1} \cdot \vert \mathbf{P}(\la) \vert \\
&+ R^{-2} \cdot \vert \mathbf{L}(\la) \vert + R^{-2} \cdot \vert \mathbf{C}(\la) \vert \\
&+ \lrpar{\frac{ R^{-2} \cdot \vert \mathbf{L}(\la) \vert + \frac{\vert \mathbf{P}(\la) \vert}{\mathbf{E}_\infty} \cdot R^{-2} \cdot \vert \mathbf{C}(\la) \vert}{\mathbf{E}_\infty/R}}^2 \\
=&\OO\lrpar{R^{-3/2}}.
\end{aligned} \label{EQsmallnessEstimateKERR8999}
\end{align}

\ni Hence, by \eqref{EQfullenergyEstimates32777789} and \eqref{EQsmallnessEstimateKERR8999}, for $R\geq1$ sufficiently large we can apply the perturbative characteristic gluing of \cite{ACR3,ACR1} with 
\begin{align} 
\begin{aligned} 
M=\mathbf{E}_\infty/R \text{ and } \varep=\OO(R^{-3/2}),
\end{aligned} \label{EQappl7007}
\end{align}
to glue from the rescaled ingoing null data ${}^{(R)}\tilde{x}_{-1+[-\de,\de],1}$ to ${}^{(R)} x_{-1,2}^{\KK(\la)}$ for an asymptotic invariants vector $\la \in\EE_R(\mathbf{E}_\infty)$ to be determined. That is, there are
\begin{itemize}
\item sphere data ${}^{(R)}{x}_{-1,1}$ on a sphere $S_{-1,1}$ stemming from a perturbation of $\tilde{S}_{-1,1}$ in $\tilde{\HHb}_{-1+[-\de,\de],1}$,  
\item a solution $x \in \mathcal{X}(\HH_{-1,[1,2]})$ to the null structure equations on $\HH_{-1,[1,2]}$,
\end{itemize}
such that $x\vert_{S_{-1,1}} = {}^{(R)}{x}_{-1,1}$ and $x\vert_{S_{-1,2}}$ agrees with ${}^{(R)} x_{-1,2}^{\KK(\la)}$ up to the $10$-dimensional space of charges $(\mathbf{E}, \mathbf{P}, \mathbf{L}, \mathbf{G})$, that is, if we have that
\begin{align} 
\begin{aligned} 
\lrpar{\mathbf{E},\mathbf{P},\mathbf{L},\mathbf{G}}(x\vert_{S_{-1,2}}) =\lrpar{\mathbf{E},\mathbf{P},\mathbf{L},\mathbf{G}}\lrpar{{}^{(R)} x_{-1,2}^{\KK(\la)}},
\end{aligned} \label{EQchargegluingremainingtoshow}
\end{align}
then the constructed solution $x$ satisfies
\begin{align} 
\begin{aligned} 
x\vert_{S_{-1,2}} = {}^{(R)} x_{-1,2}^{\KK(\la)}.
\end{aligned}\label{EQfullgluing353453}
\end{align}
Importantly, by \eqref{EQappl7007} and the estimates proved in \cite{ACR1}, the following general charge estimate holds,
\begin{align} 
\begin{aligned} 
\lrpar{\mathbf{E},\mathbf{P},\mathbf{L},\mathbf{G}} \lrpar{x \vert_{S_{-1,2}}} - \lrpar{\mathbf{E},\mathbf{P},\mathbf{L},\mathbf{G}} \lrpar{{}^{(R)}\tilde{x}_{-1,1}}=& \OO\lrpar{M\cdot\varep} + \OO(\varep^2) \\
=& \OO\lrpar{\frac{\mathbf{E}_\infty}{R}R^{-3/2}} + \OO(R^{-3}) \\
=& \OO(R^{-5/2}).
\end{aligned} \label{EQchargeEstimate999901345}
\end{align}

\subsection{Choice of Kerr spacetime} \label{SECconclusionofMainTheorem} 
\ni In this section we use a classical topological degree argument to determine an asymptotic invariants vector $\la' \in \EE_R(\mathbf{E}_\infty)$ such that \eqref{EQchargegluingremainingtoshow} holds. The idea to determine the $\la'$ by a degree argument is similar to \cite{CorvinoSchoen}.

First, for asymptotic invariants vectors $\la \in \EE_R(\mathbf{E}_\infty)$, we define the error map $f_R\lrpar{\la}$ by
\begin{align} 
\begin{aligned} 
f_R\lrpar{\la} := \lrpar{R\, \mathbf{E},R \,\mathbf{P}, R^2\,\mathbf{L}, R^2\,\mathbf{G}}\lrpar{x \vert_{S_{-1,2}}} -\lrpar{R\,\mathbf{E},R\, \mathbf{P}, R^2\,\mathbf{L}, R^2 \,\mathbf{G}}\lrpar{{}^{(R)}x^{\KK(\la)}_{-1,2}}.
\end{aligned} \label{EQdeffRmap}
\end{align}
In the following we show that for $R\geq1$ sufficiently large, there is a $\la'\in \EE_R(\mathbf{E}_\infty)$ such that
\begin{align} 
\begin{aligned} 
f_R(\la')=0.
\end{aligned} \label{EQflaiszero}
\end{align}
By definition of $f_R$ in \eqref{EQdeffRmap}, the condition \eqref{EQflaiszero} is equivalent to charge matching at $S_{-1,2}$, see \eqref{EQchargegluingremainingtoshow}, which subsequently implies the matching \eqref{EQfullgluing353453}.

To prove the existence of $\la'$ satisfying \eqref{EQflaiszero}, we estimate $f_R(\la)$ and apply a topological degree argument. First, we can estimate $f_R(\la)$ by using the following estimates we have for the charges $(\mathbf{E},\mathbf{P},\mathbf{L},\mathbf{G})$.

\begin{itemize}

\item From \eqref{EQchargeEstimate999901345} we have
\begin{align} 
\begin{aligned} 
\lrpar{\mathbf{E},\mathbf{P},\mathbf{L},\mathbf{G}} \lrpar{x \vert_{S_{-1,2}}} - \lrpar{\mathbf{E},\mathbf{P},\mathbf{L},\mathbf{G}} \lrpar{{}^{(R)}\tilde{x}_{-1,1}}= \OO(R^{-5/2}).
\end{aligned} \label{EQestimJJJ1} 
\end{align}

\item By Lemma \ref{LEMBoundednessEinftyPinfty1}, the strongly asymptotically flat family of sphere data $(\tilde{x}_{-R,R})$ satisfies (with $\mathbf{P}_\infty=0$)
\begin{align} 
\begin{aligned} 
\lrpar{\mathbf{E},\mathbf{P}, \mathbf{L}, \mathbf{G}}\lrpar{\tilde{x}_{-R,R}} =& \lrpar{\mathbf{E}_\infty,\mathbf{P}_\infty, \mathbf{L}_\infty, \mathbf{G}_\infty}\\
&+\lrpar{\smallO(R^{-1/2}),\smallO(R^{-1/2}),\smallO(1),\smallO(1)}.
\end{aligned} \label{EQasympDECAY55540001}
\end{align}

\item By Proposition \ref{CORfullchargeexpressionSRkerrreference}, for $R\geq1$ large, the Kerr reference sphere data $x^{\KK(\la)}_{-R,2R}$ satisfies for $\la \in \EE_R(\mathbf{E}_\infty)$,
\begin{align} 
\begin{aligned} 
\mathbf{E}\lrpar{x^{\KK(\la)}_{-R,2R}} =& \mathbf{E}(\la)+\OO(R^{-1}), & \mathbf{P}\lrpar{x^{\KK(\la)}_{-R,2R}} =& \mathbf{P}(\la)+\OO(R^{-3/2}), \\
\mathbf{L}\lrpar{x^{\KK(\la)}_{-R,2R}} =& \mathbf{L}(\la)+\OO(R^{-1/2}), & \mathbf{G}\lrpar{x^{\KK(\la)}_{-R,2R}} =& \mathbf{C}(\la)-3R\cdot \mathbf{P}(\la) +\OO(R^{-1/4}).
\end{aligned} \label{EQasympDECAY55540002}
\end{align}
\end{itemize}

\ni Applying \eqref{EQestimJJJ1}, \eqref{EQasympDECAY55540001} and \eqref{EQasympDECAY55540002} to \eqref{EQdeffRmap}, we can estimate $f_R\lrpar{\la}$ as follows,
\begin{align} 
\begin{aligned} 
f_R\lrpar{\la} :=& \lrpar{R\, \mathbf{E},R \,\mathbf{P}, R^2\,\mathbf{L}, R^2\,\mathbf{G}}\lrpar{x \vert_{S_{-1,2}}} -\lrpar{R\,\mathbf{E},R\, \mathbf{P}, R^2\,\mathbf{L}, R^2 \,\mathbf{G}}\lrpar{{}^{(R)}x^{\KK(\la)}_{-1,2}} \\
=& \lrpar{R\,\mathbf{E},R\,\mathbf{P}, R^2\,\mathbf{L}, R^2\,\mathbf{G}}\lrpar{{}^{(R)} \tilde{x}_{-1,1}} - \lrpar{R\,\mathbf{E},R\,\mathbf{P}, R^2\,\mathbf{L}, R^2\,\mathbf{G}}\lrpar{{}^{(R)}x^{\KK(\la)}_{-1,2}} \\
& +\lrpar{\OO(R^{-3/2}), \OO(R^{-3/2}), \OO(R^{-1/2}), \OO(R^{-1/2})} \\
=& \lrpar{\mathbf{E},\mathbf{P}, \mathbf{L}, \mathbf{G}}\lrpar{\tilde{x}_{-R,R}} - \lrpar{\mathbf{E},\mathbf{P}, \mathbf{L}, \mathbf{G}}\lrpar{x^{\KK(\la)}_{-R,2R}} \\
& +\lrpar{\OO(R^{-3/2}), \OO(R^{-3/2}), \OO(R^{-1/2}), \OO(R^{-1/2})} \\
=& \lrpar{\mathbf{E}_\infty,\mathbf{P}_\infty, \mathbf{L}_\infty, \mathbf{G}_\infty} - \lrpar{\mathbf{E}(\la),\mathbf{P}(\la), \mathbf{L}(\la), \mathbf{C}(\la)-3R\cdot\mathbf{P}(\la)} \\
&+\lrpar{\smallO(R^{-1/2}), \smallO(R^{-1/2}), \smallO(1), \smallO(1)},
\end{aligned} \label{EQestimatefmap1}
\end{align}
where we underline that the error terms also depend on $\la \in \EE_R(\mathbf{E}_\infty)$.

Second, we have the following classical topological degree argument; see, for example, Chapter 1 of \cite{Nirenberg}.
\begin{lemma}[Topological degree argument] \label{DegreeLemma} Let $B \subset \RRR^{10}$ be the open unit ball. Let $f_0$ and $f_1$ be two continuous maps on $\overline{B}$ into $\RRR^{10}$ and assume that $f_0$ is a homeomorphism on $\overline{B}$ with $f_0(\la_0)=0$ for a $\la_0 \in B$. For $0\leq t \leq 1$, let $f(\la,t)$ be a homotopy on $\overline{B}$ such that 
\begin{align*} 
\begin{aligned} 
f(\la,0)= f_0(\la), \,\, f(\la,1)=f_1(\la).
\end{aligned} 
\end{align*}
If for all $0\leq t \leq 1$ it holds that
\begin{align} \begin{aligned} 
0 \notin f(\pr B,t),
\end{aligned}\label{EQhomotopyCondition} \end{align} 
then there exists $\la' \in B$ such that $f_1(\la')=0$.
\end{lemma}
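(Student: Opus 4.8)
The plan is to deduce Lemma \ref{DegreeLemma} from the standard properties of the Brouwer topological degree $\deg(g,\Om,y)$, defined for a continuous map $g:\ol{\Om}\to\RRR^{10}$ on a bounded open set $\Om\subset\RRR^{10}$ and a point $y\notin g(\pr\Om)$, as recalled in Chapter 1 of \cite{Nirenberg}; the three ingredients used are homotopy invariance, the value of the degree on homeomorphisms, and the solvability property. First I would note that \eqref{EQhomotopyCondition} states precisely that $0\notin f(\pr B,t)$ for every $t\in[0,1]$, so --- since $f$ is jointly continuous on $\ol{B}\times[0,1]$ --- the integer $\deg(f(\cdot,t),B,0)$ is defined for all $t$, and by homotopy invariance it is independent of $t$. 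In particular
\[
\deg(f_0,B,0)=\deg(f(\cdot,0),B,0)=\deg(f(\cdot,1),B,0)=\deg(f_1,B,0).
\]

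Next I would compute $\deg(f_0,B,0)$. Since $f_0$ is a homeomorphism of $\ol{B}$ onto $f_0(\ol{B})$, invariance of domain shows that $f_0(B)$ is open and that $f_0(\pr B)=f_0(\ol{B})\setminus f_0(B)$; as $f_0(\la_0)=0$ with $\la_0\in B$, this gives $0\in f_0(B)$ and $0\notin f_0(\pr B)$. By the excision property, $\deg(f_0,B,0)$ equals the local degree of $f_0$ at $\la_0$, computed on an arbitrarily small ball $B(\la_0,\eps)\subset B$; since $f_0$ restricts there to a homeomorphism onto a neighbourhood of $0$, this local degree equals $\pm1$ (an injective continuous map on an open subset of $\RRR^{10}$ has degree $\pm1$ at each value it attains there). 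Hence $\deg(f_0,B,0)=\pm1\neq0$.

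Combining the two steps gives $\deg(f_1,B,0)=\pm1\neq0$, and the solvability property of the degree --- if $\deg(g,\Om,y)\neq0$ then $g(x)=y$ for some $x\in\Om$ --- produces a point $\la'\in B$ with $f_1(\la')=0$, as claimed. The only step that requires any care is the evaluation of $\deg(f_0,B,0)$: it rests on invariance of domain together with the excision property of the degree, and in particular uses no differentiability of $f_0$. The remaining steps are direct invocations of the standard degree-theoretic toolbox, in the spirit of the degree argument of \cite{CorvinoSchoen} referred to in Section \ref{SECconclusionofMainTheorem}.
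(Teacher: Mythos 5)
The paper does not supply a proof of Lemma \ref{DegreeLemma} at all; it simply cites it as a classical fact from Chapter 1 of \cite{Nirenberg}. Your proof is correct and is precisely the standard degree-theoretic argument that reference would give: homotopy invariance of $\deg(f(\cdot,t),B,0)$ under the boundary condition \eqref{EQhomotopyCondition}, the computation $\deg(f_0,B,0)=\pm1$ for a homeomorphism via invariance of domain and excision, and the solvability property. So you have filled in the intended argument faithfully, and nothing is missing.
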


\begin{remark} \label{REMtransversalGluing} The proof of \eqref{EQflaiszero} below uses only the charge estimate \eqref{EQestimJJJ1}. Given that the bifurcate characteristic gluing provides analogous charge estimates, the proof applies also to the matching to Kerr in Theorem \ref{THMdoubleCHARgluingTOkerr}.
\end{remark}

\ni We are now in position to prove the existence of $\la'$ such that \eqref{EQflaiszero} is satisfied. Based on \eqref{EQestimatefmap1}, define for $0\leq t \leq 1$ the homotopy $f_R(\la, t)$ on $\EE_R(\mathbf{E}_\infty)$ by 
\begin{align} 
\begin{aligned} 
f_R(\la,t) :=& \lrpar{\mathbf{E}_\infty,\mathbf{P}_\infty, \mathbf{L}_\infty, \mathbf{G}_\infty} - \lrpar{\mathbf{E}(\la),\mathbf{P}(\la), \mathbf{L}(\la), \mathbf{C}(\la)-3R\cdot\mathbf{P}(\la)}\\
&+t \cdot \lrpar{\smallO(R^{-1/2}), \smallO(R^{-1/2}), \smallO(1), \smallO(1)},
\end{aligned} \label{EQdefinitionHOMOTOPY}
\end{align}
such that
\begin{align} 
\begin{aligned} 
f_R(\la, 0) =& \lrpar{\mathbf{E}_\infty,\mathbf{P}_\infty, \mathbf{L}_\infty, \mathbf{G}_\infty} - \lrpar{\mathbf{E}(\la),\mathbf{P}(\la), \mathbf{L}(\la), \mathbf{C}(\la)-3R\cdot\mathbf{P}(\la)}, \\ f_R(\la,1) =& f_R(\la).
\end{aligned} \label{EQpreciseDEFhomotopyvalues8889}
\end{align}
We make the following three observations.
\begin{enumerate}

\item From \eqref{EQpreciseDEFhomotopyvalues8889} it follows that $f_R(\la,0)$ is a homeomorphism on $\EE_R(\mathbf{E}_\infty)$.

\item For $R\geq1$ large, we have that
\begin{align*} 
\begin{aligned} 
\la_0 := (\mathbf{E}_\infty, 0, \mathbf{L}_\infty, \mathbf{G}_\infty) \in \EE_R(\mathbf{E}_\infty),
\end{aligned} 
\end{align*}
and satisfies, by the definition of $f_R(\la, 0)$ in \eqref{EQpreciseDEFhomotopyvalues8889},
\begin{align*} 
\begin{aligned} 
f_R(\la_0,0) =0.
\end{aligned} 
\end{align*}

\item For $R\geq1$ sufficiently large and all $0\leq t \leq 1$, it holds that
\begin{align} 
\begin{aligned} 
0 \notin f_R(\partial \mathcal{E}_R\lrpar{\mathbf{E}_\infty}, t).
\end{aligned} \label{EQminimum444}
\end{align}

\noindent Indeed, assume by contradiction that there are $\tilde{\la} \in \partial \mathcal{E}_R\lrpar{\mathbf{E}_\infty}$ and $0\leq \tilde t \leq 1$ such that 
\begin{align} 
\begin{aligned} 
f_R(\tilde\la,\tilde t)=0.
\end{aligned} 
\end{align}
Then by definition of $f_R(\la,t)$ in \eqref{EQdefinitionHOMOTOPY}, 
\begin{align} 
\begin{aligned} 
R^{1/2} \lrpar{ \mathbf{E}(\tilde\la)-\mathbf{E}_\infty } =&\tilde t \cdot \smallO(1), \\
R^{1/2} \cdot \mathbf{P}(\tilde\la) =& \tilde t \cdot \smallO(1), \\
R^{-1/4} \lrpar{ \mathbf{L}(\tilde\la)-\mathbf{L}_\infty } =& \tilde t \cdot \smallO(R^{-1/4}), \\
\frac{R^{-1/2}}{2} \lrpar{ \mathbf{C}(\tilde\la)-\mathbf{G}_\infty }=& \frac{R^{-1/2}}{2} \lrpar{-3R \cdot \mathbf{P}(\tilde\la) + \tilde t \cdot \smallO(1)} \\
=& \frac{R^{-1/2}}{2} \lrpar{-3R \cdot \tilde t\cdot \smallO(R^{-1/2}) + \tilde t \cdot \smallO(1)} \\
=& \tilde t \cdot \smallO(1).
\end{aligned} \label{EQlambdaContradiction123}
\end{align}
The estimates \eqref{EQlambdaContradiction123} imply in particular that for $R\geq1$ sufficiently large,
\begin{align*} 
\begin{aligned} 
&\lrpar{R^{1/2} \vert \mathbf{E}(\tilde\la)-\mathbf{E}_\infty \vert}^2 + \lrpar{R^{1/2}\vert \mathbf{P}(\tilde\la) \vert}^2 \\
&+ \lrpar{R^{-1/4} \vert \mathbf{L}(\tilde\la)\vert}^2 + \lrpar{R^{-1/2}\vert \mathbf{C}(\tilde\la)\vert}^2 \les \tilde{t}\cdot \smallO(1) <\lrpar{\mathbf{E}_\infty}^2,
\end{aligned} 
\end{align*}
which implies that $\tilde \la \notin \partial \mathcal{E}_R\lrpar{\mathbf{E}_\infty}$ (see the definition of $\EE_R(\mathbf{E}_\infty)$ in \eqref{EQDEFeeReinfty8889}). This is a contradiction and hence finishes the proof of \eqref{EQminimum444}.

\end{enumerate}

\ni By the above observations and the fact that the set $\mathcal{E}_R\lrpar{\mathbf{E}_\infty}\subset \RRR^{10}$ is topologically a ball, we can apply Lemma \ref{DegreeLemma} to the homotopy $f_R(\la,t)$ for $R\geq1$ sufficiently large, and conclude the existence of a vector $\la' \in \mathcal{E}_R\lrpar{\mathbf{E}_\infty}$ such that 
\begin{align} 
\begin{aligned} 
f_R(\la', 1)=0.
\end{aligned} \label{EQmatchingcharge88}
\end{align}
This finishes the proof of \eqref{EQflaiszero}. Moreover, we deduce from \eqref{EQdefinitionHOMOTOPY} that
\begin{align} 
\begin{aligned} 
 \mathbf{E}(\la') =&\mathbf{E}_\infty+ \smallO(R^{-1/2}), & \mathbf{P}(\la') =& \smallO(R^{-1/2}), \\
\mathbf{L}(\la') =&\mathbf{L}_\infty+ \smallO(1), &  \mathbf{C}(\la') =& \mathbf{G}_\infty + 3R\cdot \mathbf{P}(\la')+ \smallO(1).
\end{aligned} \label{EQmatchingKerrparameters}
\end{align}

\ni It remains to show that in case of the stronger decay assumption \eqref{EQstrongDecayPconditionmainthmprecise999094},
\begin{align} 
\begin{aligned} 
\mathbf{E}(x_{-R,R}) = \mathbf{E}_\infty + \OO(R^{-1}), \,\, \mathbf{P}(x_{-R,R}) = \OO(R^{-3/2}),
\end{aligned}\label{EQ99901}
\end{align}
we have the improved estimate \eqref{EQcoeffEstimLAMBDAstronger} for $\la'$,
\begin{align} 
\begin{aligned}
 \mathbf{E}(\la') =&\mathbf{E}_\infty+ \OO(R^{-1}), &  \mathbf{P}(\la') =& \OO(R^{-3/2}), \\
 \mathbf{L}(\la') =& \mathbf{L}_\infty + \smallO(1), &  \mathbf{C}(\la') =& \mathbf{G}_\infty + \smallO(1).
\end{aligned} \label{EQ99902}
\end{align}
Indeed, applying \eqref{EQ99901} in the above derivation of \eqref{EQestimatefmap1}, we get that the error map $f_R\lrpar{\la}$ satisfies the improved bound
\begin{align*} 
\begin{aligned} 
f_R\lrpar{\la} :=&\lrpar{\mathbf{E}_\infty,\mathbf{P}_\infty, \mathbf{L}_\infty, \mathbf{G}_\infty} - \lrpar{\mathbf{E}(\la),\mathbf{P}(\la), \mathbf{L}(\la), \mathbf{C}(\la)-3R\cdot\mathbf{P}(\la)} \\
&+\lrpar{\OO(R^{-1}), \OO(R^{-3/2}), \smallO(1), \smallO(1)}.
\end{aligned} 
\end{align*}
This shows that $\la'$ which satisfies by construction $f_R\lrpar{\la'}=0$, see \eqref{EQflaiszero}, satisfies the improved bound \eqref{EQ99902}.
\subsection{Conclusion of proof} \label{SECconclusionofMainTheoremREAL} 

\ni In this section we conclude the proof of Theorem \ref{THMMAINPRECISE}. By \eqref{EQmatchingKerrparameters} with the first of \eqref{EQsmallnessEstimateKERR8999} (see also Proposition \ref{PROPspheredataESTIM91}), we have the estimate
\begin{align*} 
\begin{aligned} 
\Vert {}^{(R)} x_{-1,2}^{\KK(\la')} - \mathfrak{m}^{\mathbf{E}_\infty/R} \Vert_{\XX(S_{-1,2})} \les&
R^{-1} \cdot \vert \mathbf{E}(\la')-\mathbf{E}_\infty \vert + R^{-1} \cdot \vert \mathbf{P}(\la') \vert \\
&+ R^{-2} \cdot \vert \mathbf{L}(\la') \vert + R^{-2} \cdot \vert \mathbf{C}(\la') \vert \\
&+ \lrpar{\frac{ R^{-2} \cdot \vert \mathbf{L}(\la') \vert + \frac{\vert \mathbf{P}(\la') \vert}{\mathbf{E}_\infty} \cdot R^{-2} \cdot \vert \mathbf{C}(\la') \vert}{\mathbf{E}_\infty/R}}^2 \\
=&\smallO\lrpar{R^{-3/2}},
\end{aligned}
\end{align*}
which, together with \eqref{EQfullenergyEstimates32777789}, implies that the constructed solution $x$ on $\HH_{-1,[1,2]}$ is bounded by
\begin{align} 
\begin{aligned} 
\Vert x -\mathfrak{m}^{\mathbf{E}_\infty/R} \Vert_{\mathcal{X}(\HH_{-1,[1,2]})} + \left\Vert x \vert_{S_{-1,1}}- {}^{(R)}\tilde{x}_{-1,1} \right\Vert_{\mathcal{X}(S_{-1,1})} =&  \smallO(R^{-3/2}).
\end{aligned} \label{EQmatchingKerrfinalEstimates777}
\end{align}

\ni Applying the scaling of Section \ref{SECdefinitionScaling} with scale factor $R^{-1}$, we get by \eqref{EQdefSCALINGSSprop8999}, \eqref{EQmatchingKerrfinalEstimates777} and Lemma \ref{LEMspheredataInvarianceSCale} that
\begin{align*} 
\begin{aligned} 
\Vert {}^{(R^{-1})}x - \mathfrak{m}^{\mathbf{E}_\infty} \Vert_{\XX\lrpar{\HH_{-R,[R,2R]}}} +\Vert {}^{(R^{-1})} x  - \tilde{x}_{-R,R} \Vert_{\XX(S_{-R,R})} =& \smallO(R^{-3/2}).
\end{aligned} 
\end{align*}
It remains to show that the sphere 
\begin{align*} 
\begin{aligned} 
S_{-R,2R}^{\Kerr} := S_{-R,2R}^{\KK(\la')}
\end{aligned} 
\end{align*}
in Kerr admits a future-complete outgoing null congruence and a past-complete ingoing null congruence. As the proof is based on different methods (namely, a classical perturbation argument based on Jacobi fields), it is postponed to Appendix \ref{SECcompletenessKERR}. This finishes the proof of Theorem \ref{THMMAINPRECISE}.

\section{Proof of spacelike gluing to Kerr} \label{SECspacelikecorollary} \label{SECconclusionSpacelikeSequence}

\ni In this section we prove Corollary \ref{THMspacelikeGLUINGtoKERRv2}, the gluing of spacelike initial data to Kerr. Let $(\Si,g,k)$ be given smooth strongly asymptotically flat spacelike initial data with asymptotic invariants
\begin{align*} 
\begin{aligned} 
(\mathbf{E}_{\mathrm{ADM}}, \mathbf{P}_{\mathrm{ADM}},\mathbf{L}_{\mathrm{ADM}},\mathbf{C}_{\mathrm{ADM}}) \in I(0)\times \RRR^3 \times \RRR^3,
\end{aligned} 
\end{align*}
where by the strong asymptotic flatness, $\mathbf{P}_{\mathrm{ADM}}=0$. We proceed in four steps.
\begin{enumerate}

\item We apply the material of Sections \ref{SECspacelikeRescalingtoSmalldata} and \ref{SECproofExistenceConstruction} where it is shown how to construct and estimate families of higher-order sphere data from given spacelike initial data. We work in the rescaled picture, that is, we construct smooth higher-order sphere data on a sphere $S_{0,1} \subset \Si$ in the \emph{rescaled} spacelike initial data. 

\item We use the bifurcate characteristic gluing of Theorem \ref{THMdoubleCHARgluingTOkerr} to glue the constructed higher-order sphere data on $S_{0,1}$ to a sphere $S_{-1,2}^{\KK({}^{(R)}\la)}$ in a Kerr spacetime.

\item We construct a local spacetime $(\MM,\g)$ by applying local existence results for the spacelike and characteristic initial value problem for the Einstein equations, and pick a spacelike hypersurface connecting $S_{0,1}$ and $S_{-1,2}^{\KK({}^{(R)}\la)}$. We conclude the proof by rescaling.

\end{enumerate}

\ni \textbf{Notation.} For ease of presentation, we work in the following with smooth spacelike initial data and smooth higher-order sphere data $(x, \DD^{L,m}, \DD^{\Lb,m})$ for a fixed integer $m\geq1$.\\

\ni \textbf{(1) Rescaling and construction of sphere data.} In this section we follow the construction of Sections \ref{SECspacelikeRescalingtoSmalldata} and \ref{SECproofExistenceConstruction}: We start by rescaling the given spacelike initial data by scaling factor $R$ to $({}^{(R)}g, {}^{(R)}k)$ and constructing on the sphere
\begin{align*} 
\begin{aligned} 
S_{0,1} := S_{r_{\mathbf{E}_{\mathrm{ADM}}/R}(0,1)} \subset \Si
\end{aligned} 
\end{align*}
the higher-order sphere data (see \eqref{EQdefHIGHERORDERspheredata1999091} and also Remark \ref{REMARKConstructionHigherOrder})
\begin{align} 
\begin{aligned} 
\lrpar{{}^{(R)}x_{0,1}, {}^{(R)}\DD^{L,m}_{0,1}, {}^{(R)}\DD^{\Lb,m}_{0,1}}.
\end{aligned} \label{EQcsHigherOrderSphereData}
\end{align}
In Sections \ref{SECspacelikeRescalingtoSmalldata} and \ref{SECproofExistenceConstruction} it is shown that by the strong asymptotic flatness and the scaling of spacelike initial data (see Section \ref{SECspacelikescaling}), the constructed higher-order sphere data \eqref{EQcsHigherOrderSphereData} is -- with respect to an appropriate higher-regularity norm -- $\smallO(R^{-3/2})$-close to Schwarzschild reference higher-order sphere data of order $m$ of mass $\mathbf{E}_{\mathrm{ADM}}/R$; we denote this by
\begin{align} 
\begin{aligned} 
\lrpar{{}^{(R)}x_{0,1}, {}^{(R)}\DD^{L,m}_{0,1}, {}^{(R)}\DD^{\Lb,m}_{0,1}}-\lrpar{\mathfrak{m}^{\mathbf{E}_{\mathrm{ADM}}/R}_{0,1}, \DD^{L,m,\mathbf{E}_{\mathrm{ADM}}/R}_{0,1},\DD^{\Lb,m,\mathbf{E}_{\mathrm{ADM}}/R}_{0,1}}= \smallO(R^{-3/2}).
\end{aligned} \label{EQsmallnessSMOOTHsphereDATA1}
\end{align}

\ni Moreover, in Theorem \ref{PROPconstructionStatement} it is proved that the charges $(\mathbf{E},\mathbf{P},\mathbf{L},\mathbf{G})({}^{(R)}x_{0,1})$ can be estimated by
\begin{align} 
\begin{aligned} 
&\lrpar{R \cdot \mathbf{E}\lrpar{{}^{(R)}x_{0,1}},R \cdot\mathbf{P}\lrpar{{}^{(R)}x_{0,1}},R^2 \cdot\mathbf{L}\lrpar{{}^{(R)}x_{0,1}},R^2 \cdot\mathbf{G}\lrpar{{}^{(R)}x_{0,1}}} \\
=& \lrpar{\mathbf{E}_{\mathrm{ADM}}, \mathbf{P}_{\mathrm{ADM}}, \mathbf{L}_{\mathrm{ADM}}, \mathbf{C}_{\mathrm{ADM}}} + \lrpar{\OO(R^{-1}), \OO(R^{-3/2}), \smallO(1),\smallO(1)}.
\end{aligned} \label{EQconvergenceADMchargesspacelikeAFgluing99012}
\end{align}

\ni \textbf{(2) Application of bifurcate characteristic gluing to Kerr.} By \eqref{EQsmallnessSMOOTHsphereDATA1} and \eqref{EQconvergenceADMchargesspacelikeAFgluing99012}, we can apply Theorem \ref{THMdoubleCHARgluingTOkerr} (to be precise, the rescaled version thereof) to the higher-order sphere data \eqref{EQcsHigherOrderSphereData} to get  
\begin{itemize}
\item smooth higher-order outgoing null data $({}^{(R)}x,{}^{(R)}\DD^{L,m}, {}^{(R)}\DD^{\Lb,m})$ on $\HH_{-1,[1,2]}$ and smooth higher-order ingoing null data $({}^{(R)}{x},{}^{(R)}{\DD}^{L,m}, {}^{(R)}{\DD}^{\Lb,m})$ on $\HHb_{[-1,0],1}$ satisfying the higher-order null structure equations and matching on $S_{-1,1}$,
\item a Kerr reference sphere $S_{-1,2}^{{}^{(R)}\Kerr}$ in a Kerr spacetime $(\MM^{{}^{(R)}\Kerr},\g^{{}^{(R)}\Kerr})$ with Kerr reference higher-order sphere data 
$$(x^{{}^{(R)}\Kerr}_{-1,2}, \DD^{L,m,{}^{(R)}\Kerr}_{-1,2}, \DD^{\Lb,m,{}^{(R)}\Kerr}_{-1,2}).$$

\end{itemize}
such that we have matching up to order $m$ on $S_{-1,1}$, $S_{0,1}$ and $S_{-1,2}$,
\begin{align*} 
\begin{aligned} 
({}^{(R)}x,{}^{(R)}\DD^{L,m}, {}^{(R)}\DD^{\Lb,m}) \vert_{S_{-1,1}}=& ({}^{(R)}{x},{}^{(R)}{\DD}^{L,m}, {}^{(R)}{\DD}^{\Lb,m})\vert_{S_{-1,1}},\\
({}^{(R)}{x},{}^{(R)}{\DD}^{L,m}, {}^{(R)}{\DD}^{\Lb,m}) \vert_{S_{0,1}} =& ({}^{(R)}x_{0,1}, {}^{(R)}\DD^{L,m}_{0,1}, {}^{(R)}\DD^{\Lb,m}_{0,1}), \\
({}^{(R)}{x},{}^{(R)}{\DD}^{L,m}, {}^{(R)}{\DD}^{\Lb,m}) \vert_{S_{-1,2}} =& (x^{{}^{(R)}\Kerr}_{-1,2}, \DD^{L,m,{}^{(R)}\Kerr}_{-1,2}, \DD^{\Lb,m,{}^{(R)}\Kerr}_{-1,2}).
\end{aligned} 
\end{align*}

\ni In particular, it holds that
\begin{enumerate}
\item the null data $({}^{(R)}x,{}^{(R)}\DD^{L,m}, {}^{(R)}\DD^{\Lb,m})$ on $\HH_{-1,[1,2]}$ and $({}^{(R)}{x},{}^{(R)}{\DD}^{L,m}, {}^{(R)}{\DD}^{\Lb,m})$ on $\HHb_{[-1,0],1}$ are $\smallO(R^{-3/2})$-close to Schwarzschild reference higher-order null data of mass $M/R$, respectively, and
\item the sphere $S_{-1,2}^{{}^{(R)}\Kerr}$ lies in a Kerr reference spacelike hypersurface $\Si^{{}^{(R)}\Kerr} \subset \MM^{{}^{(R)}\Kerr}$ with asymptotic invariants (see Sections \ref{SECspacelikescaling} and \ref{SECappKerrFamilyDetails})
\begin{align} 
\begin{aligned} 
 \mathbf{E}^{{}^{(R)}\mathrm{Kerr}}_{\mathrm{ADM}} =& R^{-1} \cdot \mathbf{E}_{\mathrm{ADM}}+ \OO(R^{-2}), & \mathbf{P}^{{}^{(R)}\mathrm{Kerr}}_{\mathrm{ADM}} =& \OO(R^{-5/2}), \\
\mathbf{L}^{{}^{(R)}\mathrm{Kerr}}_{\mathrm{ADM}} =& R^{-2} \cdot \mathbf{L}_{\mathrm{ADM}}+ \smallO(R^{-2}), &  \mathbf{C}^{{}^{(R)}\mathrm{Kerr}}_{\mathrm{ADM}} =& R^{-2} \cdot \mathbf{C}_{\mathrm{ADM}} + \smallO(R^{-2}).
\end{aligned} \label{EQKerrgluing99990}
\end{align}
\end{enumerate}

\ni \textbf{(3) Construction of spacelike hypersurface and conclusion.} The constructed solutions to the higher-order null structure equations,
\begin{align*} 
\begin{aligned} 
({}^{(R)}x,{}^{(R)}\DD^{L,m}, {}^{(R)}\DD^{\Lb,m}) \text{ on } \HH_{-1,[1,2]} \text{ and } ({}^{(R)}{x},{}^{(R)}{\DD}^{L,m}, {}^{(R)}{\DD}^{\Lb,m}) \text{ on } \HHb_{[-1,0],0},
\end{aligned} 
\end{align*}
form \emph{characteristic initial data} for the Einstein vacuum equations which is $\smallO(R^{-3/2})$-close to Schwarzschild of mass $\mathbf{E}_\infty/R$. By the work of Luk on the characteristic initial value problem for the Einstein equations \cite{LukChar,LukRod1}, for $R\geq1$ sufficiently large, the associated maximal globally hyperbolic spacetime $({\MM'},{\g'})$ contains slabs of universal width along the null hypersurface $\HHb_{[-1,0],1}$ and $\HH_{-1,[1,2]}$.

\begin{center}
\includegraphics[width=12cm]{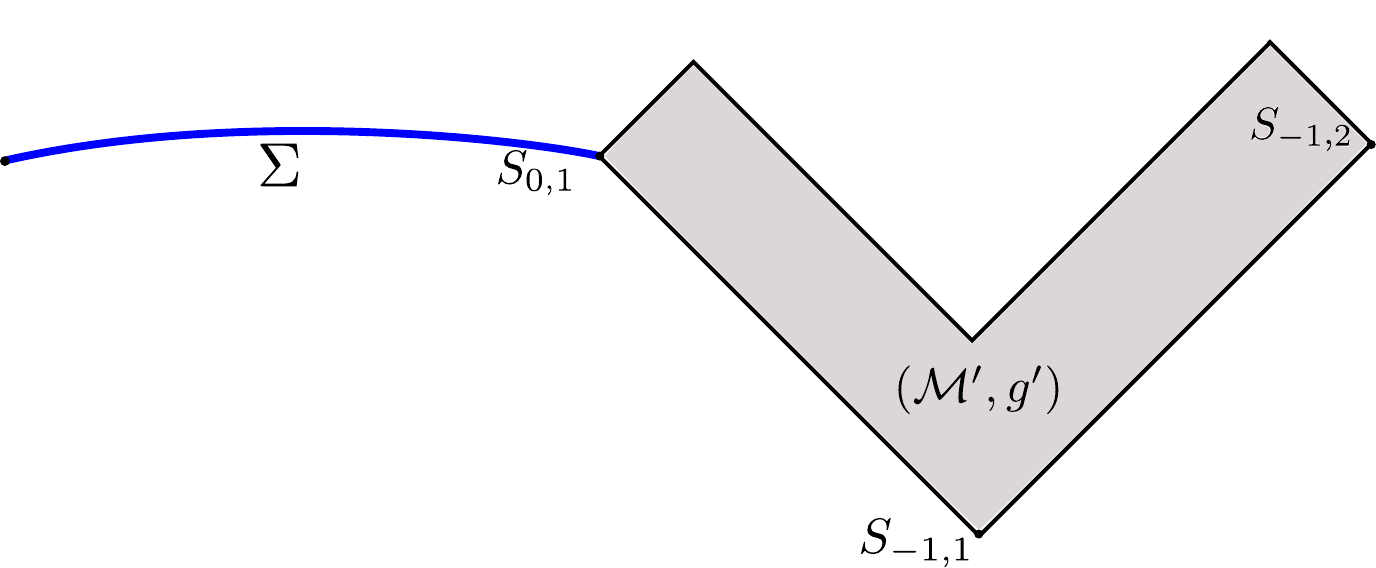} 
\captionof{figure}{The spacetime $(\MM',\g')$ is denoted as shaded region. \label{FIG1}}
\end{center}

\ni Applying local existence for the spacelike Cauchy problem defined on $\Si$ (the resulting region is shaded red in Figure \ref{FIG3}), and, subsequently, for the characteristic Cauchy problems (the resulting regions are shaded green in Figure \ref{FIG3}) defined on 
\begin{itemize} 
\item $\partial^+ \DD(\Si)$ and $\partial^+ \MM'$, 
\item $\partial^- \DD(\Si)$ and $\HHb$,
\item $\partial^+ \MM'$ and $\partial^+ \MM^{{}^{(R)}\Kerr}$,
\item $\HH$ and $\partial^- \MM^{{}^{(R)}\Kerr}$,
\end{itemize}
we construct the spacetime $(\MM'',\g'')$, see Figure \ref{FIG3}. Here $\partial^+$ and $\partial^-$ denote the future and past boundaries, and $\DD(\Si)$ the domain of dependence  of $\Si$.

\ni In $(\MM'',\g'')$ we define a spacelike hypersurface $\Si''$ as follows (see Figure \ref{FIG3})
\begin{itemize}
\item $\Si''$ agrees with $\Si$ in $(\MM,\g)$,
\item $\Si''$ is spacelike and contained in the slabs in $(\MM',\g')$,
\item $\Si''$ agrees with $\Si^{{}^{(R)}\Kerr}$ in $(\MM^{{}^{(R)}\Kerr},\g^{{}^{(R)}\Kerr})$.
\end{itemize}

\begin{center}
\includegraphics[width=14cm]{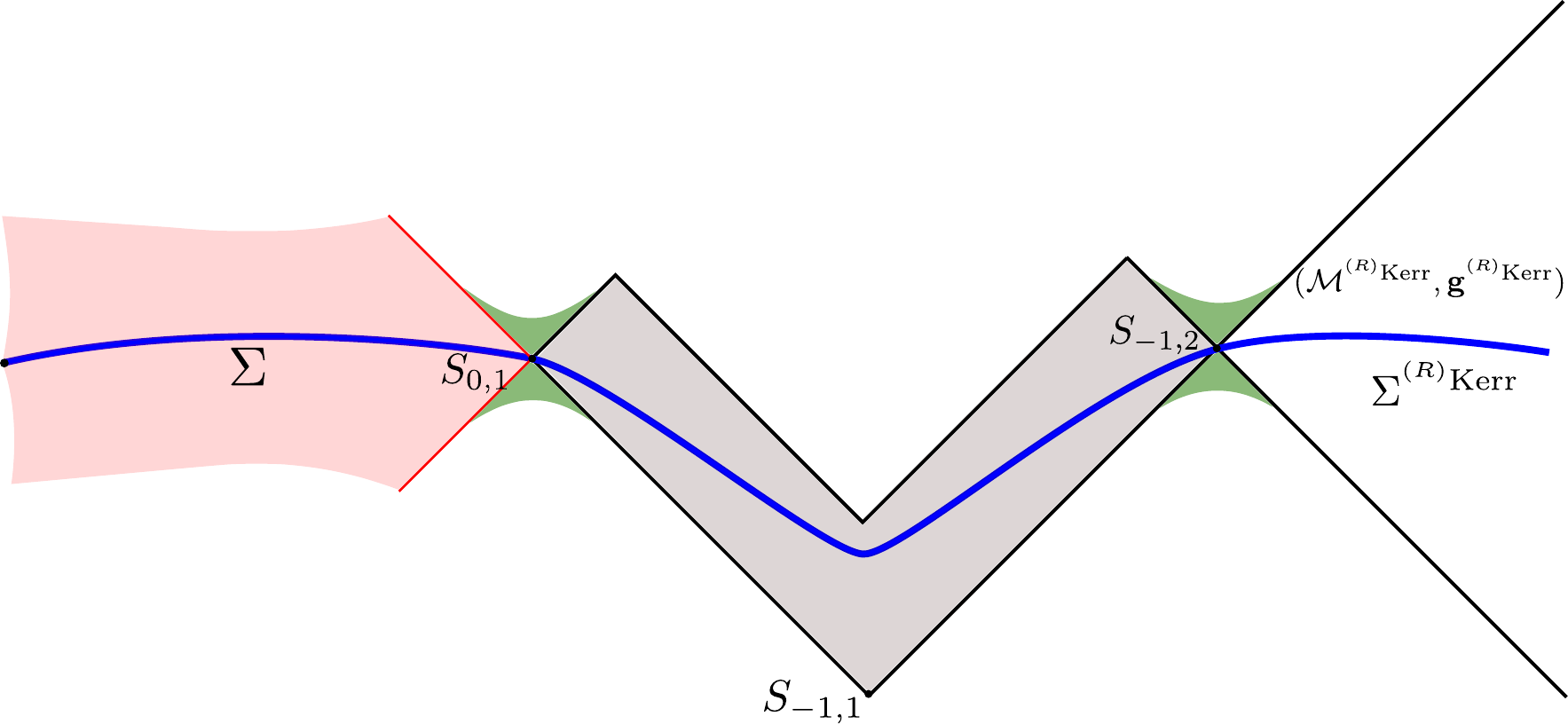} 
\captionof{figure}{The spacelike hypersurface $\Si''$ is indicated by the bold blue line. \label{FIG3}}
\end{center}

\ni By construction, the induced spacelike initial data on $\Si''$ agrees with
\begin{itemize}
\item $({}^{(R)}{g},{}^{(R)}{k})$ on $\Si$, 
\item Kerr reference spacelike initial data $(g^{{}^{(R)}\Kerr},k^{{}^{(R)}\Kerr})$ on $\Si^{{}^{(R)}\Kerr}$, see Section \ref{SECappKerrFamilyDetails} for details.
\end{itemize} 
In particular, it is a solution to the spacelike gluing problem from the rescaled spacelike initial data $({}^{(R)}{g},{}^{(R)}{k})$ to the Kerr reference spacelike initial data $(g^{{}^{(R)}\Kerr},k^{{}^{(R)}\Kerr})$.

Scaling the above spacelike initial data by factor $R^{-1}$, and using the scale-invariance of the Kerr reference spacelike initial data (see Remark \ref{RemarkScalingKerrData}), we conclude the spacelike gluing to Kerr. 

\section{Spacelike initial data for the Einstein equations} \label{SECCSgluingbasics} 

\ni In this section we set up spacelike initial data for the Einstein equations and introduce the notation used in Sections \ref{SECstatementConstruction} and \ref{SECconclusionSpacelikeSequence}, and Section \ref{SECappKerrFamilyDetails}. This section is organized as follows.
\begin{itemize}
\item In Section \ref{SECspacelikeIVPdefREAL} we introduce spacelike initial data and define the reference spacelike initial data of Minkowski and Schwarzschild.
\item In Section \ref{SECAFdefspacelike} we recapitulate two types of \emph{asymptotic flatness} of spacelike initial data.
\item In Section \ref{SECdefAsymptoticInvariants99901} we recall discuss the so-called \emph{asympotic invariants} of asymptotically flat spacelike initial data.
\item In Section \ref{SECfoliationGeometry} we recall the geometry of foliations of spacelike initial data by $2$-spheres.
\item In Section \ref{SECspacelikescaling} we define the scaling of spacelike initial data and introduce local norms.
\end{itemize}

\subsection{The spacelike constraint equations and spacelike initial data} \label{SECspacelikeIVPdefREAL}
\ni Let $(\MM,\g)$ be a spacetime, and denote its Riemann curvature tensor by $\Rbf$. Let $\Si$ be a spacelike hypersurface in $\MM$ with future-directed timelike unit normal $T$. The \emph{electric-magnetic decomposition} of $\Rbf$ on $\Si$ is defined by
\begin{align} 
\begin{aligned} 
E_{ab} := \Rbf_{TaTb}, \,\, H_{ab} := {}^\ast\Rbf_{TaTb},
\end{aligned} \label{EQdefEHdecompRIC}
\end{align}
where ${}^\ast\Rbf_{\a \be \ga \de}:= \half \in_{\a \be \mu \nu}\Rbf^{\mu\nu}_{\,\,\,\,\,\,\ga \de}$ denotes the Hodge dual of $\Rbf$ with respect to the volume form $\in$ on $(\MM,\g)$. The $2$-tensors $E$ and $H$ are symmetric and tracefree, and the following decomposition holds (see, for example, \cite{ChrKl93})
\begin{align} 
\begin{aligned} 
\Rbf_{abcT}= -\in_{ab}^{\,\,\,\,\,\,s} H_{sc}, \,\, \Rbf_{abcd}=-\in_{abs}\in_{cdl} E^{sl},
\end{aligned} \label{EQlinearCombination112}
\end{align}
where $\in_{abc}:= \in_{Tabc}$ denotes the induced volume element on $\Si$.

In the following, let $g$ be the induced metric and $k$ be the second fundamental form of $\Si \subset \MM$. Denote the covariant derivative of $g$ by $\nab$ and the Ricci tensor of $g$ by $\RRRic$. The Gauss-Codazzi equations on $\Si$ are
\begin{align} 
\begin{aligned} 
\RRRic_{ij}-k_{ia}k^a_{\,\,\, j} + k_{ij} \tr k =& E_{ij},\\
\nab_i k_{jm}-\nab_j k_{im} =& \in_{ij}^{\,\,\,\,\,\,l} H_{lm},
\end{aligned} \label{EQGaussCodazziSpacelike}
\end{align} 
where $\tr k := g^{ab}k_{ab}$. We further have the Bianchi equation (see, for example, equation (7.2.2d) in \cite{ChrKl93}),
\begin{align} 
\begin{aligned} 
\mathrm{div}\, H = E \wedge k,
\end{aligned} \label{EQBianchiEQforH}
\end{align}
where for symmetric $2$-tensors $V$ and $W$ on $\Si$,
\begin{align} 
\begin{aligned} 
(\mathrm{div}\, V)_i:= \nab^j V_{ji},  \,\, (V \wedge W)_{i} := \in_{i}^{\,\,\,jl} V_{jn}W^n_{\,\,\, l}.
\end{aligned} \label{EQnotationBianchiHdefinition}
\end{align}
Taking the trace of \eqref{EQGaussCodazziSpacelike} with respect to $g$ leads to the \emph{spacelike constraint equations},
\begin{align} 
\begin{aligned}
R_{\mathrm{scal}}=&\vert k \vert^2 - (\tr k)^2, \\
\mathrm{div} k =& d (\tr k),
\end{aligned} \label{EQspacelikeConstraints1}
\end{align}
where $d$ denotes the exterior derivative on $\Si$ and $R_{\mathrm{scal}}:= g^{ab}\RRRic_{ab}$. 

Based on \eqref{EQspacelikeConstraints1}, we have the following well-known definition.
\begin{definition}[Spacelike initial data] \label{DEFspacelikeDATA} Spacelike initial data for the Einstein equations is specified by a triple $(\Si,g,k)$ where $(\Si,g)$ is a Riemannian $3$-manifold and $k$ is a symmetric $2$-tensor on $\Si$, satisfying the spacelike constraint equations \eqref{EQspacelikeConstraints1}, that is,
\begin{align*} 
\begin{aligned} 
R_{\mathrm{scal}}=&\vert k \vert^2 - (\tr k)^2, \\
\mathrm{div} k =& d (\tr k).
\end{aligned} 
\end{align*}
\end{definition}
\ni Local well-posedness of the Cauchy problem of general relativity with sufficiently regular spacelike initial data is well-known \cite{BruhatOrigin,SpacelikeLocalEx1,SpacelikeLocalEx2}.

In the following we discuss reference spacelike initial data for Minkowski and Schwarzschild. Reference spacelike initial data for Kerr is defined and analysed in Section \ref{SECappKerrFamilyDetails}. \\

\ni \textbf{Minkowski reference spacelike initial data.} Minkowski spacetime is given by $(\RRR^{3+1}, \mathfrak{m})$, see Section \ref{SECMinkowskiSSspacetimes1}. We define the reference spacelike initial data for Minkowski to be the induced spacelike initial data on the hypersurface $\{t=0\}$,
\begin{align*} 
\begin{aligned} 
(\Si,g,k)= (\RRR^3,e,0),
\end{aligned} 
\end{align*}
where $e$ denotes the standard Euclidean metric on $\RRR^3$. \\

\ni \textbf{Schwarzschild reference spacelike initial data.} The Schwarzschild metric of mass $M\geq0$ is given in Schwarzschild coordinates $(t,r,\th,\phi)$ by
\begin{align*} 
\begin{aligned} 
\g= -\lrpar{1-\frac{2M}{r}} dt^2 + \lrpar{1-\frac{2M}{r}}^{-1} dr^2 + r^2 \lrpar{d\th^2 + \sin^2\th d\phi^2}.
\end{aligned} 
\end{align*}
The induced spacelike initial data on the spacelike hypersurface $\{t=0 \} \cap \{r>2M\}$ is given by
\begin{align} 
\begin{aligned} 
(\Si,g,k) = \lrpar{ \RRR^3 \setminus \overline{B(0,2M)}, \lrpar{1-\frac{2M}{r}}^{-1} dr^2 + r^2 \lrpar{d\th^2 + \sin^2\th d\phi^2}, 0}.
\end{aligned} \label{EQssInitialSpacelike1}
\end{align}
It is well-known that the induced metric $g$ is conformally flat. Indeed, defining \emph{isotropic coordinates} $(\tilde{r},\tilde{\th},\tilde{\phi})$ the from Schwarzschild coordinates $(r,\th,\phi)$ by the relations
\begin{align} 
\begin{aligned} 
\frac{r}{\tilde{r}} = \lrpar{1+\frac{M}{2\tilde{r}}}^2, \,\, \tilde{\th}=\th, \,\, \tilde{\phi}= \phi,
\end{aligned} \label{EQcoordinatechangedef1}
\end{align}
and using the explicit relations
\begin{align} 
\begin{aligned} 
\frac{d r}{d\tilde{r}} = \frac{r}{\tilde r} \lrpar{1-\frac{2M}{r}}^{1/2}, \,\, \tilde{r}(2M)= \frac{M}{2},
\end{aligned} \label{EQcoordinatechangedef2}
\end{align}
we get that for $r>2M$ (that is, by \eqref{EQcoordinatechangedef2}, for $\tilde{r}>M/2$),
\begin{align} 
\begin{aligned} 
g=& \lrpar{1-\frac{2M}{r}}^{-1} dr^2 + r^2 \lrpar{d\th^2 + \sin^2\th d\phi^2} \\
=& \lrpar{1-\frac{2M}{r}}^{-1} \lrpar{\frac{dr}{d\tilde{r}}}^2 d\tilde{r}^2 + \lrpar{\frac{r}{\tilde{r}} }^2\tilde{r}^2 \lrpar{d\tilde{\th}^2 + \sin^2\tilde{\th} d\tilde{\phi}^2} \\
=& \lrpar{1+\frac{M}{2\tilde{r}}}^4 \lrpar{d\tilde{r}^2 + \tilde{r}^2\lrpar{d\tilde{\th}^2 + \sin^2\tilde{\th} d\tilde{\phi}^2}} \\
=& \lrpar{1+\frac{M}{2\tilde{r}}}^4 \tilde{e},
\end{aligned} \label{EQcoordiSSchange}
\end{align}
where $\tilde{e}_{ij}$ denotes the Euclidean metric in Cartesian isotropic coordinates $(\tilde{x}^1,\tilde{x}^2,\tilde{x}^3)$ defined by \eqref{EQdefinSPHERICALAFcoord} from $(\tilde{r},\tilde{\th},\tilde{\phi})$. In particular, in isotropic coordinates the metric admits the following asymptotic expansion,
\begin{align} 
\begin{aligned} 
g_{ij}(x)=\lrpar{1+\frac{M}{2{\vert {x} \vert}}}^4 {e}_{ij} = \lrpar{1+\frac{2M}{\vert {x} \vert}} {e}_{ij} + \OO\lrpar{\vert {x}\vert^{-2}}.
\end{aligned} \label{EQSSasympExpansion}
\end{align}
The expansion \eqref{EQSSasympExpansion} is the basis for the definition of \emph{strong asymptotic flatness} in Definition \ref{DEFasymptoticFLATNESS}.\\

\ni \textbf{Notation.} For real numbers $M\geq0$, we denote the metric components of the Schwarzschild reference metric $g$ in Schwarzschild Cartesian coordinates $(x^1,x^2,x^3)$ by $g^M_{ij}$, and in isotropic Cartesian coordinates by $\tilde{g}^M_{ij}$.
\subsection{Asymptotic flatness of spacelike initial data} \label{SECAFdefspacelike} 
In this paper we consider two types of asymptotic flatness of spacelike initial data. First, the following definition of \emph{strong asymptotic flatness} corresponds to the center-of-mass frame of the isolated system under consideration, see \cite{ChrLectures,ChrKl93,KlNic}.
\begin{definition}[Strong asymptotic flatness] \label{DEFasymptoticFLATNESS} Spacelike initial data $(\Si,g,k)$ is \emph{strongly asymptotically flat} if there exist a real number $M
\geq0$, a compact set $K \subset \Si$ such that its complement $\Si \setminus K$ is diffeomorphic to the complement of the closed unit ball in $\RRR^3$, and a coordinate system $(x^1,x^2,x^3)$ defined near spacelike infinity such that, as $\vert x \vert\to \infty$, 
\begin{align} 
\begin{aligned} 
g_{ij}({x}) = \lrpar{1+\frac{2M}{\vert x \vert}} e_{ij} + \smallO\lrpar{\vert x \vert^{-3/2}}, \,\, k_{ij}({x}) = \smallO\lrpar{\vert x \vert^{-5/2}}.
\end{aligned} \label{EQAFconditions}
\end{align}
We moreover require analogous conditions on successive derivatives as needed.
\end{definition}

\ni Second, we define \emph{asymptotically flat spacelike initial data with Regge-Teitelbaum conditions} (also denoted as \emph{RT-conditions}), see for example \cite{CorvinoSchoen,ChruscielDelay}. The family of Kerr reference spacelike initial data studied in Section \ref{SECappKerrFamilyDetails} belongs to this type of asymptotic flatness.

\begin{definition}[Asymptotic flatness with RT-conditions] \label{DEFasymptoticFLATNESSWEAK} Spacelike initial data $(\Si,g,k)$ is \emph{asymptotically flat with RT-conditions} if there exists a compact set $K \subset \Si$ such that its complement $\Si \setminus K$ is diffeomorphic to the complement of the closed unit ball in $\RRR^3$ and there exists a Cartesian coordinate system $(x^1,x^2,x^3)$ defined near spacelike infinity such that, as $\vert x \vert \to \infty$, 
\begin{align*} 
\begin{aligned} 
g_{ij}(x) = e_{ij} + \OO\lrpar{r^{-1}}, \,\, k_{ij}(x) = \OO\lrpar{r^{-2}},
\end{aligned}
\end{align*}
and the so-called \emph{Regge-Teitelbaum conditions} hold,
\begin{align} 
\begin{aligned} 
g_{ij}(x)-g_{ij}(-x) =& \OO(\vert x \vert^{-2}), & k_{ij}(x)+k_{ij}(-x) =& \OO(\vert x \vert^{-3}). 
\end{aligned} \label{EQdefRTconditionsDEFDEF8889}
\end{align}
We moreover require analogous conditions on successive derivatives as needed.
\end{definition} 
\textbf{Notation.} \begin{enumerate} 
\item Defining the Regge-Teitelbaum quantities
\begin{align} 
\begin{aligned} 
\mathrm{RT}(g)_{ij} :=& g_{ij}(x) - g_{ij}(-x), \,\, \mathrm{RT}(k)_{ij} :=& k_{ij}(x) - k_{ij}(-x),
\end{aligned} \label{EQRTcondDefinition3}
\end{align}
\noindent we can write the RT-conditions \eqref{EQdefRTconditionsDEFDEF8889} as
\begin{align*} 
\begin{aligned} 
\mathrm{RT}(g)_{ij}(x) = \OO(\vert x \vert^{-2}), \,\, \mathrm{RT}(k)_{ij}(x) = \OO(\vert x \vert^{-3}).
\end{aligned} 
\end{align*}

\item In the following we call spacelike initial data \emph{asymptotically flat} if it is either strongly asymptotically flat or asymptotically flat with Regge-Teitelbaum conditions.
\end{enumerate}

\subsection{Asymptotic invariants of asymptotically flat spacelike initial data} \label{SECdefAsymptoticInvariants99901} 
\ni Given asymptotically flat spacelike initial data $(\Si,g,k)$ with Cartesian coordinates $(x^1,x^2,x^3)$ near spacelike infinity, define standard spherical coordinates $(r,\th^1,\th^2)$ by \eqref{EQdefinSPHERICALAFcoord}, and let the $2$-spheres $S_r \subset \Si$ be defined as the level sets of $r$. The following \emph{asymptotic invariants} are fundamental quantities in mathematical relativity, see for example \cite{ADM,CorvinoSchoen,ChrKl93,ChrLectures}.

\begin{definition}[Asymptotic invariants] \label{DEFADMquantities} Let $(\Si,g,k)$ be asymptotically flat spacelike initial data with coordinates $(x^1,x^2,x^3)$ near spacelike infinity. For $i=1,2,3$, define
\begin{align*}
\begin{aligned} 
\mathbf{E}_{\mathrm{ADM}} :=&  \lim\limits_{r \to\infty} \frac{1}{16\pi}\int\limits_{S_r} \sum\limits_{j=1,2,3} \lrpar{\pr_j g_{jl}-\pr_l g_{jj}} N^l d\mu_\gd, \\
\lrpar{\mathbf{P}_{\mathrm{ADM}}}^i :=& \lim\limits_{r \to\infty} \frac{1}{8\pi} \int\limits_{S_{r}}  \lrpar{k_{il}-\tr k \, g_{il}} N^l d\mu_\gd, \\
\lrpar{\mathbf{L}_{\mathrm{ADM}}}^i :=&  \lim\limits_{r\to\infty} \frac{1}{8\pi}\int\limits_{S_{r}} (k_{jl}-\tr k \, g_{jl}) \lrpar{Y_{(i)}}^j N^l d\mu_\gd, \\
\lrpar{\mathbf{C}_{\mathrm{ADM}}}^i :=& \lim\limits_{r\to\infty} \int\limits_{S_r} \lrpar{ x^i \sum\limits_{j=1,2,3} \lrpar{\pr_j g_{jl}- \pr_l g_{jj}}N^l - \sum\limits_{j=1,2,3} \lrpar{g_{ji} N^j - g_{jj} N^i} }d\mu_\gd,
\end{aligned}
\end{align*}
where $N$ denotes the outward-pointing unit normal to $S_r$ and $d\mu_\gd$ the induced volume element on $S_r$. Furthermore, $Y_{(i)}$, $i=1,2,3$, are the rotation fields defined by
\begin{align} 
\begin{aligned} 
(Y_{(i)})_j := \in_{ilj} x^l.
\end{aligned} \label{EQdefRotationFieldsComponents}
\end{align}
\end{definition}

\ni \emph{Remarks on Definition \ref{DEFADMquantities}.}
\begin{enumerate}

\item The asymptotic invariants can be interpreted as \emph{energy $\mathbf{E}_{\mathrm{ADM}}$, linear momentum $\mathbf{P}_{\mathrm{ADM}}$, angular momentum $\mathbf{L}_{\mathrm{ADM}}$} and \emph{center-of-mass $\mathbf{C}_{\mathrm{ADM}}$} of the spacelike initial data set.

\item The asymptotic invariants are well-defined and foliation-independent for asymptotically flat spacelike initial data, see for example \cite{ChrLectures,CorvinoSchoen} and references therein. 
\item By the positive energy theorem \cite{SchoenYau1,SchoenYau2,Witten} it holds for sufficiently regular asymptotically flat spacelike initial data that
\begin{align} 
\begin{aligned} 
\mathbf{E}_{\mathrm{ADM}}\geq0.
\end{aligned} \label{EQPMT}
\end{align} 
Moreover, if equality holds in \eqref{EQPMT}, then the initial data must be isometric to initial data for Minkowski spacetime.

\item For strongly asymptotically flat initial data, it holds that (see, for example, \cite{ChrKl93})
\begin{align*} 
\begin{aligned} 
\mathbf{E}_{\mathrm{ADM}} = M, \,\, \mathbf{P}_{\mathrm{ADM}} = 0,
\end{aligned} 
\end{align*}
where $M$ is the real number appearing in \eqref{EQAFconditions}.
\end{enumerate}

\ni It is well-known (see, for example, \cite{Miao,Herzlich,AshtekarHansen,Chrusciel,Huang,Huang2}) that the asymptotic invariants $\mathbf{E}_{\mathrm{ADM}}$ and $\mathbf{C}_{\mathrm{ADM}}$ can be calculated in terms of the Ricci tensor as follows.
\begin{theorem}[Alternative expressions for $\mathbf{E}_{\mathrm{ADM}}$ and $\mathbf{C}_{\mathrm{ADM}}$] \label{THMaltDEFinvariants} Let $(\Si,g,k)$ be asymptotically flat spacelike initial data such that $\mathbf{E}_{\mathrm{ADM}}>0$. Then it holds that for $i=1,2,3$,
\begin{align*} 
\begin{aligned} 
\mathbf{E}_{\mathrm{ADM}}=& \lim\limits_{r\to \infty} - \frac{1}{8\pi} \int\limits_{S_{r}} \lrpar{\RRRic - \half R_{\mathrm{scal}} \, g }(X,N) d\mu_\gd, \\
\lrpar{\mathbf{C}_{\mathrm{ADM}}}^i =& \lim\limits_{r\to\infty} \frac{1}{16\pi} \int\limits_{S_{r}} \lrpar{\RRRic - \half R_{\mathrm{scal}} \, g }(Z^{(i)},N) d\mu_\gd,
\end{aligned} 
\end{align*}
where the vectorfields $X$ and $Z^{(i)}$, $i=1,2,3$, are defined with respect to Cartesian coordinates $(x^1,x^2,x^3)$ by 
\begin{align} 
\begin{aligned} 
X := x^i \pr_i, \,\, Z^{(i)} := \lrpar{\vert x \vert^2 \de^{ij}-2x^i x^j} \pr_j.
\end{aligned} \label{EQdefXandZiALTADM}
\end{align}

\end{theorem}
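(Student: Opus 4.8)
The plan is to prove both identities by reducing to a flat background and exploiting the divergence structure of the Einstein tensor together with the conformal symmetries of Euclidean space. Work in the asymptotic chart $(x^1,x^2,x^3)$ and set $h_{ij} := g_{ij}-e_{ij}$, so that under asymptotic flatness with Regge--Teitelbaum conditions one has $h = \OO(\vert x\vert^{-1})$, $\partial h = \OO(\vert x\vert^{-2})$, $\partial^2 h = \OO(\vert x\vert^{-3})$, together with the parity refinements $h_{ij}(x)-h_{ij}(-x) = \OO(\vert x\vert^{-2})$ and $k_{ij}(x)+k_{ij}(-x)=\OO(\vert x\vert^{-3})$, and the Hamiltonian constraint $R_{\mathrm{scal}} = \vert k\vert^2-(\tr k)^2 = \OO(\vert x\vert^{-4})$. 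I would first record two structural facts. \emph{(i)} The vectorfields $X=x^i\partial_i$ and $Z^{(i)}$ of \eqref{EQdefXandZiALTADM} are conformal Killing fields of the Euclidean metric: a direct computation gives $\Lie_X e = 2e$ and $\Lie_{Z^{(i)}} e = -4x^i\, e$. \emph{(ii)} Writing $G := \RRRic-\half R_{\mathrm{scal}}\, g$ for the Einstein tensor, the contracted second Bianchi identity gives $\mathrm{div}_g G = 0$, hence for any vectorfield $Y$,
\begin{align*}
\mathrm{div}_g\big(G(Y,\cdot)\big) = \big\langle G, \tfrac{1}{2}\Lie_Y g\big\rangle_g .
\end{align*}

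Next I would expand $G$ to first order in $h$, $G_{ij} = G^{(1)}(h)_{ij} + \mathcal{Q}_{ij}$, where $\mathcal{Q}$ is at least quadratic in $(h,\partial h,\partial^2 h)$; since $R_{\mathrm{scal}}(e)=0$ one has $G^{(1)}(h) = \RRRic^{(1)}(h)-\half e\, R^{(1)}_{\mathrm{scal}}(h)$, with $R^{(1)}_{\mathrm{scal}}(h) = \partial_i\partial_j h_{ij}-\Delta\tr h = \mathrm{div}_e\, \mathbb{U}(h)$ and $\mathbb{U}(h)_l := \partial_j h_{jl}-\partial_l h_{jj}$ the ADM energy integrand, and $\tr_e G^{(1)}(h) = -\half R^{(1)}_{\mathrm{scal}}(h)$. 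Combining with \emph{(i)}, $\tfrac12\Lie_X g = e + h + \tfrac12 x^k\partial_k h$, so to leading order $\langle G, \tfrac12\Lie_X g\rangle \approx \tr_e G^{(1)}(h) + (\text{quadratic}) = -\half R^{(1)}_{\mathrm{scal}}(h)+\cdots$. The core of the argument is then to show, by integration by parts on $S_r$ (equivalently, by a divergence identity with careful error control), that on $S_r$
\begin{align*}
-\frac{1}{8\pi}\, G(X,N) = \frac{1}{16\pi}\, \mathbb{U}(h)(N) + \Divd\!\big(\text{tangential field}\big) + \mathcal{R},
\end{align*}
where $\Divd$ is the intrinsic divergence on $S_r$ (so it integrates to zero) and $\mathcal{R}$ collects terms proportional to $R_{\mathrm{scal}}$ (and the relevant combinations of $\partial k$), quadratic remainders $\mathcal{Q}$ contracted with $X$, and subleading pieces of $h$; passing to the limit then yields $\mathbf{E}_{\mathrm{ADM}} = -\frac{1}{8\pi}\lim_r\int_{S_r}G(X,N)$. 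The center-of-mass identity is handled identically with $X$ replaced by $Z^{(i)}$: one uses $\Lie_{Z^{(i)}}e=-4x^i e$, and the boundary term $g_{ji}N^j-g_{jj}N^i$ appearing in Definition \ref{DEFADMquantities} is precisely the contribution of the non-trace part of $\Lie_{Z^{(i)}}h$, so that again everything reduces to an intrinsic divergence plus controllable errors.

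The main obstacle is the bookkeeping of the error terms $\mathcal{R}$: since $X$ grows like $\vert x\vert$ and $Z^{(i)}$ like $\vert x\vert^2$, a crude bound on the quadratic remainders and on the $R_{\mathrm{scal}}$-terms is not integrable against the flux weight $\vert x\vert^2$, and one must extract the exact cancellations rather than merely estimate. The resolution is twofold: the Hamiltonian and momentum constraints are used to trade $R_{\mathrm{scal}}$ (and $\mathrm{div}\, k - d\,\tr k$) for faster-decaying quantities, and the Regge--Teitelbaum parity conditions, together with their one- and two-derivative analogues, are used to show that the remaining borderline contributions are odd under $x\mapsto -x$ and hence integrate to zero over $S_r$ (the even part of $h$ carrying the mass, while $N$ and $X$, $Z^{(i)}$ have definite parity). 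This is exactly the mechanism underlying the computations of \cite{Miao,Huang,Huang2}, which we follow; the hypothesis $\mathbf{E}_{\mathrm{ADM}}>0$ enters only to ensure that the limits under consideration are the standard, foliation-independent ones.
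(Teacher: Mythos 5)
The paper does not prove Theorem \ref{THMaltDEFinvariants}: it is stated as a well-known result with citations to \cite{Miao,Herzlich,AshtekarHansen,Chrusciel,Huang,Huang2}, so there is no in-paper argument to compare against. Your sketch correctly reproduces the standard strategy of those references: use that $X$ and $Z^{(i)}$ are conformal Killing for the Euclidean metric ($\Lie_X e = 2e$, $\Lie_{Z^{(i)}} e = -4x^i e$), combine with the Bianchi identity $\mathrm{div}_g\lrpar{G(Y,\cdot)} = \langle G, \tfrac12 \Lie_Y g\rangle_g$, linearize $G$ around $e$ so that $\tr_e G^{(1)}(h) = -\tfrac12\,\mathrm{div}_e\,\mathbb{U}(h)$ with $\mathbb{U}(h)$ the ADM integrand, and close by trading $R_{\mathrm{scal}}$ through the Hamiltonian constraint and using the Regge--Teitelbaum parity to kill borderline terms. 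Two small imprecisions, neither a genuine gap: (a) for the energy the crude counting already suffices, since $\mathcal{Q}(X,N)$ and $R_{\mathrm{scal}}\,g(X,N)$ are both $\mathcal{O}(\vert x\vert^{-3})$ and integrate to $\mathcal{O}(\vert x\vert^{-1})$ over $S_r$; parity is genuinely needed only for $\mathbf{C}_{\mathrm{ADM}}$, where $Z^{(i)}=\mathcal{O}(\vert x\vert^2)$ makes the crude count borderline. (b) The correction term $g_{ji}N^j - g_{jj}N^i$ in Definition \ref{DEFADMquantities} does not come from ``the non-trace part of $\Lie_{Z^{(i)}}h$''; it comes from transferring the derivative off $\mathbb{U}$ when integrating the conformal factor by parts: $\mathrm{div}_e\lrpar{G^{(1)}(Z^{(i)},\cdot)} = x^i\,\mathrm{div}_e\,\mathbb{U}(h)$, and $\mathbb{U}(h)_i = \mathrm{div}_e\lrpar{(h - (\tr h)\,e)_{i\cdot}}$, whose flux through $S_r$ equals $g_{ji}N^j - g_{jj}N^i$ up to the mean-zero piece $2N^i$. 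With these corrections the approach and conclusion are sound.
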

\ni \emph{Remarks on Theorem \ref{THMaltDEFinvariants}.}
\begin{enumerate}
\item The vectorfields $X$ and $Z^{(i)}$, $i=1,2,3$, are conformal Killing vectorfields of Euclidean space. In Lemma \ref{LEMgeometricIDENTITIES}, see \eqref{EQREMdecompositionZIapp}, we show that $Z^{(i)}$, $i=1,2,3$, can be expressed in terms of spherical harmonics as follows, with $(m_1,m_2,m_3) := (1,-1,0)$,
\begin{align*} 
\begin{aligned} 
Z^{(i)} = - \vert x \vert^3 \sqrt{\frac{8\pi}{3}} E^{(1m_i)} - \vert x \vert^2 \lrpar{\sqrt{\frac{4\pi}{3}}Y^{(1m_i)}} \pr_r.
\end{aligned} 
\end{align*}

\end{enumerate}

\ni Based on Definition \ref{DEFADMquantities} and Theorem \ref{THMaltDEFinvariants} we introduce the following \emph{local integrals}. Their relations to the charges $(\mathbf{E}, \mathbf{P}, \mathbf{L}, \mathbf{G})$ of Definition \ref{DEFlocalCharges} is studied in Sections \ref{SECcomparisonENERGY}, \ref{SECcomparisonLINEAR}, \ref{SECcomparisonANGULAR} and \ref{SECcomparisonCENTER}.

\begin{definition}[Local integrals] \label{DEFlocalADMcharges} Let $(\Si,g,k)$ be asymptotically flat spacelike initial data such that $\mathbf{E}_{\mathrm{ADM}}>0$, and let $(x^1,x^2,x^3)$ be corresponding Cartesian coordinates near spacelike infinity. For real numbers $r\geq1$ sufficiently large and $i=1,2,3$, define
\begin{align} 
\begin{aligned}  
{\mathbf{E}}^{\mathrm{loc}}_{\mathrm{ADM}}(S_r,g,k) :=& -\frac{1}{8\pi} \int\limits_{S_{r}} \lrpar{\RRRic - \half R_{\mathrm{scal}} \, g}(X ,N) d\mu_\gd, \\
\lrpar{\mathbf{P}^{\mathrm{loc}}_{\mathrm{ADM}}}^i(S_r,g,k) :=& \frac{1}{8\pi} \int\limits_{S_{r}}  \lrpar{k_{il}-\tr k \, g_{il}} N^l d\mu_\gd, \\
\lrpar{\mathbf{L}^{\mathrm{loc}}_{\mathrm{ADM}}}^i(S_r,g,k) :=& \frac{1}{8\pi} \int\limits_{S_{r}} (k_{jl}-\tr k \, g_{jl}) \lrpar{Y_{(i)}}^j N^l d\mu_\gd, \\
\lrpar{{\mathbf{C}}^{\mathrm{loc}}_{\mathrm{ADM}}}^i(S_r,g,k):=& \frac{1}{16\pi} \int\limits_{S_{r}} \lrpar{\RRRic - \half R_{\mathrm{scal}} \, g }(Z^{(i)},N) d\mu_\gd.
\end{aligned} \label{EQlocalisedADMcharges}
\end{align}
\end{definition}

\begin{remark} The local integrals ${\mathbf{E}}^{\mathrm{loc}}_{\mathrm{ADM}}$ and ${\mathbf{C}}^{\mathrm{loc}}_{\mathrm{ADM}}$ are defined following Theorem \ref{THMaltDEFinvariants}. This is advantageous for us in two aspects: First, by their geometric nature, ${\mathbf{E}}^{\mathrm{loc}}_{\mathrm{ADM}}$ and ${\mathbf{C}}^{\mathrm{loc}}_{\mathrm{ADM}}$ are more natural to relate to the charges $(\mathbf{E}, \mathbf{P}, \mathbf{L}, \mathbf{G})$, see Sections \ref{SECcomparisonENERGY} and \ref{SECcomparisonCENTER}. Second, by the twice-contracted Bianchi identity
\begin{align*} 
\begin{aligned} 
\Div \lrpar{\RRRic - \half R_{\mathrm{scal}} \, g } =0,
\end{aligned} 
\end{align*}
the local integrals ${\mathbf{E}}_{\mathrm{ADM}}^{\mathrm{loc}}$ and ${\mathbf{C}}_{\mathrm{ADM}}^{\mathrm{loc}}$ are highly susceptible to the application of Stokes' theorem. This is used in Section \ref{SECappKerrFamilyDetails} to estimate the convergence of ${\mathbf{E}}_{\mathrm{ADM}}^{\mathrm{loc}}$ and ${\mathbf{C}}_{\mathrm{ADM}}^{\mathrm{loc}}$ towards ${\mathbf{E}}_{\mathrm{ADM}}$ and ${\mathbf{C}}_{\mathrm{ADM}}$ in Kerr reference spacelike initial data.
\end{remark}

\ni The following lemma analyses the convergence rates of $\mathbf{E}_{\mathrm{ADM}}^{\mathrm{loc}}$ and $\mathbf{P}_{\mathrm{ADM}}^{\mathrm{loc}}$ for strongly asymptotically flat spacelike initial data. It is applied in Section \ref{SECappProofPropositionSPHERESPACE}.

\begin{lemma}[Convergence rates for $\mathbf{E}_{\mathrm{ADM}}^{\mathrm{loc}}$ and $\mathbf{P}_{\mathrm{ADM}}^{\mathrm{loc}}$ for strongly asymptotically flat initial data] \label{LEMdecayPADMlocal} Let $(\Si,g,k)$ be strongly asymptotically flat initial data. Then it holds that
\begin{align*} 
\begin{aligned} 
\mathbf{E}^{\mathrm{loc}}_{\mathrm{ADM}}(S_r,g,k) = \mathbf{E}_{\mathrm{ADM}} +\OO(r^{-1}), \,\, \mathbf{P}^{\mathrm{loc}}_{\mathrm{ADM}}(S_r,g,k) = \OO(r^{-3/2}).
\end{aligned} 
\end{align*}
\end{lemma}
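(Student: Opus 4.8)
The plan is to compute the local ADM integrals $\mathbf{E}^{\mathrm{loc}}_{\mathrm{ADM}}(S_r,g,k)$ and $\mathbf{P}^{\mathrm{loc}}_{\mathrm{ADM}}(S_r,g,k)$ directly from the strong asymptotic flatness expansions of Definition \ref{DEFasymptoticFLATNESS} and track the rate at which the finite-$r$ surface integrals converge to their limits. Recall that under strong asymptotic flatness, in the distinguished coordinates $(x^1,x^2,x^3)$ near spacelike infinity, $g_{ij}(x) = (1+2M/|x|)e_{ij} + \smallO(|x|^{-3/2})$ and $k_{ij}(x) = \smallO(|x|^{-5/2})$, with analogous decay for derivatives (so $\pr g = -2M x^i/|x|^3 \, e_{ij} + \smallO(|x|^{-5/2})$, say, and $\pr k = \smallO(|x|^{-7/2})$). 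The strategy for each invariant is: (i) split the integrand into its leading Schwarzschild-like piece plus a remainder governed by the $\smallO$-terms; (ii) integrate the leading piece exactly (or near-exactly) over $S_r$ to extract the invariant plus an $\OO(r^{-1})$ discrepancy coming from the subleading Schwarzschild expansion $(1+M/2\tilde r)^4 = 1 + 2M/\tilde r + \OO(\tilde r^{-2})$; (iii) bound the remainder integral by (area of $S_r$) $\times$ (pointwise bound on the remainder integrand) $= \OO(r^2) \cdot \smallO(r^{-5/2}) = \smallO(r^{-1/2})$ — and then sharpen.

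For $\mathbf{P}^{\mathrm{loc}}_{\mathrm{ADM}}$ the argument is the cleaner of the two: since $k_{ij} = \smallO(|x|^{-5/2})$ and $g_{ij} = e_{ij} + \OO(|x|^{-1})$, the integrand $(k_{il} - \tr k \, g_{il})N^l$ is pointwise $\smallO(r^{-5/2})$ on $S_r$, so naive integration gives only $\smallO(r^{-1/2})$. To reach $\OO(r^{-3/2})$ one must use the momentum constraint $\Div k = d(\tr k)$ from \eqref{EQspacelikeConstraints1}: the form $(k_{il}-\tr k\, g_{il})$ is (to leading order) divergence-free, so $\mathbf{P}^{\mathrm{loc}}_{\mathrm{ADM}}(S_r)$ is $r$-independent up to the error caused by the discrepancy between $\nab$ and the Euclidean $\pr$ and by the constraint's nonlinear terms; an application of Stokes' theorem on the annular region between $S_r$ and $S_{r'}$, $r'>r$, bounds $|\mathbf{P}^{\mathrm{loc}}_{\mathrm{ADM}}(S_r) - \mathbf{P}^{\mathrm{loc}}_{\mathrm{ADM}}(S_{r'})|$ by a volume integral of $\Div\!\big(k - \tr k\, g\big)$, which under the strong AF decay is $\OO(r^{-3/2})$ after integration; letting $r'\to\infty$ and using $\mathbf{P}_{\mathrm{ADM}}=0$ yields $\mathbf{P}^{\mathrm{loc}}_{\mathrm{ADM}}(S_r) = \OO(r^{-3/2})$. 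For $\mathbf{E}^{\mathrm{loc}}_{\mathrm{ADM}}$ I would use the geometric expression from Theorem \ref{THMaltDEFinvariants} in terms of $(\RRRic - \tfrac12 R_{\mathrm{scal}}\, g)(X,N)$; the twice-contracted Bianchi identity $\Div(\RRRic - \tfrac12 R_{\mathrm{scal}}\,g)=0$ (noted in the remark after Definition \ref{DEFlocalADMcharges}) makes this integral similarly amenable to Stokes between $S_r$ and $S_{r'}$, and one checks that the Einstein tensor contracted with the conformal Killing field $X$ has divergence controlled by $\smallO(|x|^{-7/2})$-type terms, integrating to $\OO(r^{-1})$; alternatively a direct computation from the expansion $g = (1+M/2|x|)^4 e + \smallO(|x|^{-3/2})$ shows the $\OO(|x|^{-2})$ correction in $(1+M/2|x|)^4$ produces exactly the $\OO(r^{-1})$ term while the $\smallO(|x|^{-3/2})$ perturbation, after using that its contribution to the surface integral is again essentially divergence-free (being a linearized Einstein-tensor term against a Killing field of the flat background), contributes $\smallO(r^{-1})$ or better.

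I expect the main obstacle to be the sharpening for $\mathbf{P}^{\mathrm{loc}}_{\mathrm{ADM}}$: a crude ``area times pointwise size'' bound only gives $\smallO(r^{-1/2})$, which is far short of the claimed $\OO(r^{-3/2})$, so one genuinely needs the momentum constraint plus a careful accounting of (a) the difference between the covariant divergence $\nab^j$ and the coordinate divergence $\pr_j$ — which introduces Christoffel symbols of size $\OO(|x|^{-2})$ times $k = \smallO(|x|^{-5/2})$, hence $\smallO(|x|^{-9/2})$, integrating over the annulus to $\smallO(r^{-3/2})$ — and (b) the quadratic terms $k\cdot k$ and $(\tr k)^2$ implicit in relating $\Div(k - \tr k\, g)$ to zero via the constraints, which are $\smallO(|x|^{-5})$ and hence negligible. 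Making these error bookkeeping steps rigorous, and in particular confirming that every remainder genuinely integrates to the stated $\OO$ (not just $\smallO$ of a slightly worse power), while also verifying that the ``analogous conditions on successive derivatives as needed'' in Definition \ref{DEFasymptoticFLATNESS} are exactly the ones invoked, is where the real care lies; the $\mathbf{E}^{\mathrm{loc}}_{\mathrm{ADM}}$ half is comparatively routine once one commits to the Bianchi-identity/Stokes viewpoint.
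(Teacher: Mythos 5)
Your proposal is correct and takes essentially the same route as the paper's proof: for $\mathbf{P}^{\mathrm{loc}}_{\mathrm{ADM}}$, apply Stokes on $\RRR^3\setminus B_r$, kill the constraint piece $(\Div k - d\tr k)(\pr_i)=0$, and bound the remaining Christoffel term $(k^{nm}-\tr k\,g^{nm})\nab_n(\pr_i)_m$ by $|k|\cdot|\pr g|\les |x|^{-5/2}|x|^{-2}$ integrated over the exterior region; for $\mathbf{E}^{\mathrm{loc}}_{\mathrm{ADM}}$, use the Einstein-tensor expression from Theorem \ref{THMaltDEFinvariants}, the twice-contracted Bianchi identity, and the conformal Killing property of $X$. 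The one place where you expressed doubt — whether the Stokes/constraint argument genuinely recovers $\OO(r^{-3/2})$ rather than the crude $\smallO(r^{-1/2})$ — is exactly the point resolved by the volume-integral bookkeeping you sketched, and your accounting in fact gives $\smallO(r^{-3/2})$, slightly stronger than the paper's stated $\OO(r^{-3/2})$.
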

\begin{proof}[Proof of Lemma \ref{LEMdecayPADMlocal}] Consider first $\mathbf{P}_{\mathrm{ADM}}^{\mathrm{loc}}$. By Definitions \ref{DEFADMquantities} and \ref{DEFlocalADMcharges}, Stokes' theorem and the spacelike constraint equations \eqref{EQspacelikeConstraints1}, that is, $\Div \, k = d \tr k$, we have that for strongly asymptotically flat spacelike initial data,
\begin{align} 
\begin{aligned} 
\mathbf{P}^{\mathrm{loc}}_{\mathrm{ADM}}(S_r,g,k) :=& \frac{1}{8\pi} \int\limits_{S_r} \lrpar{k -\tr k \, g}(\pr_i,N) d\mu_\gd \\
=& \underbrace{\mathbf{P}_{\mathrm{ADM}}}_{=0} - \int\limits_{\RRR^3 \setminus B_r} \Div \lrpar{k_{i \cdot} - \tr k \, g_{i \cdot }} \\
=& -\int\limits_{\RRR^3 \setminus B_r} \underbrace{\lrpar{\Div k - d \tr k}}_{=0}(\pr_i) +( k^{nm} -\tr\, k \, g^{nm})\nab_n \lrpar{\pr_i}_m.
\end{aligned} \label{EQPdecayStronglyAFdata}
\end{align} 
By strong asymptotic flatness we have that
\begin{align*} 
\begin{aligned} 
\vert k(x) \vert \cdot \vert \pr g(x) \vert \les \vert x\vert^{-5/2} \cdot \vert x \vert^{-2}.
\end{aligned} 
\end{align*}
Hence we can bound the right-hand side of \eqref{EQPdecayStronglyAFdata} by
\begin{align*} 
\begin{aligned} 
\left\vert \, \int\limits_{\RRR^3 \setminus B_r} ( k^{nm} -\tr\, k \, g^{nm}) \nab_n \lrpar{\pr_i}_m \right\vert = \OO\lrpar{ r^{-3/2} }.
\end{aligned} 
\end{align*}
\ni The proof of the convergence rate of $\mathbf{E}_{\mathrm{ADM}}^{\mathrm{loc}}$ is similar and omitted, see also \eqref{EQenergyEstimateKERRreferenceEconv}. This finishes the proof of Lemma \ref{LEMdecayPADMlocal}. \end{proof}

\subsection{Foliation geometry in spacelike initial data} \label{SECfoliationGeometry} 

\ni In this section we set up notation for the geometry of foliations of spacelike initial data by $2$-spheres. Let $(\Si,g,k)$ be strongly asymptotically flat spacelike initial data, and let $(x^1,x^2,x^3)$ be corresponding Cartesian coordinates near spacelike infinity. Denote by $(r,\th^1,\th^2)$ the associated spherical coordinates, see \eqref{EQdefinSPHERICALAFcoord}. We have the following notation.

\begin{itemize} 

\item Let $S_r$ denote the level sets of $r$, and let $\gd$ and $\Nd$ denote the induced metric and covariant derivative. Let $K$ denote the Gauss curvature of $\gd$.
\item Let $N$ denote the outward pointing unit normal to $S_r$. The second fundamental form $\Th$ of $S_r$ is defined by
\begin{align*} 
\begin{aligned} 
\Th_{AB} := \D_A N_B,
\end{aligned} 
\end{align*}
and is composed into trace and tracefree part as follows,
\begin{align*} 
\begin{aligned} 
\tr\Th:= \gd^{AB}\Th_{AB}, \,\, \widehat{\Th}_{AB} := \Th_{AB} - \half \tr\Th \gd_{AB}.
\end{aligned} 
\end{align*}
\item Let $(e_A)_{A=1,2}$ denote a local orthonormal frame on $S_r$. We decompose the symmetric $2$-tensor $k$ into the $S_r$-tangent tensors
\begin{align} 
\begin{aligned} 
k_{NN}, \,\, \hspace{2mm}  \kdN_A := k_{NA}, \,\, \kd_{AB} := k_{AB}.
\end{aligned} \label{EQdefKprojection8889}
\end{align}
\end{itemize}
The Gauss-Codazzi equations of $S_r \subset \Si$ are (see, for example, Section 3.1 in \cite{ChrKl93})
\begin{align} 
\begin{aligned} 
\RRRic_{AN} =& \Divd \widehat{\Th}_A - \half \di \tr\th_A, \\
\RRRic_{NN} - \half R_{\mathrm{scal}} =& - K + \frac{1}{4} (\tr\Th)^2 + \half \vert \widehat{\Th} \vert^2,
\end{aligned} \label{EQchrkl123}
\end{align}
where $\di$ denotes the exterior derivative on $S_r$, and for a symmetric $2$-tenors $V$ on $S_r$,
\begin{align*} 
\begin{aligned} 
(\Divd V)_A := \Nd^{B}V_{BA}, \,\, \vert V \vert^2:= \gd^{AB}\gd^{CD} V_{AC}V_{BD}.
\end{aligned} 
\end{align*}

\begin{remark} For Schwarzschild reference spacelike initial data in Schwarzschild coordinates $(r,\th,\phi)$, see \eqref{EQssInitialSpacelike1}, we have that
\begin{align} 
\begin{aligned} 
\gd=&r^2 \gac, & \widehat{\Th}=&0, & \tr \Th =& \frac{2}{r}\lrpar{1-\frac{2M}{r}}^{1/2}, \\
k_{NN}=& 0, & \kdN_A =&0, & \kd_{AB} =& 0, \\
\RRRic_{NN}=& -\frac{2M}{r^3}, & \RRRic_{NA} =& 0, & \RRRic_{AB} =& \frac{M}{r^3} r^2 \gac_{AB}.
\end{aligned} \label{EQSchwarzschildSphericalSScoordinates}
\end{align}
\end{remark}

\subsection{Scaling of spacelike initial data and local norms} \label{SECspacelikescaling} In this section we define, analogous to Section \ref{SECdefinitionScaling}, the scaling of spacelike initial data, and introduce local norms.

Let $(\Si,g,k)$ be an asymptotically flat spacelike initial data set and let $(x^1,x^2,x^3)$ denote associated coordinates near spacelike infinity. We define the scaling of $(g,k)$ in two steps.  
\begin{enumerate}
\item For a real number $R\geq1$, define new coordinates $(y^1,y^2, y^3)$ by 
\begin{align} 
\begin{aligned} 
\Psi_R(y^1,y^2,y^3) := (R\cdot y^1, R \cdot y^2, R\cdot y^3) = (x^1, x^2, x^3).
\end{aligned} \label{EQdefcoordinatechangeSPACELIKE8889}
\end{align}

\item Based on the conformal scaling of spacetime metrics ${}^{(R)}\g:= R^{-2} \g$ (see Section \ref{SECdefinitionScaling}) and that $k$ is the second fundamental form of $\Si$, we define $({}^{(R)} g,{}^{(R)} k)$ by
\begin{align} 
\begin{aligned} 
{}^{(R)} g := R^{-2} g, \,\,{}^{(R)}k := R^{-1}\, k.
\end{aligned} \label{EQdefconformalchangeSPACELIKE8889}
\end{align}
\noindent By construction, $({}^{(R)} g,{}^{(R)} k)$ solve the spacelike constraint equations \eqref{EQspacelikeConstraints1}.
\end{enumerate}

By \eqref{EQdefcoordinatechangeSPACELIKE8889} and \eqref{EQdefconformalchangeSPACELIKE8889}, for all integers $l\geq0$, we have the relations
\begin{align} 
\begin{aligned} 
\pr_y^{l} \lrpar{{}^{(R)}g_{ij}}= R^l \lrpar{\pr_x^l g_{ij}} \circ \Psi_R, \,\, \pr_y^{l} \lrpar{{}^{(R)}k_{ij}} = R^{l+1} \lrpar{\pr_x^l k_{ij}} \circ \Psi_R,
\end{aligned} \label{EQscalingderivatives}
\end{align}
where we denote
\begin{align*} 
\begin{aligned} 
{}^{(R)} g_{ij} := {}^{(R)} g(\pr_{y^i},\pr_{y^j}), \,\, {}^{(R)} k_{ij} := {}^{(R)} k(\pr_{y^i},\pr_{y^j}).
\end{aligned} 
\end{align*}

\ni \emph{Remarks on the scaling of spacelike initial data.}
\begin{enumerate} 
\item Analogous to Lemma \ref{LEMrescaleLOCALCHARGES}, we deduce from \eqref{EQscalingderivatives} that the charges scale as follows. The proof is omitted.
\begin{align} 
\begin{aligned} 
\mathbf{E}_{\mathrm{ADM}}^{\mathrm{loc}}\lrpar{S_{r_0},{}^{(R)}g,{}^{(R)}k} =& R^{-1} \cdot \mathbf{E}_{\mathrm{ADM}}^{\mathrm{loc}}\lrpar{S_{R\cdot r_0},g,k}, \\
\mathbf{P}_{\mathrm{ADM}}^{\mathrm{loc}}\lrpar{S_{r_0},{}^{(R)}g,{}^{(R)}k} =& R^{-1} \cdot \mathbf{P}_{\mathrm{ADM}}^{\mathrm{loc}}\lrpar{S_{R\cdot r_0},g,k}, \\
\mathbf{L}_{\mathrm{ADM}}^{\mathrm{loc}}\lrpar{S_{r_0},{}^{(R)}g,{}^{(R)}k} =& R^{-2} \cdot \mathbf{L}_{\mathrm{ADM}}^{\mathrm{loc}}\lrpar{S_{R\cdot r_0},g,k}, \\
\mathbf{C}_{\mathrm{ADM}}^{\mathrm{loc}}\lrpar{S_{r_0},{}^{(R)}g,{}^{(R)}k} =& R^{-2} \cdot \mathbf{C}_{\mathrm{ADM}}^{\mathrm{loc}}\lrpar{S_{R\cdot r_0},g,k}.
\end{aligned} \label{EQLEMscalingADMlocal}
\end{align}

\item Applying the above scaling to Schwarzschild reference spacelike initial data, see \eqref{EQssInitialSpacelike1}, it holds that
\begin{align} 
\begin{aligned} 
{}^{(R)}g^M_{ij} = g^{M/R}_{ij}, \,\, {}^{(R)}\tilde{g}^M_{ij} = \tilde{g}^{M/R}_{ij}
\end{aligned} \label{EQschwarzschildSPACELIKEscaling}
\end{align}

\item By \eqref{EQscalingderivatives}, the property of strong asymptotic flatness is conserved under rescaling. 

\end{enumerate}

\ni We now turn to the introduction of local norms for spacelike initial data. For ease of presentation we use $C^k$-spaces.

\begin{definition}[Norms for tenors] Let $K\subset \RRR^3$ denote a compact set with smooth boundary, and let $T$ be an $j$-tensor on $K$. For integers $k\geq0$ define
\begin{align*} 
\begin{aligned} 
\Vert T \Vert_{C^k(K)} := \sum\limits_{1\leq i_1, \cdots i_j \leq 3}\sum\limits_{0\leq \vert \a \vert \leq k} \Vert \pr^\a T_{i_1\cdots i_j} \Vert_{L^{\infty}(K)},
\end{aligned} 
\end{align*}
where $\a=(\a_1,\a_2,\a_3) \in \mathbb{N}^3$, $\pr^\a = \pr_1^{\a_1}\pr_2^{\a_2}\pr_3^{\a_3}$ and $T_{i_1 \cdots i_l}$ denotes the Cartesian coordinate components of $T$. Define $C^k(K)$ to be the space of $k$-times continuously differentiable tensors $T$ on $K$ with
\begin{align} 
\begin{aligned} 
\Vert T  \Vert_{C^k(K)} < \infty. 
\end{aligned} \label{EQfiniteCKnormTspacelike}
\end{align}

\ni Moreover, let $C^k_{\mathrm{loc}}(\RRR^3 \setminus B(0,1))$ be the space of $k$-times continuously differentiable tensors $T$ on $\RRR^3 \setminus B(0,1)$ such that \eqref{EQfiniteCKnormTspacelike} holds for each compact subset $K \subset \RRR^3 \setminus B(0,1)$.
\end{definition}

\ni \textbf{Notation.} For two real numbers $0<r_1<r_2$, define the coordinate annulus $A_{[r_1,r_2]}$ by
\begin{align*} 
\begin{aligned} 
A_{[r_1,r_2]} := \left\{ x \in \RRR^3: r_1 \leq \vert x \vert \leq r_2 \right\}.
\end{aligned} 
\end{align*}

\begin{definition}[Local norm for spacelike initial data] \label{DEFnormSpacelikeInitialData} Let $0<r_1<r_2$ be two real numbers, and let $k\geq1$ be an integer. For spacelike initial data $(g,k)$ on $A_{[r_1,r_2]}$ we define
\begin{align*} 
\begin{aligned} 
\left\Vert \lrpar{g,k} \right\Vert_{C^{k}(A_{[r_1,r_2]}) \times C^{k-1}(A_{[r_1,r_2]})} :=
\Vert g \Vert_{C^{k}(A_{[r_1,r_2]})} + \Vert k \Vert_{C^{k-1}(A_{[r_1,r_2]})}.
\end{aligned} 
\end{align*}
\end{definition}

\ni \textbf{Notation.} In the following we assume that the metric $g$ is $k_0$-times and the second fundamental form $k$ is $k_0$-times continuously differentiable, where the universal integer $k_0 \geq 8$ is determined in Section \ref{SECstatementConstruction} by the condition that the ingoing null data to be constructed from spacelike initial data is sufficiently regular. \\

\ni By scaling and the definition of strong asymptotic flatness, we have the following estimates for rescaled spacelike initial data.
\begin{lemma}[Smallness of rescaled spacelike initial data] \label{LEMspacelikeRescaling} Let $(\Si,g,k)$ be strongly asymptotically flat spacelike initial data with Cartesian coordinates $(x^1,x^2,x^3)$ near spacelike infinity. For real numbers $R\geq1$ sufficiently large, the rescaled spacelike initial data
$$({}^{(R)}{g}_{ij}, {}^{(R)}{k}_{ij})$$
is well-defined on ${A}_{[1/2,7/2]}$ and
\begin{align} 
\begin{aligned} 
\Vert \lrpar{{}^{(R)}{g} -\tilde{g}^{M/R},  {}^{(R)}{k}} \Vert_{C^{k_0}({A}_{[1/2,7/2]})\times C^{k_0-1}({A}_{[1/2,7/2]})} 
= \smallO(R^{-3/2}),
\end{aligned} \label{EQestimateLEMspacelikeRescaling}
\end{align}
where $M$ is the real number appearing in \eqref{EQAFconditions}.
\end{lemma}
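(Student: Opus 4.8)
The plan is to prove Lemma \ref{LEMspacelikeRescaling} by directly combining the scaling relations \eqref{EQscalingderivatives}, the comparison with rescaled Schwarzschild data \eqref{EQschwarzschildSPACELIKEscaling}, and the decay hypotheses of strong asymptotic flatness \eqref{EQAFconditions}. First I would record that by \eqref{EQdefcoordinatechangeSPACELIKE8889}, a point $y \in A_{[1/2,7/2]}$ corresponds to $x = Ry$ with $\vert x \vert \in [R/2, 7R/2]$, so for $R$ large the rescaled data $({}^{(R)}g, {}^{(R)}k)$ is evaluated only in the asymptotic region where \eqref{EQAFconditions} applies; this establishes well-definedness on $A_{[1/2,7/2]}$. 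Then, using \eqref{EQscalingderivatives} together with \eqref{EQschwarzschildSPACELIKEscaling} (so that $\tilde g^{M/R} = {}^{(R)}\tilde g^M$), I would write for each multi-index $\a$ with $\vert \a \vert \le k_0$,
\begin{align*}
\begin{aligned}
\pr_y^\a \lrpar{{}^{(R)}g_{ij} - \tilde g^{M/R}_{ij}}(y) = R^{\vert \a \vert} \lrpar{\pr_x^\a \lrpar{g_{ij} - \tilde g^M_{ij}}}(Ry),
\end{aligned}
\end{align*}
and similarly $\pr_y^\a \lrpar{{}^{(R)}k_{ij}}(y) = R^{\vert \a\vert + 1} \lrpar{\pr_x^\a k_{ij}}(Ry)$.

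The heart of the matter is then to convert the pointwise decay of $g - \tilde g^M$ and of $k$ into the claimed $\smallO(R^{-3/2})$ bound after multiplication by the powers of $R$. By Definition \ref{DEFasymptoticFLATNESS}, $g_{ij}(x) = \lrpar{1 + \tfrac{2M}{\vert x\vert}} e_{ij} + \smallO(\vert x\vert^{-3/2})$; since the isotropic-coordinate Schwarzschild metric satisfies $\tilde g^M_{ij}(x) = \lrpar{1 + \tfrac{2M}{\vert x\vert}} e_{ij} + \OO(\vert x\vert^{-2})$ by the expansion \eqref{EQSSasympExpansion}, we get $g_{ij}(x) - \tilde g^M_{ij}(x) = \smallO(\vert x\vert^{-3/2})$, and the "analogous conditions on successive derivatives as needed" clause in Definition \ref{DEFasymptoticFLATNESS} gives $\pr_x^\a \lrpar{g_{ij} - \tilde g^M_{ij}}(x) = \smallO(\vert x\vert^{-3/2 - \vert\a\vert})$ for $\vert\a\vert \le k_0$; likewise $\pr_x^\a k_{ij}(x) = \smallO(\vert x\vert^{-5/2 - \vert\a\vert})$. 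Evaluating at $x = Ry$ with $\vert x \vert \gtrsim R$ and multiplying by $R^{\vert\a\vert}$ (resp. $R^{\vert\a\vert+1}$) gives
\begin{align*}
\begin{aligned}
R^{\vert\a\vert}\left\vert \pr_x^\a\lrpar{g_{ij}-\tilde g^M_{ij}}(Ry)\right\vert \les R^{\vert\a\vert} \cdot \smallO\lrpar{(R/2)^{-3/2-\vert\a\vert}} = \smallO(R^{-3/2}),
\end{aligned}
\end{align*}
and similarly $R^{\vert\a\vert+1}\vert\pr_x^\a k_{ij}(Ry)\vert \les R^{\vert\a\vert+1}\cdot\smallO\lrpar{(R/2)^{-5/2-\vert\a\vert}} = \smallO(R^{-3/2})$, uniformly in $y \in A_{[1/2,7/2]}$. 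Taking the supremum over $y$ and summing over $\vert\a\vert \le k_0$ (resp. $\le k_0 - 1$) and over the finitely many index pairs $(i,j)$ yields \eqref{EQestimateLEMspacelikeRescaling}.

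The main obstacle — really the only delicate bookkeeping point — is to make sure the little-$o$ decay is genuinely uniform in the relevant range of $\vert x \vert$ after rescaling: the statement $f(x) = \smallO(\vert x\vert^{-p})$ means $\sup_{\vert x\vert = s}\vert f(x)\vert / s^{-p} \to 0$ as $s \to \infty$, so evaluating on the annulus $\vert x\vert \in [R/2, 7R/2]$ and dividing by $R^{-3/2}$ requires absorbing the bounded ratio $\vert x\vert / R \in [1/2, 7/2]$ into the constant, which is harmless; one should simply note that $\smallO(\vert x\vert^{-3/2-\vert\a\vert}) \cdot R^{\vert\a\vert} = \smallO(R^{-3/2})$ precisely because $\vert\a\vert \le k_0$ is bounded and $\vert x \vert \asymp R$. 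I would also remark explicitly that the comparison to isotropic Schwarzschild (rather than Schwarzschild coordinates) is exactly what makes strong asymptotic flatness compatible with the $M/R$-Schwarzschild reference, via \eqref{EQSSasympExpansion} and \eqref{EQschwarzschildSPACELIKEscaling}; no further input is needed, and the proof is otherwise a routine application of the chain rule under the scaling $\Psi_R$.
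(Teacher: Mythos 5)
Your proof is correct and takes essentially the same route as the paper's: rewrite the strong asymptotic flatness condition $g_{ij} = (1+\tfrac{2M}{r})e_{ij} + \smallO(r^{-3/2})$ as $g_{ij} - \tilde g^M_{ij} = \smallO(r^{-3/2})$ via the isotropic expansion \eqref{EQSSasympExpansion}, then apply the scaling relations \eqref{EQscalingderivatives} and \eqref{EQschwarzschildSPACELIKEscaling} to transfer the decay to the rescaled annulus. The paper's proof is terser and leaves the uniformity bookkeeping implicit, but the content matches yours step for step.
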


\begin{proof}[Proof of Lemma \ref{LEMspacelikeRescaling}] The strong asymptotic flatness condition \eqref{EQAFconditions} on $g$ can be rewritten as follows, see also \eqref{EQSSasympExpansion},
\begin{align*} 
\begin{aligned} 
g_{ij}= \lrpar{1+\frac{2M}{r}} e_{ij} +\smallO(r^{-3/2}) = \lrpar{1+\frac{M}{2r}}^4 e_{ij} + \smallO(r^{-3/2}),
\end{aligned} 
\end{align*}
which yields by definition of the Schwarzschild metric $\tilde{g}^M_{ij}$ in isotropic coordinates that
\begin{align} 
\begin{aligned} 
g_{ij} - \tilde{g}^M_{ij} = \smallO(r^{-3/2}).
\end{aligned} \label{EQscalingsmallnessFAR899}
\end{align}

\ni For real numbers $R\geq1$ sufficiently large, the rescaled spacelike initial data $({}^{(R)}g,{}^{(R)}k)$ is well-defined on the annulus $A_{[1/2, 7/2]}$. Subsequently, \eqref{EQestimateLEMspacelikeRescaling} follows from \eqref{EQscalingsmallnessFAR899} by \eqref{EQscalingderivatives} and the scaling \eqref{EQschwarzschildSPACELIKEscaling}. The estimate for ${}^{(R)}k$ is similar. This finishes the proof of Lemma \ref{LEMspacelikeRescaling}. \end{proof}

\ni Moreover, we note the following lemma. Its proof follows from \eqref{EQssInitialSpacelike1} and is omitted.
\begin{lemma}[Estimates for Schwarzschild reference metric] \label{LEMschwarzschildtrivialEstimate} For real numbers $M\geq0$ sufficiently small, it holds that
\begin{align*} 
\begin{aligned} 
\Vert g^M - e \Vert_{C^{k_0}({A}_{[1/2,7/2]})} \les M.
\end{aligned} 
\end{align*}

\end{lemma}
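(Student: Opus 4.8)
The plan is to compute $g^M$ explicitly in Cartesian coordinates and Taylor-expand in $M$. By \eqref{EQssInitialSpacelike1} the Schwarzschild reference spatial metric in spherical coordinates $(r,\th^1,\th^2)$ is
\begin{align*}
g = \lrpar{1-\frac{2M}{r}}^{-1} dr^2 + r^2\lrpar{(d\th^1)^2 + \sin^2\th^1 (d\th^2)^2},
\end{align*}
so that, passing to the associated Cartesian coordinates $(x^1,x^2,x^3)$ via \eqref{EQdefinSPHERICALAFcoord} and using $dr = \tfrac{x^i}{r}\,dx^i$, one obtains
\begin{align*}
g^M_{ij}(x) = e_{ij} + \lrpar{\lrpar{1-\frac{2M}{\vert x\vert}}^{-1} - 1}\frac{x^i x^j}{\vert x\vert^2} = e_{ij} + \frac{2M/\vert x\vert}{1-2M/\vert x\vert}\cdot\frac{x^i x^j}{\vert x\vert^2}.
\end{align*}

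First I would fix $M_0>0$ with $2M_0<1/2$, so that for $0\leq M\leq M_0$ the function above is well-defined and smooth on the annulus $A_{[1/2,7/2]}$, on which $\vert x\vert\geq 1/2$. The difference $g^M_{ij}-e_{ij}$ is then the product of the $C^\infty$ function $x\mapsto x^i x^j/\vert x\vert^2$, whose $C^{k_0}(A_{[1/2,7/2]})$ norm is bounded by a universal constant since $\vert x\vert$ is bounded away from $0$ there, and of the function $h_M(x):=\tfrac{2M/\vert x\vert}{1-2M/\vert x\vert}$. By the Leibniz rule it suffices to show $\Vert h_M\Vert_{C^{k_0}(A_{[1/2,7/2]})}\les M$.

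To verify this, note that $(M,x)\mapsto h_M(x)$ is $C^\infty$ on the compact set $[0,M_0]\times A_{[1/2,7/2]}$ and vanishes identically for $M=0$; hence by the fundamental theorem of calculus in the $M$-variable, $h_M(x) = M\int_0^1 (\pr_M h)(tM,x)\,dt$, and each $x$-derivative up to order $k_0$ of $\pr_M h$ is bounded uniformly on $[0,M_0]\times A_{[1/2,7/2]}$, which gives $\Vert h_M\Vert_{C^{k_0}(A_{[1/2,7/2]})}\les M$. Combining the two bounds finishes the proof. There is no genuine obstacle here; the only point requiring a small amount of care is that the estimate is stated in Cartesian coordinates, which is handled by the smoothness of the coordinate change together with the lower bound $\vert x\vert\geq 1/2$ on the annulus.
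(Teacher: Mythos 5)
Your proof is correct and is essentially the argument the paper has in mind: the paper simply remarks that the lemma ``follows from \eqref{EQssInitialSpacelike1} and is omitted,'' and your computation—writing $g^M_{ij} - e_{ij} = \frac{2M/\vert x\vert}{1-2M/\vert x\vert}\,\frac{x^i x^j}{\vert x\vert^2}$ from \eqref{EQssInitialSpacelike1} and bounding the scalar factor in $C^{k_0}(A_{[1/2,7/2]})$ by $O(M)$ using smoothness on $[0,M_0]\times A_{[1/2,7/2]}$ together with vanishing at $M=0$—is exactly the omitted verification.
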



\section{Construction of sphere data from spacelike initial data} \label{SECstatementConstruction} \ni In this section we construct families of ingoing null data from spacelike initial data. The following theorem is the main result of this section.

\begin{theorem}[Construction of ingoing null data from spacelike initial data] \label{PROPconstructionStatement} Let $(\Si,g,k)$ be strongly asymptotically flat spacelike initial data with asymptotic invariants
\begin{align*} 
\begin{aligned} 
(\mathbf{E}_{\mathrm{ADM}}, \mathbf{P}_{\mathrm{ADM}}, \mathbf{L}_{\mathrm{ADM}}, \mathbf{C}_{\mathrm{ADM}}),
\end{aligned} 
\end{align*}
where $\mathbf{P}_{\mathrm{ADM}}=0$ by the strong asymptotic flatness. There is a real number $\de>0$ and a strongly asymptotically flat family of ingoing data $(x_{-R+R\cdot [-\de,\de],R})$, constructed on spheres in $\Si$, such that for $m=-1,0,1$ and $(i_{-1},i_0,i_1)=(2,3,1)$,
\begin{align*} 
\begin{aligned} 
\mathbf{E}(x_{-R,R}) =& \mathbf{E}_{\mathrm{ADM}} + \OO(R^{-1}), & \mathbf{P}^m(x_{-R,R}) =& \lrpar{\mathbf{P}_{\mathrm{ADM}}}^{i_m} + \OO(R^{-3/2}), \\
\mathbf{L}^m(x_{-R,R}) =& \lrpar{\mathbf{L}_{\mathrm{ADM}}}^{i_m}+ \smallO(1), & \mathbf{G}^m(x_{-R,R}) =& \lrpar{\mathbf{C}_{\mathrm{ADM}}}^{i_m} + \smallO(1).
\end{aligned} 
\end{align*}
Moreover, if the spacelike initial data is smooth, then the constructed ingoing null data is smooth, along with all higher-order derivatives in all directions.
\end{theorem}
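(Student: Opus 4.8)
\textbf{Proof plan for Theorem \ref{PROPconstructionStatement}.}

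The plan is to explicitly build the ingoing null hypersurface $\HHb$ emanating toward the interior from a family of coordinate spheres $S_R \subset \Si$, define sphere data on each cross-section by the standard dictionary relating the spacelike foliation geometry to a double null frame, and then carefully track the decay of every component and of the four charges. First I would fix, near spacelike infinity, the coordinate spheres $S_r = \{|x|=r\}$ with their induced metric $\gd$, outward unit normal $N$ (inside $\Si$), future timelike unit normal $T$, second fundamental form $\Th$ of $S_r \subset \Si$, and the decomposition $(k_{NN}, \kdN, \kd)$ of the second fundamental form $k$ of $\Si$, as in Section \ref{SECfoliationGeometry}. From $T$ and $N$ one forms the null pair $\widehat L = \tfrac{1}{\sqrt 2}(T+N)$, $\widehat\Lb = \tfrac{1}{\sqrt 2}(T-N)$ (up to a gauge normalization for $\Om$); propagating $\widehat\Lb$ geodesically inward generates $\HHb$, and the cross-sections $S_{u,v}$ acquire induced sphere data. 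The Ricci coefficients $\trchi,\chih,\trchib,\chibh,\eta,\etab,\om,\omb$ and the null curvature components $\a,\be,\rh,\si,\beb,\ab$ on $S_R$ are then algebraically expressible through $\Th$, $k_{NN}$, $\kdN$, $\kd$, the lapse and its $N$-derivatives, and the spacetime curvature, which via the Gauss--Codazzi system \eqref{EQGaussCodazziSpacelike}, \eqref{EQchrkl123} and the electric-magnetic decomposition \eqref{EQdefEHdecompRIC}--\eqref{EQlinearCombination112} is controlled by the intrinsic geometry $(g,k)$ of $\Si$. The key point, following the strategy outlined in Section \ref{SEC9990132}, is to rescale $S_R$ to $S_1$, pass through the isotropic-to-Schwarzschild coordinate change of \eqref{EQcoordinatechangedef1}--\eqref{EQcoordiSSchange}, make the $\widehat L$, $\widehat\Lb$, $\Om$ gauge choices so that Schwarzschild data in isotropic coordinates maps \emph{exactly} to the Eddington--Finkelstein reference data $\mathfrak m^{M/R}$, bound the rescaled quantities using Lemma \ref{LEMspacelikeRescaling} (which gives $\smallO(R^{-3/2})$ smallness) together with Lemma \ref{LEMschwarzschildtrivialEstimate}, and rescale back up using Lemma \ref{LEMNORMFULLscalingR}; this produces the bound $\Vert x_{-R+R\cdot[-\de,\de],R} - \mathfrak m^M \Vert_{\XX^+(\HHb_{-R+R\cdot[-\de,\de],R})} = \smallO(R^{-3/2})$, establishing strong asymptotic flatness of the family.

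Next I would establish the charge identities. For $\mathbf E$ and $\mathbf P$ one uses Definition \ref{DEFlocalCharges}, $\mathbf E \propto (r^3(\rho + r\Divd\be))^{(0)}$ and $\mathbf P^m \propto (r^3(\rho + r\Divd\be))^{(1m)}$; the combination $\rho + r\Divd\be$ is, through the null structure equations and the spacelike Gauss--Codazzi relations, the sphere-integrand of the localized ADM energy and momentum $\mathbf E_{\mathrm{ADM}}^{\mathrm{loc}}(S_r)$ and $\mathbf P_{\mathrm{ADM}}^{\mathrm{loc}}(S_r)$ up to lower-order corrections coming from the difference between the round-metric and induced-metric Hodge projections and from the nonlinear terms — these corrections are $\OO(R^{-1})$ and $\OO(R^{-3/2})$ respectively by the smallness just proved and by Lemma \ref{LEMdecayPADMlocal}. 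For $\mathbf L$ and $\mathbf G$ I would use the alternative expression \eqref{EQalternativeLGdef}, $\mathbf L^m, \mathbf G^m \propto (r^3(\be + \Divd\chih + \chih\cdot(\eta-\di\log\Om)))^{(1m)}_{H,E}$, and match the $l=1$ magnetic and electric parts of $\be$ (plus the $\chih$ correction, which is $\smallO(1)$ after multiplication by $r^3$ given the decay rates) against the integrands of $\mathbf L_{\mathrm{ADM}}^{\mathrm{loc}}$ and $\mathbf C_{\mathrm{ADM}}^{\mathrm{loc}}$ in Definition \ref{DEFlocalADMcharges}; here the rotation fields $Y_{(i)}$ and the conformal fields $Z^{(i)}$ expressed via spherical harmonics (Remark after Theorem \ref{THMaltDEFinvariants}) supply the precise proportionality constants and the index permutation $(i_{-1},i_0,i_1)=(2,3,1)$. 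The smooth-data and all-order-derivative statement follows because each step — the coordinate change, the gauge choice, the geodesic propagation generating $\HHb$, and the algebraic dictionary — preserves smoothness, and the higher-order null constraint equations (Section 2.10 of \cite{ACR1}) then determine all transversal derivatives from the tangential smooth data.

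\textbf{Main obstacle.} I expect the principal difficulty to be the bookkeeping in the charge comparison: the charges in Definition \ref{DEFlocalCharges} are defined via spherical-harmonic projections with respect to the \emph{round} metric $\gac$, whereas the localized ADM integrals involve the \emph{induced} metric $\gd$ and its volume element, and the null quantities $\be$, $\rho$, $\chih$ are built from a null frame whose relation to the $(T,N)$ frame on $\Si$ involves the lapse and shift of the null foliation; disentangling which discrepancy terms are genuinely $\OO(R^{-1})$ versus $\OO(R^{-3/2})$ versus $\smallO(1)$ requires the full strength of the norm hierarchy in Definition \ref{DEFNORMspheredata2} and the second, stronger decay hypothesis $\Vert\beta^{[1]}\Vert_{L^2(S_{-R,R})} = \OO(R^{-3})$ in Definition \ref{DEFadmissibleEXTSequences}, which is exactly what upgrades $\mathbf P^m(x_{-R,R}) = \OO(R^{-3/2})$ from the weaker $\smallO(R^{-1/2})$ of Lemma \ref{LEMBoundednessEinftyPinfty1}. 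A secondary subtlety is constructing the geodesic ingoing congruence so that the cross-sections remain regular spheres for $u \in -R + R\cdot[-\de,\de]$ with $\de$ uniform in $R$ — this is handled by the smallness estimate \eqref{EQestimateLEMspacelikeRescaling} on a \emph{fixed} annulus $A_{[1/2,7/2]}$ after rescaling, so the congruence stays $\smallO(R^{-3/2})$-close to the Schwarzschild ingoing congruence, which is manifestly regular.
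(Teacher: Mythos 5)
Your overall plan is essentially the paper's: rescale $(g,k)$ to a unit annulus, pass from isotropic to Schwarzschild coordinates, impose the gauge $\widehat L = T+N$, $\widehat\Lb = T-N$ with the Schwarzschild choice of $\Om,\om,\omb,D\om,\Du\omb$ on $S_{-1,1}$, read off $\chi,\chib,\zeta,\eta,\a,\ab$ from $(\Th,k)$ via \eqref{EQCHIFULLkrelations1222}, \eqref{EQrelationZETAk}, \eqref{EQalphaScalingEstimatesEQ}, propagate $\Lb'$ geodesically to build $\HHb$, and compare each charge to the corresponding local ADM integral through the Gauss and Gauss--Codazzi equations before rescaling back up. This matches Sections \ref{SECspacelikeRescalingtoSmalldata}--\ref{SECappProofPropositionSPHERESPACE} step by step, and your Fourier-bookkeeping worry (round $\gac$ versus induced $\gd$, and the dictionary between null and $(T,N)$ frames) is exactly where the remainder terms $\RR$ of Section \ref{SECcomparisonLINEAR} and the Gauss-curvature cancellation Lemma \ref{LEMGaussSquareLoneEstimate} live.

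One point in your ``main obstacle'' paragraph is reversed and would bite you if you tried to carry it out. You treat the bound $\Vert\beta^{[1]}(x_{-R,R})\Vert_{L^2(S_{-R,R})}=\OO(R^{-3})$ as a hypothesis you get to \emph{use}, and claim it is ``exactly what upgrades'' $\mathbf P^m$ from $\smallO(R^{-1/2})$ to $\OO(R^{-3/2})$. But that bound is part of Definition \ref{DEFadmissibleEXTSequences}, so it is something the theorem has to \emph{prove} about the constructed family, and in the paper it is deduced \emph{a posteriori}: once the charge comparisons establish that $\mathbf L^m(x_{-R,R})$ and $\mathbf G^m(x_{-R,R})$ converge to finite limits, the scaling of $\mathbf L,\mathbf G$ forces $|\be_H^{(1m)}({}^{(R)}x_{-1,1})|+|\be_E^{(1m)}({}^{(R)}x_{-1,1})|=\OO(R^{-2})$, which after rescaling gives precisely the $\OO(R^{-3})$ bound (see \eqref{EQadditionalSAFcondition1277788}--\eqref{EQcontrolbeta1constructed777}). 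The improved $\mathbf P^m(x_{-R,R})=\OO(R^{-3/2})$ has nothing to do with $\be^{[1]}$: it comes from Lemma \ref{LEMdecayPADMlocal}, where Stokes' theorem and the momentum constraint $\Div k = d(\tr k)$ show $\mathbf P^{\mathrm{loc}}_{\mathrm{ADM}}(S_r)=\OO(r^{-3/2})$ for strongly asymptotically flat spacelike data, combined with the charge identity $\mathbf P^m\approx(\mathbf P^{\mathrm{loc}}_{\mathrm{ADM}})^{i_m}+\OO(\frac{M}{R}\varep_R)+\OO(\varep_R^2)$ of Section \ref{SECcomparisonLINEAR}. Without untangling these two facts the $\mathbf P$ estimate and the verification that your constructed family actually satisfies Definition \ref{DEFadmissibleEXTSequences} would both be circular.
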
 

\ni \emph{Remarks on Theorem \ref{PROPconstructionStatement}.}
\begin{enumerate} 
\item In the particular case of Schwarzschild reference spacelike initial data in isotropic coordinates, $\tilde{g}^{\mathbf{E}_{\mathrm{ADM}}}_{ij}$, the construction of Theorem \ref{PROPconstructionStatement} produces the Schwarzschild reference family of sphere data $\mathfrak{m}^{\mathbf{E}_{\mathrm{ADM}}}_{-R,R}$ in Eddington-Finkelstein coordinates, see \eqref{EQspheredataSSM111222}.

\item The construction of sphere data in Section \ref{SECproofExistenceConstruction} and the estimates of Sections \ref{SECcomparisonENERGY}, \ref{SECcomparisonLINEAR}, \ref{SECcomparisonANGULAR} and \ref{SECcomparisonCENTER} are moreover applied in Section \ref{SECappKerrFamilyDetails} to Kerr reference spacelike initial data. The latter is not strongly asymptotically flat but the construction still goes through with a more general smallness parameter $\varep_R>0$, see \eqref{EQREMgeneralisedEstimates}.

\end{enumerate}

\ni The proof of Theorem \ref{PROPconstructionStatement} is structured as follows.
\begin{itemize}

\item In Section \ref{SECspacelikeRescalingtoSmalldata} we rescale the strongly asymptotically flat spacelike initial data on the annulus $A_{[R/2,7R/2]}$ to spacelike initial data on $A_{[1/2,7/2]}$ and change from isotropic to Schwarzschild coordinates, to arrive at spacelike initial data on $A_{[1,3]}$ close to Schwarzschild (in Schwarzschild coordinates) of mass $M/R$.

\item In Section \ref{SECproofExistenceConstruction} we construct from the spacelike initial data on $A_{[1,3]}$ ingoing null data $$({}^{(R)}{x}_{-1+[-\de,\de],1}),$$
and prove estimates.

\item In Sections \ref{SECcomparisonENERGY}, \ref{SECcomparisonLINEAR}, \ref{SECcomparisonANGULAR} and \ref{SECcomparisonCENTER} we compare the charges $(\mathbf{E},\mathbf{P},\mathbf{L},\mathbf{G})$ of ${}^{(R)}x_{-1,1}$ with the local integrals $(\mathbf{E}_{\mathrm{ADM}}^{\mathrm{loc}},\mathbf{P}_{\mathrm{ADM}}^{\mathrm{loc}},\mathbf{L}_{\mathrm{ADM}}^{\mathrm{loc}},\mathbf{C}_{\mathrm{ADM}}^{\mathrm{loc}})$ on $S_{-1,1}\subset A_{[1,3]}$ of the spacelike initial data. 

\item In Section \ref{SECappProofPropositionSPHERESPACE} we conclude the proof of Theorem \ref{PROPconstructionStatement} by scaling the constructed ingoing null data $({}^{(R)}{x}_{-1+[-\de,\de],1})$ up to $({x}_{-R+R\cdot[-\de,\de],R})$, and analyzing the asymptotics of $(\mathbf{E},\mathbf{P},\mathbf{L},\mathbf{G})(x_{-R,R})$ by use of the estimates of Sections \ref{SECcomparisonENERGY}, \ref{SECcomparisonLINEAR}, \ref{SECcomparisonANGULAR} and \ref{SECcomparisonCENTER}.

\end{itemize}

\subsection{Rescaling and change to Schwarzschild coordinates} \label{SECspacelikeRescalingtoSmalldata}
\ni Let $(\Si,g,k)$ be strongly asymptotically flat spacelike initial data, and let $(x^1,x^2,x^3)$ denote corresponding coordinates near spacelike infinity. In the following we first rescale to small data on an annulus $A_{[1/2, 7/2]}$ and then change from isotropic coordinates to Schwarzschild coordinates, see \eqref{EQcoordinatechangedef1} and \eqref{EQcoordinatechangedef2}, yielding spacelike initial data on the annulus $A_{[1, 3]}$.

In the particular case of Schwarzschild reference spacelike data in isotropic coordinates of mass $M$, denoted by $\tilde{g}^M_{ij}$, the following construction maps to Schwarzschild reference spacelike initial data in Schwarzschild coordinates of mass $M/R$, denoted by $g_{ij}^M/R$.

First, let $( {}^{(R)} g, {}^{(R)} k)$ denote the rescaled spacelike initial data. By Lemma \ref{LEMspacelikeRescaling} we have that
\begin{align} 
\begin{aligned} 
\left\Vert \lrpar{ {}^{(R)} g-\tilde{g}^{M/R}, {}^{(R)} k } \right\Vert_{C^{k_0}(A_{[1/2, 7/2]})\times C^{k_0-1}(A_{[1/2, 7/2]})}
= \smallO(R^{-3/2}),
\end{aligned} \label{EQsmallnesssphere11}
\end{align}

\ni Second, we apply the coordinate change $\Phi$ from isotropic coordinates $(\tilde{r},\tth^1,\tth^2)$ to Schwarzschild coordinates $(r,\th^1,\th^2)$, see \eqref{EQcoordinatechangedef1}, with $M/R$,
\begin{align*} 
\begin{aligned} 
\Phi: (\tilde{r}, \tth^1,\tth^2) \to (r,\th^1,\th^2) := \lrpar{\tilde{r} \lrpar{1+\frac{M/R}{2\tilde{r}}}^2, \tth^1,\tth^2},
\end{aligned} 
\end{align*}
\ni On the one hand, for $R\geq1$ sufficiently large, the Schwarzschild coordinates $(r,\th^1,\th^2)$ range over the coordinate domain $A_{[1,3]}$. On the other hand, by \eqref{EQcoordinatechangedef2} we can estimate for $R\geq1$ sufficiently large
\begin{align} 
\begin{aligned} 
\left\Vert \mathrm{D}\Phi-\mathrm{Id} \right\Vert_{C^{k_0}(A_{[1/2, 7/2]})} \leq C,
\end{aligned} \label{EQDPhiEstimate}
\end{align}
where $C>0$ is a universal constant. Thus by \eqref{EQsmallnesssphere11} and \eqref{EQDPhiEstimate},
\begin{align*} 
\begin{aligned} 
\left\Vert \Phi^\ast \lrpar{{}^{(R)}g} - {g}^{M/R} \right\Vert_{C^{k_0}(A_{[1,3]})} =& \left\Vert \Phi^\ast \lrpar{{}^{(R)}g - \tilde{g}^{M/R} } \right\Vert_{C^{k_0}(A_{[1,3]})} \\
\les& \left\Vert {}^{(R)}g-\tilde{g}^{M/R} \right\Vert_{C^{k_0}(A_{[1/2, 7/2]})} \\
=& \smallO(R^{-3/2}),
\end{aligned} 
\end{align*}
where we used that by \eqref{EQcoordiSSchange}, $\Phi^\ast \lrpar{\tilde{g}^{M/R}} = g^{M/R}$. Furthermore, by \eqref{EQsmallnesssphere11} and \eqref{EQDPhiEstimate} it similarly follows that
\begin{align*} 
\begin{aligned} 
\left\Vert \Phi^\ast \lrpar{{}^{(R)}k} \right\Vert_{C^{k_0-1}(A_{[1,3]})} = \smallO(R^{-3/2}).
\end{aligned} 
\end{align*}

\ni To summarize the above, for $R\geq1$ sufficiently large, we constructed from strongly asymptotically flat spacelike initial data $(\Si,g,k)$ the spacelike initial data 
\begin{align*} 
\begin{aligned} 
\lrpar{\Phi^\ast\lrpar{{}^{(R)} g}, \Phi^\ast\lrpar{{}^{(R)} k}} \text{ on } A_{[1,3]},
\end{aligned} 
\end{align*}
satisfying
\begin{align} 
\begin{aligned} 
\left\Vert \Phi^\ast \lrpar{{}^{(R)}g} - g^{M/R} \right\Vert_{C^{k_0}(A_{[1,3]})} + \left\Vert \Phi^\ast \lrpar{{}^{(R)}k} \right\Vert_{C^{k_0}(A_{[1,3]})} = 
\smallO(R^{-3/2}).
\end{aligned} \label{EQSMALLNESSrescaled130009}
\end{align}

\ni \textbf{Notation.} We use the following notation in Sections \ref{SECproofExistenceConstruction}, \ref{SECcomparisonENERGY}, \ref{SECcomparisonLINEAR}, \ref{SECcomparisonANGULAR} and \ref{SECcomparisonCENTER}.
\begin{enumerate}
\item We denote the $R$-dependent smallness on the right-hand side of \eqref{EQSMALLNESSrescaled130009} by 
\begin{align}
\varep_R:= \smallO(R^{-3/2}). \label{EQREMgeneralisedEstimates}
\end{align}
\noindent This allows us to apply the same estimates in Section \ref{SECappKerrFamilyDetails} for the study of Kerr spacelike initial data where smallness parameter is replaced by $\varep_R = \OO(R^{-3/2})$.
\item For ease of presentiation we abuse notation by denoting $\lrpar{\Phi^\ast\lrpar{{}^{(R)} g}, \Phi^\ast\lrpar{{}^{(R)} k}}$ by $(g,k)$.

\end{enumerate}

\subsection{Construction of sphere data} \label{SECproofExistenceConstruction}

\ni Let $M\geq0$ and $R\geq1$ be two real numbers. Consider spacelike initial data $(g,k)$ on $A_{[1,3]}$ such that 
\begin{align} 
\begin{aligned} 
\left\Vert \lrpar{ g - g^{M/R},k} \right\Vert_{C^{k_0}(A_{[1,3]})\times C^{k_0-1}(A_{[1,3]})}  \leq \varep_R,
\end{aligned} \label{EQSMALLNESSrescaled13000}
\end{align}
see \eqref{EQREMgeneralisedEstimates} for the $\varep_R$-notation. In this section we construct from $(g,k)$ the ingoing null data
\begin{align*} 
\begin{aligned} 
({}^{(R)}{x}_{-1+[-\de,\de],1})
\end{aligned} 
\end{align*}
and prove that for $R\geq1$ sufficiently large,
\begin{align} 
\begin{aligned} 
\Vert {}^{(R)}{x}_{-1+[-\de,\de],1} - {\mathfrak{m}}^{M/R} \Vert_{\XX^+(\HHb_{-1+[-\de,\de],1})} \les \varep_R.
\end{aligned} \label{EQfullEstimateExtendedSpheredataCONST}
\end{align} 
In the particular case Schwarzschild reference data in Schwarzschild coordinates, $g^{M/R}$, the construction of this section produces the Schwarzschild reference ingoing null data in Eddington-Finkelstein coordinates $\mathfrak{m}^{M/R}_{-1+[-\de,\de],1}$. We remark that the universal integer $k_0\geq6$ is determined from the regularity in \eqref{EQfullEstimateExtendedSpheredataCONST}, see the notational remark after Definition \ref{DEFnormSpacelikeInitialData}.\\

\ni \textbf{Definition of $S_{-1,1}$ and gauge choices.} 
Let $(\MM,\g)$ denote the unique maximal future globally-hyperbolic development of the spacelike initial data 
$$(A_{[1,3]},g,k).$$
Let $T$ denote the future-directed timelike unit vector to $A_{[1,3]}$ in $(\MM,\g)$, and let $N$ denote the outward pointing unit normal to $S_r \subset \Si$ tangent to $A_{[1,3]}$ for $1\leq r \leq 3$.

For $R\geq1$ sufficiently large, consider the sphere 
\begin{align*} 
\begin{aligned} 
S_{r_{M/R}(-1,1)} \subset A_{[1,3]}, 
\end{aligned} 
\end{align*}
where the definition of $r_M(u,v)$ is given in \eqref{EQdefRbyUV}. On $S_{r_{M/R}(-1,1)} \subset \Sigma$ define the renormalized null vectors $(\widehat{L},\widehat{\Lb})$ by
\begin{align} 
\begin{aligned} 
\widehat{L}=T+N, \,\, \widehat{\Lb} = T-N,
\end{aligned} \label{EQdefinitionSpacelike223}
\end{align}
which satisfy by construction $\g(\widehat{L},\widehat{\Lb})=-2$. We can construct around $S_{r_{M/R}(-1,1)} \subset \Sigma$ a local double null coordinate system $(u,v,\th^1,\th^2)$ such that with respect to $(u,v)$,
\begin{align*} 
\begin{aligned} 
S_{-1,1} = S_{r_{M/R}(-1,1)},
\end{aligned} 
\end{align*}
and moreover, the following holds on $S_{-1,1}$ (which is in agreement with \eqref{EQspheredataSSM111222})
\begin{align} 
\begin{aligned} 
\Om^2:=&1-\frac{2M/R}{r_{M/R}(-1,1)}, & \om:=&\frac{M/R}{\lrpar{r_{M/R}(-1,1)}^2}, & \omb:=&-\frac{M/R}{\lrpar{r_{M/R}(-1,1)}^2}, \\ 
D\om:=&-\frac{2\Om^2M/R}{\lrpar{r_{M/R}(-1,1)}^3}, & \Du\omb:=&-\frac{2\Om^2M/R}{\lrpar{r_{M/R}(-1,1)}^3}.&&
\end{aligned} \label{EQGaugeChoicesSRscaling}
\end{align} 

\ni \textbf{Definition and analysis of $\chi$ and $\chib$.} Defining $\chi$ and $\chib$ as in Definition \ref{DEFricciCoefficients}, we have by \eqref{EQdefKprojection8889} and \eqref{EQdefinitionSpacelike223},
\begin{align} 
\begin{aligned} 
\chi_{AB} =  -\kd_{AB} + \Theta_{AB}, \,\, \chib_{AB} = -\kd_{AB}-\Theta_{AB}.
\end{aligned} \label{EQCHIFULLkrelations1222}
\end{align}

\ni Taking the tracefree part and trace with respect to $\gd$, we get
\begin{align} 
\begin{aligned} 
\trchi =&- \tr \kd + \tr \Theta, & \chih_{AB} =& -\widehat{\kd}_{AB} + \widehat{\Theta}_{AB}, \\
\trchib =& - \tr \kd - \tr \Theta, & \chibh_{AB} =& -\widehat{\kd}_{AB} - \widehat{\Theta}_{AB}.
\end{aligned} \label{EQCHIkrelations1222}
\end{align}

\ni By \eqref{EQspheredataSSM111222}, \eqref{EQSMALLNESSrescaled13000} and \eqref{EQCHIkrelations1222} we have that for $R\geq1$ sufficiently large,
\begin{align} 
\begin{aligned} 
\Vert \chih \Vert_{H^6(S_{-1,1})} + \Vert \chibh \Vert_{H^6(S_{-1,1})} \les&\varep_R, \\
\left\Vert \trchi - \frac{2\Om_M}{r_{M/R}(-1,1)} \right\Vert_{H^6(S_{-1,1})}+\left\Vert \trchib + \frac{2\Om_M}{r_{M/R}(-1,1)}  \right\Vert_{H^6(S_{-1,1})} \les& \varep_R.
\end{aligned} \label{EQfirstEstimScalingSpacelikeSphere}
\end{align}

\ni \textbf{Definition and analysis of $\zeta$ and $\eta$.} Defining $\zeta$ and $\eta$ on $S_{-1,1}$ as in Definition \ref{DEFricciCoefficients}, we have by \eqref{EQdefinitionSpacelike223} that
\begin{align} 
\begin{aligned} 
\zeta_A :=& \half \g\lrpar{\D_A \widehat{L},\widehat{\Lb}} = -\half \g\lrpar{\D_A T,N} + \half \g\lrpar{\D_A N,T} = - \g\lrpar{\D_A T,N} = \kd_{A},\\
\eta_A:=& \zeta_A +\di \log \Om = \kd_A + \di \log \Om=  \kd_A,
\end{aligned} \label{EQrelationZETAk}
\end{align}
where we used \eqref{EQdefinitionSpacelike223} and \eqref{EQGaugeChoicesSRscaling}. Subsequently, by \eqref{EQSMALLNESSrescaled13000} and \eqref{EQrelationZETAk} we have that for $R\geq1$ sufficiently large,
\begin{align} 
\begin{aligned} 
\left\Vert \zeta \right\Vert_{{H}^5(S_{-1,1})}+\left\Vert \eta \right\Vert_{{H}^5(S_{-1,1})}\les \varep_R.
\end{aligned} \label{EQscalingETAestimate}
\end{align}

\ni \textbf{Definition and analysis of $\a$ and $\ab$.} By Definition \ref{EQnullcurvatureCOMPDEF}, \eqref{EQlinearCombination112}, \eqref{EQGaussCodazziSpacelike} and \eqref{EQdefinitionSpacelike223}, we have that
\begin{align} 
\begin{aligned} 
\a_{AB} :=& \Rbf\lrpar{e_A,\widehat{L},e_B, \widehat{L}} \\
=& \Rbf_{ATBT}+\Rbf_{ATBN}+\Rbf_{ANBT}+\Rbf_{ANBN} \\
=& \Rbf_{ANBT}+\Rbf_{BNAT} \\
=& -\in_{AN}^{\,\,\,\,\,\,\,\,\,\,s} H_{sB}-\in_{BN}^{\,\,\,\,\,\,\,\,\,\,s} H_{sA} \\
=&  - \lrpar{\nab_A k_{NB}-\nab_N k_{AB}}- \lrpar{\nab_B k_{NA}-\nab_N k_{BA}} \\
=& - \lrpar{\nab_A k_{NB}+\nab_B k_{NA}}+  2\nab_N k_{BA},
\end{aligned} \label{EQalphaScalingEstimatesEQ}
\end{align}
where we used that by the Einstein vacuum equations and the algebraic properties of the Riemann curvature tensor,
\begin{align*} 
\begin{aligned} 
\Rbf_{ANBN} = \underbrace{\mathbf{Ric}}_{=0} - \Rbf_{ATBT} - \sum\limits_{C=1,2} \underbrace{\Rbf_{ACBC}}_{=0}= -\Rbf_{ATBT}.
\end{aligned}
\end{align*}

\ni Subsequently, by \eqref{EQSMALLNESSrescaled13000} and \eqref{EQalphaScalingEstimatesEQ} we have that for $R\geq1$ sufficiently large,
\begin{align} 
\begin{aligned} 
\Vert \a \Vert_{H^6(S_{-1,1})} \les \varep_R.
\end{aligned} \label{EQalphaScalingEstimates}
\end{align}

\ni Similarly, by Definition \ref{EQnullcurvatureCOMPDEF}, \eqref{EQdefinitionSpacelike223} and \eqref{EQalphaScalingEstimatesEQ} we have that
\begin{align} 
\begin{aligned} 
\ab_{AB} := \Rbf\lrpar{e_A,\widehat{\Lb},e_B, \widehat{\Lb}} =&\Rbf_{ATBT}-\Rbf_{ATBN}-\Rbf_{ANBT}+\Rbf_{ANBN} \\
=& -\a_{AB},
\end{aligned} \label{EQalphaBARScalingEstimatesEQ}
\end{align}
so we get that from \eqref{EQalphaScalingEstimates} that for $R\geq1$ sufficiently large,
\begin{align} 
\begin{aligned} 
\Vert \ab \Vert_{H^2(S_{-1,1})} \les \varep_R.
\end{aligned} \label{EQalphaBScalingEstimates}
\end{align}

\ni \textbf{Definition and analysis of ${}^{(R)}x_{-1,1}$ on $S_{-1,1}$.} Let ${}^{(R)}x_{-1,1}$ be the sphere data on $S_{-1,1}$ determined by the quantities constructed in \eqref{EQGaugeChoicesSRscaling}, \eqref{EQCHIFULLkrelations1222}, \eqref{EQCHIkrelations1222}, \eqref{EQrelationZETAk}, \eqref{EQalphaScalingEstimatesEQ} and \eqref{EQalphaBARScalingEstimatesEQ}. From the estimates \eqref{EQfirstEstimScalingSpacelikeSphere}, \eqref{EQscalingETAestimate}, \eqref{EQalphaScalingEstimates} and \eqref{EQalphaBScalingEstimates}, it follows that
\begin{align} 
\begin{aligned} 
\Vert {}^{(R)} x_{-1,1} - \mathfrak{m}^{M/R} \Vert_{\XX(S_{-1,1})} \les \varep_R.
\end{aligned} \label{EQspheredataestimateVAREPRfinal007778}
\end{align}

\ni \textbf{Definition and analysis of ${}^{(R)}{x}_{-1+[-\de,\de],1}$ on $\HHb_{-1+[-\de,\de],1}$.} Following \eqref{EQdefinitionSpacelike223} and \eqref{EQGaugeChoicesSRscaling}, define $\Lb'$ on $S_{-1,1}$ by 
\begin{align*} 
\begin{aligned} 
\Lb' := \Om_M^{-1} \widehat{\Lb} = \frac{1}{\sqrt{1-\frac{2M/R}{r_{M/R}(-1,1)} }} \lrpar{T-N},
\end{aligned} 
\end{align*}
and extend $\Lb'$ to the spacetime $(\MM,\g)$ as null geodesic vectorfield. The ingoing null hypersurface $\HHb_{1} \subset \MM$ passing through $S_{r_{M/R}(-1,1)} \subset \Si$ is ruled by $\Lb'$. We define on $\HHb_{1}$ the function $u$ by
\begin{align*} 
\begin{aligned} 
\Lb'(u) = \frac{1}{\sqrt{1-\frac{2M/R}{r_{M/R}(u,1)} }} \text{ on } \HHb_1 \text{ and } u \vert_{S_{-1,1}} = -1.
\end{aligned} 
\end{align*}
The level sets $S_{u,1}\subset \HHb_1$ of $u$ are locally well-defined and foliate $\HHb_1$ by construction with Schwarzschild reference null lapse.

By the smallness \eqref{EQSMALLNESSrescaled13000} together with the above gauge choices \eqref{EQdefinitionSpacelike223} and \eqref{EQGaugeChoicesSRscaling}, by the local existence and Cauchy stability for the spacelike Cauchy problem, see, for example, \cite{BruhatOrigin}, it follows that for $R\geq1$ and $k_0\geq6$ sufficiently large, there is a universal real number $\de>0$, such that the foliated null hypersurface 
\begin{align*} 
\begin{aligned} 
\HHb_{-1+[-\de,\de],1} := \bigcup\limits_{-\de \leq u \leq \de} S_{u,1}
\end{aligned} 
\end{align*}
is well-defined in $(\MM,\g)$ and the induced null data, denoted by ${}^{(R)}{x}_{-1+[-\de,\de],1}$, satisfies
\begin{align} 
\begin{aligned} 
\Vert {}^{(R)}{x}_{-1+[-\de,\de],1}-{\mathfrak{m}}^{M/R} \Vert_{\XX^+\lrpar{\HHb_{-1+[-\de,\de],1}}} \les \varep_R.
\end{aligned} \label{EQunderlinexspacelikeestim}
\end{align}

\ni To summarise the above, we constructed ingoing null data 
$$({}^{(R)}{x}_{-1+[-\de,\de],1})$$ 
satisfying
\begin{align} 
\begin{aligned} 
\Vert {}^{(R)}{x}_{-1+[-\de,\de],1}-{\mathfrak{m}}^{M/R} \Vert_{\XX^+\lrpar{\HHb_{-1+[-\de,\de],1}}} \les\varep_R.
\end{aligned} \label{EQfinalspaceliketospherebound}
\end{align}
This finishes the proof of \eqref{EQfullEstimateExtendedSpheredataCONST}.

\begin{remark}\label{REMARKConstructionHigherOrder} In case of higher regularity, we impose gauge conditions on $D^m\om$ and $\Du^m \omb$, for integers $m\geq2$ on $S_{-1,1}$ in accordance with the Schwarzschild reference higher-order sphere data \eqref{EQspheredataSSM111222}. Subsequently, the higher-order sphere data on $S_{-1,1}$ can be explicitly calculated and estimated by the Bianchi identities.
\end{remark}

\subsection{Comparison of $\mathbf{E}$ and $\mathbf{E}^{\mathrm{loc}}_{\mathrm{ADM}}$} \label{SECcomparisonENERGY}

\ni In this section we prove that
\begin{align} 
\begin{aligned} 
\mathbf{E}({}^{(R)}x_{-1,1}) =\mathbf{E}_{\mathrm{ADM}}^{\mathrm{loc}}(S_{-1,1},g,k) +\OO\lrpar{\frac{M}{R} \varep_R}+ \OO(\varep_R^2),
\end{aligned} \label{EQEcomparison11}
\end{align}
where we recall from Definitions \ref{DEFlocalCharges} and \ref{DEFlocalADMcharges} that
\begin{align*} 
\begin{aligned} 
\mathbf{E}({}^{(R)}x_{-1,1}) :=& -\frac{1}{8\pi} \sqrt{4\pi} \lrpar{ r^3\lrpar{\rho + r \Divd {\be}}}^{(0)}({}^{(R)}x_{-1,1}), \\ 
\mathbf{E}_{\mathrm{ADM}}^{\mathrm{loc}}(S_{-1,1},g,k) :=& -\frac{1}{8\pi} \int\limits_{S_{-1,1}} \lrpar{\RRRic -\half R_{\mathrm{scal}} \, g}(x^i \pr_i ,N) d\mu_\gd.
\end{aligned} 
\end{align*}
In the following we rewrite $\mathbf{E}$ into $\mathbf{E}_{\mathrm{ADM}}^{\mathrm{loc}}$, where we eased notation. Using the null structure equations \eqref{EQGaussEquation} and \eqref{EQgausscodazzinonlinear1}, and the relations \eqref{EQCHIkrelations1222} and \eqref{EQrelationZETAk}, we can write
\begin{align} 
\begin{aligned} 
&\rho + r \Divd {\be}\\
 =& -\lrpar{K+ \frac{1}{4}\trchi \trchib - \half \lrpar{\chih,\chibh}} - r \Divd \lrpar{\Divd \chih -\half \di \trchi + \chih \cdot \eta - \half \trchi \eta}\\
=&-\lrpar{K+ \frac{1}{4}\lrpar{-\tr\kd+\tr\Th} \lrpar{-\tr\kd-\tr\Th} - \half \lrpar{-\widehat{\kd}+\widehat{\Th},-\widehat{\kd}-\widehat{\Th}}} \\
&- r\Divd \lrpar{\Divd \lrpar{-\widehat{\kd}+\widehat{\Th}} -\half \di \lrpar{-\tr\kd+\tr\Th} }\\
&- r\Divd \lrpar{\lrpar{-\widehat{\kd}+\widehat{\Th}} \cdot \kd - \half \lrpar{-\tr\kd+\tr\Th}  \kd}\\
=& -\lrpar{K- \frac{1}{4} \lrpar{\tr\Th}^2 + \half \vert \widehat{\Th} \vert^2 + \frac{1}{4} \lrpar{\tr\kd}^2 + \half \vert \widehat{\kd} \vert^2}\\
&- r\Divd \lrpar{\Divd \lrpar{-\widehat{\kd}+\widehat{\Th}} -\half \di \lrpar{-\tr\kd+\tr\Th} }\\
&- r\Divd \lrpar{\lrpar{-\widehat{\kd}+\widehat{\Th}} \cdot \kd - \half \lrpar{-\tr\kd+\tr\Th}  \kd}.
\end{aligned} \label{EQrhoExpression55}
\end{align}
Plugging the Gauss equation \eqref{EQchrkl123} into the right-hand side of \eqref{EQrhoExpression55} leads to
\begin{align*} 
\begin{aligned} 
\rho + r \Divd \be =& -\half \lrpar{R_{\mathrm{scal}}-2\RRRic_{NN} + \frac{1}{2} \lrpar{\tr\kd}^2 + \vert \widehat{\kd} \vert^2 + \vert \widehat{\Th}\vert^2 } \\
&- r\Divd \lrpar{\Divd \lrpar{-\widehat{\kd}+\widehat{\Th}} -\half \di \lrpar{-\tr\kd+\tr\Th} }\\
&- r\Divd \lrpar{\lrpar{-\widehat{\kd}+\widehat{\Th}} \cdot \kd - \half \lrpar{-\tr\kd+\tr\Th}  \kd}.
\end{aligned}
\end{align*}
Hence we get that
\begin{align*} 
\begin{aligned} 
-\frac{8\pi}{\sqrt{4\pi}}\mathbf{E} :=& \lrpar{r^3 \lrpar{\rh+r\Divd \be}}^{(0)}\\
=& -\lrpar{\frac{r^3}{2}\lrpar{R_{\mathrm{scal}}-2\RRRic_{NN} + \frac{1}{2} \lrpar{\tr\kd}^2 + \vert \widehat{\kd} \vert^2+ \vert \widehat{\Th}\vert^2} }^{(0)} \\
&- \lrpar{r^4 \Divd \lrpar{\Divd \lrpar{-\widehat{\kd}+\widehat{\Th}} -\half \di \lrpar{-\tr\kd+\tr\Th} }}^{(0)}\\
&-  \lrpar{r^4 \Divd \lrpar{\lrpar{-\widehat{\kd}+\widehat{\Th}} \cdot \kd - \half \lrpar{-\tr\kd+\tr\Th}  \kd}}^{(0)},
\end{aligned} 
\end{align*} 
which we can estimate by \eqref{EQSMALLNESSrescaled13000} and Lemma \ref{LEMnonlinearFourier} for $R\geq1$ sufficiently large and \eqref{EQlocalisedADMcharges} as
\begin{align*} 
\begin{aligned} 
\mathbf{E} =& \frac{\sqrt{4\pi}}{8\pi} \lrpar{\frac{r^3}{2}\lrpar{R_{\mathrm{scal}}-2\RRRic_{NN} }}^{(0)}+\OO\lrpar{\frac{M}{R} \varep_R}+\OO(\varep_R^2) \\
=& \frac{1}{16\pi} \frac{1}{r^2} \int\limits_{S_{-1,1}} r^3 \lrpar{R_{\mathrm{scal}}-2\RRRic_{NN} } d\mu_{r^2\gac}+\OO\lrpar{\frac{M}{R} \varep_R}+\OO(\varep_R^2) \\
=& -\frac{1}{8\pi} \int\limits_{S_{-1,1}} \lrpar{\RRRic - \half R_{\mathrm{scal}} \, g }(r N,N) d\mu_{\gd} +\OO\lrpar{\frac{M}{R} \varep_R}+\OO(\varep_R^2) \\
=& -\frac{1}{8\pi} \int\limits_{S_{-1,1}} \lrpar{\RRRic - \half R_{\mathrm{scal}} \, g }(x^j \pr_j,N) d\mu_{\gd} +\OO\lrpar{\frac{M}{R} \varep_R}+\OO(\varep_R^2) \\
=& \mathbf{E}^{\mathrm{loc}}_{\mathrm{ADM}}+\OO\lrpar{\frac{M}{R} \varep_R}+\OO(\varep_R^2),
\end{aligned} 
\end{align*}
where we used Lemma \ref{LEMschwarzschildtrivialEstimate}. This finishes the proof of \eqref{EQEcomparison11}.
\subsection{Comparison of $\mathbf{P}$ and $\mathbf{P}_{\mathrm{ADM}}^{\mathrm{loc}}$.} \label{SECcomparisonLINEAR} 
\ni In this section we prove that for $i=1,2,3$ and $(m_1,m_2,m_3) = (1,-1,0)$,
\begin{align} 
\begin{aligned} 
\lrpar{\mathbf{P}^{\mathrm{loc}}_{\mathrm{ADM}}}^i(S_{-1,1},g,k)= \mathbf{P}^{m_i}({}^{(R)}x_{-1,1})+\OO\lrpar{\frac{M}{R}\varep_R}+ \OO(\varep_R^2),
\end{aligned} \label{EQPcomparison11}
\end{align}
where we recall from Definitions \ref{DEFlocalCharges} and \ref{DEFlocalADMcharges} that
\begin{align} 
\begin{aligned} 
\lrpar{\mathbf{P}^{\mathrm{loc}}_{\mathrm{ADM}}}^i(S_{-1,1},g,k) :=& \frac{1}{8\pi} \int\limits_{S_{-1,1}}  \lrpar{k_{il}-\tr k \, g_{il}} N^l d\mu_\gd, \\
\mathbf{P}^m\lrpar{{}^{(R)}x_{-1,1}} :=& -\frac{1}{8\pi} \sqrt{\frac{4\pi}{3}} \lrpar{ r^3 \lrpar{\rho + r \Divd {\be}}}^{(1m)}\lrpar{{}^{(R)}x_{-1,1}}.
\end{aligned} \label{EQdefLINEARMOMENTA7778221}
\end{align}

\ni In the following we rewrite $\lrpar{\mathbf{P}_{\mathrm{ADM}}^{\mathrm{loc}}}^i$ into $\mathbf{P}^{m_i}$, where we eased notation. First, by \eqref{EQSMALLNESSrescaled13000}, for $R\geq1$ sufficiently large, it holds that on the annulus $A_{[1,3]}$,
\begin{align*} 
\begin{aligned} 
(\pr_i)^j - (\nab x^i)^j = e^{ij} - g^{ij} = \OO\lrpar{\frac{M}{R}} +\OO(\varep_R).
\end{aligned} 
\end{align*}
Hence we can write, using that $N$ is normal to $S_{-1,1}$,
\begin{align}
\begin{aligned} 
&\int\limits_{S_{-1,1}} \lrpar{k_{iN}-\tr k \, g_{iN}} d\mu_\gd \\
=& \int\limits_{S_{-1,1}}  \lrpar{k_{jN}-\tr k\, g_{jN}} \, (\nab x^i)^j d\mu_\gd +\OO\lrpar{\frac{M}{R}\varep_R} +O\lrpar{\varep_R^2} \\
=& \int\limits_{S_{-1,1}}  \lrpar{k_{jN}-\tr k \, g_{jN}}\, \lrpar{N(x^i) N^j + \lrpar{\Nd x^i}^j} d\mu_\gd +\OO\lrpar{\frac{M}{R}\varep_R}+O\lrpar{\varep_R^2} \\
=& \int\limits_{S_{-1,1}}  \lrpar{\lrpar{ k_{NN} - \tr k} N(x^i) + k(N,\Nd x^i)} d\mu_\gd+\OO\lrpar{\frac{M}{R}\varep_R}+O\lrpar{\varep_R^2}.
\end{aligned} \label{EQPfirstidentitycomparison1}
\end{align}
Using that by \eqref{EQdefinitionSpacelike223} and \eqref{EQrelationZETAk}, for $R\geq1$ sufficiently large we have on $S_{-1,1}$,
\begin{align*} 
\begin{aligned} 
k_{NN} - \tr k =& - \tr \kd = \half \lrpar{\trchib + \trchi}, \\
k(N,\Nd x^i) =& \zeta(x^i), \\
N(x^i) =& \frac{x^i}{r} +\OO\lrpar{\frac{M}{R}}+\OO\lrpar{\varep_R},
\end{aligned} 
\end{align*}
where $r$ denotes the area radius on $(S_{-1,1}, \gd)$, we get from \eqref{EQPfirstidentitycomparison1} that
\begin{align} 
\begin{aligned} 
&\int\limits_{S_{-1,1}} \lrpar{k_{iN}-\tr k \, g_{iN}} d\mu_\gd \\
 =&\int\limits_{S_{-1,1}}  \lrpar{\half \lrpar{\trchib + \trchi} \frac{x^i}{r} + \zeta(x^i) } d\mu_\gd+\OO\lrpar{\frac{M}{R}\varep_R}+\OO\lrpar{\varep_R^2} \\
=&\int\limits_{S_{-1,1}}  \lrpar{\half \lrpar{\trchib + \trchi} -r \Divd \zeta } \frac{x^i}{r}d\mu_\gd+\OO\lrpar{\frac{M}{R}\varep_R}+\OO\lrpar{\varep_R^2}\\
=&\int\limits_{S_{-1,1}}  \lrpar{\half \lrpar{\trchib + \trchi} -r \Divd \zeta } \frac{x^i}{r}d\mu_{r^2\gac}+\OO\lrpar{\frac{M}{R}\varep_R}+\OO\lrpar{\varep_R^2} \\
=&\int\limits_{S_{-1,1}}  \lrpar{\half \lrpar{\trchib - \trchi}+\trchi -r \Divd \zeta } \lrpar{\sqrt{\frac{4\pi}{3}}Y^{(1m_i)}} d\mu_{r^2\gac}\\&+\OO\lrpar{\frac{M}{R}\varep_R}+\OO\lrpar{\varep_R^2},
\end{aligned} \label{EQlinmomrelating1}
\end{align}
where we integrated by parts in the second equality and used that by \eqref{EQREMchangeitom}, for $i=1,2,3$,
\begin{align*} 
\begin{aligned} 
\frac{x^i}{\vert x \vert} = \sqrt{\frac{4\pi}{3}}Y^{(1m_i)} \text{ for } (m_1,m_2,m_3)=(1,-1,0).
\end{aligned} 
\end{align*}

\ni In the following we use two identities to rewrite the right-hand side of \eqref{EQlinmomrelating1}. First, we can rewrite
\begin{align} 
\begin{aligned} 
\frac{1}{2} \lrpar{\trchib-\trchi} =& \frac{r}{4} \trchi \trchib - \frac{1}{r} -\frac{r}{4} \lrpar{\trchi-\frac{2}{r}} \lrpar{\trchib+\frac{2}{r}} \\
=&r \lrpar{- \rh - K + \half(\chih,\chibh)} - \frac{1}{r} -\frac{r}{4} \lrpar{\trchi-\frac{2}{r}} \lrpar{\trchib+\frac{2}{r}},
\end{aligned} \label{EQrelation1linmom1}
\end{align}
where we used the Gauss equation \eqref{EQGaussEquation}.

Second, we can express
\begin{align} 
\begin{aligned} 
r \Divd \zeta =& \frac{r^2}{2} \Divd \lrpar{\trchi \zeta - \lrpar{\trchi- \frac{2}{r}} \zeta} \\
=& r^2 \Divd \lrpar{\half \trchi \zeta} - \frac{r^2}{2} \Divd \lrpar{\lrpar{\trchi- \frac{2}{r}} \zeta}\\
=&r^2 \Divd \lrpar{\Divd \chih -\half \di \tr \chi + \chih \cdot \zeta+\be} - \frac{r^2}{2} \Divd \lrpar{\lrpar{\trchi- \frac{2}{r}} \zeta}\\
=& r^2 \Divd \be - \frac{r^2}{2} \Ld \tr\chi + r^2 \lrpar{\Divd \Divd \chih + \Divd \lrpar{\chih \cdot \zeta}} \\
&- \frac{r^2}{2} \Divd \lrpar{\lrpar{\trchi- \frac{2}{r}} \zeta}\\
=& r^2 \Divd \be - \frac{1}{2} \Ldo \tr\chi + \frac{1}{2} \lrpar{\Ldo-r^2 \Ld}\trchi \\
& + r^2 \lrpar{\Divd \Divd \chih + \Divd \lrpar{\chih \cdot \zeta}} - \frac{r^2}{2} \Divd \lrpar{\lrpar{\trchi- \frac{2}{r}} \zeta},
\end{aligned} \label{EQrelation1linmom2}
\end{align}
where we used the Gauss-Codazzi equations \eqref{EQgausscodazzinonlinear1}.

Plugging \eqref{EQrelation1linmom1} and \eqref{EQrelation1linmom2} into the right-hand side of \eqref{EQlinmomrelating1} and subsequently \eqref{EQdefLINEARMOMENTA7778221}, and using from Appendix \ref{SECellEstimatesSpheres} that for any scalar function $f$, for $m=-1,0,1$,
\begin{align*} 
\begin{aligned} 
 \half \lrpar{\Ldo f}^{(1m)} = - f^{(1m)},
\end{aligned} 
\end{align*}
we get that for $R\geq1$ sufficiently large,
\begin{align} 
\begin{aligned} 
&8 \pi \cdot \lrpar{\mathbf{P}_{\mathrm{ADM}}^{\mathrm{loc}}}^i\\
=&\int\limits_{S_{-1,1}} \lrpar{k_{iN}-\tr k \, g_{iN}} d\mu_\gd  \\
=&\int\limits_{S_{-1,1}}  \lrpar{\half \lrpar{\trchib - \trchi}+\trchi -r \Divd \zeta } {\sqrt{\frac{4\pi}{3}}Y^{(1m_i)}} d\mu_{r^2\gac}+\OO\lrpar{\frac{M}{R}\varep_R}+\OO\lrpar{\varep_R^2} \\
=& \int\limits_{S_{-1,1}} \lrpar{-r\rh -r^2 \Divd \beta +\trchi+ \frac{1}{2} \Ldo \trchi } {\sqrt{\frac{4\pi}{3}}Y^{(1m_i)}} d\mu_{r^2\gac} \\&+\OO\lrpar{\frac{M}{R}\varep_R}+\OO\lrpar{\varep_R^2}+ \RR \\
=& \int\limits_{S_{-1,1}} \lrpar{-r\rh-r^2 \Divd \beta} {\sqrt{\frac{4\pi}{3}}Y^{(1m_i)}} d\mu_{r^2\gac} +\OO\lrpar{\frac{M}{R}\varep_R}+\OO\lrpar{\varep_R^2} + \RR\\
=&-r^3\sqrt{\frac{4\pi}{3}} \lrpar{\rh+ r\Divd \beta}^{(1m_i)} +\OO\lrpar{\frac{M}{R}\varep_R}+\OO\lrpar{\varep_R^2}+ \RR\\
=& 8\pi \cdot \mathbf{P}^{m_i} +\OO\lrpar{\frac{M}{R}\varep_R}+\OO\lrpar{\varep_R^2}+ \RR,
\end{aligned} \label{EQrelationlinmom444}
\end{align}
where the remainder term $\RR$ is given by
\begin{align*} 
\begin{aligned} 
\RR=& \int\limits_{S_{-1,1}} \lrpar{- rK + \frac{r}{2} (\chih,\chibh)- \frac{1}{r} - \frac{r}{4} \lrpar{\trchi-\frac{2}{r}} \lrpar{\trchib+\frac{2}{r}}} {\sqrt{\frac{4\pi}{3}}Y^{(1m_i)}} d\mu_{r^2\gac} \\
&-\int\limits_{S_{-1,1}} \lrpar{\frac{1}{2} \lrpar{\Ldo-r^2 \Ld} \lrpar{ \tr \chi-\frac{2}{r}}+ \frac{r^2}{2} \Divd \Divd \chih} {\sqrt{\frac{4\pi}{3}}Y^{(1m_i)}} d\mu_{r^2\gac} \\
&- \int\limits_{S_{-1,1}} \lrpar{r^2 \Divd\lrpar{ \chih \cdot \zeta} - \frac{r^2}{2} \Divd\lrpar{{ \lrpar{\trchi -\frac{2}{r}} \zeta}}} {\sqrt{\frac{4\pi}{3}}Y^{(1m_i)}} d\mu_{r^2\gac}.
\end{aligned} 
\end{align*}
To conclude \eqref{EQPcomparison11}, it remains to show that for $R\geq1$ sufficiently large,
\begin{align} 
\begin{aligned} 
\RR= \OO(\varep^{2}_R)+\OO\lrpar{\frac{M}{R}\varep_R}.
\end{aligned} \label{EQcontrolRemainderTerm}
\end{align}
Indeed, we first have by Lemma \ref{LEMGaussSquareLoneEstimate} that
\begin{align*} 
\begin{aligned} 
\int\limits_{S_{-1,1}} K \cdot Y^{(1m_i)} d\mu_{r^2 \gac} = \OO(\varep_R^2).
\end{aligned} 
\end{align*}
Second,
\begin{align*} 
\begin{aligned} 
&\int\limits_{S_{-1,1}} \lrpar{\trchi-\frac{2}{r}} \lrpar{\trchib+\frac{2}{r}} Y^{(1m_i)} d\mu_{r^2\gac} \\
=& \int\limits_{S_{-1,1}} \lrpar{\trchi-\frac{2\Om_M}{r}} \lrpar{\trchib+\frac{2\Om_M}{r}} Y^{(1m_i)} d\mu_{r^2\gac} \\
&+ \int\limits_{S_{-1,1}} \lrpar{\frac{2}{r}(\Om_M-1)\lrpar{\trchib+\frac{2\Om_M}{r}} -\frac{2}{r}(\Om_M-1)\lrpar{\trchi-\frac{2\Om_M}{r}}}Y^{(1m_i)} d\mu_{r^2\gac} \\
=&\OO\lrpar{\varep_R^2}+\OO\lrpar{\frac{M}{R}\varep_R},
\end{aligned} 
\end{align*}
where we used that $\Om_M$ is spherically symmetric, and that, for $M/R$ sufficiently small, 
\begin{align*} 
\begin{aligned} 
\Om_M -1 := \sqrt{1-\frac{2M/R}{r}}-1 = \OO\lrpar{\frac{M}{R}},
\end{aligned} 
\end{align*} 
The remaining integrands in $\RR$ are estimated similarly (using also Lemma \ref{LEMnonlinearFourier}), yielding \eqref{EQcontrolRemainderTerm}. This finishes the proof of \eqref{EQPcomparison11}.

\subsection{Comparison of $\mathbf{L}$ and $\mathbf{L}_{\mathrm{ADM}}^{\mathrm{loc}}$.} \label{SECcomparisonANGULAR}
\ni In this section we prove that for $i=1,2,3$ and $(m_1,m_2,m_3) = (1,-1,0)$,
\begin{align} 
\begin{aligned} 
\mathbf{L}^{m_i}({}^{(R)} x_{-1,1}) =&\lrpar{ \mathbf{L}^{\mathrm{loc}}_{\mathrm{ADM}}}^i(S_{-1,1},g,k)  + \OO\lrpar{\frac{M}{R} \varep_R}+ \OO(\varep_R^2),
\end{aligned} \label{EQLcomparison11}
\end{align}
where we recall from Definitions \ref{DEFlocalCharges} and \ref{DEFlocalADMcharges} that
\begin{align*} 
\begin{aligned} 
\mathbf{L}^m({}^{(R)}x_{-1,1}) :=& \frac{1}{16\pi} \sqrt{\frac{8\pi}{3}} \lrpar{ r^3  \lrpar{ \di \trchi + \trchi (\eta-\di\log\Om) }}_H^{(1m)}({}^{(R)}x_{-1,1}), \\ 
\lrpar{\mathbf{L}^{\mathrm{loc}}_{\mathrm{ADM}}}^i(S_{-1,1},g,k) :=& \frac{1}{8\pi} \int\limits_{S_{-1,1}} (k_{jl}-\tr k \, g_{jl}) \lrpar{Y_{(i)}}^j N^l d\mu_\gd.
\end{aligned} 
\end{align*}

\ni On the one hand we have, using that $\di\Om_M=0$,
\begin{align} 
\begin{aligned} 
\mathbf{L}^{m_i} :=&  \frac{1}{16\pi} \sqrt{\frac{8\pi}{3}} r^3 \lrpar{\di \trchi + \trchi \cdot \lrpar{\eta-\di\log\Om}}^{(1m_i)}_H \\
=&  \frac{1}{16\pi} \sqrt{\frac{8\pi}{3}} r^3 \lrpar{\trchi \cdot \eta }^{(1m_i)}_H \\
=&   \frac{1}{8\pi} \sqrt{\frac{8\pi}{3}} r^2 {\eta }^{(1m_i)}_H + \frac{1}{16\pi} \sqrt{\frac{8\pi}{3}} r^3\lrpar{\lrpar{\trchi-\frac{2}{r}} \cdot \eta }^{(1m_i)}_H \\
=& \frac{1}{8\pi} \sqrt{\frac{8\pi}{3}} r^2 {\eta }^{(1m_i)}_H + \OO\lrpar{\frac{M}{R}\varep_R}+ \OO(\varep_R^2),
\end{aligned} \label{EQLestimateADM123}
\end{align}
where we used that for any scalar function $f$,
\begin{align*} 
\begin{aligned} 
(\di f )_H = 0.
\end{aligned} 
\end{align*}
On the other hand,
\begin{align} 
\begin{aligned} 
\lrpar{ \mathbf{L}^{\mathrm{loc}}_{\mathrm{ADM}}}^i :=& \frac{1}{8\pi} \int\limits_{S_{-1,1}} \lrpar{k_{jl}-\tr k \, g_{jl}} \lrpar{Y_{(i)}}^j N^l d\mu_\gd \\
=& \frac{1}{8\pi} \int\limits_{S_{-1,1}} k_{jN} \lrpar{Y_{(i)}}^j d\mu_\gd-\frac{1}{8\pi} \int\limits_{S_{-1,1}} \tr k \, \underbrace{g\lrpar{Y_{(i)}, N}}_{=0} d\mu_\gd \\
=& \frac{1}{8\pi} \int\limits_{S_{-1,1}} \eta_A \lrpar{Y_{(i)}}^A d\mu_{r^2\gac} +\OO(\varep_R^2)\\
=&  \frac{1}{8\pi} \sqrt{\frac{8\pi}{3}} r^2 \eta_H^{(1m_i)} +\OO(\varep_R^2),
\end{aligned} \label{EQLestimateADM1232}
\end{align}
where we used \eqref{EQrelationZETAk} and that the rotation fields $Y_{(i)}$, $i=1,2,3$, are $S_{-1,1}$-tangential and can, by Lemma \ref{LEMgeometricIDENTITIES}, be related to $H^{(1m)}$, $m=-1,0,1$, as follows
\begin{align*} 
\begin{aligned} 
Y_{(i)} = \sqrt{\frac{8\pi}{3}}\vert x \vert^2 H^{(1m_i)} \text{ with } (m_1,m_2,m_3) = (1,-1,0).
\end{aligned} 
\end{align*}
Putting together \eqref{EQLestimateADM123} and \eqref{EQLestimateADM1232} finishes the proof of \eqref{EQLcomparison11}.
\subsection{Expression of $\mathbf{G}$ in terms of $\mathbf{C}_{\mathrm{ADM}}^{\mathrm{loc}}$ and $\mathbf{P}_{\mathrm{ADM}}^{\mathrm{loc}}$.} \label{SECcomparisonCENTER}
In this section we prove that for $i=1,2,3$ and $(m_1,m_2,m_3) = (1,-1,0)$,
\begin{align} 
\begin{aligned} 
\mathbf{G}^m({}^{(R)} x_{-1,1}) =& \lrpar{{\mathbf{C}}_{\mathrm{ADM}}^{\mathrm{loc}}}^{i_m}(S_{-1,1},g,k) - r(S_{-1,1},g,k) \cdot \lrpar{\mathbf{P}^{\mathrm{loc}}_{\mathrm{ADM}}}^{i_m}(S_{-1,1},g,k) \\
&+ \OO\lrpar{\frac{M}{R}\varep_R} + \OO\lrpar{\varep^2_R},
\end{aligned} \label{EQCcomparison11}
\end{align}
where $r(S_{-1,1},g,k)$ denotes the area radius of $(S_{-1,1},\gd)$ and we recall from Definitions \ref{DEFlocalCharges} and \ref{DEFlocalADMcharges} that
\begin{align*} 
\begin{aligned} 
\mathbf{G}^m\lrpar{{}^{(R)}x_{-1,1}}:=&  \frac{1}{16\pi}\sqrt{\frac{4\pi}{3}} \lrpar{ r^3 \lrpar{ \di \trchi + \trchi (\eta-\di\log\Om) }}^{(1m)}_E\lrpar{{}^{(R)}x_{-1,1}}, \\
\lrpar{\mathbf{C}^{\mathrm{loc}}_{\mathrm{ADM}}}^i(S_{-1,1},g,k):=& \frac{1}{16\pi} \int\limits_{S_{-1,1}} \lrpar{\RRRic - \half R_{\mathrm{scal}} \, g}(Z^{(i)},N) d\mu_\gd,
\end{aligned} 
\end{align*}
where the vectorfields $Z^{(i)}$, $i=1,2,3$, are defined by
\begin{align*} 
\begin{aligned} 
Z^{(i)}:=& \vert x \vert^2 \pr_i - 2x^i \vert x \vert \pr_r.
\end{aligned} 
\end{align*}

\ni Consider first ${\mathbf{C}}_{\mathrm{ADM}}^{\mathrm{loc}}$. By \eqref{EQREMdecompositionZIapp} the vectorfields $Z^{(i)}$, $i=1,2,3$, can be expressed as
\begin{align*} 
\begin{aligned} 
Z^{(i)} = - \vert x \vert^3 \sqrt{\frac{8\pi}{3}} E^{(1m_i)} - \vert x \vert^2 \lrpar{\sqrt{\frac{4\pi}{3}}Y^{(1m_i)}} \pr_r,
\end{aligned} 
\end{align*}
Hence we have that, using that $g(E^{(1m)}, N)=0$,
\begin{align} 
\begin{aligned} 
16\pi \lrpar{{\mathbf{C}}_{\mathrm{ADM}}^{\mathrm{loc}}}^i =&  \int\limits_{S_{-1,1}}  \RRRic\lrpar{- \vert x\vert^3 \sqrt{\frac{8\pi}{3}} E^{(1m_i)}, N} d\mu_\gd \\
&+ \int\limits_{S_{-1,1}}  \lrpar{\RRRic -\half R_{\mathrm{scal}} \, g}\lrpar{-\vert x\vert^2 \sqrt{\frac{4\pi}{3}}Y^{(1m_i)}\pr_r, N}  d\mu_\gd,\\
=& - \sqrt{\frac{8\pi}{3}} \int\limits_{S_{-1,1}} \vert x\vert^3 \RRRic\lrpar{E^{(1m_i)}, N} d\mu_\gd \\
&-\sqrt{\frac{4\pi}{3}} \int\limits_{S_{-1,1}} \vert x\vert^2 \lrpar{\RRRic - \half R_{\mathrm{scal}} \, g}(\pr_r, N) Y^{(1m_i)} d\mu_\gd,
\end{aligned} \label{EQG1rewriting}
\end{align}
By the Gauss-Codazzi equations \eqref{EQchrkl123} the second integral on the right-hand side of \eqref{EQG1rewriting} can be expressed as
\begin{align} 
\begin{aligned} 
&\int\limits_{S_{-1,1}} \vert x\vert^2\lrpar{\RRRic - \half R_{\mathrm{scal}} \,g}(\pr_r, N) Y^{(1m_i)} d\mu_\gd \\
=& \int\limits_{S_{-1,1}}\vert x\vert^4 \lrpar{\RRRic- \half R_{\mathrm{scal}} \,g}(N, N) Y^{(1m_i)} d\mu_\gac +\OO\lrpar{\frac{M}{R}\varep_R}+ \OO(\varep_R^2) \\
=& \int\limits_{S_{-1,1}}\vert x\vert^4 \lrpar{-K+\frac{1}{4}(\tr\Th)^2 + \vert \widehat{\Th} \vert^2} Y^{(1m_i)} d\mu_\gac +\OO\lrpar{\frac{M}{R}\varep_R}+ \OO(\varep_R^2) \\
=& -\underbrace{\int\limits_{S_{-1,1}}\vert x\vert^4 K \cdot Y^{(1m_i)} d\mu_\gac}_{:=\II_1} +\frac{1}{4} \underbrace{\int\limits_{S_{2}}\vert x\vert^4 (\tr\Th)^2 \cdot Y^{(1m_i)} d\mu_\gac}_{:=\II_2} +\OO\lrpar{\frac{M}{R}\varep_R}+ \OO(\varep_R^2),
\end{aligned} \label{EQtrTheta2rewrite000100007778}
\end{align}
where we used Lemma \ref{LEMschwarzschildtrivialEstimate}. In the following we analyse $\II_1$ and $\II_2$. First, by \eqref{EQsmallnesssphere11} and Lemma \ref{LEMGaussSquareLoneEstimate} we have that for $R\geq1$ sufficiently large,
\begin{align} 
\begin{aligned} 
\II_1 = \vert x \vert^4 K^{(1m_i)} = \OO(\varep_R^2).
\end{aligned} \label{EQtrTheta2rewrite0001007778}
\end{align}
Second, by the relation $Y^{(1m)} = \frac{1}{\sqrt{2}} \Divdo E^{(1m)}$, integration by parts, and the Gauss-Codazzi equations \eqref{EQchrkl123}, we have that
\begin{align} 
\begin{aligned} 
\II_2:=& \int\limits_{S_{-1,1}} \vert x\vert^4 (\tr\Th)^2 \cdot Y^{(1m_i)} d\mu_{\gac}\\
=& -\frac{4}{\sqrt{2}} \int\limits_{S_{-1,1}}\vert x\vert^4 \cdot \tr\Th \cdot \gac\lrpar{\half\di \tr\Th, E^{(1m_i)}} d\mu_{\gac} \\ 
=& -\frac{8}{\sqrt{2}} \int\limits_{S_{-1,1}}\vert x\vert^3 \cdot\gac\lrpar{\half\di \tr\Th-\Divd \widehat{\Th}, E^{(1m_i)}} d\mu_{\gac}+\OO\lrpar{\frac{M}{R}\varep_R}+ \OO(\varep_R^2) \\
=& \frac{8}{\sqrt{2}} \int\limits_{S_{-1,1}}\vert x\vert^3 \cdot \gd\lrpar{\Divd \widehat{\Th}-\half\di \tr\Th, E^{(1m_i)}} d\mu_{\gd}+\OO\lrpar{\frac{M}{R}\varep_R}+ \OO(\varep_R^2) \\
=& \frac{8}{\sqrt{2}} \int\limits_{S_{-1,1}}\vert x\vert^3 \cdot \RRRic(E^{(1m_i)},N) d\mu_{\gd}+\OO\lrpar{\frac{M}{R}\varep_R}+ \OO(\varep_R^2), 
\end{aligned} \label{EQtrTheta2rewrite0001}
\end{align}
where we used that $\tr\Th-\frac{2}{\vert x \vert} = \OO\lrpar{\frac{M}{R}} + \OO(\varep_R)$ by \eqref{EQsmallnesssphere11} and Lemma \ref{LEMschwarzschildtrivialEstimate}. Plugging \eqref{EQtrTheta2rewrite0001007778} and \eqref{EQtrTheta2rewrite0001} into \eqref{EQtrTheta2rewrite000100007778} and subsequently into \eqref{EQG1rewriting}, we get that
\begin{align} 
\begin{aligned} 
16\pi \lrpar{{\mathbf{C}}_{\mathrm{ADM}}^{\mathrm{loc}}}^i  =& -2 \sqrt{\frac{8\pi}{3}} \int\limits_{S_{-1,1}} \vert x\vert^3 \cdot \RRRic(E^{(1m_i)}, N) d\mu_\gd+\OO\lrpar{\frac{M}{R}\varep_R} + \OO(\varep_R^2).
\end{aligned} \label{EQGcalcConcl1}
\end{align}

\ni Consider now $\mathbf{G}^m$. By Definition \ref{DEFlocalCharges}, \eqref{EQalternativeLGdef} and \eqref{EQGaugeChoicesSRscaling}, we have that
\begin{align} 
\begin{aligned} 
8\pi\sqrt{\frac{3}{8\pi}} \cdot \mathbf{G}^m = \lrpar{r^3 \lrpar{ \be + \Divd \chih + \chih \cdot \eta }}^{(1m)}_E = r^3 \be^{(1m)}_E + \OO(\varep_R^2), 
\end{aligned} \label{EQwriting1csd1}
\end{align}
where we used that by \eqref{EQsmallnesssphere11}, \eqref{EQCHIkrelations1222} and \eqref{EQrelationZETAk}, 
\begin{align*} 
\begin{aligned} 
r^3 \lrpar{\Divd \chih + \chih \cdot \eta }^{(1m)}_E = \OO(\varep_R^2).
\end{aligned}
\end{align*}

\ni Recalling the definition of $\mathbf{P}$ from Definition \ref{DEFlocalCharges} and applying \eqref{EQsmallnesssphere11}, it holds that
\begin{align*} 
\begin{aligned} 
- 8\pi\sqrt{\frac{3}{4\pi}} \frac{1}{r^3}\mathbf{P}^m :=& \lrpar{\rh+r\Divd \be}^{(1m)} \\
=&\rh^{(1m)} + \lrpar{\frac{1}{r}\Divdo \be}^{(1m)} + \OO(\varep_R^2)\\
=& \rh^{(1m)} + \lrpar{\frac{1}{r}\sqrt{2} \be_E^{(1m)}} + \OO(\varep_R^2).
\end{aligned} 
\end{align*}
In particular, $\be_E^{(1m)}$ can expressed as
\begin{align} 
\begin{aligned} 
\be_E^{(1m)} = -\frac{r}{\sqrt{2}} \lrpar{8\pi\sqrt{\frac{3}{4\pi}} \frac{1}{r^3} \mathbf{P}^m + \rh^{(1m)} }+\OO(\varep_R^2)
\end{aligned} \label{EQbetaRewriteP0001}
\end{align}
Plugging \eqref{EQbetaRewriteP0001} into the right-hand side of \eqref{EQwriting1csd1} yields
\begin{align} 
\begin{aligned} 
8\pi\sqrt{\frac{3}{8\pi}} \cdot \mathbf{G}^m =& r^3 \be^{(1m)}_E + \OO(\varep_R^2) \\
=& r^3 \lrpar{-\frac{r}{\sqrt{2}} \lrpar{8\pi\sqrt{\frac{3}{4\pi}} \frac{1}{r^3} \mathbf{P}^m + \rh^{(1m)} }} + \OO(\varep_R^2) \\
=&-8\pi\sqrt{\frac{3}{8\pi}} \cdot r \cdot \mathbf{P}^m - \frac{r^4}{\sqrt{2}} \rh^{(1m)} +\OO(\varep_R^2). 
\end{aligned} \label{EQGrewrite0002}
\end{align}
The second term on the right-hand side of \eqref{EQGrewrite0002} can be rewritten by the Gauss equation \eqref{EQGaussEquation}, application of \eqref{EQsmallnesssphere11} and \eqref{EQCHIkrelations1222}, and use of \eqref{EQtrTheta2rewrite0001} as follows,
\begin{align} 
\begin{aligned} 
\rh^{(1m)} =& \lrpar{-K-\frac{1}{4} \trchi \trchib + \half (\chih, \chibh)}^{(1m)}  \\
=& -\frac{1}{4} \lrpar{\trchi\trchib}^{(1m)} +\OO(\varep_R^2)\\
=& -\frac{1}{4} \lrpar{(-\tr \overline{k}+\tr\Th)(-\tr\Th-\tr \overline{k})}^{(1m)}+\OO(\varep_R^2)\\
=& \frac{1}{4} \lrpar{(\tr\Th)^2}^{(1m)} +\OO(\varep_R^2) \\
=& \frac{2}{\sqrt{2}} \int\limits_{S_{-1,1}}\frac{1}{\vert x\vert} \cdot \RRRic(E^{(1m_i)},N) d\mu_{\gd}+\OO\lrpar{\frac{M}{R}\varep_R}+ \OO(\varep_R^2).
\end{aligned} \label{EQGrewrite0003}
\end{align}
Plugging \eqref{EQGrewrite0003} into \eqref{EQGrewrite0002}, and using \eqref{EQPcomparison11} and \eqref{EQGcalcConcl1}, we get 
\begin{align*} 
\begin{aligned} 
8\pi \sqrt{\frac{3}{8\pi}} \mathbf{G}^m =& -8\pi\sqrt{\frac{3}{8\pi}} \cdot r \cdot \mathbf{P}^m - r^3 \int\limits_{S_2} \RRRic(N,E^{(1m)}) d\mu_\gd + \OO\lrpar{\frac{M}{R}\varep_R} + \OO\lrpar{\varep^2_R} \\
=& -8\pi\sqrt{\frac{3}{8\pi}} \cdot r \cdot \lrpar{\mathbf{P}_{\mathrm{ADM}}^{\mathrm{loc}}}^{i_m} + 8\pi\sqrt{\frac{3}{8\pi}} \lrpar{\mathbf{C}_{\mathrm{ADM}}^{\mathrm{loc}}}^{i_m} + \OO\lrpar{\frac{M}{R}\varep_R} + \OO\lrpar{\varep^2_R},
\end{aligned} 
\end{align*}
which can be rewritten as
\begin{align*} 
\begin{aligned} 
\mathbf{G}^m = \lrpar{\mathbf{C}_{\mathrm{ADM}}^{\mathrm{loc}}}^{i_m} -r \cdot \lrpar{\mathbf{P}_{\mathrm{ADM}}^{\mathrm{loc}}}^{i_m} + \OO\lrpar{\frac{M}{R}\varep_R} + \OO\lrpar{\varep^2_R}.
\end{aligned} 
\end{align*}
This finishes the proof of \eqref{EQCcomparison11}.

\subsection{Conclusion of proof of Theorem \ref{PROPconstructionStatement}} \label{SECappProofPropositionSPHERESPACE} 

\ni In the following we conclude the proof of Theorem \ref{PROPconstructionStatement}. We first rescale the estimate \eqref{EQfullEstimateExtendedSpheredataCONST} of Section \ref{SECproofExistenceConstruction} and then calculate the asymptotic charges by using the estimates of Sections \ref{SECcomparisonENERGY}, \ref{SECcomparisonLINEAR}, \ref{SECcomparisonANGULAR} and \ref{SECcomparisonCENTER}.

We recall that in Theorem \ref{PROPconstructionStatement} we are working with strongly asymptotically flat spacelike initial data, hence we apply the estimates of Sections \ref{SECproofExistenceConstruction}, \ref{SECcomparisonENERGY}, \ref{SECcomparisonLINEAR}, \ref{SECcomparisonANGULAR} and \ref{SECcomparisonCENTER} with $\varep_R=\smallO(R^{-3/2})$.\\

\ni \textbf{Rescaling of estimates.} From \eqref{EQfullEstimateExtendedSpheredataCONST} we have that the constructed ingoing null data $({}^{(R)} {x}_{-1+[-\de,\de],1})$ satisfies, for $R\geq1$ sufficiently large, 
\begin{align*} 
\begin{aligned} 
\Vert {}^{(R)} {x}_{-1+[-\de,\de],1}-{\mathfrak{m}}^{M/R} \Vert_{\XX^+\lrpar{\HHb_{-1+[-\de,\de],1}}} = \smallO\lrpar{ R^{-3/2}}.
\end{aligned} 
\end{align*}
By Lemma \ref{LEMspheredataInvarianceSCale}, the rescaled ingoing null data 
$${x}_{-R+R[-\de,\de],R} := {}^{(R^{-1})}\lrpar{{}^{(R)} {x}_{-1+[-\de,\de],1}}$$
satisfies
\begin{align*} 
\begin{aligned} 
\Vert {x}_{-R+R[-\de,\de],R}-{\mathfrak{m}}^{M} \Vert_{\XX^+\lrpar{\HHb_{-R+R[-\de,\de],R}}} = \smallO\lrpar{ R^{-3/2}}.
\end{aligned} 
\end{align*}
To conclude that the family of ingoing null data $({x}_{-R+R[-\de,\de],R})$ is strongly asymptotically flat, it remains by Definition \ref{DEFadmissibleEXTSequences} to show that 
\begin{align} 
\begin{aligned} 
\Vert \beta^{[1]}(x_{-R,R})\Vert_{L^2(S_{-R,R})} =& \OO\lrpar{R^{-3}}.
\end{aligned} \label{EQadditionalSAFcondition1277788}
\end{align}
We claim that \eqref{EQadditionalSAFcondition1277788} follows from the finiteness of the charges $\mathbf{L}_\infty$ and $\mathbf{G}_\infty$ shown below. Indeed, by Definition \ref{DEFlocalCharges}, \eqref{EQalternativeLGdef}, Lemma \ref{LEMrescaleLOCALCHARGES} and \eqref{EQfullEstimateExtendedSpheredataCONST}, we have that for $R\geq1$ large,
\begin{align*} 
\begin{aligned} 
\mathbf{L}^m(x_{-R,R}) =& R^2 \cdot \mathbf{L}^m({}^{(R)}x_{-1,1}) \\
=& R^2 \cdot \lrpar{-r^3 \lrpar{\be + \Divd \chih + \chih \cdot \lrpar{\eta-\di \log \Om}}}^{(1m)}_H({}^{(R)}x_{-1,1}) \\
=& R^2 \cdot \lrpar{-\lrpar{r_{M/R}(-1,1)}^3 \be_H^{(1m)}({}^{(R)}x_{-1,1}) + \OO(R^{-3})}, \\
\mathbf{G}^m(x_{-R,R}) =& R^2 \cdot \mathbf{G}^m({}^{(R)}x_{-1,1}) \\
=& R^2 \cdot \lrpar{-r^3 \lrpar{\be + \Divd \chih + \chih \cdot \lrpar{\eta-\di \log \Om}}}^{(1m)}_E({}^{(R)}x_{-1,1}) \\
=& R^2 \cdot \lrpar{-\lrpar{r_{M/R}(-1,1)}^3 \be_E^{(1m)}({}^{(R)}x_{-1,1}) + \OO(R^{-3})}.
\end{aligned} 
\end{align*}
Hence by Definition \ref{DEFasymptoticCharges} and the finiteness of $\mathbf{L}_\infty$ and $\mathbf{G}_\infty$ (shown below) we deduce that
\begin{align} 
\begin{aligned} 
\vert \be^{(1m)}_H({}^{(R)}x_{-1,1}) \vert + \vert \be^{(1m)}_E({}^{(R)}x_{-1,1}) \vert = \OO(R^{-2}).
\end{aligned} \label{EQcontrolbeta1constructed777}
\end{align}
Therefore, by the expansion
\begin{align*} 
\begin{aligned} 
\be^{[1]}({}^{(R)}x_{-1,1}) = \sum\limits_{m=-1,0,1} \lrpar{\be^{(1m)}_H({}^{(R)}x_{-1,1}) \cdot H^{(1m)} + \be^{(1m)}_E({}^{(R)}x_{-1,1}) \cdot E^{(1m)}},
\end{aligned} 
\end{align*}
we get from \eqref{EQcontrolbeta1constructed777} and by the scaling of $\be$, see Lemma \ref{DEFscalingNULLSTRUC}, that
\begin{align*} 
\begin{aligned} 
\Vert \be^{[1]}(x_{-R,R}) \Vert^2_{L^2(S_{-R,R})} =& R^{-2} \cdot \Vert \be^{[1]}({}^{(R)}x_{-1,1}) \Vert^2_{L^2(S_{-1,1})} \\
\les& R^{-2}\cdot \sum\limits_{m=-1,0,1} \lrpar{\lrpar{\be^{(1m)}_H({}^{(R)}x_{-1,1})}^2 + \lrpar{\be^{(1m)}_E({}^{(R)}x_{-1,1})}^2} \\
=&\OO(R^{-6}).
\end{aligned} 
\end{align*}
This finishes the proof of \eqref{EQadditionalSAFcondition1277788}. It thus only remains to analyze the asymptotics of the charges $(\mathbf{E}, \mathbf{P}, \mathbf{L}, \mathbf{G})(x_{-R,R})$. 

Consider first $\mathbf{E}$. By Lemmas \ref{LEMrescaleLOCALCHARGES} and \ref{LEMdecayPADMlocal}, and \eqref{EQLEMscalingADMlocal} and \eqref{EQEcomparison11} with $\varep= \smallO(R^{-3/2})$, it follows that
\begin{align*} 
\begin{aligned} 
\mathbf{E}(x_{-R,R}) =& R \cdot \mathbf{E}\lrpar{{}^{(R)}x_{-1,1}} \\
=& R \cdot \lrpar{\mathbf{E}^{\mathrm{loc}}_{\mathrm{ADM}}\lrpar{S_{-1,1},{}^{(R)}g,{}^{(R)}k} + \smallO(R^{-5/2})} \\
=& \mathbf{E}^{\mathrm{loc}}_{\mathrm{ADM}}\lrpar{S_{-R,R},g,k} + \smallO(R^{-3/2}) \\
=& \mathbf{E}_{\mathrm{ADM}} + \OO(R^{-1}).
\end{aligned} 
\end{align*}

\ni The analysis of $\mathbf{P}$ and $\mathbf{L}$ based on \eqref{EQPcomparison11}, \eqref{EQLcomparison11} and Lemma \ref{LEMdecayPADMlocal} is similar and yields for $m=-1,0,1$ and $(i_{-1},i_0,i_1)=(2,3,1)$,
\begin{align} 
\begin{aligned} 
\mathbf{P}^m(x_{-R,R}) =\OO(R^{-3/2}), \,\, \mathbf{L}^m(x_{-R,R})=\lrpar{{\mathbf{L}}_{\mathrm{ADM}}}^{i_m}+\smallO(1).
\end{aligned} \label{EQdecayPLstrongAFspacelikegluing999012322}
\end{align}

\ni It remains to analyze $\mathbf{G}$. By Lemma \ref{LEMrescaleLOCALCHARGES} and \eqref{EQLEMscalingADMlocal}, \eqref{EQCcomparison11} and \eqref{EQdecayPLstrongAFspacelikegluing999012322}, we have, for $m=-1,0,1$ and $(i_{-1},i_0,i_1)=(2,3,1)$,
\begin{align*} 
\begin{aligned} 
\mathbf{G}^m(x_{-R,R}) =& R^2 \cdot \mathbf{G}^m\lrpar{{}^{(R)} x_{-1,1}}\\
=& R^2 \cdot \lrpar{\lrpar{\mathbf{C}_{\mathrm{ADM}}^{\mathrm{loc}}}^{i_m}\lrpar{S_{-1,1},{}^{(R)}g,{}^{(R)}k} }\\
&+ R^2 \cdot \lrpar{ -r\lrpar{S_{-1,1},{}^{(R)}g,{}^{(R)}k}  \cdot \lrpar{\mathbf{P}^{\mathrm{loc}}_{\mathrm{ADM}}}^{i_m}\lrpar{S_{-1,1},{}^{(R)}g,{}^{(R)}k}+ \smallO(R^{-5/2})} \\
=& \lrpar{\mathbf{C}_{\mathrm{ADM}}^{\mathrm{loc}}}^{i_m}\lrpar{S_{-R,R},g,k} -  r(S_{-R,R},g,k) \cdot \lrpar{\mathbf{P}^{\mathrm{loc}}_{\mathrm{ADM}}}^{i_m}\lrpar{S_{-R,R},g,k} +\smallO(R^{-1/2}) \\
=& \lrpar{\mathbf{C}_{\mathrm{ADM}}}^{i_m} + \smallO(1).
\end{aligned} 
\end{align*}
where we used that by Lemma \ref{LEMdecayPADMlocal}, for strongly asymptotically flat spacelike initial data, 
\begin{align*} 
\begin{aligned} 
\mathbf{P}_{\mathrm{ADM}}^{\mathrm{loc}}(S_r,g,k) =  \OO(r^{-3/2}).
\end{aligned} 
\end{align*}

\ni This finishes the proof of Theorem \ref{PROPconstructionStatement}.

\section{Kerr reference spacelike initial data and sphere data} \label{SECappKerrFamilyDetails} \ni  In this section we define Kerr reference spacelike initial data and sphere data, and prove estimates. We proceed as follows.

\begin{itemize}
\item In Section \ref{SECdefKerrParameter} we define the \emph{Kerr parameter map} $\KK_{\mathrm{par}}$ which maps Kerr parameters 
\begin{align*} 
\begin{aligned} 
\mathbf{p}:= (M,a,\La,\La', \mathbf{a}) \in \RRR \times \RRR \times \mathrm{SO(1,3)} \times \mathrm{SO(3)} \times \RRR^3,
\end{aligned} 
\end{align*}
to Kerr spacelike initial data $(g^{\mathbf{p}}, k^{\mathbf{p}})$, and prove preliminary estimates. 
\item In Section \ref{SECkerrChargeMapping} we define and prove estimates for the \emph{Kerr asymptotic invariants map} $\KK$ which maps asymptotic invariants vectors 
\begin{align*} 
\begin{aligned} 
\la := \lrpar{\mathbf{E}(\la),\mathbf{P}(\la),\mathbf{L}(\la),\mathbf{C}(\la)} \in I(0) \times \RRR^3 \times \RRR^3,
\end{aligned} 
\end{align*}
to Kerr parameters $(M,a,\La,\La', \mathbf{a}):=\KK(\la)$ such that the asymptotic invariants of the spacelike initial data $(g^{\mathbf{p}}, k^{\mathbf{p}})$ agree with the asymptotic invariants vector $\la$.

\item In Section \ref{SECKerrSphereDataDefinition} we construct Kerr reference sphere data $x_{-R,2R}^{\KK(\la)}$ from $(g^{\KK(\la)}, k^{\KK(\la)})$ by following the construction of Section \ref{SECstatementConstruction} with smallness parameter $\varep_R=\OO(R^{-3/2})$, see \eqref{EQREMgeneralisedEstimates}.

\item In Section \ref{SECchargeEstimateKERR444} we combine the estimates of Sections \ref{SECdefKerrParameter} and \ref{SECkerrChargeMapping} to relate the charges, local integrals and asymptotic invariants of the Kerr reference spacelike initial data and sphere data.

\end{itemize}

\ni The estimates proved in Section \ref{SECchargeEstimateKERR444} are essential for our characteristic gluing to Kerr in Section \ref{SECproofMainTheorem1}, see \eqref{EQasympDECAY55540002}.

\subsection{Kerr parameter map and estimates} \label{SECdefKerrParameter} In this section we define the Kerr parameter map $\KK_{\mathrm{par}}$ and prove preliminary estimates. The construction of the map $\KK_{\mathrm{par}}$ follows Appendices E and F of \cite{ChruscielDelay}. \\

\ni \textbf{Definition of Kerr parameters.} We introduce the following \emph{Kerr parameters}.
\begin{itemize}
\item Let $M\in \RRR$ be the \emph{mass parameter} and $a \in \RRR$ be the \emph{normalized angular momentum parameter}. 
\item For a vector $v \in B(0,1) \subset \RRR^3$, define, with $v^2 := \vert v \vert^2$,
\begin{align} 
\begin{aligned} 
\ga := \frac{1}{\sqrt{1- v^2}},
\end{aligned} \label{EQdefinitionGAMMA}
\end{align}
and let $\La \in \mathrm{SO(1,3)}$ denote the \emph{Lorentz boost matrix}
\begin{align} 
\begin{aligned} 
\La = \begin{pmatrix}
\ga & -\ga v^1 & -\ga v^2  & -\ga v^3\\
-\ga v^1 & 1 +(\ga-1)\frac{(v^1)^2}{v^2} & (\ga-1)\frac{(v^1)(v^2)}{v^2} &  (\ga-1)\frac{(v^1)(v^3)}{v^2}\\
 -\ga v^2 & (\ga-1)\frac{(v^1)(v^2)}{v^2} & 1 +(\ga-1)\frac{(v^2)^2}{v^2} &  (\ga-1)\frac{(v^2)(v^3)}{v^2} \\
  -\ga v^3 & (\ga-1)\frac{(v^1)(v^3)}{v^2} & (\ga-1)\frac{(v^2)(v^3)}{v^2} & 1 +(\ga-1)\frac{(v^3)^2}{v^2} \end{pmatrix}.
\end{aligned} \label{EQdefinitionLAMBDA}
\end{align}
\item Let $\La' \in \mathrm{SO(3)}$ be a \emph{rotation matrix} of the coordinate system $(x^1,x^2,x^3)$ on $\RRR^3$.
\item Let $\mathbf{a} \in \RRR^3$ be a \emph{translation vector} of the coordinate system $(x^1,x^2,x^3)$ on $\RRR^3$.
\end{itemize}

\ni \textbf{Definition of Kerr parameter map.} The Kerr parameter map $\KK_{\mathrm{par}}$ is defined to map
\begin{align*} 
\begin{aligned} 
\KK_{\mathrm{par}}: \RRR \times \RRR \times \mathrm{SO(1,3)} \times \mathrm{SO(3)}\times \RRR^3 &\to C_{\mathrm{loc}}^{k_0}\lrpar{\RRR^3 \setminus B(0,1)} \times C_{\mathrm{loc}}^{k_0-1}\lrpar{\RRR^3 \setminus B(0,1)},\\
\mathbf{p}:=(M,a,\La, \La',\mathbf{a}) &\mapsto \lrpar{ g^{\mathbf{p}}_{ij},k^{\mathbf{p}}_{ij}},
\end{aligned} 
\end{align*}
where $\lrpar{ g^{\mathbf{p}}_{ij},k^{\mathbf{p}}_{ij}}$ denotes Kerr spacelike initial data on $\RRR^3 \setminus B(0,1)$ to be constructed below, and the tensor space $C^{k_0}_{\mathrm{loc}}(\RRR^3 \setminus B(0,1))\times C^{k_0-1}_{\mathrm{loc}}(\RRR^3 \setminus B(0,1))$ is defined in Section \ref{SECspacelikescaling}. \\

\ni \textbf{Construction of Kerr parameter map.} In the following we sketch the construction of the Kerr reference spacelike initial data $(g^{\mathbf{p}}_{ij}, k^{\mathbf{p}}_{ij})$, following Appendix F of \cite{ChruscielDelay}.

The construction is based on rotating, translating and boosting the hypersurface  $\{t=0\}$ (defined in Boyer-Lindquist coordinates), see the four steps below. In this context it is useful to define the so-called \emph{Poincar\'e charges} $p_\mu$ and $J_{\mu\nu}$ (see Appendix E in \cite{ChruscielDelay}).
\begin{definition}[Poincar\'e charges $p_\mu$ and $J_{\mu\nu}$] \label{DEFpoincareCHARGES} Given asymptotically flat spacelike initial data with asymptotic invariants
$$(\mathbf{E}_{\mathrm{ADM}}, \mathbf{P}_{\mathrm{ADM}},\mathbf{L}_{\mathrm{ADM}},\mathbf{C}_{\mathrm{ADM}}),$$
define the \emph{energy-momentum $4$-vector} $p_\mu$ and the \emph{Lorentz charges} $J_{\mu\nu}$ for $\mu,\nu=0,1,2,3$ by
\begin{align*} 
\begin{aligned} 
p_0 :=& \mathbf{E}_{\mathrm{ADM}}, & p_i :=& \lrpar{\mathbf{P}_{\mathrm{ADM}}}_i, \\ 
J_{i0} :=& \lrpar{\mathbf{C}_{\mathrm{ADM}}}_i, & J_{ij} :=& \in_{ijk} \lrpar{\mathbf{L}_{\mathrm{ADM}}}^k, & J_{\mu\nu}:=&-J_{\nu\mu}.
\end{aligned} 
\end{align*}
\end{definition}

\ni The Poincar\'e charges satisfy the following useful transformation law under translations and Lorentz boosts, see Proposition E.1 in \cite{ChruscielDelay} for a proof. 

\begin{proposition}[Transformation law for Poincar\'e charges, \cite{ChruscielDelay}] \label{REMARKtransformationlaw} The Poincar\'e charges transform under Lorentz boosts $\La$ and space translations $\mathbf{a}$ as follows,
\begin{align*} 
\begin{aligned} 
p'_\mu = \La_\mu^{\,\,\,\a} p_\a, \,\, 
J'_{\mu\nu} = \La_\mu^{\,\,\,\a} \La_\nu^{\,\,\,\be} J_{\a\be} + \mathbf{a}_\mu \La_\nu^{\,\,\,\a} p_\a- \mathbf{a}_\nu \La_\mu^{\,\,\,\a} p_\a.
\end{aligned} 
\end{align*}
\end{proposition}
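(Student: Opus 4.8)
The statement is classical; a full proof is Proposition E.1 of \cite{ChruscielDelay}, and I only outline the structure. The plan is to recognize $(p_\mu, J_{\mu\nu})$ as the components, in the standard basis of Poincar\'e generators, of the ADM Hamiltonian charge functional $X \mapsto H[X]$ on asymptotically flat spacelike initial data: this functional is $\RRR$-linear in the asymptotic Killing field $X$ of Minkowski spacetime and is a geometric invariant of the data, in particular independent of the choice of admissible asymptotic chart (under the decay and Regge--Teitelbaum parity conditions of Definition \ref{DEFasymptoticFLATNESSWEAK}, which Kerr data satisfies). A Lorentz boost $\La$ together with a space translation $\mathbf{a}$ is implemented by a Poincar\'e change of the asymptotic coordinate chart, under which $H[\,\cdot\,]$ is unchanged while the generators of the symmetries transform by a known \emph{linear} rule; the asserted formula then follows by pure algebra.

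First I would record that all four charges arise as $H[X]$ evaluated on the Poincar\'e generators: $p_0 = H[\partial_t] = \mathbf{E}_{\mathrm{ADM}}$ and $p_i = H[\partial_i] = \lrpar{\mathbf{P}_{\mathrm{ADM}}}_i$ for the translations $T_{(\mu)}$; $J_{ij} = H[x_i\partial_j - x_j\partial_i] = \in_{ijk}\lrpar{\mathbf{L}_{\mathrm{ADM}}}^k$ for the rotations $M_{(ij)}$; and $J_{i0} = H[x_i\partial_t - t\partial_i] = \lrpar{\mathbf{C}_{\mathrm{ADM}}}_i$ for the boosts $M_{(i0)}$, the last two equalities with the expressions of Definition \ref{DEFADMquantities} being standard (for $\mathbf{E}$ and $\mathbf{C}$ one uses the Ricci-tensor form of Theorem \ref{THMaltDEFinvariants} together with the decomposition of $Z^{(i)}$ recorded in Lemma \ref{LEMgeometricIDENTITIES}). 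The only structural facts I need from this bookkeeping are that $X \mapsto H[X]$ is $\RRR$-linear and that $H[X]$ is unchanged when the data $g$, $k$ and the field $X$ are all re-expressed in another admissible asymptotic chart.

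Next I would apply the Poincar\'e relabeling $x^\mu \mapsto (x')^\mu = \La^\mu{}_\nu x^\nu + \mathbf{a}^\mu$ (together with its Minkowskian time component), which realizes the boosted and translated data of the proposition. Writing the primed generators in the unprimed frame gives $T'_{(\mu)} = \La_\mu{}^\a T_{(\a)}$ and, from $x'_\mu = \La_\mu{}^\a x_\a + \mathbf{a}_\mu$,
\[
M'_{(\mu\nu)} = x'_\mu \partial'_\nu - x'_\nu \partial'_\mu = \La_\mu{}^\a\La_\nu{}^\be M_{(\a\be)} + \mathbf{a}_\mu \La_\nu{}^\a T_{(\a)} - \mathbf{a}_\nu \La_\mu{}^\a T_{(\a)}.
\]
By chart-independence and $\RRR$-linearity of $H$, it follows that $p'_\mu = H[T'_{(\mu)}] = \La_\mu{}^\a p_\a$ and $J'_{\mu\nu} = H[M'_{(\mu\nu)}] = \La_\mu{}^\a\La_\nu{}^\be J_{\a\be} + \mathbf{a}_\mu \La_\nu{}^\a p_\a - \mathbf{a}_\nu \La_\mu{}^\a p_\a$, which is exactly the claim. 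Invariantly, $(p_\mu, J_{\mu\nu})$ is the value of the Poincar\'e momentum map and the formula is its coadjoint transformation law on $\mathfrak{p}^\ast \cong \RRR^4 \oplus \Lambda^2 \RRR^4$.

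The algebra above is identical to the transformation of the orbital angular momentum of a point particle in special relativity, so it is not where the difficulty lies. The main obstacle is the analytic input behind the chart-independence of $H$: a boost genuinely mixes $\mathbf{E}_{\mathrm{ADM}}$ into $\mathbf{C}_{\mathrm{ADM}}$ and conversely, and both the convergence of the center-of-mass and angular-momentum surface integrals and their invariance under the Poincar\'e relabeling fail in general without the Regge--Teitelbaum parity hypotheses; verifying those hypotheses and controlling all boundary contributions at infinity is the substance of the proof. For the present paper this is a non-issue, since the proposition is applied only to the explicit Kerr reference data constructed in Section \ref{SECKerrSphereDataDefinition}, which satisfies the required decay and parity conditions and for which, if desired, $p_\mu$ and $J_{\mu\nu}$ of a boosted and translated Boyer--Lindquist slice may be computed directly. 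The details of the general argument are in Appendix E of \cite{ChruscielDelay}.
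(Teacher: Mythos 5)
The paper does not prove this proposition; it states it with a pointer to Proposition E.1 of \cite{ChruscielDelay}, which is the same reference you cite. Your outline correctly reproduces the momentum-map mechanism underlying that reference (linearity and chart-independence of the ADM Hamiltonian charge functional combined with the algebraic transformation of the Poincar\'e generators under $x^\mu \mapsto \La^\mu{}_\nu x^\nu + \mathbf{a}^\mu$), and you defer the genuine analytic content --- the Regge--Teitelbaum parity conditions and the convergence and invariance of the boundary integrals --- to the same source, exactly as the paper does, so there is nothing to flag.
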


\ni We are now in position to construct $(g^{\mathbf{p}}_{ij}, k^{\mathbf{p}}_{ij})$ in the following four steps.\\

\ni \textbf{Step 1.} For \emph{mass parameter} $M\in \RRR$ and \emph{normalized angular momentum parameter} $a \in \RRR$, the Kerr metric is given in Boyer-Lindquist coordinates $(t,r,\th^1,\th^2)$ by
\begin{align} 
\begin{aligned} 
\g =& - dt^2 + \Si \lrpar{\frac{1}{\Delta} dr^2 + d(\th^1)^2} + (r^2+a^2) \sin^2 \th^1 d(\th^2)^2 \\
&+ \frac{2Mr}{\Si} \lrpar{a\sin^2\th^1 \, d(\th^2)^2 - dt}^2,
\end{aligned} \label{EQKerrMetric22287777}
\end{align}
where 
\begin{align*} 
\begin{aligned} 
\Delta = r^2 -2Mr + a^2, \,\, \Si = r^2 + a^2 \cos^2\th^1.
\end{aligned} 
\end{align*}
For $\vert M \vert$ and $\vert a \vert$ sufficiently small, the hypersurface $$\Si:= \{t=0\} \cap \{ r \geq 1 \}$$ is smooth and spacelike, and its induced spacelike initial data has asymptotic invariants 
\begin{align*} 
\begin{aligned} 
\mathbf{E}_{\mathrm{ADM}}= M, \,\, \mathbf{P}_{\mathrm{ADM}}=0, \,\, \lrpar{\mathbf{L}_{\mathrm{ADM}}}^i= Ma \cdot \de^{i}_3, \,\, \mathbf{C}_{\mathrm{ADM}}=0,
\end{aligned} 
\end{align*}
where $\de$ denotes the Kronecker delta. By Definition \ref{DEFpoincareCHARGES} (see also (F.4) in \cite{ChruscielDelay}) the associated Poincar\'e charges are
\begin{align*} 
\begin{aligned} 
p_\mu = M \cdot \de_\mu^0, \,\, J_{\mu\nu} = 2Ma \cdot \de^1_{[\mu} \de^2_{\nu]}.
\end{aligned} 
\end{align*}

\ni \textbf{Step 2.} Apply the \emph{rotation matrix} $\La'$ to change the angular coordinates $(\th^1,\th^2)$ to $(\th'^1,\th'^2)$, and denote $t':=t$ and $r':=r$. The hypersurface $$\Si' := \{t'=0\} \cap \{ r' \geq 1 \}$$ with respect to coordinates $(t',r',\th'^1,\th'^2)$ is smooth and spacelike, and the induced spacelike initial data has asymptotic invariants
\begin{align*} 
\begin{aligned} 
\mathbf{E}'_{\mathrm{ADM}}= M, \,\, \mathbf{P}'_{\mathrm{ADM}}=0, \,\, \mathbf{L}'_{\mathrm{ADM}}= Ma \cdot n, \,\, \mathbf{C}'_{\mathrm{ADM}}=0,
\end{aligned} 
\end{align*}
where the \emph{direction of the angular momentum}, $n = \La'\lrpar{ \de^{i}_3} \in \RRR^3$ satisfies $\vert n \vert =1$. By Definition \ref{DEFpoincareCHARGES} (see also (F.5) in \cite{ChruscielDelay}) the associated Poincar\'e charges are
\begin{align} 
\begin{aligned} 
p'_\mu = M \cdot \de_\mu^0, \,\, J'_{i0} = 0, \,\, J'_{ij} = Ma \cdot \in_{ijk}n^k.
\end{aligned} \label{EQpcrot17778}
\end{align}
Let $(x'^1,x'^2,x'^3)$ denote the Cartesian coordinate system defined from $(r',\th'^1,\th'^2)$ by \eqref{EQdefinSPHERICALAFcoord}. \\

\ni \textbf{Step 3.} Translate $(x'^1,x'^2,x'^3)$ by the \emph{translation vector} $\mathbf{a} \in \RRR^3$ and denote the resulting Cartesian coordinates by $(x''^1,x''^2,x''^3)$. Denote the associated spherical coordinates by $(r'',\th''^1,\th''^2)$, and let $t'' := t'$. For $a^i \in \RRR^3$ sufficiently small, the hypersurface $$\Si'':= \{t''=0\} \cap \{ r'' \geq 1 \}$$ with respect to $(t'',r'',\th''^1,\th''^2)$ is smooth and spacelike. By Proposition \ref{REMARKtransformationlaw}, see also (F.6) in \cite{ChruscielDelay}, the induced spacelike initial data has Poincar\'e charges
\begin{align*} 
\begin{aligned} 
p''_\mu = M \cdot \de_\mu^0, \,\, J''_{i0} = M \mathbf{a}_i, \,\, J''_{ij} = Ma \cdot  \in_{ijk}n^k,
\end{aligned} 
\end{align*}
which corresponds by Definition \ref{DEFpoincareCHARGES} to asymptotic invariants
\begin{align*} 
\begin{aligned} 
\mathbf{E}_{\mathrm{ADM}}= M, \,\, \mathbf{P}_{\mathrm{ADM}}=0, \,\, \mathbf{L}_{\mathrm{ADM}}= M a \cdot n, \,\, \mathbf{C}_{\mathrm{ADM}} =M \cdot \mathbf{a}.
\end{aligned} 
\end{align*}

\ni \textbf{Step 4.} Boost the coordinate system $(t'', x''^1,x''^2,x''^3)$ by the Lorentz boost matrix $\La\in \mathrm{SO(1,3)}$ (see \eqref{EQdefinitionLAMBDA}) to new coordinates $(t''',x'''^1,x'''^2,x'''^3)$, and subsequently define $(t''', r''', \th'''^1, \th'''^2)$ by \eqref{EQdefinSPHERICALAFcoord}. For $\La$ sufficiently close to the identity matrix $\mathrm{Id}$, the hypersurface $$\Si''':= \{ t''' =0\} \cap \{ r'''\geq1\}$$ with respect to $(t''', r''', \th'''^1, \th'''^2)$ is smooth and spacelike. By Proposition \ref{REMARKtransformationlaw}, see also (F.6) in \cite{ChruscielDelay}, the induced spacelike initial data has Poincar\'e charges
\begin{align*} 
\begin{aligned} 
p'''_\mu = \La_\mu^{\,\, \a} p''_\a, \,\, J'''_{\a\be}= \La_{\a}^{\,\, \mu}\La_{\be}^{\,\, \nu} J''_{\mu\nu}.
\end{aligned}
\end{align*}
In particular, by Definition \ref{DEFpoincareCHARGES}, its asymptotic invariants $\mathbf{E}'''_{\mathrm{ADM}}$ and $\mathbf{P}'''_{\mathrm{ADM}}$ are given by
\begin{align} 
\begin{aligned} 
\mathbf{E}'''_{\mathrm{ADM}} =\La_0^{\,\, \a} p''_\a= \ga M, \,\,
\lrpar{\mathbf{P}'''_{\mathrm{ADM}}}_i =\La_i^{\,\, \a} p''_\a= -\ga M v_i.
\end{aligned} \label{EQasymptoticInvariantsFinal88777}
\end{align}
We denote the coordinate components of the induced spacelike initial data on $\Si'''$ with respect to $(x'''^1,x'''^2,x'''^3)$ by
$$(g_{ij}^{\mathbf{p}}, k_{ij}^{\mathbf{p}}).$$ 
This finishes the definition of the Kerr parameter map $\KK_{\mathrm{par}}$.

The following properties of $\KK_{\mathrm{par}}$ play an essential role for the derivation of quantitative estimates for $\KK_{\mathrm{par}}$ in Propositions \ref{PROPestimatesforKerr11101} and \ref{PROPKerrParameterRTestimates} below.

\begin{enumerate}
\item From the explicit construction of $(g_{ij}^{\mathbf{p}}, k_{ij}^{\mathbf{p}})$ above, it is straight-forward to verify that $\KK_{\mathrm{par}}$ is well-defined and smooth in an open neighbourhood of 
\begin{align} 
\begin{aligned} 
(M,a,\La,a_i)=(0,0,\mathrm{Id}, 0), \,\, \La' \in \mathrm{SO(3)}.
\end{aligned} \label{EQoriginEvalKerrmap}
\end{align}

\item For real numbers $\mathbf{E}\geq0$, we have that
\begin{align} 
\begin{aligned} 
(g_{ij}^{\mathbf{E}}, k_{ij}^{\mathbf{E}}) = \lrpar{g_{ij}^{(\mathbf{E},0,\mathrm{Id},\mathrm{Id},0)}, k_{ij}^{(\mathbf{E},0,\mathrm{Id},\mathrm{Id},0)}}.
\end{aligned} \label{EQrefEqualsSSreference999089}
\end{align}
where $(g_{ij}^{\mathbf{E}}, k_{ij}^{\mathbf{E}})$ denotes the Schwarzschild reference spacelike initial data in Schwarzschild Cartesian coordinates, see \eqref{EQssInitialSpacelike1}.

\item It holds that for all Lorentz boosts $\La \in \mathrm{SO(1,3)}$, rotations $\La' \in \mathrm{SO(3)}$ and translations $\mathbf{a} \in \RRR^3$,
\begin{align} 
\begin{aligned} 
D_a \KK_{\mathrm{par}} \vert_{(0,0,\La,\La', \mathbf{a})} =0.
\end{aligned} \label{EQKerrDclaim1}
\end{align}
Indeed, for $a\in \RRR$ and $M=0$, the spacetime metric \eqref{EQKerrMetric22287777} reduces to
\begin{align} 
\begin{aligned} 
\mathbf{g} = -dt^2 + \frac{r^2+a^2 \cos^2\th}{r^2 + a^2} dr^2 + (r^2+a^2\cos^2\th) d\th^2 + (r^2+a^2)\sin^2 \th d\phi^2.
\end{aligned} \label{EQaBLkerrD}
\end{align}
As the parameter $a$ only appears quadratically in \eqref{EQaBLkerrD}, the induced spacelike initial data on $\{t=0\} \cap \{r\geq1\}$ also depends quadratically on $a$, which shows that
\begin{align*} 
\begin{aligned} 
D_a \KK_{\mathrm{par}} \vert_{(0,0,\mathrm{Id},\mathrm{Id}, 0)} =0.
\end{aligned} 
\end{align*}
Moreover, boosts, rotations and translations of \eqref{EQaBLkerrD} do not change the quadratic appearance of $a$ in the spacetime metric components, hence we conclude further that \eqref{EQKerrDclaim1} holds.

\item It holds that for all Lorentz boosts $\La \in \mathrm{SO(1,3)}$, rotations $\La' \in \mathrm{SO(3)}$ and translations $\mathbf{a} \in \RRR^3$,
\begin{align} 
\begin{aligned} 
\KK_{\mathrm{par}}(0,0,\La,\La',\mathbf{a})= \KK_{\mathrm{par}}(0,0,\mathrm{Id}, \mathrm{Id},0),
\end{aligned} \label{EQKerrDclaim2}
\end{align}
and thus in particular,
\begin{align} 
\begin{aligned} 
D_\La \KK_{\mathrm{par}} \vert_{(0,0,\La,\La', \mathbf{a})} =0, \,\, D_{\mathbf{a}} \KK_{\mathrm{par}} \vert_{(0,0,\La,\La', \mathbf{a})} =0.
\end{aligned} \label{EQKerrDclaim22}
\end{align}
\noindent Indeed, for each Lorentz boost matrix $\La \in \mathrm{SO(1,3)}$ it holds that 
\begin{align*} 
\begin{aligned} 
\La^\top \mathbf{m} \La = \mathbf{m},
\end{aligned} 
\end{align*}
that is, the Minkowski spacetime metric is invariant under Lorentz boosts. In particular, this shows that Lorentz boosts leave Minkowski spacelike initial data invariant. Similarly, due to its spherical symmetry and translation invariance, the Minkowski reference spacelike initial data is invariant under space rotations and space translations. This finishes the proof of \eqref{EQKerrDclaim2} and \eqref{EQKerrDclaim22}.

\item Recall from \eqref{EQRTcondDefinition3} the Regge-Teitelbaum quantities
\begin{align} 
\begin{aligned} 
\mathrm{RT}(g)_{ij}(x) := g_{ij}(x) - g_{ij}(-x), \,\, \mathrm{RT}(k)_{ij}(x) :=& k_{ij}(x) + k_{ij}(-x),
\end{aligned} \label{EQRTcondDefinition3222777}
\end{align}
and their higher order generalisations. By the spherical symmetry of Schwarzschild spacelike initial data $(g_{ij}^{\mathbf{E}}, k_{ij}^{\mathbf{E}})$, it holds that 
\begin{align*} 
\begin{aligned} 
\mathrm{RT}(g^{\mathbf{E}})_{ij}(x) \equiv 0, \,\, \mathrm{RT}(k^{\mathbf{E}})_{ij}(x) \equiv 0. 
\end{aligned} 
\end{align*}
Further, as the anti-podal map $\AA: x \mapsto -x$ commutes with Lorentz boosts and space rotations (see, for example, Appendix F in \cite{ChruscielDelay}), that is, 
\begin{align*} 
\begin{aligned} 
{} [ \La, \AA]=0, \,\, [\La',\AA]=0,
\end{aligned} 
\end{align*}
we deduce that for all Lorentz boosts $\La \in \mathrm{SO(1,3)}$ and space rotations $\La' \in \mathrm{SO(3)}$,
\begin{align} 
\begin{aligned} 
\mathrm{RT}(g^{(M,0,\La,\La',0)})_{ij} \equiv 0, \,\, \mathrm{RT}(k^{(M,0,\La,\La',0)})_{ij} \equiv 0.
\end{aligned} \label{EQSSRTvanishing1}
\end{align}

\item Analogously to (4), by the translation invariance and spherical symmetry of Minkowski reference spacelike initial data, it holds that for all Lorentz boosts $\La \in \mathrm{SO(1,3)}$, rotations $\La' \in \mathrm{SO(3)}$ and space translations $\mathbf{a} \in \RRR^3$,
\begin{align*} 
\begin{aligned} 
\mathrm{RT}(g^{(0,0,\La,\La',\mathbf{a})})_{ij,l} \equiv 0, \,\, \mathrm{RT}(k^{(0,0,\La,\La',\mathbf{a})})_{ij} \equiv 0,
\end{aligned} 
\end{align*}
which implies in particular that
\begin{align} 
\begin{aligned} 
D_{\mathbf{a}} \lrpar{\mathrm{RT}(g^{(0,0,\La,\La',\mathbf{a})})_{ij}} \equiv 0, \,\, D_{\mathbf{a}} \lrpar{\mathrm{RT}(k^{(0,0,\La,\La',\mathbf{a})})_{ij}} \equiv 0.
\end{aligned} \label{EQSSRTvanishing22}
\end{align}

\item From \eqref{EQKerrDclaim1} and \eqref{EQRTcondDefinition3222777} it follows that for all Lorentz boosts $\La \in \mathrm{SO(1,3)}$, rotations $\La' \in \mathrm{SO(3)}$ and space translation $\mathbf{a} \in \RRR^3$,
\begin{align} 
\begin{aligned} 
D_a \lrpar{\mathrm{RT}(g^{(0,a,\La,\La',\mathbf{a})})_{ij}} \Big\vert_{a=0} \equiv0, \,\,
D_a \lrpar{\mathrm{RT}(k^{(0,a,\La,\La',\mathbf{a})})_{ij}} \Big\vert_{a=0} \equiv0.
\end{aligned} \label{EQRTzeroDerivative12}
\end{align}

\end{enumerate}

\ni Based on the above properties of $\KK_{\mathrm{par}}$, we now prove quantitative estimates.
First, we have the following proposition.
\begin{proposition}[Parameter estimates I] \label{PROPestimatesforKerr11101} Let $\mathbf{p} := (M,a,\Lambda,\Lambda', \mathbf{a})$ be a Kerr parameter tuple, and let $\mathbf{E}\geq0$ be a real number. For $\mathbf{E}\geq0$ sufficiently small and $(M,a,\Lambda, \abf)$ sufficiently close to 
\begin{align*} 
\begin{aligned} 
(M,a,\Lambda, \abf)= (0,0,\mathrm{Id}, 0),
\end{aligned} 
\end{align*}
it holds that
\begin{align*} 
\begin{aligned} 
&\Vert (g^{\mathbf{p}},k^{\mathbf{p}})- (g^{\mathbf{E}},k^{\mathbf{E}}) \Vert_{C^{k_0}(A_{[1,3]}) \times C^{k_0-1}(A_{[1,3]})}\\ 
\les& \vert M-\mathbf{E} \vert + \lrpar{\vert \mathbf{E}\vert + \vert a \vert} \cdot \vert a \vert + \vert \mathbf{E} \vert \cdot \lrpar{ \vert \La-\mathrm{Id}\vert+ \vert \abf \vert}.
\end{aligned} 
\end{align*}

\end{proposition}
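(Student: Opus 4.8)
The plan is to express $(g^{\mathbf p},k^{\mathbf p})-(g^{\mathbf E},k^{\mathbf E})$ as a difference of two values of the Kerr parameter map $\KK_{\mathrm{par}}$ and then to Taylor-expand, exploiting the structural properties of $\KK_{\mathrm{par}}$ recorded after its construction. By property $(2)$, $(g^{\mathbf E},k^{\mathbf E})=\KK_{\mathrm{par}}(\mathbf E,0,\mathrm{Id},\mathrm{Id},0)$; moreover, since for $a=0$ the Kerr metric \eqref{EQKerrMetric22287777} is the spherically symmetric Schwarzschild metric, Step~2 of the construction acts trivially, so $\KK_{\mathrm{par}}(\mathbf E,0,\mathrm{Id},\Lambda',0)=(g^{\mathbf E},k^{\mathbf E})$ for every rotation $\Lambda'$. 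Fixing $\Lambda'$ equal to the rotation component of $\mathbf p$, it therefore suffices to bound
\[
\big\Vert \KK_{\mathrm{par}}(M,a,\Lambda,\Lambda',\mathbf a)-\KK_{\mathrm{par}}(\mathbf E,0,\mathrm{Id},\Lambda',0)\big\Vert_{C^{k_0}(A_{[1,3]})\times C^{k_0-1}(A_{[1,3]})}.
\]
Throughout I would work on a fixed small neighbourhood $U$ of $(0,0,\mathrm{Id},0)$ on which, by property $(1)$, $\KK_{\mathrm{par}}$ restricted to $A_{[1,3]}$ is a smooth map into the above Banach space (indeed real-analytic, being assembled from rational functions of the parameters and of the spatial coordinates on $A_{[1,3]}$); in particular all parameter derivatives of $\KK_{\mathrm{par}}$ used below exist and are bounded in norm on $U$, uniformly also over the compact range $\Lambda'\in\mathrm{SO}(3)$.

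Next I would interpolate in four stages by the triangle inequality, successively changing $M\mapsto \mathbf E$, then $a\mapsto 0$, then $\Lambda\mapsto\mathrm{Id}$, then $\mathbf a\mapsto 0$. \emph{Stage 1:} $\KK_{\mathrm{par}}$ is $C^1$ in $M$ with bounded derivative, so the contribution is $\les|M-\mathbf E|$. \emph{Stage 2:} by property $(3)$, $D_a\KK_{\mathrm{par}}$ vanishes on the set $\{M=a=0\}$, so Taylor's theorem with remainder gives $\Vert D_a\KK_{\mathrm{par}}\vert_{(\mathbf E,s,\Lambda,\Lambda',\mathbf a)}\Vert\les|\mathbf E|+|s|$ uniformly for $(\Lambda,\mathbf a)$ in $U$; integrating in $s$ from $0$ to $a$ yields a contribution $\les\int_0^{|a|}(|\mathbf E|+s)\,ds\les(|\mathbf E|+|a|)\,|a|$. \emph{Stages 3 and 4:} by property $(4)$, $D_\Lambda\KK_{\mathrm{par}}$ and $D_{\mathbf a}\KK_{\mathrm{par}}$ also vanish on $\{M=a=0\}$, so along the remaining paths — which have $a=0$ fixed — they are $\les|\mathbf E|$; integrating $D_\Lambda\KK_{\mathrm{par}}$ along the boost path $t\mapsto\Lambda(tv)$ (with $v$ the boost velocity of \eqref{EQdefinitionLAMBDA}, $|v|\simeq|\Lambda-\mathrm{Id}|$) and $D_{\mathbf a}\KK_{\mathrm{par}}$ along the straight path $t\mapsto t\mathbf a$ gives contributions $\les|\mathbf E|\,|\Lambda-\mathrm{Id}|$ and $\les|\mathbf E|\,|\mathbf a|$. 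Summing the four contributions produces exactly $|M-\mathbf E|+(|\mathbf E|+|a|)|a|+|\mathbf E|(|\Lambda-\mathrm{Id}|+|\mathbf a|)$, and the identity $\KK_{\mathrm{par}}(\mathbf E,0,\mathrm{Id},\Lambda',0)=(g^{\mathbf E},k^{\mathbf E})$ from the first paragraph then closes the estimate.

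The only genuinely delicate point is the passage, in Stages 2--4, from the pointwise vanishing statements $(3)$--$(4)$ on the set $\{M=a=0\}$ to the quantitative bounds $\les|M|+|a|$ for $D_a\KK_{\mathrm{par}}$, $D_\Lambda\KK_{\mathrm{par}}$ and $D_{\mathbf a}\KK_{\mathrm{par}}$ on all of $U$; this is where joint smoothness of $\KK_{\mathrm{par}}$ — hence uniform boundedness of its second parameter-derivatives in the $C^{k_0}$ norm over the (compact) closure of $U$, including over $\Lambda'\in\mathrm{SO}(3)$ — is used, everything else being bookkeeping with the fundamental theorem of calculus. I would therefore first record, once and for all, that $\KK_{\mathrm{par}}$ restricted to $A_{[1,3]}$ is a smooth Banach-space-valued map on $U$ with all derivatives bounded there, and then invoke this freely in the four stages above.
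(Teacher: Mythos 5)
Your proof is correct and follows essentially the same strategy as the paper: write $(g^{\mathbf E},k^{\mathbf E})=\KK_{\mathrm{par}}(\mathbf E,0,\mathrm{Id},\La',0)$, then interpolate in the parameters using the vanishing of $D_a\KK_{\mathrm{par}}$, $D_\La\KK_{\mathrm{par}}$ and $D_{\abf}\KK_{\mathrm{par}}$ on $\{M=a=0\}$ together with smoothness of $\KK_{\mathrm{par}}$. The only cosmetic difference is that you split the third step of the paper's telescoping decomposition (which handles $\La$ and $\abf$ jointly) into two separate stages, and you spell out the FTC/Taylor details that the paper leaves implicit.
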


\begin{proof}[Proof of Proposition \ref{PROPestimatesforKerr11101}] By \eqref{EQrefEqualsSSreference999089} and the spherical symmetry of the Schwarzschild reference initial data, for all $\La' \in \mathrm{SO(3)}$, 
\begin{align*} 
\begin{aligned} 
(g^{\mathbf{E}},k^{\mathbf{E}})= (g^{(\mathbf{E},0,\mathrm{Id},\mathrm{Id},0)}, k^{(\mathbf{E},0,\mathrm{Id},\mathrm{Id},0)}) =\KK_{\mathrm{par}}(\mathbf{E},0,\mathrm{Id},\mathrm{Id},0) = \KK_{\mathrm{par}}(\mathbf{E},0,\mathrm{Id},\La',0),
\end{aligned} 
\end{align*}
so that we get
\begin{align} 
\begin{aligned} 
&\Vert (g^{\mathbf{p}},k^{\mathbf{p}})- (g^{\mathbf{E}},k^{\mathbf{E}}) \Vert_{C^{k_0}(A_{[1,3]}) \times C^{k_0-1}(A_{[1,3]})}\\ 
=& \Vert \KK_{\mathrm{par}}(M,a,\La,\La',\abf) - \KK_{\mathrm{par}}(\mathbf{E},0,\mathrm{Id},\La',0) \Vert_{C^{k_0}(A_{[1,3]}) \times C^{k_0-1}(A_{[1,3]})}.
\end{aligned} \label{EQparameterestimation7771}
\end{align}
We rewrite the right-hand side of \eqref{EQparameterestimation7771} as
\begin{align} 
\begin{aligned} 
&\KK_{\mathrm{par}}(M,a,\La,\La',\abf) - \KK_{\mathrm{par}}(\mathbf{E},0,\mathrm{Id},\La',0) \\
=& \KK_{\mathrm{par}}(M,a,\La,\La',\abf) - \KK_{\mathrm{par}}(\mathbf{E},a,\La,\La',\abf)\\
&+ \KK_{\mathrm{par}}(\mathbf{E},a,\La,\La',\abf)- \KK_{\mathrm{par}}(\mathbf{E},0,\La,\La',\abf) \\
&+ \KK_{\mathrm{par}}(\mathbf{E},0,\La,\La',\abf) - \KK_{\mathrm{par}}(\mathbf{E},0,\mathrm{Id},\La',0).
\end{aligned} \label{EQKKdecomposition9991}
\end{align}

\ni First, by the smoothness of $\KK_{\mathrm{par}}$ around 
\begin{align} 
\begin{aligned} 
(M,a,\Lambda, \abf)= (0,0,\mathrm{Id}, 0), \,\, \La' \in \mathrm{SO(3)},
\end{aligned} \label{EQtrivialdatapoint999}
\end{align}
we get that for real numbers $\mathbf{E}\geq0$ sufficiently small and $\mathbf{p}=(M,a,\La,a^i)$ close to \eqref{EQtrivialdatapoint999},
\begin{align} 
\begin{aligned} 
\Vert \KK_{\mathrm{par}}(M,a,\La,\La',\abf) - \KK_{\mathrm{par}}(\mathbf{E},a,\La,\La',\abf) \Vert_{C^{k_0}(A_{[1,3]}) \times C^{k_0-1}(A_{[1,3]})} \les& \vert M-\mathbf{E} \vert.
\end{aligned} \label{EQEQKKdecomposition9991estim1}
\end{align}
Second, by the smoothness of $\KK_{\mathrm{par}}$ and \eqref{EQKerrDclaim1},
\begin{align} 
\begin{aligned} 
\Vert \KK_{\mathrm{par}}(\mathbf{E},a,\La,\La',\abf) - \KK_{\mathrm{par}}(\mathbf{E},0,\La,\La',\abf) \Vert_{C^{k_0}(A_{[1,3]}) \times C^{k_0-1}(A_{[1,3]})}
\les \lrpar{\vert \mathbf{E}\vert + \vert a \vert} \cdot \vert a \vert.
\end{aligned} \label{EQEQKKdecomposition9991estim2}
\end{align}
Third, by the smoothness of $\KK_{\mathrm{par}}$ and \eqref{EQKerrDclaim22},
\begin{align} 
\begin{aligned} 
&\Vert \KK_{\mathrm{par}}(\mathbf{E},0,\La,\La',\abf) - \KK_{\mathrm{par}}(\mathbf{E},0,\mathrm{Id},\La',0) \Vert_{C^{k_0}(A_{[1,3]}) \times C^{k_0-1}(A_{[1,3]})} \\
\les& \vert \mathbf{E}\vert \cdot \lrpar{ \vert \La-\mathrm{Id} \vert + \vert \abf \vert}.
\end{aligned} \label{EQEQKKdecomposition9991estim3}
\end{align}
Plugging \eqref{EQKKdecomposition9991}, \eqref{EQEQKKdecomposition9991estim1}, \eqref{EQEQKKdecomposition9991estim2} and \eqref{EQEQKKdecomposition9991estim3} into \eqref{EQparameterestimation7771} finishes the proof of Proposition \ref{PROPestimatesforKerr11101}. \end{proof}

\ni Second, we have the following proposition for the Regge-Teitelbaum quantities.
\begin{proposition}[Parameter estimates II] \label{PROPKerrParameterRTestimates} Let $\mathbf{p} := (M,a,\Lambda,\Lambda', a_i)$ be a Kerr parameter tuple. For $(M,a,\Lambda, \abf)$ sufficiently close to 
\begin{align} 
\begin{aligned} 
(M,a,\Lambda, \abf)= (0,0,\mathrm{Id}, 0),
\end{aligned} \label{EQsmoothnesspointrecall2}
\end{align}
it holds that
\begin{align*} 
\begin{aligned} 
\Vert \mathrm{RT}(g^{\mathbf{p}})_{ij} \Vert_{C^{k_0}(A_{[1,3]})}+\Vert \mathrm{RT}(k^{\mathbf{p}})_{ij} \Vert_{C^{k_0-1}(A_{[1,3]})}
\les \lrpar{\vert M \vert + \vert a \vert} \lrpar{\vert \abf \vert + \vert a \vert}.
\end{aligned}
\end{align*}
\end{proposition}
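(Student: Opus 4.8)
The plan is to regard $\mathrm{RT}(g^{\mathbf{p}})$ and $\mathrm{RT}(k^{\mathbf{p}})$ as smooth functions of the Kerr parameters with values in a fixed Banach space of tensors on the annulus $A_{[1,3]}$, and to extract the claimed quadratic bound by a telescoping argument built on the three vanishing facts \eqref{EQSSRTvanishing1}, \eqref{EQSSRTvanishing22} and \eqref{EQRTzeroDerivative12}. First I would note that, by \eqref{EQoriginEvalKerrmap}, the map $\mathbf{p}=(M,a,\La,\La',\abf)\mapsto (g^{\mathbf{p}},k^{\mathbf{p}})$ is smooth from a neighbourhood of $(M,a,\La,\abf)=(0,0,\mathrm{Id},0)$ (for all $\La'\in\mathrm{SO(3)}$) into $C^{k_0}_{\mathrm{loc}}(\RRR^3\setminus B(0,1))\times C^{k_0-1}_{\mathrm{loc}}(\RRR^3\setminus B(0,1))$; since the antipodal reflection $x\mapsto -x$ preserves the compact set $A_{[1,3]}$, composing with the bounded linear operator $\mathrm{RT}$ yields a smooth map
\[
(M,a,\La,\La',\abf)\longmapsto \lrpar{\mathrm{RT}(g^{\mathbf{p}}),\,\mathrm{RT}(k^{\mathbf{p}})}\in C^{k_0}(A_{[1,3]})\times C^{k_0-1}(A_{[1,3]}).
\]
Write $F:=\mathrm{RT}(g^{\mathbf{p}})$, viewed as a $C^{k_0}(A_{[1,3]})$-valued function of the parameters; the argument for the $k$-component is verbatim. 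The facts \eqref{EQSSRTvanishing1}, \eqref{EQSSRTvanishing22}, \eqref{EQRTzeroDerivative12} translate, respectively, into: $F(M,0,\La,\La',0)=0$ for all $M,\La,\La'$; $F(0,0,\La,\La',\abf)=0$, hence $\partial_{\abf}F(0,0,\La,\La',\abf)=0$, for all $\abf,\La,\La'$; and $\partial_a F(0,0,\La,\La',\abf)=0$ for all $\abf,\La,\La'$.

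Next I would telescope (suppressing $\La,\La'$)
\[
F(M,a,\abf)=\bigl(F(M,a,\abf)-F(M,0,\abf)\bigr)+\bigl(F(M,0,\abf)-F(M,0,0)\bigr)+F(M,0,0),
\]
and estimate the three pieces. The last piece vanishes by \eqref{EQSSRTvanishing1}. For the middle piece, integrate along the $\abf$-segment: since $\partial_{\abf}F(0,0,\,\cdot\,)\equiv 0$, smoothness of $\KK_{\mathrm{par}}$ (bounded second derivatives near the base point) gives $\Vert \partial_{\abf}F(M,0,\abf')\Vert\les \vert M\vert$ uniformly for $\abf'$ small, so the middle piece is $\OO(\vert M\vert\,\vert\abf\vert)$. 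For the first piece, integrate along the $a$-segment: from $\partial_a F(0,0,\abf)=0$ and smoothness one gets $\Vert\partial_a F(M,0,\abf)\Vert\les \vert M\vert$, while $\Vert\partial_a F(M,a',\abf)-\partial_a F(M,0,\abf)\Vert\les \vert a'\vert$, so $\Vert\partial_a F\Vert\les \vert M\vert+\vert a\vert$ along the segment and the first piece is $\OO\bigl((\vert M\vert+\vert a\vert)\vert a\vert\bigr)$. Adding up gives
\[
\Vert\mathrm{RT}(g^{\mathbf{p}})_{ij}\Vert_{C^{k_0}(A_{[1,3]})}\les \vert M\vert\,\vert\abf\vert+(\vert M\vert+\vert a\vert)\vert a\vert\les (\vert M\vert+\vert a\vert)(\vert\abf\vert+\vert a\vert),
\]
and the identical computation with $C^{k_0-1}$ in place of $C^{k_0}$ gives the bound for $\mathrm{RT}(k^{\mathbf{p}})_{ij}$.

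The proof is not deep; the one point requiring care is to arrange the telescoping so that each successive difference is governed by exactly one of the available vanishing statements — the "constant" term by \eqref{EQSSRTvanishing1}, the $\abf$-variation by the $\abf$-invariance at $M=a=0$ underlying \eqref{EQSSRTvanishing22}, and the $a$-variation by the first-order $a$-vanishing at $M=0$ in \eqref{EQRTzeroDerivative12} — after which everything reduces to the fundamental theorem of calculus for smooth Banach-space-valued maps. The genuinely geometric inputs (that the antipodal map commutes with Lorentz boosts and space rotations, and that the $a$-dependence of the Boyer--Lindquist slice is quadratic) have already been recorded in properties (4)--(7) of $\KK_{\mathrm{par}}$, so no further analysis of the Kerr family is needed here.
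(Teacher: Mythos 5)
Your proposal is correct and follows essentially the same route as the paper's proof: a telescoping decomposition in $a$ and $\abf$, with each difference controlled by the appropriate vanishing fact among \eqref{EQSSRTvanishing1}, \eqref{EQSSRTvanishing22}, \eqref{EQRTzeroDerivative12} together with smoothness of $\KK_{\mathrm{par}}$ near $(0,0,\mathrm{Id},\cdot,0)$. The only difference is cosmetic — you telescope in $a$ first and then in $\abf$ at $a=0$, whereas the paper telescopes in $\abf$ first and then in $a$ at $\abf=0$ — and you spell out the fundamental-theorem-of-calculus step more explicitly, but the ingredients and the resulting bound are identical.
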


\begin{proof}[Proof of Proposition \ref{PROPKerrParameterRTestimates}] We first consider $\mathrm{RT}(g^{\mathbf{p}})_{ij,l}$. Using \eqref{EQSSRTvanishing1} we have that
\begin{align} 
\begin{aligned} 
\mathrm{RT}(g^{(M,a,\La,\La',\mathbf{a})})_{ij} =& \mathrm{RT}(g^{(M,a,\La,\La',\mathbf{a})})_{ij} - \mathrm{RT}(g^{(M,a,\La,\La',0)})_{ij} \\
& + \mathrm{RT}(g^{(M,a,\La,\La',0)})_{ij} - \underbrace{\mathrm{RT}(g^{(M,0,\La,\La',0)})_{ij}}_{\equiv0}.
\end{aligned} \label{EQKerrRTparameterbounds2}
\end{align}
By the smoothness of $\KK_{\mathrm{par}}$ around \eqref{EQsmoothnesspointrecall2}, and \eqref{EQSSRTvanishing22} and \eqref{EQRTzeroDerivative12}, we get that
\begin{align} 
\begin{aligned} 
\left\Vert \mathrm{RT}(g^{(M,a,\La,\La',\mathbf{a})})_{ij} - \mathrm{RT}(g^{(M,a,\La,\La',0)})_{ij} \right\Vert_{C^{k_0}(A_{[1,3]})} \les& \lrpar{\vert M \vert + \vert a \vert} \cdot \vert \mathbf{a} \vert, \\
\left\Vert \mathrm{RT}(g^{(M,a,\La,\La',0)})_{ij} - \mathrm{RT}(g^{(M,0,\La,\La',0)})_{ij} \right\Vert_{C^{k_0}(A_{[1,3]})}
\les& \lrpar{\vert M \vert + \vert a \vert} \cdot \vert a \vert.
\end{aligned} \label{EQestimates2777RTkerrreference}
\end{align}
Combining \eqref{EQKerrRTparameterbounds2} and \eqref{EQestimates2777RTkerrreference} proves the estimate for $\mathrm{RT}(g^{\mathbf{p}})_{ij}$. By the same argument, \eqref{EQSSRTvanishing1}, \eqref{EQSSRTvanishing22} and \eqref{EQRTzeroDerivative12} lead to the estimate for $\mathrm{RT}(k^{\mathbf{p}})_{ij}$. This finishes the proof of Proposition \ref{PROPKerrParameterRTestimates}. \end{proof}

\subsection{Kerr asymptotic invariants map and estimates} \label{SECkerrChargeMapping} \ni In this section we define and prove estimates for the Kerr asymptotic invariants mapping $\KK$. First define the set $I(0)$ of timelike energy-momentum $4$-vectors by
\begin{align*} 
\begin{aligned} 
I(0) := \{(\mathbf{E}, \mathbf{P})\in \RRR \times \RRR^3: \mathbf{E}^2 - \vert \mathbf{P} \vert^2 >0 \} \subset \RRR^4. 
\end{aligned} 
\end{align*}
and define \emph{asymptotic invariants vectors} $\la$ to be elements of the set
\begin{align*} 
\begin{aligned} 
\la \in I(0) \times \RRR^3 \times \RRR^3.
\end{aligned} 
\end{align*}
\textbf{Notation.} We denote the components of $\la\in I(0) \times \RRR^3 \times \RRR^3$ by
\begin{align*} 
\begin{aligned} 
\la=(\mathbf{E}(\la),\mathbf{P}(\la),\mathbf{L}(\la),\mathbf{C}(\la)).
\end{aligned} 
\end{align*}

\ni The Kerr asymptotic invariants mapping $\KK$ is then defined as map for asymptotic invariants vectors to Kerr parameters,
\begin{align} 
\begin{aligned} 
\KK: I(0) \times \RRR^3 \times \RRR^3 &\to \RRR \times \RRR \times \mathrm{SO(1,3)} \times \mathrm{SO(3)} \times \RRR^3, \\
\la &\mapsto \KK(\la):= (M,a,\La,\La',\mathbf{a}),
\end{aligned} \label{EQgeneralWritingKKdef}
\end{align}
such that 
\begin{align*} 
\begin{aligned} 
\lrpar{\mathbf{E}_{\mathrm{ADM}},\mathbf{P}_{\mathrm{ADM}},\mathbf{L}_{\mathrm{ADM}},\mathbf{C}_{\mathrm{ADM}}}\lrpar{\KK_{\mathrm{par}} \lrpar{\KK(\la)}} = \la,
\end{aligned} 
\end{align*}
where $\KK_{\mathrm{par}}$ denotes the Kerr parameter map defined in Section \ref{SECdefKerrParameter}. It is shown in Appendix F of \cite{ChruscielDelay} that the map $\KK(\la)$ is well-defined on $I(0)\times \RRR^3 \times \RRR^3$. In particular, the Kerr parameter map $\KK_{\mathrm{par}}$ is surjective onto the asymptotic invariants $(\mathbf{E}_{\mathrm{ADM}},\mathbf{P}_{\mathrm{ADM}},\mathbf{L}_{\mathrm{ADM}},\mathbf{C}_{\mathrm{ADM}})$ of $(g^{\mathbf{p}},k^{\mathbf{p}})$.\\

\ni \textbf{Notation.} We use the following.
\begin{enumerate}
\item For $\la \in I(0)\times \RRR^3 \times \RRR^3$ and real numbers $r\geq1$, denote
\begin{align*} 
\begin{aligned} 
(\mathbf{E}_{\mathrm{ADM}}^{\mathrm{loc}},\mathbf{P}_{\mathrm{ADM}}^{\mathrm{loc}},\mathbf{L}_{\mathrm{ADM}}^{\mathrm{loc}},\mathbf{C}_{\mathrm{ADM}}^{\mathrm{loc}})(\la,r) := (\mathbf{E}_{\mathrm{ADM}}^{\mathrm{loc}},\mathbf{P}_{\mathrm{ADM}}^{\mathrm{loc}},\mathbf{L}_{\mathrm{ADM}}^{\mathrm{loc}},\mathbf{C}_{\mathrm{ADM}}^{\mathrm{loc}})(S_r,g_{ij}^{\KK(\la)}, k_{ij}^{\KK(\la)}),
\end{aligned} 
\end{align*}
and for real numbers $r'\geq1$,
\begin{align*} 
\begin{aligned} 
r(\la, r') := r\lrpar{S_{r'},g^{\KK(\la)}_{ij},k^{\KK(\la)}_{ij}}
\end{aligned} 
\end{align*}
where the right-hand side denotes the area radius of $S_r$.
\end{enumerate}

\begin{remark}[Scaling of the asymptotic invariants map] \label{RemarkScalingKerrData} Under scaling, see Section \ref{SECspacelikescaling}, it holds that 
\begin{align*} 
\begin{aligned} 
({}^{(R)}g^{\KK(\la)}, {}^{(R)}k^{\KK(\la)}) =&(g^{\KK({}^{(R)}\la)}, k^{\KK({}^{(R)}\la)}),\,\, r({}^{(R)}\la,r')=&R^{-1} \cdot r(\la,R \cdot r'),
\end{aligned} 
\end{align*}
where the rescaled asymptotic invariants vector ${}^{(R)}\la$ is defined by (see \eqref{EQLEMscalingADMlocal})
\begin{align} 
\begin{aligned} 
{}^{(R)}\la = (R^{-1} \mathbf{E}(\la),R^{-1} \mathbf{P}(\la),R^{-2} \mathbf{L}(\la),R^{-2} \mathbf{C}(\la) ).
\end{aligned} \label{EQrescaledLAMBDAKerr}
\end{align}
It is straight-forward to verify that the Kerr metric $\g^{M,a}$ in Boyer-Lindquist coordinates $(t,r,\th^1,\th^2)$, see \eqref{EQKerrMetric22287777}, changes to $\g^{M/R,a/R}$ under the scaling transformation
\begin{align*} 
\begin{aligned} 
(t,r,\th^1,\th^2) \to (R^{-1} t, R^{-1} r,\th^1,\th^2), \,\, \g \to R^{-2} \g.
\end{aligned} 
\end{align*}

\end{remark}

\ni The following estimates for $\KK$ are essential for the charge estimates in Section \ref{SECchargeEstimateKERR444}, see also Corollary \ref{CORcombinedEstimates222} below.
\begin{proposition}[Asymptotic invariants estimates I] \label{PROPparameterEstimatesKerr} Let $\la \in I(0) \times \RRR^3 \times \RRR^3$ be an asymptotic invariants vector such that for two real numbers $\varep_\infty>0$ and $\mathbf{E}>0$,
\begin{align} 
\begin{aligned} 
\vert \mathbf{E}(\la)-\mathbf{E} \vert^2 + \vert \mathbf{P}(\la) \vert^2 \leq \lrpar{\varep_\infty \mathbf{E}}^2,
\end{aligned} \label{EQsmallnesscondparameterEstimates}
\end{align}
and denote moreover $(M,a,\La,\La',\abf) := \KK(\la)$. For $\varep_\infty>0$ and $\mathbf{E}>0$ sufficiently small, it holds that
\begin{align*} 
\begin{aligned} 
\vert M-\mathbf{E} \vert \les& \vert \mathbf{E}(\la)-\mathbf{E} \vert  + \varep_\infty \vert \mathbf{P}(\la) \vert, &
\vert \La-\mathrm{Id} \vert \les& \frac{\vert \mathbf{P}(\la) \vert}{\mathbf{E}}, \\
\vert \abf \vert \les& \frac{1}{\mathbf{E}} \lrpar{\frac{\vert \mathbf{P}(\la)\vert}{\mathbf{E}} \vert \mathbf{L}(\la) \vert+\vert \mathbf{C}(\la) \vert}, &
\vert a \vert \les& \frac{1}{\mathbf{E}} \lrpar{\vert \mathbf{L}(\la) \vert + \frac{\vert \mathbf{P}(\la) \vert }{\mathbf{E}}\vert \mathbf{C}(\la) \vert}.
\end{aligned} 
\end{align*}
\end{proposition}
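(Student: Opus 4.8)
The plan is to invert the Poincaré-charge relations derived in the construction of $\KK_{\mathrm{par}}$ (Steps 1--4 of Section \ref{SECdefKerrParameter}) using the transformation law of Proposition \ref{REMARKtransformationlaw}, while treating the normalized angular momentum $a$ and the center-of-mass data perturbatively. First I would record the exact relations that $\KK(\la)$ must satisfy: from \eqref{EQasymptoticInvariantsFinal88777} and the transformation law we have, writing $\mathbf{p} = \KK(\la)$,
\begin{align*}
\begin{aligned}
\mathbf{E}(\la) =& \ga M, & \mathbf{P}(\la)_i =& -\ga M v_i, \\
\mathbf{L}(\la) =& Ma\cdot n + (\text{boost corrections}), & \mathbf{C}(\la) =& M\mathbf{a} + (\text{boost/translation corrections}).
\end{aligned}
\end{align*}
From the first two, $v = -\mathbf{P}(\la)/\mathbf{E}(\la)$, so that $\vert v \vert \le \varep_\infty \mathbf{E}/(\mathbf{E}(\la)) \lesssim \varep_\infty$ under \eqref{EQsmallnesscondparameterEstimates}, and $\ga = (1-|v|^2)^{-1/2} = 1 + \OO(\varep_\infty^2)$. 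This immediately gives $M = \mathbf{E}(\la)/\ga = \mathbf{E}(\la)(1 + \OO(\varep_\infty^2))$, and combined with $|\mathbf{E}(\la) - \mathbf{E}| \le \varep_\infty \mathbf{E}$ and $|\mathbf{P}(\la)| \le \varep_\infty \mathbf{E}$, one gets $|M - \mathbf{E}| \lesssim |\mathbf{E}(\la) - \mathbf{E}| + \varep_\infty^2 \mathbf{E} \lesssim |\mathbf{E}(\la)-\mathbf{E}| + \varep_\infty |\mathbf{P}(\la)|$. The bound $|\La - \mathrm{Id}| \lesssim |v| = |\mathbf{P}(\la)|/\mathbf{E}(\la) \lesssim |\mathbf{P}(\la)|/\mathbf{E}$ follows from the explicit form \eqref{EQdefinitionLAMBDA} of the boost matrix, since every off-diagonal and diagonal-minus-one entry is $\OO(|v|)$.

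Next I would handle $a$ and $\abf$. The cleanest route is to use Proposition \ref{REMARKtransformationlaw} backwards: the target Poincaré charges are $p_\mu = (\mathbf{E}(\la), \mathbf{P}(\la))$ and $J_{\mu\nu}$ built from $(\mathbf{L}(\la), \mathbf{C}(\la))$ via Definition \ref{DEFpoincareCHARGES}. Un-boosting (applying $\La^{-1}$, which is a boost with parameter $-v$) sends these to charges $(p'', J'')$ which, by the structure of Steps 1--3, must equal $(M\de_\mu^0,\, M\mathbf{a}_i \text{ for } J''_{i0},\, Ma\in_{ijk}n^k \text{ for } J''_{ij})$. Since $\La^{-1} = \mathrm{Id} + \OO(|v|)$, the un-boosted Lorentz charges differ from $(\mathbf{L}(\la), \mathbf{C}(\la))$ by terms of size $\OO(|v|)(|\mathbf{L}(\la)| + |\mathbf{C}(\la)|) = \OO(|\mathbf{P}(\la)|/\mathbf{E})(|\mathbf{L}(\la)| + |\mathbf{C}(\la)|)$; more precisely the mixing in Proposition \ref{REMARKtransformationlaw} is such that the $J_{i0}$-part picks up a $v\cdot J_{ij}$ contribution and vice versa. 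Reading off $\mathbf{a} = J''_{i0}/M$ then gives $|\abf| \lesssim \frac{1}{M}(|\mathbf{C}(\la)| + \frac{|\mathbf{P}(\la)|}{\mathbf{E}}|\mathbf{L}(\la)|) \lesssim \frac{1}{\mathbf{E}}(\frac{|\mathbf{P}(\la)|}{\mathbf{E}}|\mathbf{L}(\la)| + |\mathbf{C}(\la)|)$, using $M \sim \mathbf{E}$; similarly $|a| \lesssim \frac{1}{M}(|\mathbf{L}(\la)| + \frac{|\mathbf{P}(\la)|}{\mathbf{E}}|\mathbf{C}(\la)|) \lesssim \frac{1}{\mathbf{E}}(|\mathbf{L}(\la)| + \frac{|\mathbf{P}(\la)|}{\mathbf{E}}|\mathbf{C}(\la)|)$, reading off $Ma\cdot n = $ the $J''_{ij}$-part and using $|n| = 1$. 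The rotation $\La'$ is then determined (up to the obvious ambiguity) by requiring $\La'(\de^i_3) = n$, which is consistent and needs no estimate for this proposition.

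The main subtlety — and the step I expect to require the most care — is making the perturbative inversion rigorous: a priori one only knows $\mathbf{E}(\la), \mathbf{P}(\la)$ are small-ish (of size $\OO(\mathbf{E})$) but $\mathbf{L}(\la)$ and $\mathbf{C}(\la)$ are completely unconstrained in \eqref{EQsmallnesscondparameterEstimates}, so one must check that the map $\KK$ constructed in \cite{ChruscielDelay} actually realizes arbitrarily large $\mathbf{L}, \mathbf{C}$ with $a, \abf$ growing only linearly, and that the implicit-function-theorem neighbourhood from the smoothness of $\KK_{\mathrm{par}}$ near $(0,0,\mathrm{Id},0)$ is not being exited. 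Here the point is that large $\mathbf{C}(\la)$ forces large $\abf$ but the translation is an \emph{exact} symmetry of the construction (Step 3 just shifts coordinates), so it does not threaten the smallness of $(M,a,\La)$ near the origin; likewise large $\mathbf{L}(\la)$ forces large $Ma$, i.e. $a \sim |\mathbf{L}(\la)|/\mathbf{E}$, which can itself be large — so strictly one needs the well-posedness of $\KK$ on all of $I(0)\times\RRR^3\times\RRR^3$ (granted by \cite{ChruscielDelay}) together with the observation that the \emph{nonlinearity} we are estimating against lives only in the $(M,a,\La)$-block where $a$ enters quadratically \eqref{EQKerrDclaim1} and $\La$ enters through the explicitly-bounded formula \eqref{EQdefinitionLAMBDA}. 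I would therefore organize the argument as: (i) solve exactly for $v$, $\ga$, $M$, $\La$ from the $p_\mu$ relations; (ii) un-boost $J_{\mu\nu}$ by the now-known $\La$; (iii) solve exactly for $\mathbf{a}$ and for the pair $(a, n)$; (iv) collect the algebraic bounds, absorbing the $\OO(|v|)=\OO(|\mathbf{P}(\la)|/\mathbf{E})$ cross-terms into the stated right-hand sides. Each arithmetic step is elementary once the exact inversion formulas are written down; the only care needed is bookkeeping the factors of $\mathbf{E}$ versus $M$ and confirming $M \asymp \mathbf{E}$ throughout, which holds for $\varep_\infty$ small.
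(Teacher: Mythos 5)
Your proposal follows the same route as the paper's proof: invert the exact Poincar\'e-charge relations coming from the $\KK_{\mathrm{par}}$ construction, using Proposition \ref{REMARKtransformationlaw}, and then track the size of each inversion formula. Steps (ii)--(iv) are exactly the manipulations in the paper's Equations \eqref{EQmathbfaexpressionExplicit}--\eqref{EQcalculationakerr999012}, and your side-remark that the $\mathbf{a},a$ inversions are \emph{exact} algebraic identities (so no implicit-function-theorem neighbourhood is at risk even when $|\mathbf{L}(\la)|,|\mathbf{C}(\la)|$ are large) is correct and in fact the right way to dismiss the ``subtlety'' you raise.

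There is one genuine slip in step (i). You bound $|v|\lesssim\varep_\infty$ and then write
\begin{align*}
\vert M-\mathbf{E}\vert \lesssim \vert\mathbf{E}(\la)-\mathbf{E}\vert + \varep_\infty^2\mathbf{E} \lesssim \vert\mathbf{E}(\la)-\mathbf{E}\vert + \varep_\infty\vert\mathbf{P}(\la)\vert,
\end{align*}
but the second inequality is false: \eqref{EQsmallnesscondparameterEstimates} gives $\vert\mathbf{P}(\la)\vert\le\varep_\infty\mathbf{E}$, i.e.\ $\varep_\infty\vert\mathbf{P}(\la)\vert\le\varep_\infty^2\mathbf{E}$, which is the \emph{reverse} of what you need; $\mathbf{P}(\la)$ could vanish while $\varep_\infty^2\mathbf{E}>0$. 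The fix is to keep $|v|=|\mathbf{P}(\la)|/\mathbf{E}(\la)$ rather than coarsening it to $\OO(\varep_\infty)$: then $\ga-1=\OO(|v|^2)=\OO(|\mathbf{P}(\la)|^2/\mathbf{E}^2)$, so $\vert M-\mathbf{E}(\la)\vert = \mathbf{E}(\la)\,\OO(|v|^2) = \OO(|\mathbf{P}(\la)|^2/\mathbf{E})\lesssim\varep_\infty\vert\mathbf{P}(\la)\vert$, and the triangle inequality gives the stated bound. (Equivalently, expand $M=\sqrt{\mathbf{E}(\la)^2-|\mathbf{P}(\la)|^2}$ as the paper does in \eqref{EQMminusEestimateParamKerr77888}.) With that correction the argument is sound and matches the paper's.
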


\begin{proof}[Proof of Proposition \ref{PROPparameterEstimatesKerr}] The proof is based on the explicit identities of Section \ref{SECdefKerrParameter}. \\

\ni \textbf{Control of $\vert M-\mathbf{E}\vert$.} First, from \eqref{EQasymptoticInvariantsFinal88777} we have that
\begin{align} 
\begin{aligned} 
\mathbf{E}(\la) = \ga M, \,\, \mathbf{P}(\la) = -\ga M v^i.
\end{aligned} \label{EQADMRelationsParameterEP77788}
\end{align} 
Using the definition of $\ga$ in \eqref{EQdefinitionGAMMA}, we get that
\begin{align*} 
\begin{aligned} 
\sqrt{\mathbf{E}(\la)^2 - \vert \mathbf{P}(\la)\vert^2} =& \sqrt{\ga^2 M^2 (1- v^2)}=\sqrt{M^2}=M.
\end{aligned} 
\end{align*}
Hence by \eqref{EQsmallnesscondparameterEstimates}, for $\varep_\infty>0$ sufficiently small,
\begin{align} 
\begin{aligned} 
M- \mathbf{E} =& \sqrt{\mathbf{E}(\la)^2 - \vert \mathbf{P}(\la) \vert^2} -\mathbf{E} \\
=& \mathbf{E} \lrpar{\sqrt{1+ 2 \frac{\vert \mathbf{E}(\la)-\mathbf{E} \vert}{\mathbf{E}} + \lrpar{\frac{\vert \mathbf{E}(\la)-\mathbf{E} \vert}{\mathbf{E}}}^2 - \lrpar{\frac{\vert \mathbf{P}(\la) \vert}{\mathbf{E}}}^2}-1} \\
\les& \mathbf{E} \lrpar{(1+ \varep_\infty) \frac{\vert \mathbf{E}(\la)-\mathbf{E} \vert}{\mathbf{E}} + \varep_\infty \frac{\vert \mathbf{P}(\la)\vert }{\mathbf{E}} } \\
\les& \vert \mathbf{E}(\la)-\mathbf{E} \vert + \varep_\infty \vert \mathbf{P}(\la)\vert.
\end{aligned} \label{EQMminusEestimateParamKerr77888}
\end{align}

\ni \textbf{Control of $\vert \La -\mathrm{Id}\vert$.} Recall from \eqref{EQdefinitionLAMBDA} that the Lorentz boost matrix $\La$ is given by
\begin{align} 
\begin{aligned} 
\La = \begin{pmatrix}
\ga & -\ga v^1 & -\ga v^2  & -\ga v^3\\
-\ga v^1 & 1 +(\ga-1)\frac{(v^1)^2}{v^2} & (\ga-1)\frac{(v^1)(v^2)}{v^2} &  (\ga-1)\frac{(v^1)(v^3)}{v^2}\\
 -\ga v^2 & (\ga-1)\frac{(v^1)(v^2)}{v^2} & 1 +(\ga-1)\frac{(v^2)^2}{v^2} &  (\ga-1)\frac{(v^2)(v^3)}{v^2} \\
  -\ga v^3 & (\ga-1)\frac{(v^1)(v^3)}{v^2} & (\ga-1)\frac{(v^2)(v^3)}{v^2} & 1 +(\ga-1)\frac{(v^3)^2}{v^2}
\end{pmatrix}.
\end{aligned} \label{EQdefinitionLAMBDASO13boostParamEstimrecall}
\end{align}
In the following we estimate the components $\ga-1$, $-\ga v^i$ and $(\ga-1)\frac{v^iv^j}{v^2}$ of $\La -\mathrm{Id}$.

From \eqref{EQdefinitionGAMMA} and \eqref{EQADMRelationsParameterEP77788} we get 
\begin{align*} 
\begin{aligned} 
1- v^2 = 1- \frac{\vert \mathbf{P}(\la)\vert^2}{M^2 \ga^2} = 1- \frac{\vert \mathbf{P}(\la) \vert^2}{M^2} (1-v^2), 
\end{aligned} 
\end{align*}
which implies that
\begin{align} 
\begin{aligned} 
\frac{1}{1-v^2} = 1+ \frac{\vert \mathbf{P}(\la)\vert^2}{M^2}.
\end{aligned} \label{EQvPrelation77788}
\end{align}
Using \eqref{EQdefinitionGAMMA}, \eqref{EQsmallnesscondparameterEstimates} and \eqref{EQvPrelation77788}, we thus get that for $\varep_\infty>0$ sufficiently small,
\begin{align} 
\begin{aligned} 
\vert \ga -1\vert  = \left\vert \sqrt{1+ \frac{\vert \mathbf{P}(\la) \vert^2}{M^2}} -1 \right\vert =\left\vert \sqrt{1+ \lrpar{\frac{\vert \mathbf{P}(\la) \vert}{\mathbf{E}}}^2 \lrpar{\frac{\mathbf{E}}{M}}^2}
-1 \right\vert \les \varep_\infty \frac{\vert \mathbf{P}(\la) \vert}{\mathbf{E}},
\end{aligned} \label{EQgaestimateSRboost1}
\end{align}
where we used that by \eqref{EQsmallnesscondparameterEstimates} and \eqref{EQMminusEestimateParamKerr77888}, for $\varep_\infty>0$ sufficiently small,
\begin{align} 
\begin{aligned} 
\frac{\mathbf{E}}{M} = \frac{1}{1+ \frac{M-\mathbf{E}}{\mathbf{E}}} = 1 + \OO\lrpar{\frac{\vert M-\mathbf{E}\vert }{\mathbf{E}}} = 1+\OO(\varep_\infty).
\end{aligned} \label{EQEbyMestimateKerrparam}
\end{align}

\ni Moreover, by \eqref{EQADMRelationsParameterEP77788} and \eqref{EQEbyMestimateKerrparam}, for $\varep_\infty>0$ sufficiently small,
\begin{align} 
\begin{aligned} 
\vert -\ga v^i \vert = \left\vert \frac{\mathbf{P}(\la)^i}{M} \right\vert = \left\vert \frac{\mathbf{P}(\la)^i}{\mathbf{E}} \frac{\mathbf{E}}{M} \right\vert \les \frac{\vert \mathbf{P}(\la)\vert}{\mathbf{E}}.
\end{aligned} \label{EQgaestimateSRboost2}
\end{align}
Furthermore, by \eqref{EQgaestimateSRboost1},
\begin{align} 
\begin{aligned} 
\left\vert (\ga-1)\frac{v^i v^j}{v^2} \right\vert \les \vert \ga-1 \vert \les \varep_\infty \frac{\vert \mathbf{P}(\la)\vert}{\mathbf{E}}.
\end{aligned} \label{EQgaestimateSRboost277778}
\end{align}
Applying \eqref{EQgaestimateSRboost1}, \eqref{EQgaestimateSRboost2} and \eqref{EQgaestimateSRboost277778} to \eqref{EQdefinitionLAMBDASO13boostParamEstimrecall}, it follows that for $\varep_\infty>0$ sufficiently small,
\begin{align} 
\begin{aligned} 
\left\vert \La-\mathrm{Id} \right\vert \les \frac{\vert \mathbf{P}\vert}{\mathbf{E}}.
\end{aligned} \label{EQLambdaControlParameterKerrestimProof}
\end{align}

\ni \textbf{Control of $\mathbf{a} \in \RRR^3$.} The translation vector $\mathbf{a} \in \RRR^3$ can be calculated as follows. By Definition \ref{DEFpoincareCHARGES} and the transformation law for Poincar\'e charges of Proposition \ref{REMARKtransformationlaw}, it holds that
\begin{align*} 
\begin{aligned} 
0 =& \lrpar{\mathbf{C}_{\mathrm{ADM}}}_i\lrpar{g^{(M,a,\mathrm{Id},\La',0)},k^{(M,a,\mathrm{Id},\La',0)}} \\
=& J_{i0}\lrpar{g^{(M,a,\mathrm{Id},\La',0)},k^{(M,a,\mathrm{Id},\La',0)}} \\
=& (\La^{-1})_i^{\,\, \a}(\La^{-1})_0^{\,\, \be} J_{\a\be}\lrpar{g^{(M,a,\La,\La',\mathbf{a})},k^{(M,a,\La,\La',\mathbf{a})}} \\
&+ (-\mathbf{a})_i (\La^{-1})_0^{\,\, \a} p_\a\lrpar{g^{(M,a,\La,\La',\mathbf{a})},k^{(M,a,\La,\La',\mathbf{a})}},
\end{aligned} 
\end{align*}
where $\La^{-1}$ denotes the inverse Lorentz boost to $\La$ and we used that $\abf_0=0$. Hence we get the following expression for $\mathbf{a}$,
\begin{align} 
\begin{aligned} 
\mathbf{a}_i = \frac{(\La^{-1})_i^{\,\, \a}(\La^{-1})_0^{\,\, \be} J_{\a\be}}{(\La^{-1})_0^{\,\, \a} p_\a},
\end{aligned} \label{EQmathbfaexpressionExplicit}
\end{align}
with
\begin{align*} 
\begin{aligned} 
p_\a= p_\a\lrpar{g^{(M,a,\La,\La',\mathbf{a})},k^{(M,a,\La,\La',\mathbf{a})}}, \,\, J_{\a\be}=J_{\a\be}\lrpar{g^{(M,a,\La,\La',\mathbf{a})},k^{(M,a,\La,\La',\mathbf{a})}}.
\end{aligned} 
\end{align*}
By Definition \ref{DEFpoincareCHARGES} and definition of the map $\KK$ we have the relations
\begin{align} 
\begin{aligned} 
p_0=\mathbf{E}(\la), \,\, p_i=\mathbf{P}(\la)_i, \,\, J_{i0}= \mathbf{C}(\la)_i, \,\, J_{ij} = \in_{ijk} \mathbf{L}(\la)^k,
\end{aligned} \label{EQchargeRelationskerrparamuseful1778}
\end{align}
so that on the right-hand side of \eqref{EQmathbfaexpressionExplicit} we can calculate, using \eqref{EQdefinitionLAMBDASO13boostParamEstimrecall} and that the inverse of a Lorentz boost equals the Lorentz boost in the opposite direction,
\begin{align} 
\begin{aligned} 
&(\La^{-1})_i^{\,\, \a}(\La^{-1})_0^{\,\, \be} J_{\a\be} \\
=&\lrpar{\La^{-1}}_i^{\,\,\,j} \lrpar{\La^{-1}}_0^{\,\,\,0} J_{j0} + \lrpar{\La^{-1}}_i^{\,\,\,0} \lrpar{\La^{-1}}_0^{\,\,\,j} J_{0j} + \lrpar{\La^{-1}}_i^{\,\,\,j} \lrpar{\La^{-1}}_0^{\,\,\,l} J_{jl} \\
=& \lrpar{\de_i^{\,\,\,j} + (\ga-1) \frac{v_i v^j}{v^2}} \ga\, \mathbf{C}(\la)_j - \lrpar{\ga v_i} \lrpar{\ga v^j} \mathbf{C}(\la)_j \\
&+ \lrpar{\de_i^{\,\,\,j} + (\ga-1) \frac{v_i v^j}{v^2} } \lrpar{\ga v^l} \in_{jlk} \mathbf{L}(\la)^k
\end{aligned} \label{EQdetailedabfcalculation99901}
\end{align}
Plugging \eqref{EQdetailedabfcalculation99901} into \eqref{EQmathbfaexpressionExplicit}, and applying \eqref{EQsmallnesscondparameterEstimates}, \eqref{EQMminusEestimateParamKerr77888}, \eqref{EQgaestimateSRboost1}, \eqref{EQgaestimateSRboost2} and \eqref{EQchargeRelationskerrparamuseful1778}, we get that for $\varep_\infty>0$ sufficiently small,
\begin{align} 
\begin{aligned} 
\vert \mathbf{a} \vert \les \frac{1}{\mathbf{E}} \lrpar{\frac{\vert \mathbf{P}(\la) \vert }{\mathbf{E}}\vert \mathbf{L}(\la) \vert+ \vert \mathbf{C}(\la) \vert}.
\end{aligned} \label{EQparameterEstimatemathbfa77789}
\end{align}

\ni \textbf{Control of $a\in \RRR$.} The renormalized angular momentum parameter $a \in \RRR$ can be calculated as follows. By \eqref{EQpcrot17778} and the transformation law of Proposition \ref{REMARKtransformationlaw},
\begin{align*} 
\begin{aligned} 
Ma \in_{ijk} n^k =& J_{ij}\lrpar{g^{(M,a,\mathrm{Id},\La',0)},k^{(M,a,\mathrm{Id},\La',0)}} \\
=& (\La^{-1})_i^{\,\, \a}(\La^{-1})_j^{\,\, \be} J_{\a\be}\lrpar{g^{(M,a,\La,\La',\mathbf{a})},k^{(M,a,\La,\La',\mathbf{a})}} \\
&+ (-\mathbf{a})_i (\La^{-1})_j^{\,\, \a} p_\a\lrpar{g^{(M,a,\La,\La',\mathbf{a})},k^{(M,a,\La,\La',\mathbf{a})}} \\
&- (-\mathbf{a})_j (\La^{-1})_i^{\,\, \a} p_\a\lrpar{g^{(M,a,\La,\La',\mathbf{a})},k^{(M,a,\La,\La',\mathbf{a})}},
\end{aligned}
\end{align*}
which yields that
\begin{align} 
\begin{aligned} 
Ma \cdot n^l =& \half \in^{lij} Ma \in_{ijk}n^k \\
=& \half \in^{lij} (\La^{-1})_i^{\,\, \a}(\La^{-1})_j^{\,\, \be} J_{\a\be} - \in^{lij} \abf_i \lrpar{\La^{-1}}_j^{\,\,\,\a} p_\a.
\end{aligned} \label{EQManlexpr999002}
\end{align}
The first term on the right-hand side of \eqref{EQManlexpr999002} can be expressed as
\begin{align} 
\begin{aligned} 
&\half \in^{lij} (\La^{-1})_i^{\,\, \a}(\La^{-1})_j^{\,\, \be} J_{\a\be} \\
=& \in^{lij} (\La^{-1})_i^{\,\, 0}(\La^{-1})_j^{\,\, k} J_{0k} +\half \in^{lij} (\La^{-1})_i^{\,\, a}(\La^{-1})_j^{\,\, b} J_{ab} \\
=& \in^{lij} \lrpar{\ga v_i} \lrpar{\de_j^k + (\ga-1) \frac{v_i v^k}{v^2}} \lrpar{-\mathbf{C}(\la)_k} + \half \in^{lij} (\La^{-1})_i^{\,\, a}(\La^{-1})_j^{\,\, b} \in_{abs}\mathbf{L}(\la)^s.
\end{aligned} \label{EQcalculationakerr99901}
\end{align}
The first term on the right-hand side of \eqref{EQManlexpr999002} can be expressed as
\begin{align} 
\begin{aligned} 
&- \in^{lij} \abf_i \lrpar{\La^{-1}}_j^{\,\,\,\a} p_\a \\
=& - \in^{lij} \abf_i \lrpar{\lrpar{\La^{-1}}_j^{\,\,\,0} \mathbf{E}(\la) + \lrpar{\La^{-1}}_j^{\,\,\,k} \mathbf{P}(\la)_k} \\
=& - \in^{lij} \abf_i \lrpar{\lrpar{\ga v_j} \mathbf{E}(\la) +\lrpar{\de_j^k +(\ga-1) \frac{v_jv^k}{v^2}} \frac{\mathbf{P}(\la)_k}{\mathbf{E}} \mathbf{E} }.
\end{aligned} \label{EQcalculationakerr999012}
\end{align}

\ni Plugging \eqref{EQcalculationakerr99901} and \eqref{EQcalculationakerr999012} into \eqref{EQManlexpr999002}, and applying \eqref{EQsmallnesscondparameterEstimates}, \eqref{EQMminusEestimateParamKerr77888}, \eqref{EQgaestimateSRboost1}, \eqref{EQgaestimateSRboost2}, \eqref{EQchargeRelationskerrparamuseful1778} and \eqref{EQparameterEstimatemathbfa77789}, we get that for $\varep_\infty>0$ sufficiently small,
\begin{align*} 
\begin{aligned} 
\vert a \vert \les& \frac{1}{\mathbf{E}} \lrpar{\vert \mathbf{L}(\la) \vert+ \frac{\vert \mathbf{P}(\la) \vert }{\mathbf{E}}\vert \mathbf{C}(\la) \vert}.
\end{aligned}
\end{align*}

\ni This finishes the proof of Proposition \ref{PROPparameterEstimatesKerr}. \end{proof}

\ni Combining Propositions \ref{PROPestimatesforKerr11101} and \ref{PROPparameterEstimatesKerr}, we get the following.
\begin{proposition}[Asymptotic invariants estimates II] \label{CORcombinedEstimates222} Let $\la \in I(0) \times \RRR^3 \times \RRR^3$ be an asymptotic invariants vector and let $\mathbf{E}>0$ and $\varep_\infty>0$ be two real numbers such that
\begin{align*} 
\begin{aligned} 
\vert \mathbf{E}(\la)-\mathbf{E} \vert^2 + \vert \mathbf{P}(\la) \vert^2+ \vert \mathbf{L}(\la) \vert^2+ \vert \mathbf{C}(\la) \vert^2 \les \lrpar{\varep_\infty \mathbf{E}}^2.
\end{aligned} 
\end{align*}
For $\varep_\infty>0$ and $\mathbf{E}>0$ sufficiently small, it holds that
\begin{align*} 
\begin{aligned} 
&\Vert (g^{\KK(\la)},k^{\KK(\la)}) -(g^{\mathbf{E}}, k^{\mathbf{E}}) \Vert_{C^{k_0}(A_{[1,3]}) \times C^{k_0-1}(A_{[1,3]})} \\
\les& \vert \mathbf{E}(\la)-\mathbf{E} \vert + \vert \mathbf{P}(\la) \vert + \vert \mathbf{L}(\la) \vert + \vert \mathbf{C}(\la) \vert + \lrpar{\frac{\vert \mathbf{L}(\la) \vert + \frac{\vert \mathbf{P}(\la) \vert}{\mathbf{E}} \vert \mathbf{C}(\la) \vert}{\mathbf{E}}}^2.
\end{aligned} 
\end{align*}
Moreover, it holds that
\begin{align*} 
\begin{aligned} 
&\Vert \mathrm{RT}(g^{\KK(\la)})_{ij} \Vert_{C^{k_0}(A_{[1,3]})}+\Vert \mathrm{RT}(k^{\KK(\la)})_{ij} \Vert_{C^{k_0-1}(A_{[1,3]})} \\
\les& \vert \mathbf{L}(\la) \vert + \vert \mathbf{C}(\la) \vert +  \frac{\vert \mathbf{L}(\la) \vert + \frac{\vert \mathbf{P}(\la) \vert }{\mathbf{E}}\vert \mathbf{C}(\la) \vert}{\mathbf{E}} \cdot \frac{\vert \mathbf{L}(\la) \vert + \vert \mathbf{C}(\la) \vert}{\mathbf{E}}.
\end{aligned}
\end{align*}
\end{proposition}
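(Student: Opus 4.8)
The plan is to feed the bounds on the Kerr parameters $(M,a,\La,\La',\abf):=\KK(\la)$ supplied by Proposition \ref{PROPparameterEstimatesKerr} into the data estimates of Propositions \ref{PROPestimatesforKerr11101} and \ref{PROPKerrParameterRTestimates}, and then to reorganise the resulting products using the smallness hypothesis together with $\vert\mathbf{P}(\la)\vert/\mathbf{E} \les \varep_\infty \les 1$.

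First I would observe that the hypothesis $\vert \mathbf{E}(\la)-\mathbf{E}\vert^2 + \vert\mathbf{P}(\la)\vert^2 + \vert\mathbf{L}(\la)\vert^2 + \vert\mathbf{C}(\la)\vert^2 \les (\varep_\infty\mathbf{E})^2$ in particular implies the smallness condition \eqref{EQsmallnesscondparameterEstimates} (after enlarging $\varep_\infty$ by a universal constant), so Proposition \ref{PROPparameterEstimatesKerr} applies and yields $\vert M-\mathbf{E}\vert \les \vert\mathbf{E}(\la)-\mathbf{E}\vert + \varep_\infty\vert\mathbf{P}(\la)\vert$, $\vert\La-\mathrm{Id}\vert \les \vert\mathbf{P}(\la)\vert/\mathbf{E}$, $\vert\abf\vert \les \mathbf{E}^{-1}\lrpar{\frac{\vert\mathbf{P}(\la)\vert}{\mathbf{E}}\vert\mathbf{L}(\la)\vert + \vert\mathbf{C}(\la)\vert}$ and $\vert a\vert \les \mathbf{E}^{-1}\lrpar{\vert\mathbf{L}(\la)\vert + \frac{\vert\mathbf{P}(\la)\vert}{\mathbf{E}}\vert\mathbf{C}(\la)\vert}$. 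Since $\vert\mathbf{P}(\la)\vert/\mathbf{E} \les \varep_\infty$ and $\mathbf{E}$ is small, these bounds force $\vert M\vert \les \mathbf{E}$ and $\vert a\vert + \vert\La-\mathrm{Id}\vert + \vert\abf\vert \les \varep_\infty$; hence for $\varep_\infty$ and $\mathbf{E}$ sufficiently small the tuple $(M,a,\La,\abf)$ lies in the neighbourhood of $(0,0,\mathrm{Id},0)$ in which Propositions \ref{PROPestimatesforKerr11101} and \ref{PROPKerrParameterRTestimates} are valid, so those may be invoked.

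For the first estimate I would plug these parameter bounds into the right-hand side $\vert M-\mathbf{E}\vert + (\vert\mathbf{E}\vert + \vert a\vert)\vert a\vert + \vert\mathbf{E}\vert(\vert\La-\mathrm{Id}\vert + \vert\abf\vert)$ of Proposition \ref{PROPestimatesforKerr11101}. The term $\vert M-\mathbf{E}\vert$ is already $\les \vert\mathbf{E}(\la)-\mathbf{E}\vert + \vert\mathbf{P}(\la)\vert$; the three terms $\vert\mathbf{E}\vert\vert\La-\mathrm{Id}\vert$, $\vert\mathbf{E}\vert\vert\abf\vert$ and $\vert\mathbf{E}\vert\vert a\vert$ each lose their factor $\mathbf{E}$ and are bounded by $\vert\mathbf{P}(\la)\vert + \vert\mathbf{L}(\la)\vert + \vert\mathbf{C}(\la)\vert$ (using $\vert\mathbf{P}(\la)\vert/\mathbf{E}\les1$); and the term $\vert a\vert^2$ is exactly bounded by $\lrpar{\frac{\vert\mathbf{L}(\la)\vert + \frac{\vert\mathbf{P}(\la)\vert}{\mathbf{E}}\vert\mathbf{C}(\la)\vert}{\mathbf{E}}}^2$. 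Summing these contributions gives the claimed first inequality, while the term $\vert a\vert\vert\mathbf{E}\vert$ in $(\vert\mathbf{E}\vert+\vert a\vert)\vert a\vert$ is absorbed as above.

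For the Regge--Teitelbaum estimate I would plug the same bounds into the right-hand side $(\vert M\vert + \vert a\vert)(\vert\abf\vert + \vert a\vert)$ of Proposition \ref{PROPKerrParameterRTestimates}. Using $\vert M\vert\les\mathbf{E}$, the cross terms $\vert M\vert\vert\abf\vert$ and $\vert M\vert\vert a\vert$ cancel a factor $\mathbf{E}$ and are $\les \vert\mathbf{L}(\la)\vert + \vert\mathbf{C}(\la)\vert$. The delicate terms are $\vert a\vert\vert\abf\vert$ and $\vert a\vert^2$: the point is to keep the factor $\frac{\vert\mathbf{L}(\la)\vert + \frac{\vert\mathbf{P}(\la)\vert}{\mathbf{E}}\vert\mathbf{C}(\la)\vert}{\mathbf{E}}$ coming from the bound on $\vert a\vert$ untouched and to simplify only the \emph{other} factor, bounding $\frac{\frac{\vert\mathbf{P}(\la)\vert}{\mathbf{E}}\vert\mathbf{L}(\la)\vert + \vert\mathbf{C}(\la)\vert}{\mathbf{E}}$ (coming from $\vert\abf\vert$) and $\frac{\vert\mathbf{L}(\la)\vert + \frac{\vert\mathbf{P}(\la)\vert}{\mathbf{E}}\vert\mathbf{C}(\la)\vert}{\mathbf{E}}$ (coming from the second $\vert a\vert$) both by $\frac{\vert\mathbf{L}(\la)\vert + \vert\mathbf{C}(\la)\vert}{\mathbf{E}}$ via $\vert\mathbf{P}(\la)\vert/\mathbf{E}\les1$. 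The resulting products then fit inside $\frac{\vert\mathbf{L}(\la)\vert + \frac{\vert\mathbf{P}(\la)\vert}{\mathbf{E}}\vert\mathbf{C}(\la)\vert}{\mathbf{E}} \cdot \frac{\vert\mathbf{L}(\la)\vert + \vert\mathbf{C}(\la)\vert}{\mathbf{E}}$, which completes the proof. The only genuine obstacle is precisely this last bookkeeping step: one must resist over-simplifying both $\mathbf{E}^{-1}$-weighted factors of $\vert a\vert\vert\abf\vert$ simultaneously, since discarding the $\frac{\vert\mathbf{P}(\la)\vert}{\mathbf{E}}$ weight from both would destroy the stated structure (and would in fact be false, for instance when $\mathbf{L}(\la)=0$).
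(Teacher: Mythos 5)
Your proposal is correct and follows essentially the same route as the paper's proof: both invoke Proposition \ref{PROPparameterEstimatesKerr} to control the Kerr parameters $(M,a,\La,\La',\abf)$, verify that this places them in the regime where Propositions \ref{PROPestimatesforKerr11101} and \ref{PROPKerrParameterRTestimates} apply, and then substitute and reorganise using $\vert\mathbf{P}(\la)\vert/\mathbf{E}\les\varep_\infty\les 1$. Your closing remark about keeping the $\frac{\vert\mathbf{P}(\la)\vert}{\mathbf{E}}$ weight in one factor of the $\vert a\vert\vert\abf\vert$ and $\vert a\vert^2$ products is a correct reading of the bookkeeping that the paper carries out implicitly.
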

\begin{proof}[Proof of Proposition \ref{CORcombinedEstimates222}] By Proposition \ref{PROPparameterEstimatesKerr}, for $\varep_\infty>0$ and $\mathbf{E}>0$ sufficiently small, the parameters $(M,a,\La,\La',\abf)$ determined by $\KK(\la)$ are close to 
\begin{align*} 
\begin{aligned} 
(M,a,\La,\abf) = (0,0,\mathrm{Id},0), \,\, \La' \in \mathrm{SO(3)}.
\end{aligned} 
\end{align*}
Thus we can combine Proposition \ref{PROPparameterEstimatesKerr} with Proposition \ref{PROPestimatesforKerr11101} to get
\begin{align*} 
\begin{aligned} 
&\Vert (g^{\KK(\la)},k^{\KK(\la)}) -(g^{\mathbf{E}}, k^{\mathbf{E}}) \Vert_{C^{k_0}(A_{[1,3]}) \times C^{k_0-1}(A_{[1,3]})} \\
\les& \vert M-\mathbf{E} \vert + (\mathbf{E} + \vert a \vert) \cdot \vert a \vert + \mathbf{E} \cdot \lrpar{ \vert \La - \mathrm{Id} \vert + \vert \abf \vert} \\
\les& \vert \mathbf{E}(\la)-\mathbf{E} \vert + \varep_\infty \vert \mathbf{P}(\la) \vert + \lrpar{\mathbf{E} + \frac{\vert \mathbf{L}(\la) \vert + \frac{\vert \mathbf{P}(\la) \vert }{\mathbf{E}} \vert \mathbf{C}(\la) \vert}{\mathbf{E}}} \cdot \frac{\vert \mathbf{L}(\la) \vert + \frac{\vert \mathbf{P}(\la) \vert }{\mathbf{E}}\vert \mathbf{C}(\la) \vert}{\mathbf{E}} \\
&+ \mathbf{E} \cdot \lrpar{\frac{\vert \mathbf{P}(\la) \vert }{\mathbf{E}}+ \frac{\frac{\vert \mathbf{P} \vert}{\mathbf{E}}\vert \mathbf{L}(\la) \vert + \vert \mathbf{C}(\la) \vert}{\mathbf{E}}} \\
\les& \vert \mathbf{E}(\la)-\mathbf{E} \vert + \vert \mathbf{P}(\la) \vert + \vert \mathbf{L}(\la) \vert + \vert \mathbf{C}(\la) \vert + \lrpar{\frac{\vert \mathbf{L}(\la) \vert + \frac{\vert \mathbf{P}(\la) \vert }{\mathbf{E}}\vert \mathbf{C}(\la) \vert}{\mathbf{E}}}^2,
\end{aligned} 
\end{align*}
and
\begin{align*} 
\begin{aligned} 
& \Vert \mathrm{RT}(g^{\KK(\la)})_{ij} \Vert_{C^{k}(A_{[1,3]})}+\Vert \mathrm{RT}(k^{\KK(\la)})_{ij} \Vert_{C^{k_0-1}(A_{[1,3]})}\\
\les& \lrpar{\vert M \vert + \vert a \vert } \cdot \lrpar{\vert a \vert + \vert \abf \vert} \\
\les& \lrpar{\mathbf{E} + \vert \mathbf{E}(\la)-\mathbf{E} \vert + \varep_\infty \vert \mathbf{P}(\la) \vert +  \frac{\vert \mathbf{L}(\la) \vert + \frac{\vert \mathbf{P}(\la) \vert }{\mathbf{E}}\vert \mathbf{C}(\la) \vert}{\mathbf{E}}}  \cdot  \frac{\vert \mathbf{L}(\la) \vert + \vert \mathbf{C}(\la) \vert}{\mathbf{E}} \\
\les& \vert \mathbf{L}(\la) \vert + \vert \mathbf{C}(\la) \vert +  \frac{\vert \mathbf{L}(\la) \vert + \frac{\vert \mathbf{P}(\la) \vert }{\mathbf{E}}\vert \mathbf{C}(\la) \vert}{\mathbf{E}} \cdot \frac{\vert \mathbf{L}(\la) \vert + \vert \mathbf{C}(\la) \vert}{\mathbf{E}}.
\end{aligned} 
\end{align*}

\ni This finishes the proof of Proposition \ref{CORcombinedEstimates222}. \end{proof}

\subsection{Kerr reference sphere data}\label{SECKerrSphereDataDefinition} \ni In this section we define, for given asymptotic invariants vectors $\la$, Kerr reference sphere data and prove estimates. We assume that the asymptotic invariants vector $\la$ is contained in the following set.\\ 

\ni \textbf{Notation.} Let $R\geq1$ and $\mathbf{E}_\infty>0$ be two real numbers. Then $\EE_R(\mathbf{E}_\infty)$ is defined as the set of all asymptotic invariants vectors $\la$ such that
\begin{align} 
\begin{aligned} 
&\lrpar{R^{1/2} \vert \mathbf{E}(\la)- \mathbf{E}_\infty \vert}^2 + \lrpar{R^{1/2} \vert \mathbf{P}(\la) \vert}^2\\
&+\lrpar{R^{-1/4} \vert \mathbf{L}(\la) \vert}^2+\lrpar{R^{-1/2} \vert \mathbf{C}(\la) \vert}^2 \leq \lrpar{\mathbf{E}_\infty}^2.
\end{aligned}\label{EQdefeereinftyset999002}
\end{align}

\ni It is clear that for fixed $R\geq1$ large, the set $\EE_R(\mathbf{E}_\infty)$ is non-empty and a subset of $I(0)\times \RRR^3 \times \RRR^3 \times \RRR^3$. In particular, for $R\geq1$ large, for each $\la \in \EE_R(\mathbf{E}_\infty)$ there is a Kerr spacelike initial data set with asymptotic invariants equal to $\la$.

We define Kerr reference sphere data as follows.
\begin{definition}[Kerr reference sphere data] \label{DEFKerrreferenceSPHEREDATA} Let $R\geq1$ and $\mathbf{E}_\infty>0$ be two real numbers, and let $\la \in \EE_R(\mathbf{E}_\infty)$ be an asymptotics invariants vector. We define the Kerr reference sphere data $x_{-R,2R}^{\KK(\la)}$ by
\begin{align*} 
\begin{aligned} 
x_{-R,2R}^{\KK(\la)}:= {}^{(R^{-1})}\lrpar{x_{-1,2}^{\KK({}^{(R)}\la)}},
\end{aligned} 
\end{align*}
where the sphere data $x_{-1,2}^{\KK({}^{(R)}\la)}$ is defined to be constructed on $S_{r_{\mathbf{E}_\infty/R}(-1,2)}$ from the spacelike initial data $(g^{\KK({}^{(R)}\la)},k^{\KK({}^{(R)}\la)})$ on $A_{[1,3]}$ as in Section \ref{SECproofExistenceConstruction}, and ${}^{(R)}\la$ is defined in \eqref{EQLEMscalingADMlocal}.
\end{definition}

\ni \emph{Remarks on Definition \ref{DEFKerrreferenceSPHEREDATA}.}
\begin{enumerate}
\item Definition \ref{DEFKerrreferenceSPHEREDATA} generalizes in a straight-forward manner to yield Kerr reference higher-order sphere data of order $m\geq1$,
\begin{align*} 
\begin{aligned} 
\lrpar{x^{\Kerr}_{-R,2R},{\DD}^{L,m, \Kerr}_{-R,2R}, {\DD}^{\Lb,m, \Kerr}_{-R,2R}},
\end{aligned} 
\end{align*}
which is used in Theorem \ref{THMdoubleCHARgluingTOkerr}, see also Remark \ref{REMARKConstructionHigherOrder}.

\end{enumerate}

\ni The following estimates for Kerr reference sphere data are essential for the characteristic gluing to Kerr.
\begin{proposition}[Estimates for Kerr reference sphere data] \label{PROPspheredataESTIM91} Let $R\geq1$ and $\mathbf{E}_\infty>0$ be two real numbers. For $R\geq1$ sufficiently large the Kerr reference sphere data
\begin{align*} 
\begin{aligned} 
x_{-R,2R}^{\KK(\la)} \text{ is well-defined for }\la \in \EE_R(\mathbf{E}_\infty),
\end{aligned} 
\end{align*}
and it holds that
\begin{align} 
\begin{aligned} 
\Vert x_{-R,2R}^{\KK(\la)} - \mathfrak{m}^{\mathbf{E}_\infty} \Vert_{\XX(S_{-R,2R})} \les& R^{-1} \cdot \vert \mathbf{E}(\la)-\mathbf{E}_\infty \vert + R^{-1} \cdot \vert \mathbf{P}(\la) \vert \\
&+ R^{-2} \cdot \vert \mathbf{L}(\la) \vert + R^{-2} \cdot \vert \mathbf{C}(\la) \vert \\
&+ \lrpar{\frac{ R^{-2} \cdot \vert \mathbf{L}(\la) \vert + \frac{\vert \mathbf{P}(\la) \vert}{\mathbf{E}_\infty} \cdot R^{-2} \cdot \vert \mathbf{C}(\la) \vert}{\mathbf{E}_\infty/R}}^2.
\end{aligned} \label{PROPspheredataESTIM91estim1}
\end{align}
Moreover, we have the estimates, for $m=-1,0,1$ and $(i_{-1},i_0,i_1)=(2,3,1)$,
\begin{align} 
\begin{aligned} 
\mathbf{E}\lrpar{x_{-R,2R}^{\KK(\la)}} =& \mathbf{E}^{\mathrm{loc}}_{\mathrm{ADM}}\lrpar{S_{-R,2R},\la}+ \OO(R^{-3/2}), \\
\mathbf{P}^m\lrpar{x_{-R,2R}^{\KK(\la)}} =& \lrpar{\mathbf{P}^{\mathrm{loc}}_{\mathrm{ADM}}}^{i_m}\lrpar{S_{-R,R},\la}+\OO(R^{-3/2}),\\
\mathbf{L}^m\lrpar{x_{-R,2R}^{\KK(\la)}} =& \lrpar{\mathbf{L}^{\mathrm{loc}}_{\mathrm{ADM}}}^{i_m}\lrpar{S_{-R,2R},\la}+\OO(R^{-1/2}),\\
\mathbf{G}^m\lrpar{x_{-R,2R}^{\KK(\la)}} =& \lrpar{\mathbf{C}^{\mathrm{loc}}_{\mathrm{ADM}}}^{i_m}\lrpar{S_{-R,2R},\la} \\
&- r\lrpar{S_{-R,2R},\la} \cdot \lrpar{\mathbf{P}^{\mathrm{loc}}_{\mathrm{ADM}}}^{i_m}\lrpar{S_{-R,2R},\la} + \OO(R^{-1/2}).
\end{aligned} \label{PROPspheredataESTIM91estim2}
\end{align}

\end{proposition}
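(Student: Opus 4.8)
The plan is to derive Proposition \ref{PROPspheredataESTIM91} by combining the scale-invariance of the constructions with the already-established estimates for the Kerr parameter and asymptotic invariants maps. First I would work in the rescaled picture: by Definition \ref{DEFKerrreferenceSPHEREDATA}, the Kerr reference sphere data $x_{-R,2R}^{\KK(\la)}$ is defined as ${}^{(R^{-1})}(x_{-1,2}^{\KK({}^{(R)}\la)})$, so by Lemma \ref{LEMspheredataInvarianceSCale} and the scale-invariance of Schwarzschild reference sphere data \eqref{EQdefSCALINGSSprop8999}, the bound \eqref{PROPspheredataESTIM91estim1} is equivalent to the corresponding estimate for $\Vert x_{-1,2}^{\KK({}^{(R)}\la)} - \mathfrak{m}^{\mathbf{E}_\infty/R} \Vert_{\XX(S_{-1,2})}$. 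For $\la \in \EE_R(\mathbf{E}_\infty)$, the rescaled vector ${}^{(R)}\la = (R^{-1}\mathbf{E}(\la), R^{-1}\mathbf{P}(\la), R^{-2}\mathbf{L}(\la), R^{-2}\mathbf{C}(\la))$ satisfies, by \eqref{EQdefeereinftyset999002}, a smallness condition of the form $\vert \mathbf{E}({}^{(R)}\la) - \mathbf{E}_\infty/R\vert^2 + \vert \mathbf{P}({}^{(R)}\la)\vert^2 + \vert \mathbf{L}({}^{(R)}\la)\vert^2 + \vert \mathbf{C}({}^{(R)}\la)\vert^2 \lesssim (\varep_\infty \mathbf{E}_\infty/R)^2$ with $\varep_\infty$ small for large $R$, so Proposition \ref{CORcombinedEstimates222} applies with $\mathbf{E} = \mathbf{E}_\infty/R$. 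This yields that $(g^{\KK({}^{(R)}\la)}, k^{\KK({}^{(R)}\la)})$ is close to Schwarzschild reference data $(g^{\mathbf{E}_\infty/R}, k^{\mathbf{E}_\infty/R})$ on $A_{[1,3]}$ in $C^{k_0}\times C^{k_0-1}$, with the right-hand side being precisely the expression appearing in \eqref{PROPspheredataESTIM91estim1} (after unwinding the rescaling of charges). Feeding this smallness parameter $\varep_R$ into the construction of Section \ref{SECproofExistenceConstruction} (via \eqref{EQfullEstimateExtendedSpheredataCONST}, applied with general $\varep_R$ as in Remark after \eqref{EQREMgeneralisedEstimates}) gives the sphere-data estimate, and scaling back up via Lemma \ref{LEMspheredataInvarianceSCale} completes \eqref{PROPspheredataESTIM91estim1}. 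Checking that the ellipsoid constraint indeed produces $\OO(R^{-3/2})$ smallness is a direct computation: the dominant terms are $R^{-1}\vert\mathbf{E}(\la)-\mathbf{E}_\infty\vert \lesssim R^{-1}(R^{-1/2}\mathbf{E}_\infty) = \OO(R^{-3/2})$, and similarly for the other terms using the weights in \eqref{EQdefeereinftyset999002}, with the quadratic term behaving like $(R^{-2}R^{1/4}\mathbf{E}_\infty / (\mathbf{E}_\infty/R))^2 = \OO(R^{-3/2})$.

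Next I would establish \eqref{PROPspheredataESTIM91estim2}, the comparison of the charges of $x_{-R,2R}^{\KK(\la)}$ with the local ADM integrals of the Kerr spacelike initial data. The point is that Sections \ref{SECcomparisonENERGY}, \ref{SECcomparisonLINEAR}, \ref{SECcomparisonANGULAR} and \ref{SECcomparisonCENTER} established identities of the form $\mathbf{E}({}^{(R)}x_{-1,1}) = \mathbf{E}^{\mathrm{loc}}_{\mathrm{ADM}}(S_{-1,1},g,k) + \OO(\tfrac{M}{R}\varep_R) + \OO(\varep_R^2)$ (and analogously for $\mathbf{P}$, $\mathbf{L}$, $\mathbf{G}$), purely from the construction of Section \ref{SECproofExistenceConstruction} and the smallness \eqref{EQSMALLNESSrescaled13000}; these apply verbatim to the Kerr reference data constructed in Definition \ref{DEFKerrreferenceSPHEREDATA}, now with $\varep_R = \OO(R^{-3/2})$ and $M = \mathbf{E}_\infty/R$. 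Thus in the rescaled picture one has $\mathbf{E}(x_{-1,2}^{\KK({}^{(R)}\la)}) = \mathbf{E}^{\mathrm{loc}}_{\mathrm{ADM}}(S_{-1,2}, {}^{(R)}\la) + \OO(R^{-1}\cdot R^{-3/2}) + \OO(R^{-3}) = \mathbf{E}^{\mathrm{loc}}_{\mathrm{ADM}}(S_{-1,2},{}^{(R)}\la) + \OO(R^{-5/2})$, and likewise for the other charges. Scaling back up using Lemma \ref{LEMrescaleLOCALCHARGES} for the charges and \eqref{EQLEMscalingADMlocal} for the local ADM integrals (which scale with the same powers of $R$, namely $R^{-1}$ for $\mathbf{E}, \mathbf{P}$ and $R^{-2}$ for $\mathbf{L}, \mathbf{C}$), the $\OO(R^{-5/2})$ errors become $\OO(R^{-3/2})$ for $\mathbf{E}$ and $\mathbf{P}$, and $\OO(R^{-1/2})$ for $\mathbf{L}$ and $\mathbf{G}$, which is exactly \eqref{PROPspheredataESTIM91estim2}. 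For the $\mathbf{G}$-identity one additionally uses that the area radius scales as ${}^{(R)}r = R^{-1}(r\circ\Psi_R)$, so the term $r\cdot \mathbf{P}^{\mathrm{loc}}_{\mathrm{ADM}}$ transforms consistently.

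I expect the main obstacle — or at least the step requiring the most care — to be verifying that the estimates of Sections \ref{SECproofExistenceConstruction}, \ref{SECcomparisonENERGY}--\ref{SECcomparisonCENTER} genuinely transfer to the Kerr setting with the \emph{uniform} error bounds required for the topological degree argument in Section \ref{SECproofMainTheorem1}. Specifically, those sections were written for strongly asymptotically flat spacelike data with $\varep_R = \smallO(R^{-3/2})$, whereas Kerr reference data is only asymptotically flat with RT-conditions and $\varep_R = \OO(R^{-3/2})$; one must confirm that the convergence-rate arguments (in particular the Stokes-theorem estimates for $\mathbf{E}^{\mathrm{loc}}_{\mathrm{ADM}}$ and $\mathbf{P}^{\mathrm{loc}}_{\mathrm{ADM}}$ in Lemma \ref{LEMdecayPADMlocal}, which rely on $\mathbf{P}_{\mathrm{ADM}} = 0$) are replaced by their RT-condition analogues — and this is where the Regge--Teitelbaum estimates of Proposition \ref{PROPKerrParameterRTestimates} and the second half of Proposition \ref{CORcombinedEstimates222} enter, controlling $\Vert \mathrm{RT}(g^{\KK(\la)})\Vert$ and $\Vert \mathrm{RT}(k^{\KK(\la)})\Vert$. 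The key check is that all error terms remain uniform over $\la \in \EE_R(\mathbf{E}_\infty)$, which holds because Proposition \ref{CORcombinedEstimates222} gives bounds depending only on $\vert\mathbf{E}(\la)-\mathbf{E}_\infty\vert$, $\vert\mathbf{P}(\la)\vert$, $\vert\mathbf{L}(\la)\vert$, $\vert\mathbf{C}(\la)\vert$ and $\mathbf{E}_\infty$, all controlled uniformly by the ellipsoid constraint \eqref{EQdefeereinftyset999002}. Once this uniformity is in hand the proposition follows by assembling the pieces.
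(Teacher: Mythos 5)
Your proposal is correct and follows essentially the same route as the paper's proof: rescale to the small-data picture on $A_{[1,3]}$, verify that $\la\in\EE_R(\mathbf{E}_\infty)$ puts ${}^{(R)}\la$ inside the hypotheses of Proposition \ref{CORcombinedEstimates222} with $\mathbf{E}=\mathbf{E}_\infty/R$, feed the resulting smallness parameter $\varep_R = \OO(R^{-3/2})$ into the construction and the charge comparisons of Sections \ref{SECproofExistenceConstruction} and \ref{SECcomparisonENERGY}--\ref{SECcomparisonCENTER}, then scale back up via Lemmas \ref{LEMspheredataInvarianceSCale}, \ref{LEMrescaleLOCALCHARGES} and \eqref{EQLEMscalingADMlocal}. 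One small clarification on your last paragraph: the worry about Lemma \ref{LEMdecayPADMlocal} relying on $\mathbf{P}_{\mathrm{ADM}}=0$ is misplaced here, since the comparison estimates \eqref{EQEcomparison11}--\eqref{EQCcomparison11} are purely local identities on $A_{[1,3]}$ depending only on the smallness \eqref{EQSMALLNESSrescaled13000}; the convergence of local integrals to asymptotic invariants (where the Regge--Teitelbaum quantities of Proposition \ref{PROPKerrParameterRTestimates} genuinely enter) is handled separately in Proposition \ref{PROPKerrchargesLOCALglobal10101} and only combined with the present statement at the level of Proposition \ref{CORfullchargeexpressionSRkerrreference}.
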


\begin{proof}[Proof of Proposition \ref{PROPspheredataESTIM91}] As $\la \in \EE_R(\mathbf{E}_\infty)$, see its definition in \eqref{EQdefeereinftyset999002}, the rescaled asymptotic invariants vector
\begin{align} 
\begin{aligned} 
{}^{(R)}\la = \lrpar{R^{-1}\mathbf{E}(\la),R^{-1}\mathbf{P}(\la),R^{-2}\mathbf{L}(\la),R^{-2}\mathbf{C}(\la)}
\end{aligned} \label{EQrescaledla9991}
\end{align}
satisfies for $R\geq1$ sufficiently large,
\begin{align*} 
\begin{aligned} 
&\left\vert \mathbf{E}({}^{(R)}\la)-\frac{\mathbf{E}_\infty}{R} \right\vert^2 + \vert \mathbf{P}({}^{(R)}\la) \vert^2+ \vert \mathbf{L}({}^{(R)}\la) \vert^2+ \vert \mathbf{C}({}^{(R)}\la) \vert^2 \\
=& \frac{1}{R^2 \cdot R} \lrpar{ R^{1/2} \left\vert \mathbf{E}(\la)-\mathbf{E}_\infty \right\vert}^2+\frac{1}{R^2 \cdot R}\lrpar{R^{1/2}\vert \mathbf{P}(\la) \vert}^2\\
&+\frac{1}{R^4 \cdot R^{-1/2} }\lrpar{R^{-1/4} \vert \mathbf{L}(\la) \vert}^2+\frac{1}{R^4 \cdot R^{-1} }\lrpar{R^{-1/2}\vert \mathbf{C}(\la) \vert}^2\\
\leq& \frac{1}{R^3} \lrpar{\mathbf{E}_\infty}^2 \\
\les& \lrpar{\varep_\infty \frac{\mathbf{E}_\infty}{R}}^2.
\end{aligned} 
\end{align*}

\ni Hence for $R\geq1$ sufficiently we can apply Proposition \ref{CORcombinedEstimates222} with $\mathbf{E}=\mathbf{E}_\infty/R$ to get that
\begin{align} 
\begin{aligned} 
&\Vert (g^{\KK({}^{(R)} \la)},k^{\KK({}^{(R)}\la)}) -(g^{\mathbf{E}_\infty/R}, k^{\mathbf{E}_\infty/R}) \Vert_{C^{k_0}(A_{[1,3]}) \times C^{k_0-1}(A_{[1,3]})} \\
\les& \vert \mathbf{E}({}^{(R)}\la)-\mathbf{E}_\infty/R \vert + \vert \mathbf{P}({}^{(R)}\la) \vert + \vert \mathbf{L}({}^{(R)}\la) \vert + \vert \mathbf{C}({}^{(R)}\la) \vert \\
&+ \lrpar{\frac{\vert \mathbf{L}({}^{(R)}\la) \vert + \frac{\vert \mathbf{P}({}^{(R)}\la) \vert}{\mathbf{E}_\infty/R} \vert \mathbf{C}({}^{(R)}\la) \vert}{\mathbf{E}_\infty/R}}^2 \\
=& R^{-1} \cdot \vert \mathbf{E}(\la)-\mathbf{E}_\infty \vert + R^{-1} \cdot \vert \mathbf{P}(\la) \vert + R^{-2} \cdot \vert \mathbf{L}(\la) \vert + R^{-2} \cdot \vert \mathbf{C}(\la) \vert \\
&+ \lrpar{\frac{ R^{-2} \cdot \vert \mathbf{L}(\la) \vert + \frac{\vert \mathbf{P}(\la) \vert}{\mathbf{E}_\infty} \cdot R^{-2} \cdot \vert \mathbf{C}(\la) \vert}{\mathbf{E}_\infty/R}}^2 \\
=&\OO(R^{-3/2}),
\end{aligned} \label{EQsmallnessEstimatesParameter1777778}
\end{align}
where we used \eqref{EQrescaledla9991} and that $\la \in \EE_R(\mathbf{E}_\infty)$, see \eqref{EQdefeereinftyset999002}. For $R\geq1$ large, this shows that the sphere data $x_{-1,2}^{\KK({}^{(R)}\la)}$ is well-defined, and by the estimates of Section \ref{SECproofExistenceConstruction},
\begin{align*} 
\begin{aligned} 
\Vert x_{-1,2}^{\KK({}^{(R)}\la)} - \mathfrak{m}^{\mathbf{E}_\infty} \Vert_{\XX(S_{-1,2})} \les& R^{-1} \cdot \vert \mathbf{E}(\la)-\mathbf{E}_\infty \vert + R^{-1} \cdot \vert \mathbf{P}(\la) \vert \\
& + R^{-2} \cdot \vert \mathbf{L}(\la) \vert + R^{-2} \cdot \vert \mathbf{C}(\la) \vert \\
&+ \lrpar{\frac{ R^{-2} \cdot \vert \mathbf{L}(\la) \vert + \frac{\vert \mathbf{P}(\la) \vert}{\mathbf{E}_\infty} \cdot R^{-2} \cdot \vert \mathbf{C}(\la) \vert}{\mathbf{E}_\infty/R}}^2.
\end{aligned} 
\end{align*}
By the scale-invariance of the sphere data norm, see Lemma \ref{LEMspheredataInvarianceSCale}, and \eqref{EQdefSCALINGSSprop8999},  \eqref{PROPspheredataESTIM91estim1} follows.

By \eqref{EQsmallnessEstimatesParameter1777778} we can moreover apply the estimates of Sections \ref{SECcomparisonENERGY}, \ref{SECcomparisonLINEAR}, \ref{SECcomparisonANGULAR}, \ref{SECcomparisonCENTER} with $\varep_R = \OO(R^{-3/2})$, see also \eqref{EQREMgeneralisedEstimates}, and get for $m=-1,0,1$ and $(i_{-1},i_0,i_1)=(2,3,1)$,
\begin{align*} 
\begin{aligned} 
\mathbf{E}\lrpar{{}^{(R)}x_{-1,2}^{\KK(\la)}} =& \lrpar{\mathbf{E}^{\mathrm{loc}}_{\mathrm{ADM}}}^{i_m}\lrpar{S_{r_{\mathbf{E}_\infty/R}(-1,2)},{}^{(R)}\la}+ \OO(R^{-5/2}), \\
\mathbf{P}^m\lrpar{{}^{(R)}x_{-1,2}^{\KK(\la)}} =& \lrpar{\mathbf{P}^{\mathrm{loc}}_{\mathrm{ADM}}}^{i_m}\lrpar{S_{r_{\mathbf{E}_\infty/R}(-1,2)},{}^{(R)}\la}+\OO(R^{-5/2}),\\
\mathbf{L}^m\lrpar{{}^{(R)}x_{-1,2}^{\KK(\la)}} =& \lrpar{\mathbf{L}^{\mathrm{loc}}_{\mathrm{ADM}}}^{i_m}\lrpar{S_{r_{\mathbf{E}_\infty/R}(-1,2)},{}^{(R)}\la}+\OO(R^{-5/2}),\\
\mathbf{G}^m\lrpar{{}^{(R)}x_{-1,2}^{\KK(\la)}} =& \lrpar{\mathbf{C}^{\mathrm{loc}}_{\mathrm{ADM}}}^{i_m}\lrpar{S_{r_{\mathbf{E}_\infty/R}(-1,2)},{}^{(R)}\la} \\
&- r\lrpar{S_{r_{\mathbf{E}_\infty/R}(-1,2)},{}^{(R)}\la} \cdot \lrpar{\mathbf{P}^{\mathrm{loc}}_{\mathrm{ADM}}}^{i_m}\lrpar{S_{r_{\mathbf{E}_\infty/R}(-1,2)},{}^{(R)}\la} + O(R^{-5/2}).
\end{aligned} 
\end{align*}
Rescaling, see Lemmas \ref{LEMrescaleLOCALCHARGES} and \eqref{EQLEMscalingADMlocal}, yields \eqref{PROPspheredataESTIM91estim2}. This finishes the proof of Proposition \ref{PROPspheredataESTIM91}. \end{proof}

\subsection{Asymptotics of Kerr spacelike initial data and sphere data} \label{SECchargeEstimateKERR444} 

\ni In this section we discuss the convergence of local integrals in Kerr reference spacelike data and of charges of Kerr reference sphere data. The convergence rates summarized in Proposition \ref{CORfullchargeexpressionSRkerrreference} below are essential for the matching to Kerr in Section \ref{SECconclusionofMainTheorem}, see \eqref{EQasympDECAY55540002}.

The following is the main result of this section.
\begin{proposition}[Convergence of local integrals to asymptotic invariants] \label{PROPKerrchargesLOCALglobal10101}
Let $R\geq1$ and $\mathbf{E}_\infty>0$ be two real numbers. For $R\geq1$ sufficiently large it holds that for all $\la \in \EE_R(\mathbf{E}_\infty)$, 
\begin{align} 
\begin{aligned} 
\mathbf{E}_{\mathrm{ADM}}^{\mathrm{loc}}(\la,R) =& \mathbf{E}(\la) + \OO(R^{-1}), &
\mathbf{P}_{\mathrm{ADM}}^{\mathrm{loc}}(\la,R) =& \mathbf{P}(\la) + \OO(R^{-3/2}), \\
\mathbf{L}_{\mathrm{ADM}}^{\mathrm{loc}}(\la,R) =& \mathbf{L}(\la) + \OO(R^{-1/2}), &
\mathbf{C}_{\mathrm{ADM}}^{\mathrm{loc}}(\la,R) =& \mathbf{C}(\la) + \OO(R^{-1/2}).
\end{aligned} \label{EQquantEstimKERR79999}
\end{align}
In particular,
\begin{align*} 
\begin{aligned} 
\mathbf{C}_{\mathrm{ADM}}^{\mathrm{loc}}(\la,R) - r(\la, R) \cdot \mathbf{P}_{\mathrm{ADM}}^{\mathrm{loc}}(\la, R) = \mathbf{C}(\la) -R\cdot \mathbf{P}(\la) + \OO(R^{-1/2}).
\end{aligned} 
\end{align*}
\end{proposition}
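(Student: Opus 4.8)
\textbf{Proof strategy for Proposition \ref{PROPKerrchargesLOCALglobal10101}.}
The plan is to reduce the convergence statement to an application of Stokes' theorem on the exterior region $\RRR^3\setminus B_R$ in the rescaled picture, exactly as in the proof of Lemma \ref{LEMdecayPADMlocal}, but now tracking quantitatively the dependence on the Kerr parameters via the estimates of Section \ref{SECkerrChargeMapping}. First I would pass to the rescaled Kerr spacelike initial data $(g^{\KK({}^{(R)}\la)},k^{\KK({}^{(R)}\la)})$ on $A_{[1,3]}$, where by Proposition \ref{CORcombinedEstimates222} (applied with $\mathbf{E}=\mathbf{E}_\infty/R$ and $\varep_\infty \les R^{-1/2}$, using that $\la\in\EE_R(\mathbf{E}_\infty)$) we have both the $C^{k_0}\times C^{k_0-1}$-closeness of $(g^{\KK({}^{(R)}\la)},k^{\KK({}^{(R)}\la)})$ to $(g^{\mathbf{E}_\infty/R},k^{\mathbf{E}_\infty/R})$ of size $\OO(R^{-3/2})$, and the bound on the Regge--Teitelbaum quantities $\Vert\mathrm{RT}(g^{\KK({}^{(R)}\la)})\Vert + \Vert\mathrm{RT}(k^{\KK({}^{(R)}\la)})\Vert \les R^{-2}\vert\mathbf{L}(\la)\vert + R^{-2}\vert\mathbf{C}(\la)\vert + (\text{quadratic})$. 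Then I would rescale back up using \eqref{EQLEMscalingADMlocal}, which reduces everything to estimating the difference between $\mathbf{E}_{\mathrm{ADM}}^{\mathrm{loc}}(S_R)$ and $\mathbf{E}_{\mathrm{ADM}}$ (and similarly for the other three invariants) purely in terms of decay of the spacelike data on $\RRR^3\setminus B_R$.

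The four invariants are handled separately but by the same mechanism. For $\mathbf{P}_{\mathrm{ADM}}^{\mathrm{loc}}$ and $\mathbf{L}_{\mathrm{ADM}}^{\mathrm{loc}}$ one writes, as in \eqref{EQPdecayStronglyAFdata}, the difference from the asymptotic value as the bulk integral $\int_{\RRR^3\setminus B_R} \bigl[(\Div k - d\tr k)(\cdot) + (k^{nm}-\tr k\, g^{nm})\nab_n(\cdot)_m\bigr]$; the first term vanishes by the momentum constraint and the second is bounded using the Kerr decay $\vert k\vert \les M/r^2$, $\vert\nab g\vert\les M/r^2$ — here, crucially, the relevant size of $M$ and $a$ is controlled by Proposition \ref{PROPparameterEstimatesKerr}, giving $\vert k(x)\vert\les \vert\mathbf{P}(\la)\vert\cdot r^{-2} + \vert\mathbf{L}(\la)\vert\cdot r^{-3}+\dots$, so the tail integral over $\{r\geq R\}$ contributes $\OO(R^{-3/2})$ for $\mathbf{P}$ and $\OO(R^{-1/2})$ for $\mathbf{L}$ (the rotation weight $Y_{(i)}$ costs an extra power of $r$). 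For $\mathbf{E}_{\mathrm{ADM}}^{\mathrm{loc}}$ and $\mathbf{C}_{\mathrm{ADM}}^{\mathrm{loc}}$ I would use the geometric expressions of Theorem \ref{THMaltDEFinvariants}: by the twice-contracted Bianchi identity $\Div(\RRRic - \half R_{\mathrm{scal}}g)=0$, the difference from the asymptotic value is $\int_{\RRR^3\setminus B_R}(\RRRic-\half R_{\mathrm{scal}}g)(\nab\otimes(\cdot))$, which is quadratic in the deviation of $g$ from Euclidean plus a term where the \emph{linear} (in $g-e$) part of $\RRRic-\half R_{\mathrm{scal}}g$ is paired against the Killing-like vectorfield $X$ or $Z^{(i)}$; the conformal Killing property of $X,Z^{(i)}$ on Euclidean space together with the RT-conditions kills the borderline slowly-decaying contribution (this is the standard Corvino--Schoen / Chru\'sciel--Delay mechanism), leaving $\OO(R^{-1})$ for $\mathbf{E}$ and $\OO(R^{-1/2})$ for $\mathbf{C}$ (again the extra $r$-weight in $Z^{(i)}$ costs half a power). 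The final displayed identity is then immediate from the $\mathbf{C}$ and $\mathbf{P}$ estimates together with $r(\la,R) = R + \OO(1)$, which follows from the area-radius normalization and the closeness to Schwarzschild.

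The main obstacle I anticipate is the $\mathbf{C}_{\mathrm{ADM}}^{\mathrm{loc}}$ estimate: the center-of-mass integrand carries the extra factor $\vert x\vert$ (through $Z^{(i)}$), so a naive bound would only give a divergent or $\OO(1)$ tail, and one genuinely needs the Regge--Teitelbaum cancellation — the odd-in-$x$ part of $\RRRic - \half R_{\mathrm{scal}}g$ paired against the even weight integrates to the RT-quantities, which by Proposition \ref{PROPKerrParameterRTestimates}/Proposition \ref{CORcombinedEstimates222} decay like $r^{-2}$ (in the unscaled picture), so that $\vert x\vert\cdot r^{-2}\cdot r^{-1}$ (the extra $r^{-1}$ from one derivative in the Bianchi rearrangement) is integrable with tail $\OO(R^{-1/2})$ — compare the analogous bookkeeping in \eqref{EQtrTheta2rewrite000100007778}--\eqref{EQtrTheta2rewrite0001}. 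Getting the weights and the RT-structure to line up correctly, uniformly over $\la\in\EE_R(\mathbf{E}_\infty)$, is where the care is needed; the other three invariants are comparatively routine once Propositions \ref{PROPparameterEstimatesKerr} and \ref{PROPestimatesforKerr11101} are in hand.
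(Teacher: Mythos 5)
Your proposal follows the paper's strategy essentially to the letter: rescaling to small data via Proposition \ref{CORcombinedEstimates222}, Stokes' theorem plus the momentum constraint (for $\mathbf{P},\mathbf{L}$) respectively the twice-contracted Bianchi identity (for $\mathbf{E},\mathbf{C}$), the (conformal) Killing properties of $\pr_i,Y_{(i)},X,Z^{(i)}$ to reduce to bulk integrals, and the Regge--Teitelbaum parity cancellation to handle the odd weights. Two remarks, though.

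First, you are too optimistic about $\mathbf{L}_{\mathrm{ADM}}^{\mathrm{loc}}$: you present it (and $\mathbf{P}$) as ``comparatively routine'' once the parameter estimates are in hand, with the rotation weight merely ``costing an extra power of $r$.'' But the naive bound for $\mathbf{L}$ actually fails. In the rescaled picture the bulk integrand is $k\cdot\pr g\cdot x^s$, and with the uniform bounds $\vert k\vert\les r^{-2}R^{-3/2}$, $\vert\pr g\vert\les r^{-2}R^{-1}$, $\vert x^s\vert\sim r$, the integral
\begin{align*}
\int_1^\infty r^{-2}R^{-3/2}\cdot r^{-2}R^{-1}\cdot r\cdot r^2\,dr = R^{-5/2}\int_1^\infty r^{-1}\,dr
\end{align*}
diverges logarithmically, and even your finer parameter-dependent decomposition of $k$ does not remove the divergence coming from the boost contribution of $k$ against the Schwarzschild part of $\pr g$. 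The RT cancellation mechanism you invoke for $\mathbf{C}_{\mathrm{ADM}}^{\mathrm{loc}}$ is \emph{also} indispensable for $\mathbf{L}_{\mathrm{ADM}}^{\mathrm{loc}}$: one must split $k_{mn}\pr_l g_{ij}\cdot x^s$ into products pairing an RT-quantity (decaying like $r^{-3}$) against the other factor, as in \eqref{EQdefIntegralToBoundLKerrParameterCharge7778}, to gain the extra power of $r$ that restores integrability and delivers $\OO(R^{-5/2})$ before rescaling. Your proposal should make this explicit rather than attributing the $\OO(R^{-1/2})$ rate to the Kerr decay alone.

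Second, a minor technical point: Proposition \ref{CORcombinedEstimates222} only delivers $C^{k_0}$ control on the annulus $A_{[1,3]}$, whereas the Stokes-theorem bulk integrals range over all of $\RRR^3\setminus B(0,1)$. You pass to explicit parameter-dependent decay formulas for Kerr data to cover the full exterior; the paper instead obtains the needed pointwise decay $\vert k^{\KK({}^{(R)}\la)}(x)\vert\les\vert x\vert^{-2}R^{-3/2}$ (and similarly for $\pr g-\pr g^{\mathbf{E}_\infty/R}$, RT-quantities, etc.) uniformly on the exterior by applying Proposition \ref{CORcombinedEstimates222} at each radius $r_0$ with the scaling factor $R\cdot r_0$. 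Your route works but requires redoing the parameter estimates pointwise in $r$; the paper's rescaling trick is the cleaner way to package it, and you should keep it in mind when making the $\mathbf{L}$ argument rigorous.
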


\begin{proof}[Proof of Proposition \ref{PROPKerrchargesLOCALglobal10101}] We first derive decay estimates and then consider each estimate separately.\\

\ni \textbf{Derivation of decay estimates.} In the following we derive decay estimates by using Proposition \ref{CORcombinedEstimates222} and the scaling of Kerr spacelike initial data, see Remark \ref{RemarkScalingKerrData}.

Let $R\geq1$ be a real number and let $\la \in \EE_R(\mathbf{E}_\infty)$. First we claim that for $R\geq1$ sufficiently large, $x\in \RRR^3 \setminus B(0,1)$ and $i,j=1,2,3$,
\begin{align} 
\begin{aligned} 
\left\vert k_{ij}^{\KK({}^{(R)}\la)}(x) \right\vert \les \frac{1}{\vert x\vert^2} R^{-3/2},
\end{aligned} \label{EQfirstEstimateRescaleKerrk}
\end{align}
where we recall from \eqref{EQrescaledLAMBDAKerr} that
\begin{align*} 
\begin{aligned} 
{}^{(R)}\la = \lrpar{R^{-1} \mathbf{E}(\la), R^{-1} \mathbf{P}(\la), R^{-2} \mathbf{L}(\la), R^{-2} \mathbf{C}(\la)}.
\end{aligned}
\end{align*}
Indeed, let $x_0 \in \RRR^3 \setminus B(0,1)$ and let $r_0 := \vert x_0 \vert$. On the one hand, by the scaling of Kerr spacelike initial data, see \eqref{EQscalingderivatives} and Remark \ref{RemarkScalingKerrData}, we have that
\begin{align} 
\begin{aligned} 
\left\vert k_{ij}^{\KK({}^{(R)}\la)}(x_0) \right\vert =& \frac{1}{r_0} \left\vert k_{ij}^{\KK({}^{(R\cdot r_0)}\la)}\lrpar{\frac{x_0}{r_0}} \right\vert,
\end{aligned} \label{EQestimatekKerr1777}
\end{align}
where we underline that $x_0/r_0 \in S_1 \subset \RRR^3$.

On the other hand, as $\la \in \EE_R(\mathbf{E}_\infty)$, see its definition in \eqref{EQdefeereinftyset999002}, the asymptotic invariants vector ${}^{(R\cdot r_0)}\la$ satisfies for $R\geq1$ sufficiently large
\begin{align*} 
\begin{aligned} 
&\left\vert \mathbf{E}({}^{(R\cdot r_0)}\la)-\frac{\mathbf{E}_\infty}{R\cdot r_0} \right\vert^2 + \vert \mathbf{P}({}^{(R\cdot r_0)}\la) \vert^2+ \vert \mathbf{L}({}^{(R\cdot r_0)}\la) \vert^2+ \vert \mathbf{C}({}^{(R\cdot r_0)}\la) \vert^2 \\
=& \frac{\left\vert \mathbf{E}(\la)-\mathbf{E}_\infty \right\vert^2}{(R\cdot r_0)^2} + \frac{\vert \mathbf{P}(\la) \vert^2}{(R\cdot r_0)^2}+ \frac{\vert \mathbf{L}(\la) \vert^2}{(R\cdot r_0)^4}+ \frac{\vert \mathbf{C}(\la) \vert^2}{(R\cdot r_0)^4}\\
=& \frac{\lrpar{R^{1/2} \left\vert \mathbf{E}(\la)-\mathbf{E}_\infty \right\vert}^2}{(R\cdot r_0)^2 \cdot R} + \frac{\lrpar{R^{1/2}\vert \mathbf{P}(\la) \vert}^2}{(R\cdot r_0)^2\cdot R}+ \frac{\lrpar{ R^{-1/4} \vert \mathbf{L}(\la) \vert}^2}{(R\cdot r_0)^4\cdot R^{-1/2}}+ \frac{\lrpar{R^{-1/2}\vert \mathbf{C}(\la) \vert}^2}{(R\cdot r_0)^4 \cdot R^{-1}}\\
\leq& \frac{1}{(R\cdot r_0)^2} \lrpar{\frac{\lrpar{\mathbf{E}_\infty}^2}{R} } \\
\les& \lrpar{\varep_\infty \frac{\mathbf{E}_\infty}{R\cdot r_0}}^2,
\end{aligned} 
\end{align*}
where $\varep_\infty>0$ is the universal constant of Proposition \ref{CORcombinedEstimates222}. Hence we can apply Proposition \ref{CORcombinedEstimates222} with $\la={}^{(R\cdot r_0)}\la$ and $\mathbf{E}=\mathbf{E}_\infty/(R\cdot r_0)$ to get, using that $\la \in \EE_R(\mathbf{E}_\infty)$, the estimate
\begin{align} 
\begin{aligned} 
\left\vert k_{ij}^{\KK({}^{(R\cdot r_0)}\la)}\lrpar{\frac{x_0}{r_0}} \right\vert \les& \vert \mathbf{E}({}^{(R\cdot r_0)}\la)-\lrpar{\mathbf{E}_\infty/(R\cdot r_0)} \vert + \vert \mathbf{P}({}^{(R\cdot r_0)}\la) \vert \\
&+ \vert \mathbf{L}({}^{(R\cdot r_0)}\la) \vert + \vert \mathbf{C}({}^{(R\cdot r_0)}\la) \vert \\
&+ \lrpar{\frac{\vert \mathbf{L}({}^{(R\cdot r_0)}\la) \vert + \frac{\vert \mathbf{P}({}^{(R\cdot r_0)}\la) \vert}{\mathbf{E}_\infty/(R\cdot r_0)} \vert \mathbf{C}({}^{(R\cdot r_0)}\la) \vert}{\mathbf{E}_\infty/(R\cdot r_0)}}^2 \\
=& \frac{1}{r_0} \OO(R^{-3/2}).
\end{aligned} \label{EQPlocalsmallnessKerrIngredient2}
\end{align}
Combining \eqref{EQestimatekKerr1777} and \eqref{EQPlocalsmallnessKerrIngredient2}, we get \eqref{EQfirstEstimateRescaleKerrk}, that is, for $R\geq1$ sufficiently large and $x \in \RRR^3 \setminus B(0,1)$,
\begin{align*} 
\begin{aligned} 
\left\vert k_{ij}^{\KK({}^{(R)}\la)}(x) \right\vert \les \frac{1}{\vert x \vert^2}R^{-3/2}.
\end{aligned}
\end{align*}
By similar argumentation, we get the following estimates from Proposition \ref{CORcombinedEstimates222} and scaling,
\begin{align} 
\begin{aligned} 
\left\vert \pr^m \lrpar{g_{ij}^{\KK({}^{(R)}\la)} -g_{ij}^{\mathbf{E}_\infty/R}}(x) \right\vert \les&  \frac{1}{\vert x \vert^{1+m}}R^{-3/2} \text{ for } m=0,1,2, \\
\left\vert \pr_l k_{ij}^{\KK({}^{(R)}\la)}(x) \right\vert \les&  \frac{1}{\vert x \vert^3}R^{-3/2},
\end{aligned} \label{EQscalingDecayestimates1}
\end{align}
where we denoted the multi-index $\pr^m := \pr_{a_1} \dots \pr_{a_m}$ for $a_1, \dots a_m=1,2,3$. Furthermore, by the same argument based on scaling and the estimates for the RT-quantities in Proposition \ref{CORcombinedEstimates222}, we have that
\begin{align} 
\begin{aligned} 
\left\vert \pr^m \lrpar{\mathrm{RT}\lrpar{g^{\KK({}^{(R)}\la)}}_{ij}}(x) \right\vert \les& \frac{1}{\vert x \vert^{2+m}}R^{-3/2} & &\text{ for } m=0,1,2, \\
\left\vert \pr^m \lrpar{\mathrm{RT}\lrpar{k^{\KK({}^{(R)}\la)}}_{ij}}(x) \right\vert \les& \frac{1}{\vert x \vert^{3+m}}R^{-3/2} & &\text{ for } m=0,1.
\end{aligned} \label{EQscalingDecayestimates2}
\end{align}
Moreover, from scaling \eqref{EQschwarzschildSPACELIKEscaling} and Lemma \ref{LEMschwarzschildtrivialEstimate}, we similarly have
\begin{align} 
\begin{aligned} 
\left\vert \pr^m \lrpar{ g^{\mathbf{E}_\infty/R}_{ij}-e_{ij}}(x) \right\vert \les \frac{1}{\vert x\vert^{1+m}} R^{-1}.
\end{aligned} \label{EQschwarzschilddecayrates}
\end{align}

\ni \textbf{Notation.} In the following, we denote $(g^{\KK({}^{(R)}\la)}, k^{\KK({}^{(R)}\la)})$ by $(g,k)$, and the associated covariant derivative associated by $\nab$.\\

\ni In the following, we prove the estimates of Proposition \ref{PROPKerrchargesLOCALglobal10101}. We proceed in the order of 
\begin{align*} 
\begin{aligned} 
\mathbf{P}_{\mathrm{ADM}}^{\mathrm{loc}}, \mathbf{L}_{\mathrm{ADM}}^{\mathrm{loc}}, \mathbf{E}_{\mathrm{ADM}}^{\mathrm{loc}} \text{ and } \mathbf{C}_{\mathrm{ADM}}^{\mathrm{loc}}.
\end{aligned} 
\end{align*}

\ni \textbf{Convergence of $\mathbf{P}_{\mathrm{ADM}}^{\mathrm{loc}}$.} First, by \eqref{EQLEMscalingADMlocal},
\begin{align} 
\begin{aligned} 
\mathbf{P}^{\mathrm{loc}}_{\mathrm{ADM}}(\la,R) - \mathbf{P}(\la) =& R \cdot \lrpar{\mathbf{P}^{\mathrm{loc}}_{\mathrm{ADM}}({}^{(R)}\la,1) - \mathbf{P}({}^{(R)}\la)}.
\end{aligned} \label{EQPrewritingKestimateConvergence777}
\end{align}
By Definition \ref{DEFlocalADMcharges}, Stokes' theorem and the spacelike constraint equations \eqref{EQspacelikeConstraints1}, we have that
\begin{align} 
\begin{aligned} 
&\mathbf{P}^{\mathrm{loc}}_{\mathrm{ADM}}({}^{(R)}\la,1) - \mathbf{P}({}^{(R)}\la)\\
 =&-\frac{1}{8\pi} \int\limits_{\RRR^3 \setminus B(0,1)} \nab^j \lrpar{k_{i j} - \tr k \, g_{i j }} d\mu_g\\
=&-\frac{1}{8\pi}\int\limits_{\RRR^3 \setminus B(0,1)} \lrpar{\underbrace{\lrpar{\Div k - d \tr k}}_{=0}(\pr_i) + (k_{jm}-\tr k \, g_{jm}) \nab^j \lrpar{\pr_i}^m } d\mu_g \\
=& -\frac{1}{8\pi} \int\limits_{\RRR^3 \setminus B(0,1)} (k_{jm}-\tr k \, g_{jm}) \nab^j \lrpar{\pr_i}^m d\mu_g.
\end{aligned} \label{EQPlocalsmallnessKerrIngredient0}
\end{align}
In the following we estimate the integrands on the right-hand side of \eqref{EQPlocalsmallnessKerrIngredient0}. On the one hand, by \eqref{EQfirstEstimateRescaleKerrk}, \eqref{EQscalingDecayestimates1} and \eqref{EQschwarzschilddecayrates} we have that for $x \in \RRR^3 \setminus B(0,1)$,
\begin{align} 
\begin{aligned} 
\left\vert \lrpar{k_{jm}-\tr k \, g_{jm}}(x) \right\vert \les \frac{1}{\vert x \vert^2} R^{-3/2}.
\end{aligned} \label{EQPestimateKerrparamKerr1118887788}
\end{align}

\ni On the other hand, by \eqref{EQscalingDecayestimates1}, \eqref{EQschwarzschilddecayrates} and the standard formula for Christoffel symbols,
\begin{align*} 
\begin{aligned} 
\Ga_{ji}^m := \half g^{ms} \lrpar{\pr_{j}g_{is}+ \pr_i g_{js}-\pr_s g_{ij} },
\end{aligned} 
\end{align*}
we have that for $R\geq1$ sufficiently large,
\begin{align} 
\begin{aligned} 
\left\vert \nab_j \lrpar{\pr_i}^m(x) \right\vert = \left\vert \Ga_{ji}^m \right\vert 
= \left\vert \half g^{ms} \lrpar{\pr_{j}g_{is}+ \pr_i g_{js}-\pr_s g_{ij} } \right\vert 
\les \frac{1}{\vert x\vert^2} R^{-1},
\end{aligned} \label{EQPlocalsmallnessKerrIngredient1}
\end{align}
where we estimated, by \eqref{EQscalingDecayestimates1} and \eqref{EQschwarzschilddecayrates},
\begin{align*}
\begin{aligned}
\pr_{j}g_{is} = (\underbrace{\pr_{j}g_{is}- \pr_{j}g^{\mathbf{E}_\infty/R}_{is}}_{\les \frac{1}{\vert x \vert^2} R^{-3/2} }) + (\underbrace{\pr_{j}g^{\mathbf{E}_\infty/R}_{is} - \pr_{j}e_{is}}_{\les \frac{1}{\vert x \vert^2} R^{-1}}) 
\les \frac{1}{\vert x \vert^2} R^{-1}.
\end{aligned} 
\end{align*}

\ni By \eqref{EQscalingDecayestimates1}, \eqref{EQschwarzschilddecayrates}, \eqref{EQPestimateKerrparamKerr1118887788} and \eqref{EQPlocalsmallnessKerrIngredient1}, the integral on the right-hand side of \eqref{EQPlocalsmallnessKerrIngredient0} can thus be bounded by
\begin{align*} 
\begin{aligned} 
\int\limits_{\RRR^3 \setminus B(0,1)} (k_{jm}-\tr k \, g_{jm}) \nab^j \lrpar{\pr_i}^m d\mu_g = \OO(R^{-5/2}).
\end{aligned} 
\end{align*}
Plugging this into \eqref{EQPlocalsmallnessKerrIngredient0} and subsequently into \eqref{EQPrewritingKestimateConvergence777} yields the desired estimate
\begin{align*} 
\begin{aligned} 
\mathbf{P}_{\mathrm{ADM}}^{\mathrm{loc}}(\la, R) - \mathbf{P}(\la)= \OO(R^{-3/2}).
\end{aligned} 
\end{align*}

\ni \textbf{Convergence of $\mathbf{L}_{\mathrm{ADM}}^{\mathrm{loc}}$.} First, by \eqref{EQLEMscalingADMlocal},
\begin{align} 
\begin{aligned} 
\mathbf{L}^{\mathrm{loc}}_{\mathrm{ADM}}(\la,R) - \mathbf{L}(\la) =& R^2 \cdot \lrpar{\mathbf{L}^{\mathrm{loc}}_{\mathrm{ADM}}({}^{(R)}\la,1) - \mathbf{L}({}^{(R)}\la)}.
\end{aligned} \label{EQLrewritingKestimateConvergence777}
\end{align}

\ni By Definition \ref{DEFlocalADMcharges}, Stokes' theorem and the spacelike constraint equations \eqref{EQspacelikeConstraints1}, we have
\begin{align} 
\begin{aligned} 
&\lrpar{\mathbf{L}^{\mathrm{loc}}_{\mathrm{ADM}}}^i({}^{(R)}\la,1) - \mathbf{L}^i({}^{(R)}\la) \\
=& -\frac{1}{8\pi} \int\limits_{\RRR^3 \setminus B(0,1)} \nab^l \lrpar{ \lrpar{k_{l m} - \tr k \, g_{lm }} (Y_{(i)})^m }d\mu_g\\
=&-\frac{1}{8\pi}\int\limits_{\RRR^3 \setminus B(0,1)} \lrpar{\underbrace{\lrpar{\Div k - d \tr k}}_{=0}(Y_{(i)}) + \lrpar{k_{l m} - \tr k \, g_{lm }} \nab^l \lrpar{Y_{(i)}}^m } d\mu_g \\
=& -\frac{1}{16\pi } \int\limits_{\RRR^3 \setminus B(0,1)} \lrpar{k_{l m} - \tr k \, g_{lm }} \lrpar{ \nab^l \lrpar{Y_{(i)}}^m + \nab^m \lrpar{Y_{(i)}}^l } d\mu_g.
\end{aligned} \label{EQLlocalsmallnessKerrIngredient0}
\end{align}
The rotation fields $Y_{(i)}$, $i=1,2,3$, are Killing with respect to the Euclidean metric, that is,
\begin{align*} 
\begin{aligned} 
\pr_l \lrpar{Y_{(i)}}_m + \pr_m \lrpar{Y_{(i)}}_l =0,
\end{aligned} 
\end{align*}
and hence, with respect to $g$,
\begin{align*} 
\begin{aligned} 
\nab_l \lrpar{Y_{(i)}}_m + \nab_m \lrpar{Y_{(i)}}_l = \pr_l \lrpar{Y_{(i)}}_m + \pr_m \lrpar{Y_{(i)}}_l - 2 \Ga^j_{lm} \lrpar{Y_{(i)}}_j = - 2 \Ga^j_{lm} \in_{isj} \cdot x^s,
\end{aligned} 
\end{align*}
where we used that $(Y_{(i)})_j := \in_{ilj} \cdot x^l$ by \eqref{EQdefRotationFieldsComponents}. Hence we can express the integral on the right-hand side of \eqref{EQLlocalsmallnessKerrIngredient0} as
\begin{align} 
\begin{aligned} 
&\int\limits_{\RRR^3 \setminus B(0,1)} \lrpar{k_{l m} - \tr k \, g_{lm }} \lrpar{ \nab^l \lrpar{Y_{(i)}}^m + \nab^m \lrpar{Y_{(i)}}^l } d\mu_g \\
=& -2 \int\limits_{\RRR^3 \setminus B(0,1)} \lrpar{k^{l m} - \tr k \, g^{lm }} \lrpar{ \Ga^j_{lm} \in_{isj} \cdot x^s } d\mu_g.
\end{aligned} \label{EQenergyestimateintegraldecompLLLL7778}
\end{align}
The integral on the right-hand side of \eqref{EQenergyestimateintegraldecompLLLL7778} is, up to lower-order terms which are straight-forward to estimate, of the form
\begin{align} 
\begin{aligned} 
&\int\limits_{\RRR^3 \setminus B(0,1)} k_{mn} \cdot \pr_l g_{ij} \cdot x^s\,  d\mu_e \\
=& \frac{1}{4}\int\limits_{\RRR^3 \setminus B(0,1)} \underbrace{\mathrm{RT}\lrpar{k}_{mn}(x)}_{\les \frac{1}{\vert x \vert^3} R^{-3/2}} \cdot (\underbrace{\pr_l g_{ij}(x) - \pr_l g_{ij}(-x)}_{\les \frac{1}{\vert x \vert^2} R^{-1}}) \cdot x^s\, d\mu_e\\
&+\frac{1}{4}\int\limits_{\RRR^3 \setminus B(0,1)} (\underbrace{k_{mn}(x)-k_{mn}(-x)}_{\les \frac{1}{\vert x \vert^2} R^{-3/2}}) \cdot \underbrace{\pr_l \lrpar{ \mathrm{RT}\lrpar{g}_{ij}}(x)}_{\les \frac{1}{\vert x \vert^3} R^{-3/2}} \cdot x^s\, d\mu_e \\
=& \OO(R^{-5/2}),
\end{aligned} \label{EQdefIntegralToBoundLKerrParameterCharge7778}
\end{align}
where we used that the coordinate function $x^s$ is odd, and applied \eqref{EQfirstEstimateRescaleKerrk}, \eqref{EQscalingDecayestimates1}, \eqref{EQscalingDecayestimates2} and \eqref{EQschwarzschilddecayrates}. From \eqref{EQdefIntegralToBoundLKerrParameterCharge7778} we conclude by \eqref{EQenergyestimateintegraldecompLLLL7778} that 
\begin{align*} 
\begin{aligned} 
\int\limits_{\RRR^3 \setminus B(0,1)} \half \lrpar{k_{l m} - \tr k \, g_{lm }} \lrpar{ \nab^l \lrpar{Y_{(i)}}^m + \nab^m \lrpar{Y_{(i)}}^l } d\mu_g = \OO(R^{-5/2}),
\end{aligned} 
\end{align*}
and subsequently, by \eqref{EQLlocalsmallnessKerrIngredient0} and \eqref{EQLrewritingKestimateConvergence777}, we arrive at the desired estimate
\begin{align*} 
\begin{aligned} 
\mathbf{L}^{\mathrm{loc}}_{\mathrm{ADM}}(\la,R) - \mathbf{L}(\la) = \OO(R^{-1/2}).
\end{aligned} 
\end{align*}

\ni \textbf{Convergence of $\mathbf{E}_{\mathrm{ADM}}^{\mathrm{loc}}$.} First, by \eqref{EQLEMscalingADMlocal},
\begin{align} 
\begin{aligned} 
\mathbf{E}^{\mathrm{loc}}_{\mathrm{ADM}}(\la,R) - \mathbf{E}(\la) =& R \cdot \lrpar{\mathbf{E}^{\mathrm{loc}}_{\mathrm{ADM}}({}^{(R)}\la,1) - \mathbf{E}({}^{(R)}\la)}.
\end{aligned} \label{EQErewritingKestimateConvergence777}
\end{align}
By Definition \ref{DEFlocalADMcharges}, Stokes' theorem and the twice-traced Bianchi identity
\begin{align*} 
\begin{aligned} 
\Div \lrpar{\RRRic - \half R_{\mathrm{scal}} \, g }=0,
\end{aligned} 
\end{align*}
we have that, with $X:= x^s \pr_s$,
\begin{align} 
\begin{aligned} 
&\mathbf{E}^{\mathrm{loc}}_{\mathrm{ADM}}({}^{(R)}\la,1) - \mathbf{E}({}^{(R)}\la)\\
 =& \frac{1}{8\pi} \int\limits_{\RRR^3 \setminus B(0,1)} \Div\lrpar{ \lrpar{\RRRic_{\cdot j}- \half R_{\mathrm{scal}} \,g_{\cdot j} }X^j} d\mu_g \\
=& \frac{1}{8\pi}\int\limits_{\RRR^3 \setminus B(0,1)} \lrpar{\RRRic_{i j}- \half R_{\mathrm{scal}} \,g_{i j} } \nab^i X^j d\mu_g \\
=& \frac{1}{16\pi}\int\limits_{\RRR^3 \setminus B(0,1)} \lrpar{\RRRic_{i j}- \half R_{\mathrm{scal}} \,g_{i j} } \lrpar{ \nab^i X^j+\nab^j X^i}  d\mu_g.
\end{aligned} \label{EQElocalsmallnessKerrIngredient0}
\end{align}
In the following we estimate the integral on the right-hand side of \eqref{EQElocalsmallnessKerrIngredient0} based on the following two observations.
\begin{enumerate}
\item As remarked after Theorem \ref{THMaltDEFinvariants}, $X:= x^s \pr_s$ is a conformal Killing field of Euclidean space $(\RRR^3,e)$,
\begin{align*} 
\begin{aligned} 
\pr_i X_j+\pr_j X_i = 2 e_{ij},
\end{aligned} 
\end{align*}
where we used that $\mathrm{div}_e X := \pr_i X^i = 3$, and hence we have that
\begin{align} 
\begin{aligned} 
\nab_i X_j+\nab_j X_i =\pr_i X_j+\pr_j X_i -2 \Ga^m_{ij} X_m =  2 e_{ij} -2 \Ga^m_{ij} X_m,
\end{aligned} \label{EQconformalKillingXid99902}
\end{align}
where $e_{ij}=\de_{ij}$ denote the Cartesian components of the Euclidean metric.
\item By the spacelike constraint equations \eqref{EQspacelikeConstraints1}, that is, $R_{\mathrm{scal}}=\vert k \vert^2 - (\tr k)^2$, we have
\begin{align} 
\begin{aligned} 
g^{ij} \lrpar{\RRRic_{ij} - \half R_{\mathrm{scal}} \,g_{ij} } =-\half R_{\mathrm{scal}} = -\half \lrpar{ \vert k \vert^2 - (\tr k)^2}.
\end{aligned} \label{EQtraceoftensorKerrparameterEstimate222777}
\end{align}
\end{enumerate}
We are now in position to bound the integral on the right-hand side of \eqref{EQElocalsmallnessKerrIngredient0}. By \eqref{EQconformalKillingXid99902} and \eqref{EQtraceoftensorKerrparameterEstimate222777},
\begin{align} 
\begin{aligned} 
&\int\limits_{\RRR^3 \setminus B(0,1)} \lrpar{\RRRic_{ij} - \half R_{\mathrm{scal}} \,g_{ij} } \lrpar{ \nab^i X^j+\nab^j X^i}  d\mu_g \\
=& \int\limits_{\RRR^3 \setminus B(0,1)} \lrpar{\RRRic_{ij} - \half R_{\mathrm{scal}} \,g_{ij} } \lrpar{2 e^{ij}}  d\mu_g \\
&-2 \int\limits_{\RRR^3 \setminus B(0,1)} \lrpar{\RRRic^{ij} - \half R_{\mathrm{scal}} \,g^{ij} } \cdot \Ga^m_{ij} X_m \, d\mu_g\\
=& -\underbrace{\int\limits_{\RRR^3 \setminus B(0,1)} \lrpar{\vert k \vert^2 - (\tr k)^2}   d\mu_g}_{:= \II_1} \\
&- 2\underbrace{\int\limits_{\RRR^3 \setminus B(0,1)} \lrpar{\RRRic^{ij} - \half R_{\mathrm{scal}} \,g^{ij} } \lrpar{ g^{ij}- e^{ij}}  d\mu_g}_{:= \II_2}\\
&-2 \underbrace{\int\limits_{\RRR^3 \setminus B(0,1)} \lrpar{\RRRic_{ij} - \half R_{\mathrm{scal}} \,g_{ij} } \cdot \Ga^m_{ij} X_m \, d\mu_g}_{:=\II_3}.
\end{aligned} \label{EQenergyestimateintegraldecomp17778}
\end{align}
In the following we estimate the integrals $\II_1, \II_2$ and $\II_3$ on the right-hand side of \eqref{EQenergyestimateintegraldecomp17778}. First, using \eqref{EQfirstEstimateRescaleKerrk}, \eqref{EQscalingDecayestimates1} and \eqref{EQschwarzschilddecayrates}, we can estimate $\II_1$ as follows (where we only write out the highest-order term),
\begin{align} 
\begin{aligned}
\II_1= \int\limits_{\RRR^3 \setminus B(0,1)} \underbrace{k_{ij}(x)}_{\les \frac{1}{\vert x \vert^2} R^{-3/2} } \cdot \underbrace{k_{ab}(x)}_{\les \frac{1}{\vert x \vert^2} R^{-3/2}} \, d\mu_e + \OO(R^{-3})
= \OO(R^{-3}).
\end{aligned} \label{EQKerrparamIntegral17778}
\end{align}

\ni Second, using \eqref{EQfirstEstimateRescaleKerrk}, \eqref{EQscalingDecayestimates1} and \eqref{EQschwarzschilddecayrates}, we can estimate $\II_2$ as follows (where we only write out the highest-order term),
\begin{align} 
\begin{aligned}
\II_2 = \int\limits_{\RRR^3 \setminus B(0,1)} \underbrace{\pr_s\pr_c g_{mn}(x)}_{\les \frac{1}{\vert x \vert^3} R^{-1} }\cdot \underbrace{(g_{ij}- e_{ij})(x)}_{\les \frac{1}{\vert x \vert} R^{-1} }  d\mu_e + \OO(R^{-2}) = \OO(R^{-2}).
\end{aligned} \label{EQKerrparamIntegral27778}
\end{align}

\ni Third, using \eqref{EQfirstEstimateRescaleKerrk}, \eqref{EQscalingDecayestimates1} and \eqref{EQschwarzschilddecayrates} and that $X:= x^s \pr_s$, we can estimate $\II_3$ as follows (where we only write out the highest-order term),
\begin{align} 
\begin{aligned}
\II_3= \int\limits_{\RRR^3 \setminus B(0,1)} \underbrace{\pr_s\pr_c g_{mn}(x)}_{\les \frac{1}{\vert x \vert^3} R^{-1}} \cdot \underbrace{\pr_l g_{ij}(x)}_{\les \frac{1}{\vert x \vert^2} R^{-1}} \cdot x^s \, d\mu_e +\OO(R^{-2}) = \OO(R^{-2}).
\end{aligned} \label{EQKerrparamIntegral37778}
\end{align}

\ni Plugging \eqref{EQKerrparamIntegral17778}, \eqref{EQKerrparamIntegral27778} and \eqref{EQKerrparamIntegral37778} into \eqref{EQenergyestimateintegraldecomp17778} shows that
\begin{align*} 
\begin{aligned} 
\int\limits_{\RRR^3 \setminus B(0,1)} \lrpar{\RRRic_{ij} -\half R_{\mathrm{scal}} \,g_{ij} } \lrpar{ \nab^i X^j+\nab^j X^i}  d\mu_g = \OO(R^{-2}),
\end{aligned} 
\end{align*}
and hence, by \eqref{EQElocalsmallnessKerrIngredient0} and \eqref{EQErewritingKestimateConvergence777}, this proves the desired estimate
\begin{align} 
\begin{aligned} 
\mathbf{E}^{\mathrm{loc}}_{\mathrm{ADM}}(\la,R) - \mathbf{E}(\la) = \OO\lrpar{R^{-1}}.
\end{aligned} \label{EQenergyEstimateKERRreferenceEconv}
\end{align}
\ni \textbf{Convergence of $\mathbf{C}_{\mathrm{ADM}}^{\mathrm{loc}}$.} First, by \eqref{EQLEMscalingADMlocal},
\begin{align} 
\begin{aligned} 
\mathbf{C}^{\mathrm{loc}}_{\mathrm{ADM}}(\la,R) - \mathbf{C}(\la) =& R^2 \cdot \lrpar{\mathbf{C}^{\mathrm{loc}}_{\mathrm{ADM}}({}^{(R)}\la,1) - \mathbf{C}({}^{(R)}\la)}.
\end{aligned} \label{EQCrewritingKestimateConvergence777}
\end{align}
By Definition \ref{DEFlocalADMcharges}, Stokes' theorem and the twice-traced Bianchi identity
\begin{align*} 
\begin{aligned} 
\Div \lrpar{\RRRic - \half R_{\mathrm{scal}} \, g }=0,
\end{aligned} 
\end{align*}
we have that
\begin{align} 
\begin{aligned} 
&\mathbf{C}^{\mathrm{loc}}_{\mathrm{ADM}}({}^{(R)}\la,1) - \mathbf{C}({}^{(R)}\la)\\
 =&- \frac{1}{16\pi} \int\limits_{\RRR^3 \setminus B(0,1)} \Div\lrpar{\lrpar{\RRRic_{\cdot j} -\half R_{\mathrm{scal}} \,g_{\cdot j} }\lrpar{Z^{(i)}}^j} d\mu_g \\
=&-\frac{1}{16\pi}\int\limits_{\RRR^3 \setminus B(0,1)} \lrpar{\RRRic_{sj} -\half R_{\mathrm{scal}} \,g_{sj} } \nab^s\lrpar{Z^{(i)}}^j \mu_g\\
=&\frac{1}{32\pi} \int\limits_{\RRR^3 \setminus B(0,1)} \lrpar{\RRRic_{sj} - \half R_{\mathrm{scal}} \,g_{sj}  } \lrpar{\nab^s\lrpar{Z^{(i)}}^j+\nab^j\lrpar{Z^{(i)}}^s} d\mu_g.
\end{aligned} \label{EQClocalsmallnessKerrIngredient0}
\end{align}
In the following we estimate the integral on the right-hand side of \eqref{EQClocalsmallnessKerrIngredient0}. We note that the vectorfields
\begin{align} 
\begin{aligned} 
Z^{(i)} := \lrpar{\vert x \vert^2 \de^{ij}-2x^i x^j} \pr_j \text{ for } i=1,2,3,
\end{aligned} \label{EQrecallZidefinition99909}
\end{align}
are conformal Killing fields of $(\RRR^3,e)$, that is,
\begin{align*} 
\begin{aligned} 
\pr_s\lrpar{Z^{(i)}}_j+\pr_j\lrpar{Z^{(i)}}_s = \frac{2}{3} e_{sj} \, \mathrm{div}_e(Z^{(i)}) = - 4 e_{sj} \cdot x^i,
\end{aligned} 
\end{align*}
where the Euclidean divergence operator is defined for vectorfields $X$ by $\mathrm{div}_e(X):= \pr_i X^i$ and we calculated that for $i=1,2,3$, 
\begin{align*} 
\begin{aligned} 
\mathrm{div}_e\lrpar{Z^{(i)}}=-6x^i.
\end{aligned} 
\end{align*}
Hence it follows that
\begin{align} 
\begin{aligned} 
\nab_s\lrpar{Z^{(i)}}_j+\nab_j\lrpar{Z^{(i)}}_s = - 4 e_{sj} \cdot x^i - 2 \Ga_{sj}^m \lrpar{Z^{(i)}}_m.
\end{aligned} \label{EQZiConformalKillingUsage}
\end{align}
Plugging \eqref{EQZiConformalKillingUsage} into \eqref{EQClocalsmallnessKerrIngredient0} and using \eqref{EQtraceoftensorKerrparameterEstimate222777}, we get that
\begin{align} 
\begin{aligned} 
&\int\limits_{\RRR^3 \setminus B(0,1)} \lrpar{\RRRic_{sj} - \half R_{\mathrm{scal}} \,g_{sj} } \lrpar{ \nab^s \lrpar{Z^{(i)}}^j+\nab^j \lrpar{Z^{(i)}}^s}  d\mu_g \\
=& \int\limits_{\RRR^3 \setminus B(0,1)} \lrpar{\RRRic^{sj} - \half R_{\mathrm{scal}} \,g^{sj} } \lrpar{-4 e_{sj} \cdot x^i}  d\mu_g \\
&-2  \int\limits_{\RRR^3 \setminus B(0,1)} \lrpar{\RRRic^{sj} - \half R_{\mathrm{scal}} \,g^{sj} } \cdot \Ga^m_{sj} \lrpar{Z^{(i)}}_m \, d\mu_g\\
=& 2 \underbrace{\int\limits_{\RRR^3 \setminus B(0,1)}  \lrpar{\vert k \vert^2 - (\tr k)^2}   \cdot x^i \, d\mu_g}_{:=\II_1}\\
&+4 \underbrace{\int\limits_{\RRR^3 \setminus B(0,1)} \lrpar{\RRRic^{sj} - \half R_{\mathrm{scal}} \,g^{sj} } \lrpar{g^{sj}- e^{sj}}\cdot x^i\,  d\mu_g}_{:=\II_2}\\
&-2 \underbrace{\int\limits_{\RRR^3 \setminus B(0,1)}  \lrpar{\RRRic^{sj} - \half R_{\mathrm{scal}} \,g^{sj} } \cdot \Ga^m_{sj} \lrpar{Z^{(i)}}_m \, d\mu_g}_{:=\II_3}.
\end{aligned} \label{EQCCCestimateintegraldecomp17778}
\end{align}
First, using \eqref{EQfirstEstimateRescaleKerrk}, \eqref{EQscalingDecayestimates1}, \eqref{EQscalingDecayestimates2}, \eqref{EQschwarzschilddecayrates} and using that $x^s$ is an odd function, we can estimate $\II_1$ as follows (where we only write out the highest-order term), 
\begin{align} 
\begin{aligned}
\II_1 =&\int\limits_{\RRR^3 \setminus B(0,1)} k_{ij} \cdot k_{ab} \cdot x^s \, d\mu_e + \OO(R^{-3}) \\
=& \frac{1}{4} \int\limits_{\RRR^3 \setminus B(0,1)} \underbrace{\mathrm{RT}\lrpar{k}_{ij}(x)}_{\les \frac{1}{\vert x \vert^3}R^{-3/2}} \cdot (\underbrace{k_{ab}(x)-k_{ab}(-x)}_{\les \frac{1}{\vert x \vert^2}R^{-3/2}}) \cdot x^s d\mu_e\\
&+ \frac{1}{4} \int\limits_{\RRR^3 \setminus B(0,1)} (\underbrace{k_{ij}(x)-k_{ij}(-x)}_{\les \frac{1}{\vert x \vert^2}R^{-3/2}}) \cdot \underbrace{\mathrm{RT}\lrpar{k}_{ab}(x)}_{\les \frac{1}{\vert x \vert^3}R^{-3/2}} \cdot x^s d\mu_e+ \OO(R^{-3}) \\
=& \OO(R^{-3}).
\end{aligned} \label{EQCCCKerrparamIntegral17778}
\end{align}
Second, using \eqref{EQfirstEstimateRescaleKerrk}, \eqref{EQscalingDecayestimates1}, \eqref{EQscalingDecayestimates2}, \eqref{EQschwarzschilddecayrates} and using that $x^s$ is an odd function and the components of the Euclidean metric $e_{ij}$ are even, we can estimate $\II_2$ as follows (where we only write out the highest-order term),
\begin{align} 
\begin{aligned}
\II_2 =&\int\limits_{\RRR^3 \setminus B(0,1)} \pr_l\pr_m g_{ij} \cdot (g_{ab}- e_{ab}) \cdot x^s\,  d\mu_e + \OO(R^{-5/2})\\
=& \frac{1}{4} \int\limits_{\RRR^3 \setminus B(0,1)} \underbrace{\pr_l\pr_m\mathrm{RT}(g)_{ij}(x)}_{\les \frac{1}{\vert x \vert^4} R^{-3/2}} \cdot \Big( \underbrace{(g_{ab}- e_{ab})(x) + (g_{ab}- e_{ab})(-x)}_{\les \frac{1}{\vert x \vert}R^{-1}} \Big) \cdot x^s\,  d\mu_e \\
&+ \frac{1}{4} \int\limits_{\RRR^3 \setminus B(0,1)} \lrpar{ \underbrace{\pr_l\pr_m g_{ij}(x)+\pr_l\pr_m g_{ij}(-x)}_{\les \frac{1}{\vert x \vert^3} R^{-1}}} \cdot \underbrace{\mathrm{RT}(g)_{ab}(x)}_{\les \frac{1}{\vert x \vert^2} R^{-3/2}} \cdot x^s\,  d\mu_e+ \OO(R^{-5/2}) \\
=& \OO(R^{-5/2}).
\end{aligned} \label{EQCCCKerrparamIntegral27778}
\end{align}

\ni Third, using \eqref{EQfirstEstimateRescaleKerrk}, \eqref{EQscalingDecayestimates1}, \eqref{EQscalingDecayestimates2}, \eqref{EQschwarzschilddecayrates} and using that the components of $Z^{(i)}$ are of the even form $x^a \cdot x^d$, see \eqref{EQrecallZidefinition99909}, we can estimate $\II_3$ as follows (where we only write out the highest-order term),
\begin{align} 
\begin{aligned}
\II_3 =&\int\limits_{\RRR^3 \setminus B(0,1)}  \pr_s\pr_c g_{mn} \cdot \pr_l g_{ij} \cdot x^a \cdot x^d \, d\mu_e + \OO(R^{-5/2}) \\
=&\frac{1}{4} \int\limits_{\RRR^3 \setminus B(0,1)} \underbrace{\pr_s\pr_c \mathrm{RT}(g)_{mn}(x)}_{\les \frac{1}{\vert x \vert^4}R^{-3/2}} \cdot (\underbrace{\pr_l g_{ij}(x)-\pr_l g_{ij}(-x)}_{\les \frac{1}{\vert x \vert^2}R^{-1}} ) \cdot x^a \cdot x^d \, d\mu_e + \OO(R^{-5/2}) \\
&+ \frac{1}{4} \int\limits_{\RRR^3 \setminus B(0,1)}  (\underbrace{\pr_s\pr_c g_{mn}(x)+\pr_s\pr_c g_{mn}(-x)}_{\les \frac{1}{\vert x \vert^3}R^{-1}} ) \cdot \underbrace{\pr_l \mathrm{RT}(g)_{ij}(x)}_{\les \frac{1}{\vert x \vert^3}R^{-3/2}}  \cdot x^a \cdot x^d \, d\mu_e \\
=& \OO(R^{-5/2}).
\end{aligned} \label{EQCCCKerrparamIntegral37778}
\end{align}

\ni Plugging \eqref{EQCCCKerrparamIntegral17778}, \eqref{EQCCCKerrparamIntegral27778} and \eqref{EQCCCKerrparamIntegral37778} into \eqref{EQCCCestimateintegraldecomp17778} shows that
\begin{align*} 
\begin{aligned} 
\int\limits_{\RRR^3 \setminus B(0,1)} \lrpar{R_{\mathrm{scal}} \,g_{ij} - 2 \RRRic_{ij} } \lrpar{ \nab^i X^j+\nab^j X^i}  d\mu_g = \OO(R^{-5/2}),
\end{aligned} 
\end{align*}
and hence, by \eqref{EQClocalsmallnessKerrIngredient0} and \eqref{EQCrewritingKestimateConvergence777}, proves the desired estimate
\begin{align*} 
\begin{aligned} 
\mathbf{C}^{\mathrm{loc}}_{\mathrm{ADM}}(\la,R) - \mathbf{C}(\la) = \OO\lrpar{R^{-1/2}}.
\end{aligned} 
\end{align*}

\ni This finishes the proof of \eqref{EQquantEstimKERR79999} of Proposition \ref{PROPKerrchargesLOCALglobal10101}. \\

\ni \textbf{Conclusion of proof of Proposition \ref{PROPKerrchargesLOCALglobal10101}.} \ni It remains to show that for $\la\in\EE_R(\mathbf{E}_\infty)$ and for $R\geq1$ sufficiently large,
\begin{align*} 
\begin{aligned} 
\mathbf{C}_{\mathrm{ADM}}^{\mathrm{loc}}(\la,R) - r(\la,R) \cdot \mathbf{P}_{\mathrm{ADM}}^{\mathrm{loc}}(\la,R) =& \mathbf{C}(\la) - R\cdot \mathbf{P}(\la)+ \OO\lrpar{R^{-1/4}}.
\end{aligned} 
\end{align*}

\ni Indeed, by scaling, see Section \ref{SECspacelikescaling}, and Corollary \ref{CORcombinedEstimates222} we have that for $R\geq1$ large and $\la \in \EE_R\lrpar{\mathbf{E}_\infty}$,
\begin{align} 
\begin{aligned} 
r(\la,R) -R =& R \cdot \lrpar{ r\lrpar{{}^{(R)}\la,1} -1} \\
 \les& R\cdot \lrpar{\vert \mathbf{E}({}^{(R)}\la)-\lrpar{\mathbf{E}_\infty/(R)} \vert + \vert \mathbf{P}({}^{(R)}\la) \vert + \vert \mathbf{L}({}^{(R)}\la) \vert + \vert \mathbf{C}({}^{(R)}\la) \vert} \\
&+ R\cdot \lrpar{\frac{\vert \mathbf{L}({}^{(R)}\la) \vert + \frac{\vert \mathbf{P}({}^{(R)}\la) \vert}{\mathbf{E}_\infty/(R\cdot r_0)} \vert \mathbf{C}({}^{(R)}\la) \vert}{\mathbf{E}_\infty/R}}^2\\
=& \OO\lrpar{R^{-1/2}}.
\end{aligned} \label{EQestimateforarearadius77788}
\end{align}
By \eqref{EQquantEstimKERR79999} and \eqref{EQestimateforarearadius77788}, we have that for $\la \in \EE_R\lrpar{\mathbf{E}_\infty}$ and $R\geq1$ large, 
\begin{align*} 
\begin{aligned} 
&\mathbf{C}_{\mathrm{ADM}}^{\mathrm{loc}}(\la,R) - r(\la,R) \cdot \mathbf{P}_{\mathrm{ADM}}^{\mathrm{loc}}(\la,R) \\
=&\lrpar{\mathbf{C}(\la) + \OO(R^{-1/2})} -\lrpar{R+\OO\lrpar{R^{-1/2}}} \cdot \lrpar{\mathbf{P}(\la) + \OO(R^{-3/2})} \\
=&  \mathbf{C}(\la) - R\cdot \mathbf{P}(\la)+ \OO\lrpar{R^{-1/2}}.
\end{aligned}
\end{align*}
This finishes the proof of Proposition \ref{PROPKerrchargesLOCALglobal10101}. \end{proof}

\ni The above proof of Proposition \ref{PROPKerrchargesLOCALglobal10101} generalizes in a straight-forward way to yield \eqref{EQquantEstimKERR79999} with the local integrals on the left-hand side evaluated on the spheres of radius $r_{\mathbf{E}_\infty}(-R,2R)$, where $\mathbf{E}_\infty>0$ is a real number. Indeed, this is based on the fact that for $R\geq1$ large, it is well-known that (see the definition of $r_M(u,v)$ in \eqref{EQdefRbyUV})
$$2R \leq r_{\mathbf{E}_\infty}(-R,2R) \leq 3R.$$

\ni Thus we deduce from Proposition \ref{PROPKerrchargesLOCALglobal10101} the following estimates which are directly applied in \eqref{EQasympDECAY55540002} in Section \ref{SECconclusionofMainTheorem}.
\begin{proposition}[Convergence of charges to asymptotic invariants] \label{CORfullchargeexpressionSRkerrreference}
Let $R\geq1$ and $\mathbf{E}_\infty>0$ be two real numbers. For $R\geq1$ sufficiently large it holds that for all $\la\in \EE_R(\mathbf{E}_\infty)$, for $m=-1,0,1$ and $(i_{-1},i_0,i_1)=(2,3,1)$, 
\begin{align*} 
\begin{aligned} 
\mathbf{E}\lrpar{x_{-R,2R}^{\KK(\la)}} =& \mathbf{E}(\la) + \OO(R^{-1}), & \mathbf{P}^m\lrpar{x_{-R,2R}^{\KK(\la)}} =&  \mathbf{P}(\la)^{i_m} + \mathcal{O}(R^{-3/2}),\\
\mathbf{L}^m\lrpar{x_{-R,2R}^{\KK(\la)}} =& \mathbf{L}(\la)^{i_m} + \OO(R^{-1/2}), &
\mathbf{G}^m\lrpar{x_{-R,2R}^{\KK(\la)}} =& \mathbf{C}(\la)^{i_m}- 3R \cdot \mathbf{P}(\la)^{i_m} + \OO(R^{-1/4}).
\end{aligned}
\end{align*}
\end{proposition}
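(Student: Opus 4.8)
The plan is to deduce Proposition~\ref{CORfullchargeexpressionSRkerrreference} directly from Proposition~\ref{PROPspheredataESTIM91} and Proposition~\ref{PROPKerrchargesLOCALglobal10101}, using only the bookkeeping of which radius appears in the local integrals. Recall that by Definition~\ref{DEFKerrreferenceSPHEREDATA}, the Kerr reference sphere data $x_{-R,2R}^{\KK(\la)}$ is built on the sphere $S_{r_{\mathbf{E}_\infty/R}(-1,2)}$ (before scaling back up, on $S_{r_{\mathbf{E}_\infty}(-R,2R)}$), and that $2R \le r_{\mathbf{E}_\infty}(-R,2R) \le 3R$ for $R\geq1$ large by the implicit definition \eqref{EQdefRbyUV}. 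So the first step is to invoke the second part of Proposition~\ref{PROPspheredataESTIM91}, i.e.\ \eqref{PROPspheredataESTIM91estim2}, which expresses each of $\mathbf{E}, \mathbf{P}^m, \mathbf{L}^m, \mathbf{G}^m$ of $x_{-R,2R}^{\KK(\la)}$ in terms of the corresponding local ADM integrals $\mathbf{E}^{\mathrm{loc}}_{\mathrm{ADM}}, \mathbf{P}^{\mathrm{loc}}_{\mathrm{ADM}}, \mathbf{L}^{\mathrm{loc}}_{\mathrm{ADM}}, \mathbf{C}^{\mathrm{loc}}_{\mathrm{ADM}}$ of $(g^{\KK(\la)},k^{\KK(\la)})$ on the Kerr reference sphere, up to errors $\OO(R^{-3/2})$ (for $\mathbf{E}$ and $\mathbf{P}$) and $\OO(R^{-1/2})$ (for $\mathbf{L}$ and $\mathbf{G}$), and with the extra $-r\cdot\mathbf{P}^{\mathrm{loc}}_{\mathrm{ADM}}$ term in the $\mathbf{G}$-identity.

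The second step is to feed in the convergence of the local integrals to the asymptotic invariants. As remarked right after the proof of Proposition~\ref{PROPKerrchargesLOCALglobal10101}, that proposition generalizes verbatim to spheres of radius $r_{\mathbf{E}_\infty}(-R,2R)$: the argument only used that the radius is comparable to $R$ and that $\la\in\EE_R(\mathbf{E}_\infty)$, both of which still hold. Hence on the Kerr reference sphere one has $\mathbf{E}^{\mathrm{loc}}_{\mathrm{ADM}} = \mathbf{E}(\la) + \OO(R^{-1})$, $\mathbf{P}^{\mathrm{loc}}_{\mathrm{ADM}} = \mathbf{P}(\la) + \OO(R^{-3/2})$, $\mathbf{L}^{\mathrm{loc}}_{\mathrm{ADM}} = \mathbf{L}(\la) + \OO(R^{-1/2})$, $\mathbf{C}^{\mathrm{loc}}_{\mathrm{ADM}} = \mathbf{C}(\la) + \OO(R^{-1/2})$, and moreover $\mathbf{C}^{\mathrm{loc}}_{\mathrm{ADM}} - r \cdot \mathbf{P}^{\mathrm{loc}}_{\mathrm{ADM}} = \mathbf{C}(\la) - r_{\mathbf{E}_\infty}(-R,2R)\cdot\mathbf{P}(\la) + \OO(R^{-1/2})$ — with the only change being that the subtracted radius is now $r_{\mathbf{E}_\infty}(-R,2R)$ rather than $R$. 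Then I would substitute these into the identities from step one. For $\mathbf{E}, \mathbf{P}, \mathbf{L}$ the errors combine to give exactly the stated $\OO(R^{-1}), \OO(R^{-3/2}), \OO(R^{-1/2})$.

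The one genuinely delicate point — and the place to be careful — is the $\mathbf{G}$-component, where I need to convert $-\,r_{\mathbf{E}_\infty}(-R,2R)\cdot\mathbf{P}(\la)$ into the clean form $-3R\cdot\mathbf{P}(\la)$ appearing in the statement. By \eqref{EQdefRbyUV} with $M=\mathbf{E}_\infty$, $u=-R$, $v=2R$, one has $3R/(2\mathbf{E}_\infty) = r_{\mathbf{E}_\infty}(-R,2R)/(2\mathbf{E}_\infty) + \log(r_{\mathbf{E}_\infty}(-R,2R)/(2\mathbf{E}_\infty)-1)$, so $r_{\mathbf{E}_\infty}(-R,2R) = 3R - 2\mathbf{E}_\infty\log(\ldots) = 3R + \OO(\log R)$. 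Combined with $\mathbf{P}(\la) = \OO(R^{-1/2})$ for $\la\in\EE_R(\mathbf{E}_\infty)$ (from the definition \eqref{EQdefeereinftyset999002}), the difference $(r_{\mathbf{E}_\infty}(-R,2R) - 3R)\cdot\mathbf{P}(\la) = \OO(R^{-1/2}\log R)$. Strictly this is $\OO(R^{-1/2}\log R)$, not $\OO(R^{-1/4})$, but since $R^{-1/2}\log R = \smallO(R^{-1/4}) = \OO(R^{-1/4})$ for large $R$, it is absorbed into the stated error $\OO(R^{-1/4})$; the slightly weaker exponent $-1/4$ in the statement gives exactly this slack. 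This yields $\mathbf{G}^m(x_{-R,2R}^{\KK(\la)}) = \mathbf{C}(\la)^{i_m} - 3R\cdot\mathbf{P}(\la)^{i_m} + \OO(R^{-1/4})$, completing the proof. I expect the main obstacle to be purely organizational: keeping straight the indexing correspondence $m\mapsto i_m$ with $(i_{-1},i_0,i_1)=(2,3,1)$ and $(m_1,m_2,m_3)=(1,-1,0)$ between the spherical-harmonic labels of the charges and the Cartesian labels of the ADM invariants, which is already fixed in \eqref{PROPspheredataESTIM91estim2} and just needs to be carried through faithfully.
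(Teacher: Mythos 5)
Your proposal is correct and follows essentially the same route as the paper's own proof: both chain Proposition~\ref{PROPspheredataESTIM91} (the identities \eqref{PROPspheredataESTIM91estim2}) with the generalization of Proposition~\ref{PROPKerrchargesLOCALglobal10101} to spheres of radius $r_{\mathbf{E}_\infty}(-R,2R)$, and both handle the $\mathbf{G}$-component by expanding $r_{\mathbf{E}_\infty}(-R,2R) = 3R - 2\mathbf{E}_\infty\log(\cdot) + \smallO(1)$ and absorbing the resulting $\OO(R^{-1/2}\log R)$ error into the slack $\OO(R^{-1/4})$. One small point you gloss over, and the paper makes explicit, is that the $r$ multiplying $\mathbf{P}^{\mathrm{loc}}_{\mathrm{ADM}}$ in \eqref{PROPspheredataESTIM91estim2} is the area radius $r(\la, r_{\mathbf{E}_\infty}(-R,2R))$ rather than the coordinate radius; the paper separately records $r(\la, r_{\mathbf{E}_\infty}(-R,2R)) = r_{\mathbf{E}_\infty}(-R,2R) + \OO(R^{-1/2})$, but that discrepancy multiplied by $\mathbf{P}(\la) = \OO(R^{-1/2})$ contributes only $\OO(R^{-1})$, so it does not affect your conclusion.
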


\begin{proof}[Proof of Proposition \ref{CORfullchargeexpressionSRkerrreference}] By combining Proposition \ref{PROPspheredataESTIM91} and Proposition \ref{PROPKerrchargesLOCALglobal10101}, and using \eqref{EQLEMscalingADMlocal}, we have that
\begin{align*} 
\begin{aligned} 
\mathbf{E}\lrpar{x_{-R,2R}^{\KK(\la)}} =& \mathbf{E}^{\mathrm{loc}}_{\mathrm{ADM}}\lrpar{\la,r_{\mathbf{E}_\infty}(-R,2R)} + \OO(R^{-3/2}) \\
=& \lrpar{\mathbf{E}(\la) + \OO(R^{-1})} + \OO(R^{-3/2}).
\end{aligned} 
\end{align*}
The estimates for $\mathbf{P}$ and $\mathbf{L}$ are proved similarly. Next, consider $\mathbf{G}$. We have by Propositions \ref{PROPspheredataESTIM91} and \ref{PROPKerrchargesLOCALglobal10101} that
\begin{align} 
\begin{aligned} 
 \mathbf{G}^m\lrpar{x_{-R,2R}^{\KK(\la)}} =& \lrpar{\mathbf{C}^{\mathrm{loc}}_{\mathrm{ADM}}}^{i_m}\lrpar{\la,r_{\mathbf{E}_\infty}(-R,2R)} \\
&- r\lrpar{\la, r_{\mathbf{E}_\infty}(-R,2R)} \cdot \lrpar{\mathbf{P}^{\mathrm{loc}}_{\mathrm{ADM}}}^{i_m}\lrpar{\la, r_{\mathbf{E}_\infty}(-R,2R)} + \OO(R^{-1/2}) \\
=& \mathbf{C}(\la)^{i_m} + \OO(R^{-1/2})  \\
&- \lrpar{ r_{\mathbf{E}_\infty}(-R,2R) + \OO(R^{-1/2})} \cdot \lrpar{\mathbf{P}(\la)^{i_m} +\OO(R^{-3/2}) }\\
=&\mathbf{C}(\la)^{i_m} - r_{\mathbf{E}_\infty}(-R,2R) \cdot \mathbf{P}(\la)^{i_m} + \OO(R^{-1/2}),
\end{aligned} \label{EQconvergenceG99904}
\end{align}
where we used that, similarly as in \eqref{EQestimateforarearadius77788},
\begin{align*} 
\begin{aligned} 
r\lrpar{\la, r_{\mathbf{E}_\infty}(-R,2R)} = r_{\mathbf{E}_\infty}(-R,2R) + \OO(R^{-1/2}).
\end{aligned} 
\end{align*}
It is well-known that for large $R\geq1$, by \eqref{EQdefRbyUV},
\begin{align*} 
\begin{aligned} 
r_{\mathbf{E}_\infty}(-R,2R) = 3R -2 \mathbf{E}_\infty \log \lrpar{\frac{3R}{2\mathbf{E}_\infty}-1} + \smallO(1),
\end{aligned} 
\end{align*}
so that, together with $\vert \mathbf{P}(\la) \vert =\OO(R^{-1/2})$ for $\la \in \EE_R(\mathbf{E}_\infty)$, we get from \eqref{EQconvergenceG99904} that
\begin{align*} 
\begin{aligned} 
\mathbf{G}^m\lrpar{x_{-R,2R}^{\KK(\la)}} = \mathbf{C}(\la)^{i_m} - 3R \cdot \mathbf{P}(\la)^{i_m} + \OO(R^{-1/4}).
\end{aligned} 
\end{align*}
This finishes the proof of Proposition \ref{CORfullchargeexpressionSRkerrreference}. \end{proof}

\appendix
\section{Completeness of perturbed null hypersurfaces in Kerr} \label{SECcompletenessKERR} 

\ni In this section we show that for real numbers $\mathbf{E}_\infty>0$ and real numbers $R\geq1$ large, for all $\la \in \EE_R(\mathbf{E}_\infty)$ the Kerr reference sphere $S_{-R,2R}^{\KK(\la)}$ in Kerr admits a future-complete outgoing null congruence (and similarly, past-complete ingoing null congruence). Here we recall from \eqref{EQdefeereinftyset999002} that $\EE_R(\mathbf{E}_\infty)$ is the set of all asymptotic invariants vectors such that
\begin{align} 
\begin{aligned} 
&\lrpar{R^{1/2} \vert \mathbf{E}(\la)- \mathbf{E}_\infty \vert}^2 + \lrpar{R^{1/2} \vert \mathbf{P}(\la) \vert}^2\\
&+\lrpar{R^{-1/4} \vert \mathbf{L}(\la) \vert}^2+\lrpar{R^{-1/2} \vert \mathbf{C}(\la) \vert}^2 \leq \lrpar{\mathbf{E}_\infty}^2.
\end{aligned}\label{EQdefeereinftyset999002later}
\end{align}

\ni In particular, as in the proof of Theorem \ref{THMMAINPRECISE} we glue to
\begin{align*} 
\begin{aligned} 
S_{-R,2R}^{\Kerr} := S_{-R,2R}^{\KK(\la')}
\end{aligned} 
\end{align*}
for a $\la' \in \EE_R(\mathbf{E}_\infty)$, this finishes the proof of Theorem \ref{THMMAINPRECISE}. This section is structured as follows, see also Figure \ref{FIGcompleteness} below.
\begin{enumerate}

\item We rescale the sphere $S_{-R,2R}^{\KK(\la)}$ to $S_{-1,2}^{\KK({}^{(R)}\la)}$ and show that for $R\geq1$ sufficiently large, it is a small smooth perturbation of the Boyer-Lindquist sphere 
\begin{align} 
\begin{aligned} 
S^{\mathrm{BL}}_{-1,2}:= \{t=0,r=3\}.
\end{aligned} \label{EQminkowksisphere123COMPLETE}
\end{align}

\item From \cite{FransIsrael,DHRScattering} we recapitulate the double null coordinates $(\tilde{u},\tilde{v},\tth^1,\tth^2)$ on the exterior region of Kerr spacetime, and show that for sufficiently small mass parameter $M\geq0$ and angular momentum parameter $a$, the Israel-Pretorius sphere 
\begin{align*} 
\begin{aligned} 
\tilde{S}^{\mathrm{IP}}_{-1,2} := \{\tilde{u}=-1,\tilde{v}=2\}
\end{aligned} 
\end{align*}
is a smooth perturbation of \eqref{EQminkowksisphere123COMPLETE}.

\item We use the double null coordinates  $(\tilde{u},\tilde{v},\tth^1,\tth^2)$ together with the property that Kerr is \emph{weakly asymptotically simple} to define a smooth conformal compactification.

\item Based on the smooth conformal compactification, we show that sufficiently small smooth perturbations of $\tilde{S}^{\mathrm{IP}}_{-1,2}$ admit future-complete outgoing null congruences. As (1) and (2) imply that for $R\geq1$ sufficiently large, $S_{-1,2}^{\KK({}^{(R)}\la)}$ is a small smooth perturbation of $\tilde{S}^{\mathrm{IP}}_{-1,2}$, we conclude the proof of Theorem \ref{THMMAINPRECISE}.
\end{enumerate}

\begin{figure}[H]
\begin{center}
\includegraphics[width=15cm]{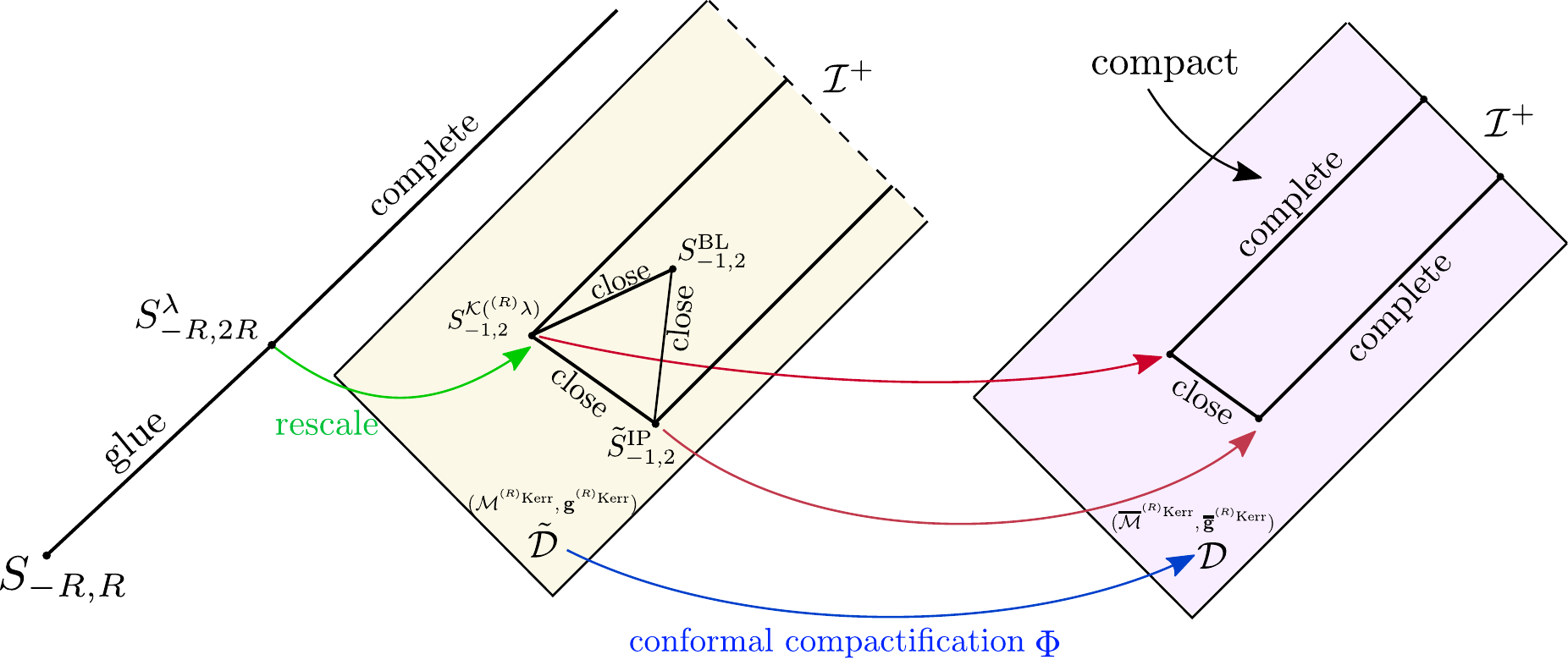} 
\caption{An illustration of the proof that the sphere $S^{\KK(\la)}_{-R,2R}$ admits a future-complete outgoing null congruence in Kerr spacetime.} \label{FIGcompleteness}

\end{center}
\end{figure}

\ni \textbf{(1) Rescaling of $S_{-R,2R}^{\KK(\la)}$ to $S_{-1,2}^{\KK({}^{(R)}\la)}$.} By construction of the Kerr reference spacelike initial data in Section \ref{SECappKerrFamilyDetails}, the sphere $S_{-R,2R}^{\KK(\la)}$ lies in a spacelike hypersurface $\{t'''=0\}$ which is constructed by rotating, spacelike translating and Lorentz boosting the spacelike hypersurface $\{t=0\}$ in Boyer-Lindquist coordinates $(t,r,\th,\phi)$, see \eqref{EQKerrMetric22287777}, with some mass parameter $M$ and angular momentum parameter $a$.

By applying the scaling of the Einstein equations, see Section \ref{SECspacelikescaling}, we can rescale $S_{-R,2R}^{\KK(\la)}$ to the sphere $S_{-1,2}^{\KK({}^{(R)}\la)}$, where the rescaled asymptotic invariants vector ${}^{(R)}\la$ is given by
\begin{align} 
\begin{aligned} 
{}^{(R)}\la = \lrpar{R^{-1} \cdot \mathbf{E}(\la), R^{-1} \cdot \mathbf{P}(\la), R^{-2} \cdot \mathbf{L}(\la), R^{-2} \cdot \mathbf{C}(\la)}.
\end{aligned} \label{EQrescaledLambdaCOMPLETENESS}
\end{align}
By the Kerr parameter estimates of Proposition \ref{PROPparameterEstimatesKerr}, in the rescaled picture for $R\geq1$ sufficiently large, the sphere $S_{-1,2}^{\KK({}^{(R)}\la)}$ is constructed by applying a small boost and small translation to the $\{t=0\}$-hypersurface in Boyer-Lindquist coordinates, where in the rescaled Kerr spacetime we have mass parameter $M/R$ and normalized angular momentum parameter $a/R$. 

Importantly, we conclude that the sphere $S_{-1,2}^{\KK({}^{(R)}\la)}$ is a small smooth perturbation of the Boyer-Lindquist sphere
\begin{align} 
\begin{aligned} 
S^{\mathrm{BL}}_{-1,2} := \{ t=0, \, r=3 \}
\end{aligned} \label{EQminkowskisphereCOMPLETENESS}
\end{align} 
in Boyer-Lindquist coordinates $(t,r,\th,\phi)$ on the rescaled Kerr. We note that for $R\geq1$ sufficiently large, both $S_{-1,2}^{\KK({}^{(R)}\la)}$ and $S^{\mathrm{BL}}_{-1,2}$ clearly lie in the exterior region of Kerr.\\

\ni \textbf{(2) Double null coordinates on the exterior region of Kerr.} In \cite{FransIsrael,DHRScattering} it is shown that, starting from Boyer-Lindquist coordinates $(t,r,\th,\phi)$, one can construct double null coordinates $(\tilde{u},\tilde{v},\tth^1,\tth^2)$ covering the exterior region of Kerr such that the outgoing null hypersurfaces $\tilde{H}_{\tilde{u}_0}$ are \emph{quasi-spherical}, that is, their spatial sections are asymptotically spherical at future null infinity; analogously for the ingoing null hypersurfaces $\HHb_{\tilde{v}_0}$ towards past null infinity. In particular, for sufficiently small mass parameter $M/R$ and angular momentum parameter $a/R$, the Israel-Pretorius sphere 
\begin{align*} 
\begin{aligned} 
\tilde{S}^{\mathrm{IP}}_{-1,2} := \{ \tilde u = -1, \, \tilde v =2 \},
\end{aligned} 
\end{align*}
is well-defined and a small smooth perturbation of the Boyer-Lindquist sphere \eqref{EQminkowskisphereCOMPLETENESS}.

Importantly, we conclude from \textbf{(1)} and \textbf{(2)} above that \emph{for $R\geq1$ sufficiently large,}
\begin{align*} 
\begin{aligned} 
S_{-1,2}^{\KK({}^{(R)}\la)} \text{ \emph{is a small smooth perturbation of} } \tilde{S}^{\mathrm{IP}}_{-1,2}.
\end{aligned} 
\end{align*}

\ni \textbf{(3) Conformal compactification of Kerr.} Let $\de>0$ be a real number and consider the outgoing slab
\begin{align*} 
\begin{aligned} 
\tilde{\DD}_{-1 +[-\de,\de],2+[-\de,\infty)} := \left\{ (\tilde{u},\tilde{v},\tth^1,\tth^2): -1-\de \leq \tilde u \leq -1+\de, \, 2-\de \leq \tilde v <\infty \right\},
\end{aligned} 
\end{align*}
where $(\tilde{u},\tilde{v},\tth^1,\tth^2)$ are the double null coordinates of \cite{FransIsrael,DHRScattering} on the exterior region of Kerr. Using the completeness in $\tilde{v}$, future null infinity is well-defined as
\begin{align*} 
\begin{aligned} 
\II^+_{-1 +[-\de,\de]}:= \{ -1-\de \leq \tilde{u}\leq -1+\de, \, \tilde{v}=\infty\} = [-1-\de, -1+\de] \times \SSS^2.
\end{aligned} 
\end{align*}
It is well-known that Kerr is \emph{weakly asymptotically simple}; for definitions and more background on conformal compactifications see, for example, \cite{HawkingEllis,Wald}. Applied to our situation, this implies that there is a smooth imbedding
\begin{align*} 
\begin{aligned} 
\Phi: \lrpar{\tilde{\DD}_{-1 +[-\de,\de],2+[-\de,\infty)}, \mathbf{g}^{{}^{(R)}\Kerr }} \to \lrpar{\DD_{-1+[-\de,\de],2+[-\de,1]},\overline{\mathbf{g}}^{{}^{(R)}\Kerr } },
\end{aligned} 
\end{align*}
where
\begin{itemize}
\item the domain
$$\DD_{-1+[-\de,\de],2+[-\de,1]}= [-1-\de,-1+\de] \times [2-\de,3] \times \SSS^2$$ is covered by local coordinates $(u,v,\th^1,\th^2)$ according to
\begin{align*} 
\begin{aligned} 
{\DD}_{-1 +[-\de,\de],2+[-\de,1]} := \left\{ ({u},{v},\th^1,\th^2): -1-\de \leq  u \leq -1+\de, \, 2-\de \leq  v \leq 3 \right\},
\end{aligned} 
\end{align*}

\item the metric $\overline{\mathbf{g}}^{{}^{(R)}\Kerr }$ is Lorentzian and smooth on $\DD_{-1+[-\de,\de],2+[-\de,1]}$ and conformal to ${\mathbf{g}}^{{}^{(R)}\Kerr }$ on $\DD_{-1+[-\de,\de],2+[-\de,1)}$, that is,
\begin{align*} 
\begin{aligned} 
\overline{\mathbf{g}}^{{}^{(R)}\Kerr } = \psi^2 {\mathbf{g}}^{{}^{(R)}\Kerr } \text{ on } \DD_{-1+[-\de,\de],2+[-\de,1)},
\end{aligned} 
\end{align*}
where we abused notation and denoted the push-forward of ${\mathbf{g}}^{{}^{(R)}\Kerr }$ under $\Phi$ by ${\mathbf{g}}^{{}^{(R)}\Kerr }$.
\item the coordinate $u$ is such that for $-1-\de\leq u \leq -1+\de$,
\begin{align*} 
\begin{aligned} 
\Phi\lrpar{\tilde{\HH}_{\tilde u}}= \HH_u
\end{aligned} 
\end{align*}
and we have the identification
\begin{align*} 
\begin{aligned} 
\{-1-\de \leq u \leq 1+ \de, \, v=3 \} = \II^+_{-1 +[-\de,\de]}.
\end{aligned} 
\end{align*}
\end{itemize}

\ni Importantly, as $\overline{\mathbf{g}}^{{}^{(R)}\Kerr }$ and ${\mathbf{g}}^{{}^{(R)}\Kerr }$ are conformally related, null geodesics of ${\mathbf{g}}^{{}^{(R)}\Kerr }$ are mapped to null geodesics of $\overline{\mathbf{g}}^{{}^{(R)}\Kerr }$ under the imbedding $\Phi$, and the hypersurfaces $\HH_u$ are null with respect to $\overline{\mathbf{g}}^{{}^{(R)}\Kerr }$. In particular, we conclude that for $-1-\de \leq u \leq -1+\de$,
\begin{align*} 
\begin{aligned} 
\text{\emph{there are no conjugate points along} } \HH_u.
\end{aligned} 
\end{align*}

\ni \textbf{(4) Future-completeness of outgoing null congruence.} In the following we show that the outgoing null congruence of small smooth perturbations of $\tilde{S}^{\mathrm{IP}}_{-1,2}$ is future-complete. Indeed, consider a small smooth perturbation $\tilde{S}$ of $\tilde{S}^{\mathrm{IP}}_{-1,2}$ such that for a real number $\de>0$, 
\begin{align*} 
\begin{aligned} 
\tilde{S} \subset \tilde{\DD}_{-1 +[-\de,\de],2+[-\de,\de]},
\end{aligned} 
\end{align*}
where, with respect to the double null coordinates $(\tilde{u}, \tilde{v},\tth^1,\tth^2)$,
\begin{align*} 
\begin{aligned} 
\tilde{\DD}_{-1 +[-\de,\de],2+[-\de,\de]} := \left\{ (\tilde{u},\tilde{v},\tth^1,\tth^2): -1-\de \leq u \leq -1+\de, \, 2-\de \leq v \leq \de \right\}.
\end{aligned} 
\end{align*}

\ni Using the smooth imbedding $\Phi$ constructed in \textbf{(4)} above, we have that the sphere
\begin{align*} 
\begin{aligned} 
\tilde{S}' := \Phi(\tilde{S})
\end{aligned} 
\end{align*} 
is a smooth small perturbation of the sphere $\lrpar{\tilde{S}^{\mathrm{IP}}_{-1,2}}' := \Phi(\tilde{S}^{\mathrm{IP}}_{-1,2})$. The latter lies in in the null hypersurface $\HH_{-1}$ which has by \textbf{(4)} no conjugate points. By a classical perturbation argument based on the analysis of Jacobi fields, we have that for $\tilde{S}'$ sufficiently close to $\lrpar{\tilde{S}^{\mathrm{IP}}_{-1,2}}'$, the outgoing null congruence $\tilde{\HH}'(\tilde{S}')$ with respect to $\overline{\mathbf{g}}^{{}^{(R)}\Kerr }$ emanating from $\tilde{S}'$,
\begin{align*} 
\begin{aligned} 
\tilde{\HH}'(\tilde{S}') \subset \DD_{-1+[-\de,\de],2+[-\de,1]}
\end{aligned} 
\end{align*}
has no conjugate points. Using the imbedding $\Phi$ and the conformal equivalence of ${\mathbf{g}}^{{}^{(R)}\Kerr }$ and $\overline{\mathbf{g}}^{{}^{(R)}\Kerr }$, we thus conclude that for $\tilde{S}$ sufficiently close to $\tilde{S}^{\mathrm{IP}}_{-1,2}$, the outgoing null congruence $\tilde{\HH}(\tilde{S})$ with respect to ${\mathbf{g}}^{{}^{(R)}\Kerr }$ emanating from $\tilde{S}$,
\begin{align*} 
\begin{aligned} 
\tilde{\HH}(\tilde{S}) \subset \tilde{\DD}_{-1+[-\de,\de],2+[-\de,\infty)}
\end{aligned} 
\end{align*}
has no conjugate points and is thus future-complete.

\section{Hodge systems on $2$-spheres and vector spherical harmonics} \label{SECellEstimatesSpheres} 

\ni In this section we recall the theory for Hodge systems on $2$-spheres and the definition of vector spherical harmonics on round spheres; we refer to \cite{Czimek1} and \cite{ACR3,ACR1} for full details. Moreover, we prove geometric identities and estimates for spherical harmonics decompositions.\\

\ni \textbf{Definition of Hodge operators.} On a Riemannian $2$-sphere $(S, \gd)$ we define the following Hodge operators.
\begin{enumerate}
\item For $1$-forms $X_A$ define
\begin{align*}
\DDd_1(X) := (\Divd X, \Curld X),
\end{align*}
where $\Divd X:= \Nd_A X^A$ and $\Curld X := \in^{AB} \Nd_A X_B$. 
\item For symmetric $2$-tensors $W_{AB}$ define
\begin{align*}
\DDd_2(W)_C := (\Divd W)_C,
\end{align*}
where $(\Divd W)_C := \Nd^A W_{AC}$.
\item For pairs of functions $(f,g)$ define
\begin{align*}
\DDd_1^{\ast}(f,g):= -\Nd f + (\Nd g)^\ast.
\end{align*}
\end{enumerate}
We denote by $\Divdo$ and $\Curldo$ the operators $\Divd$ and $\Curld$ with respect to the round unit metric $\gac$. \\

\ni \textbf{Vector spherical harmonics on the round sphere.} We have the following standard definition.
\begin{definition}[Vector spherical harmonics] \label{DEFtensorHarmonics} Define the following.
\begin{enumerate}
\item For integers $l\geq0$, $-l \leq m \leq l$, let $Y^{(lm)}$ be the normalized (real-valued) spherical harmonics on the round unit sphere $S_1$. 
\item For $l\geq1$, $-l \leq m \leq l$, define the vectorfields
\begin{align} 
\begin{aligned} 
E^{(lm)} := \frac{1}{\sqrt{l(l+1)}} \DDd_1^\ast (Y^{(lm)},0), \,\, H^{(lm)} :=\frac{1}{\sqrt{l(l+1)}} \DDd_1^\ast (0, Y^{(lm)}).
\end{aligned} \label{EQdefVECTORspherical2333}
\end{align}
The vectorfields $E^{(lm)}$ and $H^{(lm)}$ are called \emph{electric} and \emph{magnetic}, respectively.

\end{enumerate}

\end{definition}
We remark that on the round sphere, the $L^2$-range of $\DDd_2$ consists of all $L^2$-integrable $1$-forms on $S$ which are orthogonal to the vectorfields of mode $l=1$.

Moreover, on the round sphere it holds that for modes $l\geq1$,
\begin{align} 
\begin{aligned} 
(\di f)_E^{(lm)}  =& - \sqrt{l(l+1)} f^{(lm)}, &(\di f)_H^{(lm)}  =& 0,\\
(\DDd_1^\ast(0,f))_E^{(lm)}  =& 0,&
(\DDd_1^\ast(0,f))_H^{(lm)}  =& \sqrt{l(l+1)} f^{(lm)},\\
(\Divdo X)^{(lm)} =& \sqrt{l(l+1)} X_E^{(lm)}. &
\end{aligned} \label{EQLEMfourierbasic1}
\end{align}

\ni \textbf{Geometric identities.} The following geometric identities are used to relate the charges $(\mathbf{E},\mathbf{P},\mathbf{L}, \mathbf{G})$ to the local integrals of spacelike initial data in Section \ref{SECstatementConstruction}.

\begin{lemma}[Geometric identities] \label{LEMgeometricIDENTITIES} The following holds.
\begin{enumerate}
\item Let $Y^{(1m)}$, $m=-1,0,1$, be the normalized (real-valued) spherical harmonics on the unit sphere $S_1 \subset \RRR^3$, and let $x=(x^1,x^2,x^3) \in \RRR^3$. We have that
\begin{align} 
\begin{aligned} 
\frac{x^1}{\vert x \vert}=\sqrt{\frac{4\pi}{3}}Y^{(11)}, \,\, \frac{x^2}{\vert x \vert}=\sqrt{\frac{4\pi}{3}}Y^{(1-1)}, \,\, \frac{x^3}{\vert x \vert}=\sqrt{\frac{4\pi}{3}}Y^{(10)},
\end{aligned} \label{EQREMchangeitom}
\end{align}
where $\vert x \vert :=\sqrt{\sum\limits_{i=1,2,3} (x^i)^2}$. 

\item Let $H^{(1m)}$, $m=-1,0,1$, be the magnetic vectorfields of mode $l=1$, and let $Y_{(i)}$, $i=1,2,3$, be the rotation fields on $\RRR^3$ defined in \eqref{EQdefRotationFieldsComponents}. Then it holds that
\begin{align*} 
\begin{aligned} 
H^{(1m)} = \sqrt{\frac{3}{8\pi}}\frac{1}{\vert x \vert^2} Y_{(i_m)}.
\end{aligned}
\end{align*}

\item Let $Z^{(i)}$, $i=1,2,3$, be the vector field defined in \eqref{EQdefXandZiALTADM}. Then it holds that
\begin{align} 
\begin{aligned} 
Z^{(i)} = - \vert x \vert^3 \sqrt{\frac{8\pi}{3}} E^{(1m_i)} - \vert x \vert^2 \lrpar{\sqrt{\frac{4\pi}{3}}Y^{(1m_i)}} \pr_r.
\end{aligned} \label{EQREMdecompositionZIapp}
\end{align}

\end{enumerate}
\end{lemma}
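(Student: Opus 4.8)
The plan is to prove Lemma \ref{LEMgeometricIDENTITIES} by direct computation in Cartesian coordinates, handling the three identities in turn, and relying throughout on the explicit formulas \eqref{EQdefRotationFieldsComponents} and \eqref{EQdefXandZiALTADM} together with the definitions \eqref{EQdefVECTORspherical2333} of the vector spherical harmonics. For part (1) I would recall the standard normalization of the degree-$1$ real spherical harmonics on the unit sphere: up to the normalization constant $\sqrt{3/(4\pi)}$ these are precisely the restrictions of the coordinate functions $x^1, x^2, x^3$ to $S_1$. Matching conventions so that $(m_1,m_2,m_3)=(1,-1,0)$ corresponds to $(x^1,x^2,x^3)$ gives \eqref{EQREMchangeitom} immediately; the only content is bookkeeping of which $Y^{(1m)}$ is which.

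For part (2), the strategy is to compute $\DDd_1^\ast(0,Y^{(1m)})$ explicitly. By definition $H^{(1m)} = \frac{1}{\sqrt{2}}\DDd_1^\ast(0,Y^{(1m)}) = \frac{1}{\sqrt 2}(\Nd Y^{(1m)})^\ast$ on the unit round sphere. Using part (1) to write $Y^{(1m_i)} = \sqrt{3/(4\pi)}\, x^i/|x|$ and the fact that the rotation fields $Y_{(i)}$, being Killing and divergence-free on $S_1$, satisfy $Y_{(i)} = (\Nd(\text{something}))^\ast$, I would verify the identity $(\Nd(x^i/|x|))^\ast = |x|^{-2}Y_{(i)}$ on $S_1$ by a short computation in spherical coordinates (or by checking both sides against a basis of tangent vectors), then collect the constants to obtain $H^{(1m)} = \sqrt{3/(8\pi)}\,|x|^{-2}Y_{(i_m)}$. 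The scaling in $|x|$ is dictated by homogeneity: $Y_{(i)}$ is homogeneous of degree $1$ while $H^{(1m)}$ is the value on $S_1$, so the $|x|^{-2}$ factor is forced once one knows the spherical value.

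For part (3), I would start from the definition $Z^{(i)} = (|x|^2\delta^{ij} - 2x^i x^j)\pr_j$ and decompose it into its radial and spherical-tangential parts at each radius. The radial part is $g(Z^{(i)}, \pr_r)\pr_r$; computing $g(Z^{(i)},\pr_r) = |x|^2 (x^i/|x|) - 2x^i|x| = -x^i|x|$, and using part (1) this gives exactly the second term $-|x|^2(\sqrt{4\pi/3}\,Y^{(1m_i)})\pr_r$. The tangential part $Z^{(i)}_{\mathrm{tan}} = Z^{(i)} - g(Z^{(i)},\pr_r)\pr_r$ then has to be identified with $-|x|^3\sqrt{8\pi/3}\,E^{(1m_i)}$; by definition $E^{(1m)} = \frac{1}{\sqrt 2}\DDd_1^\ast(Y^{(1m)},0) = -\frac{1}{\sqrt 2}\Nd Y^{(1m)}$ on $S_1$, so I would check that $Z^{(i)}_{\mathrm{tan}}$, evaluated on $S_1$, equals $|x|^3$ times a gradient of $Y^{(1m_i)}$ up to the stated constant — again a homogeneity argument fixes the radial weight and a one-line computation of $\Nd(x^i/|x|)$ on $S_1$ fixes the constant.

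The main obstacle is purely one of convention-tracking: the normalization constants $\sqrt{l(l+1)}$, the $\sqrt{4\pi/3}$ and $\sqrt{8\pi/3}$ factors, and the sign and index conventions in $\DDd_1^\ast$, the Hodge star $(\cdot)^\ast$, and the labelling $(m_1,m_2,m_3)=(1,-1,0)$ must all be made to line up consistently with the definitions in Appendix \ref{SECellEstimatesSpheres} and with \eqref{EQdefRotationFieldsComponents}–\eqref{EQdefXandZiALTADM}. Once the conventions are pinned down, each identity reduces to differentiating $x^i/|x|$ on the unit sphere and invoking Euler's homogeneity relation to extend to all radii; there is no analytic difficulty.
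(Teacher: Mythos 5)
Your proposal follows essentially the same route as the paper: part (1) by recalling the explicit form of the normalized $l=1$ real spherical harmonics and matching them with the Cartesian/spherical change of variables \eqref{EQdefinSPHERICALAFcoord}; part (2) by evaluating $\DDd_1^\ast(0,Y^{(1m)})$ directly via the Hodge dual of $\Nd(x^{i_m}/|x|)$ and tracking the $\sqrt{2}$ and $\sqrt{3/(4\pi)}$ constants (the paper does this via the Levi-Civita tensor and $N=x^l|x|^{-1}\pr_l$, which is the computation you describe); part (3) by splitting $Z^{(i)}$ into radial and tangential pieces and recognizing the tangential piece as $r\,\Nd(x^i/r)=-r\sqrt{8\pi/3}\,E^{(1m_i)}$. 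This matches the paper's argument; the only minor caveat is that the homogeneity argument you invoke to ``fix the radial weight'' is not strictly needed since $\Nd_A|x|=0$ along each coordinate sphere makes the $|x|$-dependence explicit from the outset.
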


\begin{proof}[Proof of Lemma \ref{LEMgeometricIDENTITIES}] First, \eqref{EQREMchangeitom} follows from \eqref{EQdefinSPHERICALAFcoord} and the following well-known expressions for the spherical harmonics of mode $l=1$,
\begin{align*} 
\begin{aligned} 
Y^{(1-1)}= \sqrt{\frac{3}{4\pi}} \sin\th^1\sin\th^2, \,\, Y^{(10)}= \sqrt{\frac{3}{4\pi}} \cos\th^1,\,\, Y^{(10)}= \sqrt{\frac{3}{4\pi}} \sin\th^1\cos\th^2.
\end{aligned} 
\end{align*}

\ni Second, by definition of the left Hodge dual on spheres in\eqref{EQdefNotationNullStructure7778}, and Definition \ref{DEFtensorHarmonics}, we have that for $m=-1,0,1$,
\begin{align*} 
\begin{aligned} 
(H^{(1m)})_A = \frac{1}{\sqrt{2}} \ind_A^{\,\,\,\,\, B} (\di Y^{(1m)})_B,
\end{aligned} 
\end{align*}
so that by \eqref{EQREMchangeitom} and $N=\frac{x^l}{r}\pr_l$, for $(i_{-1},i_0,i_1)=(2,3,1)$ and a local orthonormal frame $(e_1,e_2)$,
\begin{align*} 
\begin{aligned} 
(H^{(1m)})_A
=&\frac{1}{\sqrt{2}} \sum\limits_{B=1,2} \ind_{AB} \Nd_B \lrpar{\sqrt{\frac{3}{4\pi}} \frac{x^{i_m}}{\vert x \vert}} \\
=& \sqrt{\frac{3}{8\pi}}\frac{1}{\vert x \vert}\sum\limits_{B=1,2} \ind_{AB} \Nd_B x^{i_m}  \\
=& \sqrt{\frac{3}{8\pi}}\frac{1}{\vert x \vert} \lrpar{\sum\limits_{B=1,2} \in_{NAB} \Nd_B x^{i_m} +\underbrace{\in_{NAN}}_{=0} N(x^{i_m})} \\
=& \sqrt{\frac{3}{8\pi}}\frac{1}{\vert x \vert} \in_{NAj}e^{jk} \underbrace{\nab_k x^{i_m}}_{=\de_k^{i_m}}\\
=&\sqrt{\frac{3}{8\pi}}\frac{1}{\vert x \vert^2} x^l \in_{lA i_m} \\
=& \sqrt{\frac{3}{8\pi}}\frac{1}{\vert x \vert^2}  \in_{i_m lA} x^l\\ 
=& \sqrt{\frac{3}{8\pi}}\frac{1}{\vert x \vert^2} (Y_{(i_m)})_A,
\end{aligned} 
\end{align*}
where we used that the rotation fields $Y_{(i)}$, $i=1,2,3$, on $\RRR^3$ are defined in \eqref{EQdefRotationFieldsComponents} by
\begin{align*} 
\begin{aligned} 
(Y_{(i)})_j := \in_{ilj} x^l.
\end{aligned} 
\end{align*}

\ni Third, we note that by \eqref{EQREMchangeitom} and $E^{(1m)}:=- \frac{1}{\sqrt{2}}\Nd Y^{(1m)}$, see Definition \ref{DEFtensorHarmonics}, 
\begin{align*} 
\begin{aligned} 
\pr_i - \frac{x^i}{ \vert x \vert} \pr_r = \nab x^i - N(x^i) N = \Nd x^i = r \Nd \lrpar{\frac{x^i}{r}}
=r\sqrt{\frac{4\pi}{3}} \Nd Y^{(1m_i)} = -r \sqrt{\frac{8\pi}{3}} E^{(1m_i)}.
\end{aligned} 
\end{align*}
Hence, using \eqref{EQdefXandZiALTADM} and \eqref{EQREMchangeitom}, the conformal Killing vectorfields $Z^{(i)}$, $i=1,2,3$, can be decomposed as follows,
\begin{align*} 
\begin{aligned} 
Z^{(i)} :=& \lrpar{\vert x \vert^2 \de^{ij}-2x^i x^j} \pr_j \\
=& \vert x \vert^2 \lrpar{ \pr_i - \frac{x^i}{ \vert x \vert} \pr_r} - x^i \vert x \vert \pr_r\\
=& - \vert x \vert^3 \sqrt{\frac{8\pi}{3}} E^{(1m_i)} - \vert x \vert^2 \lrpar{\sqrt{\frac{4\pi}{3}}Y^{(1m_i)}} \pr_r.
\end{aligned} 
\end{align*}
This finishes the proof of Lemma \ref{LEMgeometricIDENTITIES}. \end{proof}

\ni \textbf{Estimates for Fourier decompositions.} The following two lemmas are practical for estimates. First, we estimate the mode decomposition.

\begin{lemma}[Estimates for mode decomposition] \label{LEMnonlinearFourier} Let $(S,\gd)$ be a Riemannian $2$-sphere equipped with a round metric $\gac$ as defined in \eqref{EQdefRoundUnitMetric}. Let $X$ be a $1$-form and $W$ a $\gd$-tracefree symmetric $2$-tensor on $S$. Assume that for a real number $\varep>0$, it holds that
\begin{align} 
\begin{aligned} 
\Vert \gd - \gac \Vert_{H^6(S)} \leq \varep.
\end{aligned} \label{EQfouriernonlinearproof00assumption}
\end{align}
There exists a universal real number $\varep_0>0$ such that if $0<\varep<\varep_0$, then 
\begin{align} 
\begin{aligned} 
\left\vert (\Divd_\gd X)^{[0]} \right\vert \les& \, \Vert \gd - \gac \Vert_{H^6(S)} \cdot \Vert X \Vert_{H^2(S)}, \\
\left\vert \lrpar{\Divd_\gd W}^{[1]} \right\vert  \les& \, \Vert \gd - \gac \Vert_{H^6(S)} \cdot \Vert W \Vert_{H^2(S)}.
\end{aligned}\label{EQlemtoprovenonlinearFourier}
\end{align}
\end{lemma}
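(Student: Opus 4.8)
\textbf{Proof strategy for Lemma \ref{LEMnonlinearFourier}.}
The plan is to exploit the fact that on the exactly round sphere $(S,\gac)$, the mode $l=0$ is in the kernel of $\Divdo$ acting on $1$-forms and the mode $l=1$ is in the kernel of $\Divdo$ acting on $\gac$-tracefree symmetric $2$-tensors (equivalently, these modes are orthogonal to the $L^2$-range of $\Divdo$, as recorded just after Definition \ref{DEFtensorHarmonics} and in \eqref{EQLEMfourierbasic1}). Hence for the round operator we have $(\Divdo X)^{[0]}=0$ and $(\Divdo W)^{[1]}=0$ identically. The difference $\Divd_\gd - \Divdo$ is a first-order operator whose coefficients are built out of $\gd - \gac$ and its first derivatives (through the Christoffel symbols of $\gd$ relative to $\gac$, and through the inverse metric $\gd^{-1}-\gac^{-1}$), so it is linear in $\gd-\gac$ up to quadratic and higher corrections controlled by the smallness \eqref{EQfouriernonlinearproof00assumption}. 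This is where the smallness $\varep<\varep_0$ enters: it guarantees $\gd$ is uniformly comparable to $\gac$, that $\gd^{-1}$ and $\sqrt{\gd}/\sqrt{\gac}$ are well-defined and bounded with all needed derivatives, and that the Neumann-type expansion of $\gd^{-1}$ converges in $H^6(S)$, so that products like $(\gd-\gac)\cdot \Nd X$ can be estimated by $\Vert \gd-\gac\Vert_{H^6(S)}\cdot \Vert X\Vert_{H^2(S)}$ using the algebra property of $H^k$ on the compact surface $S$ for $k$ large enough (here $k\leq 6$ suffices to control the two derivatives falling on $X$ or $W$ and to close the product estimates).

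Concretely, I would proceed as follows. First, write $\Divd_\gd X = \Divdo X + \mathcal{E}_1(\gd-\gac)[X]$ where $\mathcal{E}_1$ is the explicit error term coming from the difference of connections and inverse metrics; schematically $\mathcal{E}_1(\gd-\gac)[X] \sim (\gd^{-1}-\gac^{-1})\ast \Nd X + \gd^{-1}\ast \Gad(\gd,\gac)\ast X$, where $\Gad(\gd,\gac)$ is the Christoffel difference tensor, itself of the form $\gd^{-1}\ast \Nd(\gd-\gac)$. Second, project onto the mode $l=0$: since $(\Divdo X)^{[0]}=0$, we get $(\Divd_\gd X)^{[0]} = \big(\mathcal{E}_1(\gd-\gac)[X]\big)^{[0]}$, and the projection is an $L^2(S,\gac)$-pairing against the constant spherical harmonic $Y^{(00)}$, hence bounded by the $L^2(S)$-norm of $\mathcal{E}_1$. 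Third, estimate $\Vert \mathcal{E}_1(\gd-\gac)[X]\Vert_{L^2(S)}$ by the $H^6(S)$-algebra and the uniform bounds on $\gd^{-1}$, $\sqrt{\gd/\gac}$ obtained from $\varep<\varep_0$; this yields the first line of \eqref{EQlemtoprovenonlinearFourier}. Fourth, repeat verbatim for the symmetric $2$-tensor $W$: write $\Divd_\gd W = \Divdo W + \mathcal{E}_2(\gd-\gac)[W]$ with $\mathcal{E}_2$ of the same schematic shape (now with extra terms from the $\gd$-trace-free decomposition versus the $\gac$-trace-free decomposition, which also vanish to linear order when $\gd=\gac$), project onto $l=1$ using $(\Divdo W)^{[1]}=0$, and pair against the mode-$1$ spherical harmonics $Y^{(1m)}$ in $L^2(S,\gac)$ to get the second line.

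The main obstacle I anticipate is bookkeeping rather than conceptual: one must be careful that the tracefree part $\widehat{W}$ appearing here is tracefree with respect to $\gd$, not $\gac$, so when comparing $\Divd_\gd W$ to the round operator one should either use that $W$ is already $\gd$-tracefree (as assumed) and absorb the discrepancy $\gac^{AB}W_{AB} - \gd^{AB}W_{AB} = (\gac^{AB}-\gd^{AB})W_{AB} = \OO(\varep)\cdot \Vert W\Vert$ into the error, or decompose $W$ into its $\gac$-tracefree part plus an $\OO(\varep)$ correction. Either way the pure-trace piece, being a scalar multiple of the metric, contributes only to the mode $l=0$ of $\Divd W$ and not to $l=1$ once one is careful, so it does not spoil the estimate; but the precise accounting of which metric's trace-free projection is used at each step is the delicate point. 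A secondary, purely technical point is ensuring that $H^6(S)$ with respect to $\gac$ really is a Banach algebra and that all the nonlinear functions of $\gd$ (namely $\gd\mapsto \gd^{-1}$ and $\gd\mapsto \sqrt{\gd}$) are smooth near $\gd=\gac$ with derivatives bounded uniformly for $\Vert \gd-\gac\Vert_{H^6(S)}<\varep_0$; this is standard on a compact $2$-manifold and is exactly what fixes the universal threshold $\varep_0$.
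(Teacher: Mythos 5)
Your proof follows essentially the same route as the paper: decompose $\Divd_\gd = \Divdo + \mathcal{E}$, use that $(\Divdo X)^{[0]}=0$ (by the divergence theorem) and $(\Divdo W)^{[1]}=0$ (by the $L^2$-range characterization of $\DDd_2$ recorded after Definition~\ref{DEFtensorHarmonics}), project, and estimate the error term by the $H^6$ Sobolev algebra and the smallness of $\gd-\gac$. The paper carries this out explicitly for $X$ (writing the error as a difference of Christoffel symbols plus $(\gd^{AB}-\gac^{AB})\Nd_A X_B$) and simply states the $W$-case is ``proved similarly''; you are in fact more explicit than the paper about the $\gd$-tracefree versus $\gac$-tracefree bookkeeping, which is a genuine subtlety that the paper glosses over.

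One slip in your justification: you claim that the pure-trace piece of $W$ ``contributes only to the mode $l=0$ of $\Divd W$ and not to $l=1$''. This is not correct. The divergence of $f\,\gac_{AB}$ is $\di f$, and by \eqref{EQLEMfourierbasic1} the electric $l=1$ component of $\di f$ is $-\sqrt{2}\,f^{(1m)}$, which is nonzero in general. The correct reason the trace piece does not spoil the estimate is the one you also give as your first alternative: since $W$ is $\gd$-tracefree, $\tr_\gac W = (\gac^{AB}-\gd^{AB})W_{AB}$ is already $\OO(\Vert\gd-\gac\Vert_{H^6(S)})\cdot\Vert W\Vert_{H^2(S)}$, so its divergence contributes to every mode only at the size allowed by the claimed bound and can be absorbed directly into the error; it need not vanish in mode $l=1$. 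With that clarification your argument is correct and matches the paper's.
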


\begin{proof}[Proof of Lemma \ref{LEMnonlinearFourier}] Consider the first of \eqref{EQlemtoprovenonlinearFourier}. On the one hand, by \eqref{EQLEMfourierbasic1},
\begin{align} 
\begin{aligned} 
\lrpar{\Divd_\gd X}^{[0]} =& \lrpar{\lrpar{\Divd_\gd-\Divd_{\gac}}{X}}^{[0]} - \underbrace{\lrpar{\Divd_\gac X}^{[0]}}_{=0}.
\end{aligned} \label{EQfouriernonlinearproof0}
\end{align}

\ni On the other hand,
\begin{align} 
\begin{aligned} 
\lrpar{\Divd_\gd-\Divd_{\gac}}{X} =& \gd^{AB} \Nd_A X_B -  \gac^{AB} \overset{\circ}{\Nd}_A X_B\\
=&\lrpar{\gd^{AB}-  \gac^{AB}} \Nd_A X_B -\gac^{AB} \lrpar{ \overset{\circ}{\Nd}_A X_B - \Nd_A X_B}\\
=&\lrpar{\gd^{AB}-  \gac^{AB}} \Nd_A X_B \\
&-\gac^{AB} \lrpar{ \pr_A X_B - \overset{\circ}{\Ga}_{AB}^C X_C - \pr_A X_B + \Ga_{AB}^CX_C} \\
=& \lrpar{\gd^{AB}-  \gac^{AB}} \Nd_A X_B -\gac^{AB} \lrpar{  \Ga_{AB}^C   - \overset{\circ}{\Ga}_{AB}^C}X_C,
\end{aligned} \label{EQfouriernonlinearproof1}
\end{align}
where $\overset{\circ}{\Nd}$ and $\overset{\circ}{\Ga}$ denote the covariant derivative and Christoffel symbols of $\gac$, respectively. From \eqref{EQfouriernonlinearproof1} it is straight-forward to deduce that
\begin{align} 
\begin{aligned} 
\left\vert \lrpar{\Divd_\gd-\Divd_{\gac}}{X} \right\vert \les \Vert \gd-\gac \Vert_{H^6(S)} \Vert X \Vert_{H^3(S)}.
\end{aligned} \label{EQfouriernonlinearproof2}
\end{align}
Combining \eqref{EQfouriernonlinearproof00assumption}, \eqref{EQfouriernonlinearproof0} and \eqref{EQfouriernonlinearproof2}, the proof of the first estimate of \eqref{EQlemtoprovenonlinearFourier} follows. The second estimate of \eqref{EQlemtoprovenonlinearFourier} is proved similarly and omitted. This finishes the proof of Lemma \ref{LEMnonlinearFourier}. \end{proof}

\ni Second, we have the following estimate for the Gauss curvature $K$ of a sphere.
\begin{lemma}[Control of the mode $l=1$ of Gauss curvature] \label{LEMGaussSquareLoneEstimate} Let $(S,\gd)$ be a Riemannian $2$-sphere equipped with a round metric $\gac$ as defined in \eqref{EQdefRoundUnitMetric}. Let $K$ denote the Gauss curvature of $\gd$ on $S$. Assume that for a real number $\varep>0$,
\begin{align*} 
\begin{aligned} 
\Vert \gd - \gac \Vert_{H^6(S)} \leq \varep.
\end{aligned} 
\end{align*}
There exists a universal real number $\varep_0>0$ such that if $0<\varep<\varep_0$, then for $m=-1,0,1$,
\begin{align} 
\begin{aligned} 
\left\vert K^{(1m)} \right\vert  \les& \, \lrpar{\Vert \gd - \gac \Vert_{H^6(S)}}^2,
\end{aligned}\label{EQlemtoproveGAUSSFourier}
\end{align}
where the spherical harmonics $Y^{(1m)}$ are defined with respect to $\gac$.
\end{lemma}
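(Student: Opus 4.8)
The plan is to exploit the classical fact that the $l=0$ and $l=1$ spherical harmonics are precisely the kernel of a natural conformal operator on the round sphere, so that the mode $l=1$ of the Gauss curvature of any nearby metric must be \emph{quadratically} small in the deviation $\gd-\gac$. Concretely, write $\gd = e^{2\phi}\gac + h$ where we first isolate the conformal factor. Recall that if $\gd = e^{2\phi}\gac$ is conformal to the round metric, then the Gauss curvature satisfies the Liouville-type equation $K_\gd e^{2\phi} = K_{\gac} - \Ldo \phi = 1 - \Ldo \phi$, where $\Ldo$ is the round Laplacian. Since $\Ldo Y^{(1m)} = -2 Y^{(1m)}$ on the unit sphere, the operator $1 + \half\Ldo$ annihilates the $l=1$ modes; this is the algebraic source of the cancellation. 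The strategy is therefore to decompose $\gd$ into its conformal part and a tracefree part, Taylor-expand $K_\gd$ around $\gac$, observe that the linear term in $K_\gd^{(1m)}$ vanishes identically by the above kernel property, and bound the remaining quadratic-and-higher terms by $\Vert \gd-\gac\Vert_{H^6(S)}^2$.

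First I would set up the decomposition of $\gd$. Using \eqref{EQdefdecomposition1}, write $\gd = \phi^2 \gd_c$ with $\gd_c$ having the same volume form as $\gac$; then further decompose $\gd_c = \gac + \tth$ where $\tth$ is tracefree with respect to $\gac$ to leading order and $\Vert \phi^2 - 1 \Vert_{H^6} + \Vert \tth \Vert_{H^6} \lesssim \Vert \gd - \gac \Vert_{H^6}$ for $\varep < \varep_0$ small (here $\varep_0$ is chosen so that all these operations, including taking square roots and inverses of the metric, are smooth and controlled). Next I would compute $K_\gd$ via the standard formula relating the Gauss curvature of a conformally-rescaled-and-perturbed metric to that of $\gac$: schematically $K_\gd = e^{-2\phi}\bigl(1 - \Ldo \phi\bigr) + \mathcal{Q}(\tth, \Nd^2 \tth, \dots)$ where $\mathcal{Q}$ is a smooth function vanishing to first order at $\tth = 0$ together with its derivatives up to second order — this is just the second variation of scalar curvature. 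Projecting onto the $l=1$ mode: the term $1 - \Ldo\phi$ contributes $\phi^{(1m)}$ after using $\Ldo Y^{(1m)} = -2Y^{(1m)}$, but then $e^{-2\phi} = 1 - 2\phi + \OO(\phi^2)$ multiplies it, and crucially the linear-in-$\phi$ part of $e^{-2\phi}(1-\Ldo\phi)$ is $(1 - \Ldo\phi) - 2\phi = -(1+\half\Ldo)(2\phi) + 1$... more carefully: the linear term is $-\Ldo\phi - 2\phi = -(\Ldo + 2)\phi$, whose $l=1$ mode is $-(-2+2)\phi^{(1m)} = 0$. So the linear contribution to $K_\gd^{(1m)}$ vanishes, and what remains is $\OO(\varep^2)$ by the $H^6$ control and Sobolev embedding, together with the $\OO(\varep^2)$ contribution from $\mathcal{Q}$.

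The main obstacle I expect is the bookkeeping of the nonlinear remainder: one must verify carefully that every term which is \emph{linear} in the perturbation $(\phi, \tth)$ either vanishes after $l=1$ projection (the conformal-factor terms, via the kernel of $\Ldo + 2$) or is absent altogether (the tracefree terms, since the linearization of $K$ in the tracefree direction is a pure divergence, $\Divd\Divd \tth$ up to lower order, whose $l=1$ mode vanishes because $\Divd\Divd$ kills $l=1$ — this is \eqref{EQLEMfourierbasic1} applied twice). Once those cancellations are confirmed, the estimate $|K^{(1m)}| \lesssim \Vert\gd-\gac\Vert_{H^6(S)}^2$ follows from Cauchy--Schwarz on the sphere, the multiplicative Sobolev estimates on $H^s(S)$ for $s$ up to $6$ (so that products of two factors, each in $H^{\geq 2}$, are controlled — we have plenty of room since $K$ only involves two derivatives of $\gd$), and the smallness $\varep < \varep_0$ to invert and take roots of the metric. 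I would therefore organize the proof as: (i) reduce to the conformal-plus-tracefree decomposition with controlled norms; (ii) write the exact formula for $K_\gd$ and separate linear and higher-order parts; (iii) show the linear part has vanishing $l=1$ projection using $\Ldo Y^{(1m)} = -2Y^{(1m)}$ and the double-divergence identity; (iv) estimate the higher-order part by $\Vert \gd-\gac\Vert_{H^6(S)}^2$ using Sobolev multiplication, which closes the argument.
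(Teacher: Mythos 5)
Your proof takes essentially the same route as the paper: decompose $\gd$ into a conformal factor times a volume-form-preserving part, observe that the linearization of $K$ at the round metric is $\half\Divdo\Divdo\gdcd - (\Ldo+2)\phid$, and note that both pieces have vanishing $l=1$ projection (the first because the range of $\Divdo$ on tracefree $2$-tensors is orthogonal to $l=1$ $1$-forms, the second because $\Ldo+2$ annihilates $l=1$ scalars), so the $l=1$ mode of $K$ is quadratic in the deviation. The paper simply cites the linearization formula (242) from \cite{DHR} rather than rederiving it via the Liouville equation, but the cancellation mechanism and the concluding Sobolev-multiplication bound are identical to what you describe.
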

\begin{proof}[Proof of Lemma \ref{LEMGaussSquareLoneEstimate}] We recall from \eqref{EQdefdecomposition1} that the metric $\gd$ on $S$ can be written as 
\begin{align*} 
\begin{aligned} 
\gd= \phi^2 \gd_c,
\end{aligned} 
\end{align*}
where $\phi$ is a conformal factor and $\gd_c$ is such that
\begin{align*} 
\begin{aligned} 
\det \gd_c = \det \gac \text{ on } S,
\end{aligned} 
\end{align*}
It is well-known (see, for example, equation (242) in \cite{DHR}) that linearizing the Gauss curvature $K$ with respect to variations of $\gd$ at the unit round metric $\gac$ yields
\begin{align*} 
\begin{aligned} 
\dot{K} = \half \Divdo \Divdo \gdcd - (\Ldo +2) \phid,
\end{aligned} 
\end{align*}
where we expressed the variation of $\gd$ in terms of the variation of $\phi$ and $\gd_c$. Projecting the above onto the $l=1$-mode shows that
\begin{align*} 
\begin{aligned} 
\dot{K}^{[1]} = \lrpar{\frac{1}{2} \Divdo \Divdo \gdcd - (\Ldo +2) \phid}^{[1]}=0.
\end{aligned} 
\end{align*}
Hence we conclude that $K^{[1]}$ is quadratic in variations of $\gd$ from the round metric $\gac$. This finishes the proof of Lemma \ref{LEMGaussSquareLoneEstimate}. \end{proof}


\end{document}